\documentclass{article}
\usepackage{amsmath}
\usepackage{wright}
\usepackage{tikz}
\usetikzlibrary{matrix}

\newtheorem*{theorem*}{Theorem}

\newcommand{\mixed}{\textup{\textbf{Mix}}}
\newcommand{\gps}{\calA_{\mathrm{GPS}}}

\newcommand{\hayashi}{\calA_{\mathrm{Hayashi}}}

\newcommand{\purifychan}{\Phi_{\mathrm{Purify}}}
\newcommand{\swap}{\mathrm{SWAP}}

\newcommand{\reg}[1]{\mathsf{#1}}

\newcommand{\Pisym}{\Pi_{\mathrm{sym}}}

\newcommand{\spec}{\mathrm{spec}}

\newcommand{\Dtr}{\mathrm{D}_{\mathrm{tr}}}
\newcommand{\DBur}{\mathrm{D}_{\mathrm{B}}}
\newcommand{\Dchi}{\mathrm{D}_{\chi^2}}
\newcommand{\Fid}{\mathrm{F}}

\title{The Keyl--Werner algorithm is not optimal for spectrum estimation}
\author{Angelos Pelecanos\thanks{UC Berkeley. \texttt{\{apelecan,jspilecki,ewin,jswright\}@berkeley.edu}}  \and Jack Spilecki\footnotemark[1] \and Ewin Tang\footnotemark[1]\and John Wright\footnotemark[1]}
\date{}

\begin{document}

\maketitle

\begin{abstract}
    We give an algorithm which, given $n = O(d^2 \cdot (\log\log(d)/\log(d))^2)$ copies of $\rho$, estimates the eigenvalues of $\rho$ to constant error in total variation distance.
    Thus, we can learn the eigenvalues of a quantum state with fewer copies than the $\Theta(d^2)$ needed to run full state tomography.
    This is the first improvement to spectrum estimation over the influential Keyl--Werner algorithm, which uses $n = \Theta(d^2)$ copies, thereby resolving a question raised by Keyl and Werner in 2001~\cite{KW01} and refuting a 2016 conjecture of Wright~\cite{Wri16}.

    Our main technical tool is a new tomography guarantee, where the error of tomography in a particular direction $\ket{w}$ scales with $\bra{w } \rho \ket{w}$ for all directions simultaneously.
    From this stronger ``relative-error'' bound, we recover better algorithms for principal component analysis in Bures distance and tomography in $\chi^2$-divergence as corollaries.
\end{abstract}

\newpage

\hypersetup{linktocpage}
\tableofcontents
\thispagestyle{empty}

\newpage

\section{Introduction} \label{sec:intro}

In 2001, Keyl and Werner discovered an elegant quantum algorithm for estimating the spectrum $\alpha = (\alpha_1, \ldots, \alpha_d)$ of a mixed state $\rho \in \C^{d \times d}$~\cite{KW01}.
Using representation theory, their algorithm samples a natural statistic from $n$ copies of $\rho$ known as a Young diagram, which is drawn by measuring the copies in the so-called Schur basis.\footnote{
    We refer the reader to \cite{Wri16,OW17b} for an explanation of the Keyl--Werner algorithm (also known as the empirical Young diagram algorithm) and the relevant mathematical background.
    We will not need them for our results.
}
They show that this diagram, when viewed as a histogram, is a good estimator of the spectrum, converging to $\alpha$ as $n$ goes to infinity.

The Keyl--Werner algorithm ushered in the modern era of representation-theoretic algorithms for problems in quantum state learning.
The main challenge for such problems is that optimal sample complexities often require highly entangled measurements, but designing these complex measurements can be challenging.
Over the years, representation theory has emerged as the standard method for reasoning about highly entangled measurements in a principled manner,
and the Keyl--Werner algorithm was an early and especially influential example of how to apply representation theory to general qudit systems.
Beyond its historical importance, the Keyl--Werner algorithm is also a central building block in many of our optimal algorithms for full state tomography~\cite{Key06,HHJ+16,OW16,OW17a}, the even more challenging task of estimating the entire matrix $\rho$ from copies.
It has been called ``surprising''~\cite{KW01} and ``remarkable''~\cite{CM06}.
But is it optimal?

This question was initially posed by Keyl and Werner themselves, who stated:

\begin{center}
\textit{Although the estimate we discuss is asymptotically exact,
it is not at all clear \\
whether and in what sense it might be
optimal, even for finite $n$.} ---\cite{KW01}
\end{center}
\noindent
Since then, this question has continued to be asked in the literature~\cite{Wri16,AA23,PTTW25},
and determining the optimal sample complexity of spectrum estimation remains a longstanding open problem in quantum learning theory.

The Keyl--Werner algorithm has been heavily analyzed.
A string of follow-up works~\cite{HM02,CM06,OW15,OW16,OW17a} derived a sequence of increasingly refined bounds in the case of finite $n$, with the ultimate conclusion being that the output of the Keyl--Werner algorithm is $\eps$-close to $\alpha$ in total variation distance when given $n = O(d^2/\epsilon^2)$ copies of $\rho$, and that there are examples of states where this algorithm requires $n = \Omega(d^2/\epsilon^2)$ copies.
Full state tomography can be solved with $n = \Theta(d^2/\eps^2)$ copies, so the question of whether the Keyl--Werner algorithm is optimal can be rephrased as: \textit{can spectrum estimation be solved with fewer copies than full state tomography?}

Though we hope that spectrum estimation is cheaper than tomography, the spectrum still captures much of the useful information about a mixed state $\rho$.
Often we do not care about the particular eigenbasis associated to a state, so many properties of interest can be derived from the spectrum, including rank, purity, and von Neumann and R\'{e}nyi entropies.
In general, any unitarily invariant property of a state can be derived from its spectrum, making spectrum estimation a task which unifies many problems throughout learning theory.\footnote{
    We note that there is an equivalent viewpoint: properties of the spectrum of a mixed state correspond to properties of the Schmidt coefficients of a bipartite pure state.
    For example, if $\ket{\psi}$ is a pure state on two registers of dimension $d$ with Schmidt decomposition $\sum_{i=1}^d \sqrt{\alpha_i} \ket{u_i}\ket{v_i}$, then the entanglement entropy across the two registers is equal to the von Neumann entropy of the reduced density matrix on one of the registers.
    For any such property, it is known that algorithms for learning such properties can, without loss, throw away one of the registers and then estimate the corresponding property of the spectrum of the state on the remaining register~\cite{SW22}.
    So, spectrum estimation is equivalent to Schmidt coefficient estimation.
}
Indeed, historically, improved algorithms for various unitarily invariant problems, such as mixedness testing and entropy estimation, have gone hand-in-hand with advances in spectrum estimation~\cite{OW15,OW16,OW17a,BOW19,AISW20}.
This makes spectrum estimation a particularly fundamental subproblem of full state tomography.

However, it is not at all clear how to solve this subproblem any more efficiently than solving full state tomography in its entirety.
Let $\ket{v_1},\dots,\ket{v_d}$ be the eigenbasis of $\rho$, so that $\rho = \sum_{i=1}^d \alpha_i \ketbra{v_i}$.
Then each copy of $\rho$ can be viewed as providing a random eigenvector $\ketbra{v_i}$, where $i$ is sampled with probability $\alpha_i$.
If we knew the eigenvectors in advance, then we could count the number of occurrences of each eigenvector in the sample and use this count to estimate $\alpha$.
But we do not know the eigenvectors in advance, and moreover, with $n = o(d^2)$ copies the eigenvectors are fundamentally unknowable.
This is because $\Theta(d^2)$ parameters are needed to specify all $d$ eigenvectors of $\rho$, too many to estimate with $o(d^2)$ samples.
Spectrum estimation in the $n = o(d^2)$ regime therefore is the somewhat counterintuitive task of estimating the frequencies of elements in a sample when the elements themselves cannot be known.

Nevertheless, weak evidence points to the Keyl--Werner algorithm being improvable.
In their original paper, Keyl and Werner themselves observed that improvements could be made to their algorithm for very small input sizes, but these improvements did not suggest a concrete algorithm for larger numbers of copies.
More recently, the work of~\cite{PTTW25} showed a separation between spectrum estimation and full state tomography in the setting where only unentangled measurements are allowed.
However, the main question in the setting of unrestricted measurements has remained unresolved.

\ignore{In 2001, Keyl and Werner discovered a ``remarkable''~\cite{CM06} quantum algorithm for estimating the spectrum $\alpha = (\alpha_1, \ldots, \alpha_d)$ of a mixed state $\rho \in \C^{d \times d}$~\cite{KW01}.
Their algorithm established a ``surprising''~\cite{KW01} and elegant connection between the eigenvalues of mixed quantum states and the representation theoretic properties of Young diagrams.
Keyl and Werner showed that the estimate $\widehat{\balpha}$ that their algorithm produces on input $n$ copies of $\rho$ converges to $\alpha$ in the limit as the number of copies $n$ goes to $\infty$.
A string of follow-up work~\cite{HM02,CM06,OW15,OW16,OW17a} derived a sequence of increasingly refined bounds in the case of finite $n$,
with the ultimate conclusion being that $\widehat{\balpha}$ will be $\epsilon$-close to $\alpha$ in total variation distance when $n = O(d^2/\epsilon^2)$ copies,
and that there are examples of states where $n = \Omega(d^2/\epsilon^2)$ copies are necessary.
The Keyl--Werner algorithm ushered in the modern era of representation-theoretic algorithms for quantum state learning,
and it forms a central building block for several algorithms for full state tomography~\cite{Key06,HHJ+16,OW16,OW17a}.

But is it optimal?}

\subsection{Spectrum estimation beyond Keyl--Werner}

Our main result is a new algorithm for spectrum estimation which uses $n = o(d^2)$ samples and therefore improves on all previous algorithms.
As a consequence, it demonstrates that the Keyl--Werner algorithm is not optimal for spectrum estimation, resolving the question of \cite{KW01}.

\begin{theorem}
    \label{thm:main}
    There is an algorithm which, given $n$ copies of a quantum state $\rho \in \C^{d \times d}$ with spectrum $\alpha = (\alpha_1, \ldots, \alpha_d)$, outputs a vector $\widehat{\balpha}$ with the guarantee that $\dtv{\alpha}{\widehat{\balpha}} \leq \epsilon$ with probability $\geq 0.99$, so long as
    \begin{equation*}
        n = O\Big(\frac{d^2 \cdot (\log \log d)^2}{\epsilon^4 \cdot (\log d)^2}\Big).
    \end{equation*}
\end{theorem}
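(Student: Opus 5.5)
We plan to combine the relative-error tomography guarantee from the abstract with a classical-style ``unseen''/histogram estimation argument. Split $n$ copies into two halves. On the first half run a tomography algorithm producing $\widehat{\brho}$ such that, for every unit vector $\ket{w}$ simultaneously, the error $|\bra{w}(\widehat{\brho}-\rho)\ket{w}|$ is at most roughly $\sqrt{\bra{w}\rho\ket{w}\cdot d/n}$ plus lower-order terms. Intuitively this means that large eigenvalues, which are most of the mass for total variation, are located much more accurately than in the worst case. Eigenvectors with $\alpha_i \gg d/n$ are then essentially learned. Only the ``small'' eigenvalues, of size around $d/n$, remain hard. With $n = d^2/\mathrm{polylog}$, these are eigenvalues of order $\mathrm{polylog}(d)/d$, that is, a few times the uniform value.

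Next, take the eigenbasis $\{\ket{\hat v_j}\}$ of $\widehat{\brho}$. Measure the second half of the copies in this basis. This yields classical samples from the distribution $p_j = \bra{\hat v_j}\rho\ket{\hat v_j}$, which majorizes-relates to $\alpha$: $p$ is $\alpha$ mixed by a doubly stochastic matrix $|\langle \hat v_j|v_i\rangle|^2$. We then show, using the relative-error bound, that this mixing only averages eigenvalues within ``buckets'' of comparable size. The large-eigenvalue part of $p$ is therefore $\epsilon$-close to the corresponding part of $\alpha$. The remaining problem is to recover a histogram of eigenvalues in a regime where $\alpha$ is blurred by a random-like rotation.

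For the blurred small-eigenvalue regime we use a second statistic: estimate low-degree moments $\tr(\rho_S^k)$ restricted to the poorly-learned subspace $S$. These come either from swap-test/cycle-permutation estimators or from the Keyl--Werner diagram on that subspace. Then apply the method-of-moments/polynomial approximation argument of Valiant--Valiant type. Matching $k \approx \log d/\log\log d$ moments determines a distribution on an interval $[0, L/d]$ up to TV error $O(1/k)$-ish relative to scale. This is where the $(\log\log d/\log d)^2$ savings should come from: each moment of degree $k$ on a $d$-dimensional subspace needs about $d^{2-2/k}$-type copies, or more precisely we trade off against $n$ so that $n\approx d^2/k^2$. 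The $\epsilon^4$ arises from needing accuracy $\epsilon^2$ in the tomography stage to control bucket leakage.

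The main obstacle is the middle step: proving that after rotation into the estimated basis, the induced mixing of eigenvalues is benign, so that the total variation error of the combined estimator adds up to $\epsilon$. One also has to verify that moment estimation on the residual subspace (whose definition depends on the first half's randomness) achieves the needed precision with $d^2/k^2$ copies. I would first try bounding the leakage via $\sum_{i,j}|\langle \hat v_j|v_i\rangle|^2|\alpha_i-\hat\alpha_j|$ using the simultaneous relative-error bound applied to each $\ket{\hat v_j}$.
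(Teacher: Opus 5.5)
Your skeleton (relative-error tomography on one half to bucket, then moment estimation and local moment matching with $K\approx\log d/\log\log d$ on the small part) matches the paper's. However, the step you defer as ``the main obstacle'' is the heart of the argument, and the route you sketch through it fails. Measuring the second half in the full eigenbasis $\{\ket{\hat v_j}\}$ dephases those copies. Moment statistics computed from them therefore estimate $\sum_{j\in S}p_j^k$, the moments of the blurred distribution, and not $\tr(\rho_S^k)$. Nor can the mixing be shown benign. With $n=o(d^2)$ copies the small-eigenvalue directions cannot be located, so for $j$ in the small part the weights $\lvert\langle\hat v_j|v_i\rangle\rvert^2$ are uninformative. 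Your leakage sum $\sum_{i,j}\lvert\langle\hat v_j|v_i\rangle\rvert^2\lvert\alpha_i-\hat\alpha_j\rvert$ is then $\Omega(1)$ already when half the eigenvalues are $0$ and half are $2/d$. The relative error bound at a single $\ket{\hat v_j}$ cannot rescue this: it only compares $\hat\alpha_j$ with the mean $\sum_i\lvert\langle\hat v_j|v_i\rangle\rvert^2\alpha_i$, not with the spread. (The large part needs no second-half samples at all, since $\spec(\widehat{\brho})_{\le\br}$ is already $O(\epsilon)$-close to $\spec(\rho)_{\le\br}$.)

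The missing idea is that blurring inside a bucket never has to be controlled, only leakage between buckets. Let $\bPi$ project onto the eigenvectors of $\widehat{\brho}$ with eigenvalue at least $B=\Theta(\epsilon^2K^2/d)$. Measure the second half only with $\{\bPi,\overline{\bPi}\}$, which preserves coherence inside $\overline{\bPi}$. Then prove two things from the relative error bound.

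\emph{Misclassification.} For unit $\ket{w}$ in the range of $\overline{\bPi}$, the fact $\bra{w}\widehat{\brho}\ket{w}<B$, the relative error bound and AM--GM give $\bra{w}\rho\ket{w}\le 1.1B$ once $n\ge C_2d/(B\epsilon^2)$. So $\sigma=\overline{\bPi}\rho\overline{\bPi}$ has spectrum in $[0,1.1B]$. This gives $\tr(\sigma^{2k-1})\le d(1.1B)^{2k-1}$ in the moment-estimator variance, and it is what the local moment matching theorem assumes.

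\emph{Alignment.} Cauchy interlacing gives $\lambda_i(\bPi\rho\bPi)\le\alpha_i$ for $i\le\br$ and $\lambda_{i-\br}(\sigma)\ge\alpha_i$ for $i>\br$. Since the two block traces sum to $1$, this yields $\dtv{\spec(\bPi\rho\bPi+\sigma)}{\spec(\rho)}\le 2\,\dtv{\spec(\bPi\rho\bPi)_{\le\br}}{\spec(\rho)_{\le\br}}$. The trace-norm version of the bound on a rank-$2\br$ projector makes this $O(\sqrt{\br d/n}+\br d/n)=O(\epsilon)$. Here one uses $\br\le 1.1/B$, which holds because each top eigenvector of $\widehat{\brho}$ has $\bra{\hat v_i}\rho\ket{\hat v_i}\ge B/1.1$.

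For the moments, the Young-diagram option is the one that works. Use $p^{\#}_{(k)}(\blambda)/n^{\downarrow k}$ on the $\bm\sim\mathrm{Binom}(n,\tr(\sigma))$ surviving copies; it is unbiased because $\E[\bm^{\downarrow k}]=n^{\downarrow k}\tr(\sigma)^k$. Disjoint swap or cycle tests have standard deviation $\Omega(1/\sqrt{n})$, which for $k\ge 3$ exceeds $\tr(\sigma^k)\le(1.1B)^{k-1}$ itself.

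Your parameter heuristics are also off. $K=O(\log d/\log\log d)$ is forced by the $2^{O(K)}K^{O(K)}$ factors in the moment variances and the matching error (one needs $2^{8K}K^{5K+3}\le d$), not by a $d^{2-2/k}$ copy count. The $\epsilon^{-4}$ comes from $n=\Theta(d/(B\epsilon^2))$ for bucketing, combined with $B=\Theta(\epsilon^2K^2/d)$ to make the matching bias $\sqrt{Bd}/K$ at most $\epsilon$.
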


For $\eps = \omega(\log\log(d) / \log(d))$, this algorithm uses asymptotically fewer samples than Keyl--Werner.
Note that we do not expect Keyl--Werner to be sub-optimal across the full range of $\eps$; see the section below on sorted distribution estimation for more details.
Prior to this work, all that was known about the sample complexity of spectrum estimation was an upper bound of $O(d^2/\eps^2)$ (from analyzing Keyl--Werner) and a lower bound of $\Omega(d/\eps^2)$~\cite{OW16}.
Finding the optimal complexity of this task was a question posed by Wright~\cite[Section 10.2]{Wri16} and Anshu and Arunachalam~\cite[Question 1]{AA23}; this work marks the first progress made on this question since 2016.
For $\eps$ constant, we improve the sample complexity of spectrum estimation by \emph{two} factors of $\log(d)$.
Wright conjectured that only one such factor could be saved, and that a lower bound of $\Omega(d^2 / \log(d))$ could be shown~\cite{Wri16}.
\Cref{thm:main} refutes that conjecture.

So, how close is our algorithm to optimal?
To discuss this question, we introduce the classical analogue of spectrum estimation.

\paragraph{Sorted distribution estimation, the analogous classical problem.}
As is common in quantum learning theory, it is illuminating to consider the analogous classical problem.
When the input state has a diagonal density matrix, i.e.\ $\rho$ is classical, spectrum estimation is equivalent to the problem of estimating the sorted distribution: given samples from a discrete distribution $\alpha$ on $d$ elements, output an estimate of the $d$ associated probabilities, sorted from largest to smallest.

The classical analogue of the question we consider is whether the sorted distribution can be estimated with $o(d)$ samples,
which is provably too few samples to learn the distribution $\alpha$ itself.
Indeed, in this regime, the sample is so small that one cannot even observe most of the elements within $\alpha$'s support, and yet one is still asked to estimate the frequencies of all these elements.
As a result, this task is sometimes referred to as ``estimating the unseen''~\cite{VV11a,VV13,VV17}.
(By analogy, our quantum spectrum estimation task could be referred to as ``estimating the unknowable'', as we are estimating the frequencies of eigenvectors in a regime where the eigenvectors can't be known.)

That estimating the unseen is possible is a surprising and influential finding in the classical literature.
This was first discovered by Paninski~\cite{Pan04} in the context of estimating the entropy of a distribution.
Then, Valiant and Valiant~\cite{VV11a} generalized it to estimating support size, motivated by the ``unseen species'' problem in statistics.\footnote{
    The unseen species problem asks to estimate the support size of a distribution from samples.
    Instantiations of this question include ``How many new species will I discover on my next expedition, given that I discovered this many on my last?''~\cite{FCW43} and ``How many words did Shakespeare know, given that we only know the words which appear in his writing?''~\cite{ET76}
}
They, along with a subsequent line of research~\cite{VV13,VV17,VV11b,WY15,WY16,JVHW15}, further generalized these ideas to show that the full sorted distribution $\alpha^{\geq}$ can be estimated with $n = o(d)$ samples, culminating in the local moment matching framework of Han, Jiao, and Weissman~\cite{HJW18}.
For any parameter $\gamma >0$, their work gives an algorithm for estimating $\alpha^{\geq}$ to $\epsilon$ error using only $n = O(d/(\log(d) \cdot \epsilon^2))$ samples, so long as $\epsilon \geq 1/d^{1-\gamma}$; when $\epsilon \leq 1/d$, one can simply use the sorted empirical distribution as before to estimate $\alpha^{\geq}$ with $n = O(d/\epsilon^2)$ samples.
In addition, they proved that these upper bounds were optimal by showing matching lower bounds in these two regimes.

\paragraph{Comparison to related results.}
Let us return now to spectrum estimation.
In analogy with sorted distribution estimation, it appears that a complexity of $d^2 / \log(d)^2$ could plausibly be optimal, thereby making \cref{thm:main} optimal in $d$ up to $\log\log$ factors.
On the other hand, it seems likely that our dependence on $\eps$ is suboptimal, and that it can likely be improved, say to $\eps^2$.
That said, it very well may be that, like sorted distribution estimation, spectrum estimation is as costly as full state tomography for sufficiently small $\eps$.

The recent work of~\cite{PTTW25}
investigates spectrum estimation in the regime where the copies of $\rho$ must be measured with unentangled measurements.
In this setting, they showed that spectrum estimation can be done with $n = O(d^3 \cdot (\log\log(d)/\log(d))^4/\eps^6)$ copies, meaning that the improvement over full state tomography is at least $\log(d)^4$.
It is curious that the separation is so large, both for the unentangled and entangled settings; we do not yet have a clear explanation for why this is the case.
As in our entangled algorithm, their $\eps$ dependence is large and probably sub-optimal;
they show that there are significant obstacles to improving this dependence using current techniques,
and these obstacles apply to this work too.

Finally, they provide
computational evidence suggesting that the sample complexity of spectrum estimation, in the setting of entangled measurements, is $\Omega(d^{2-\gamma})$ for every constant $\gamma > 0$, which would mean that the $d$ dependence of our algorithm is tight up to sub-polynomial factors.
Additionally, we believe that our results are tight in $d$ up to $\log\log$ factors.

\subsection{A new, fine-grained tomography guarantee}

Our main technical contribution is a new kind of guarantee for state tomography.
As we will explain in the technical overview, the challenge of spectrum estimation is principally one of ``bucketing'', i.e.\ deciding in which directions $\rho$ is large versus small.
A natural approach for bucketing is to run tomography to get an estimate of $\rho$, and then use this estimate to create the buckets.
However, prior to this work, the guarantees on tomography algorithms were too weak to implement bucketing.
This was the primary obstacle to better spectrum estimation.

\subsubsection{The classical case}
In the classical setting, bucketing is possible because, for natural estimators, the variance in a direction scales with how large the distribution is in that direction.
In particular, suppose we have drawn $n$ samples from a discrete distribution $\alpha = (\alpha_1, \ldots, \alpha_d)$ and computed the empirical distribution $\widehat{\balpha}$.
Then for each $1 \leq i \leq d$, the random variable $\widehat{\balpha}_i$ is distributed as $(1/n)$ times a $\mathrm{Binomial}(n, \alpha_i)$ random variable, and therefore satisfies
\begin{equation*}
\Var[\widehat{\balpha}_i] = \frac{1}{n} \cdot \alpha_i (1-\alpha_i) \leq \frac{1}{n} \cdot \alpha_i.
\end{equation*}
Hence, $\stddev[\widehat{\balpha}_i] \leq \sqrt{\alpha_i/n}$, and so $\widehat{\balpha}_i = \alpha_i \pm O(\sqrt{\alpha_i/n})$ with probability 99\% due to Chebyshev's inequality.
This means that $\widehat{\balpha}_i$ will give a good estimate of $\alpha_i$ for reasonable values of $n$, and it will give tighter and tighter estimates the smaller that $\alpha_i$ is.

\paragraph{Sub-gamma concentration.}
For classical bucketing, we need to produce good estimates not just for a single $\alpha_i$, but for all $\alpha_i$'s at once.
To do so, we will need bounds on how well $\widehat{\balpha}_i$ concentrates around its mean which are stronger than what Chebyshev's inequality is able to provide us.
A good start is the multiplicative Chernoff bound, which states that
\begin{equation*}
\widehat{\balpha}_i = \alpha_i \pm t \cdot \sqrt{\alpha_i /n}
\end{equation*}
except with probability $2e^{-t^2/3}$,
for all $0 \leq t \leq \sqrt{\alpha_i n}$.
This means that the distribution of $\widehat{\balpha}_i$ decays as a Gaussian (i.e.\ it is a \emph{sub-Gaussian} random variable) with variance $\alpha_i/n$, up to roughly $\sqrt{\alpha_i n}$ standard deviations.
Beyond that, however, it decays more slowly, at a rate which is no longer sub-Gaussian.
In this regime, there are various bounds which describe its rate of decay with varying degrees of accuracy;
a reasonably tight characterization is that $\widehat{\balpha}_i$ is a $(v, c)$-sub-\emph{gamma} random variable, with parameters $v = \alpha_i/n$ and $c = 1/n$ (here, $v$ is known as the \emph{variance parameter}, matching the variance we computed above).
A sub-gamma random variable is one whose medium tail acts like a Gaussian but whose long tail acts like an exponential random variable. Sub-gamma random variables obey the concentration bound
\begin{align}\label{eq:t-squared}
        \Pr\big[ \Abs{\widehat{\balpha}_i - \alpha_i}  \geq \sqrt{2v}\cdot t + c\cdot t^2 \big] \leq 2e^{-t^2}.\footnotemark
    \end{align}\footnotetext{It is more common to state this sub-gamma concentration bound with the ``$t$'' parameter replaced by ``$\sqrt{t}$'', as we will do in the rest of the text. We have stated it in this way here in order to more easily compare it with the Chernoff bound.}%
To understand this bound, note that for small $t$, the $\sqrt{2 v}t$ term dominates, in which case it matches sub-Gaussian decay we saw from the Chernoff bound, up to constant factors.
    And for larger $t$, the $ct^2$ term dominates, which yields the exponential tail.
    The cross-over point is at $t = \sqrt{2 \alpha_i n}$ standard deviations, as expected (up to constant factors) from the Chernoff bound.\footnote{
To see why \emph{some} sort of non-Gaussian bound is necessary in the long-tail, consider the event that $\widehat{\balpha}_i = 1$, in which case $\widehat{\balpha}_i$ is $t = \sqrt{n/\alpha_i}$ standard deviations above its mean.
A sub-Gaussian tail would predict that this event occurs with probability at most $2 e^{- n /\alpha_i}$, when in fact it occurs with the much higher probability of $\alpha_i^n = e^{-n \ln(1/\alpha_i)}$; this matches the prediction of our exponential long tail, up to the $\ln(1/\alpha_i)$ in the exponent.}

\paragraph{A classical relative error bound.}
Now, to produce good estimates for all $\alpha_i$'s at once, let us simply apply the concentration bound in \Cref{eq:t-squared} for $t = O(\sqrt{\log(d)})$.
If we do so and rearrange, it states that
\begin{equation}\label{eq:classical-uniform-bound}
    \Abs{\widehat{\balpha}_i - \alpha_i} \leq C \cdot \sqrt{\frac{\log(d)}{n} \cdot \Big(\alpha_i + \frac{\log(d)}{n}\Big)},
\end{equation}
except with probability $0.01/d$. Hence, by a union bound, this bound holds for all $i \in [d]$ simultaneously, with probability at least 0.99.
As a result, for any $\alpha_i$ which is sufficiently large, namely $\alpha_i \geq \log(d)/n$, we are able to resolve its true value up to $\sqrt{\log(d)}$ standard deviations.
We refer to a bound of this form as a ``relative error bound'', as it gives us a bound on each coordinate $i \in [d]$ in terms of the corresponding $\alpha_i$.

With this bound in hand, it is usually straightforward to design a classical bucketing algorithm.
These algorithms will typically create a ``small'' bucket, consisting of those $i \in [d]$ for which $\widehat{\balpha}_i \leq \log(d)/n$,
and other buckets centered at certain probability values $p \in [0, 1]$ which contain those $\widehat{\balpha}_i$'s in the range $p \pm \sqrt{\log(d)/n} \cdot \sqrt{p}$.

\subsubsection{Our quantum guarantee}\label{sec:quantum-uniform}

In the quantum case,
we would like a tomography algorithm in which the error of the estimator $\widehat{\brho}$ scales with how large the state $\rho$ is in that direction;
in other words, we want the error $\bra{w} (\widehat{\brho} - \rho) \ket{w}$ to scale with $\bra{w} \rho \ket{w}$, for all states $\ket{w}$ simultaneously.
This can be viewed as a quantum analogue of a relative error bound.
From the literature, the strongest relative error bound for a tomography algorithm comes from the ``standard'' unentangled tomography algorithm of~\cite[Section 5.1]{Wri16} and~\cite{GKKT20}. It follows from \cite[Theorem 5]{GKKT20} that the output of this algorithm $\widehat{\brho}$ satisfies
\begin{equation}\label{eq:gkkt-infinity}
    \Vert \widehat{\brho} - \rho \Vert_{\infty} \leq O(\sqrt{d/n})
\end{equation}
with high probability.
For this algorithm, this result shows that we can bound the error $\bra{w} (\widehat{\brho} -\rho) \ket{w}$ for all states $\ket{w}$ simultaneously, but it has the downside that the bound of $O(\sqrt{d/n})$ does not improve for smaller values of $\bra{w} \rho \ket{w}$.
We note that this $\ell_{\infty}$-norm guarantee is optimal up to constant factors, as any asymptotic improvement would lead to pure state tomography algorithms with too-good-to-be-true sample complexities.

Recently, the work of~\cite{PSTW25} introduced a new mixed state tomography algorithm called $\mixed(\gps)$, which is the combination of the random purification channel of~\cite{TWZ25} and the pure state tomography algorithm of Grier, Pashayan, and Schaeffer~\cite{GPS24}.
We will describe these two ingredients and their combination in greater detail in \Cref{sec:our-algorithm} below.
For now, what is important to us is that they were able to establish an error bound for this estimator that \emph{does} scale with $\bra{w} \rho \ket{w}$, at the downside of no longer being a uniform bound.
In particular,
\cite[Theorem 1.7]{PSTW25} shows that if $\widehat{\brho}$ is the output of $\mixed(\gps)$, then
\begin{equation}\label{eq:gps-variance}
    \Var[\bra{w} \widehat{\brho} \ket{w}] \leq \frac{2}{n} \cdot \bra{w} \rho \ket{w} + \frac{d}{n^2}.
\end{equation}
Equivalently, its standard deviation is
\begin{equation}\label{eq:stddev-bound}
    \stddev[\bra{w} \widehat{\brho} \ket{w}] \leq \sqrt{\frac{2}{n} \cdot \bra{w} \rho \ket{w} + \frac{d}{n^2}},
\end{equation}
which means that with 99\% probability, $\bra{w} \widehat{\brho} \ket{w}$ will deviate from $\bra{w} \rho \ket{w}$ by at most a constant times the right-hand side of this expression.
This essentially matches the standard deviation that we saw in the classical case (with $\alpha_i$ replaced by $\bra{w} \rho \ket{w}$), except for an additional $d/n^2$ term.
This means that the error bound does scale with the value $\bra{w} \rho \ket{w}$, but only when this value is sufficiently large.

\paragraph{Sub-gamma concentration.}
To translate this guarantee into a relative error bound, we will first need to show a much stronger bound on how well $\bra{w} \widehat{\brho} \ket{w}$ concentrates around its mean than what we can derive from Chebyshev's inequality.
We show that, just like in the classical case, this random variable exhibits sub-gamma concentration.

\begin{proposition}[A sub-gamma bound for $\mixed(\gps)$] \label{prop:mixed_states_observables_subgamma-intro}
    Let $\rho \in \C^{d \times d}$ be a mixed state, and let $\widehat{\brho}$ be the output of $\mixed(\gps)$ given $n$ copies of $\rho$. For any state $\ket{w} \in \C^d$, the random variable $\bra{w} \widehat{\brho} \ket{w}$
    is $(v,c)$-sub-gamma, with
    \begin{equation*}
    v = \frac{4}{n} \cdot \bra{w} \rho \ket{w} + \frac{2d}{n^2},
    \qquad c = \frac{1}{n}.
    \end{equation*}
\end{proposition}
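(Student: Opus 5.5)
The plan is to compute the outcome distribution of $\mixed(\gps)$ explicitly enough to control the moment generating function of $\bra{w}\widehat{\brho}\ket{w}$. I would then verify the sub-gamma condition $\log \E\, e^{\lambda(X-\E X)} \le \frac{v\lambda^2}{2(1-c|\lambda|)}$ for $|\lambda|<1/c$, applied both to $X = \bra{w}\widehat{\brho}\ket{w}$ and to $-X$.

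First, I unpack the algorithm, using its description in \Cref{sec:our-algorithm}. The random purification channel turns $\rho^{\otimes n}$ into $\ket{\psi}^{\otimes n}$, where $\ket{\psi}\in\C^d\otimes\C^d$ is a random purification of $\rho$. $\gps$ then measures with the uniform POVM on the symmetric subspace of $(\C^{D})^{\otimes n}$, $D=d^2$, so that the outcome $\ket{\boldsymbol u}$ has density $\binom{n+D-1}{n}|\langle \boldsymbol u|\psi\rangle|^{2n}$ with respect to Haar measure. It outputs the unbiased estimator $\frac{(n+D)\ketbra{\boldsymbol u}-I}{n}$. $\mixed$ traces out the purifying register, giving
\begin{equation*}
\bra{w}\widehat{\brho}\ket{w} = \frac{(n+d^2)\,\bra{\boldsymbol u}P\ket{\boldsymbol u} - d}{n}, \qquad P = \ketbra{w}\otimes I_d,
\end{equation*}
where $P$ is a rank-$d$ projector with $\bra{\psi}P\ket{\psi}=\bra{w}\rho\ket{w}=:p$. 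The statement therefore reduces to a sub-gamma bound for the scalar $Y=\bra{\boldsymbol u}P\ket{\boldsymbol u}$ under this tilted Haar law. The bound must not depend on the particular purification, which is consistent with the fact that only $p$ and $d$ enter the claimed parameters.

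Second, I would represent $\boldsymbol u$ via Gaussians. Weighting a standard complex Gaussian $\boldsymbol g\in\C^D$ by $|\langle \boldsymbol g|\psi\rangle|^{2n}$ and normalizing gives exactly the above law for $\boldsymbol g/\|\boldsymbol g\|$. Under this weighting, $\boldsymbol g = \boldsymbol a\ket{\psi} + \boldsymbol h$, where $|\boldsymbol a|^2\sim\mathrm{Gamma}(n+1)$ with uniform phase, and $\boldsymbol h$ is a standard Gaussian on $\psi^\perp$. Moreover, $\|\boldsymbol g\|^2\sim \mathrm{Gamma}(n+D)$ is independent of the direction $\boldsymbol u$. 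This independence is the key feature: $\E[e^{\lambda \|\boldsymbol g\|^2 Y}]$ factors through the direction, so the moments of $Y$ are $\E[(\bra{\boldsymbol g}P\ket{\boldsymbol g})^k]\,/\,\E[\|\boldsymbol g\|^{2k}]$, and $\E[\|\boldsymbol g\|^{2k}]=(n+D)^{\overline{k}}$ (rising factorial). The numerator $\bra{\boldsymbol g}P\ket{\boldsymbol g} = \|\boldsymbol a P\psi + P\boldsymbol h\|^2$ is a noncentral chi-square-type quantity. It has $d$ complex Gaussian degrees of freedom (the $P\boldsymbol h$ part, up to a rank-one correction from the $\psi^\perp$ restriction) plus a signal of strength $|\boldsymbol a|^2 p$ with $|\boldsymbol a|^2\approx n$. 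Its cumulants can be written down explicitly: a gamma variable with shape about $d$ and scale $1$, plus a term with mean $np$ that is conditionally sub-gamma with variance about $2np$.

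Third, I would convert this into the MGF bound for $X$. Since $(n+D)Y \approx \bra{\boldsymbol g}P\ket{\boldsymbol g}\cdot \frac{n+D}{\|\boldsymbol g\|^2}$ and $\|\boldsymbol g\|^2$ concentrates at $n+D$, the variable $nX+d$ should behave like $\bra{\boldsymbol g}P\ket{\boldsymbol g}$. Subtracting $d$ and dividing by $n$ then yields variance parameter roughly $\frac{1}{n^2}\bigl(2np\cdot(\text{const}) + 2d\bigr) = \frac{4p}{n}+\frac{2d}{n^2}$ and scale $c=1/n$. To make this rigorous, I would expand $\E[Y^k]$ via the moment formula from the second step, i.e.
\begin{equation*}
\E[Y^k]=\frac{\E[\bra{\boldsymbol g}P\ket{\boldsymbol g}^k]}{(n+D)^{\overline{k}}},
\end{equation*}
and sum the MGF series $\sum_k \lambda^k(n+D)^k\E[Y^k]/(n^k k!)$. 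I would then compare it termwise with the MGF of the Gaussian-model quantity, using $(n+D)^k\le (n+D)^{\overline{k}}$ for the upper tail.

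The main obstacle is this last comparison, especially for the lower tail ($\lambda<0$), where the termwise inequality reverses. For that direction I would first try to exploit that $Y\in[0,1]$ and that the left tail of $\bra{\boldsymbol g}P\ket{\boldsymbol g}$ is sub-Gaussian with variance proxy equal to its variance. If the direct series comparison fails, the fallback is to condition on $|\boldsymbol a|^2$ and $\|\boldsymbol g\|^2$ and apply Beta-distribution MGF bounds: conditionally, $Y$ is a Beta-type ratio of two independent gamma variables, and Beta variables are known to be sub-gamma with explicit constants.
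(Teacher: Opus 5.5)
Your Gaussian representation is correct. The upper tail also goes through: for $\lambda>0$ all moments of $Y=\bra{\boldsymbol u}P\ket{\boldsymbol u}$ are nonnegative and $(n+D)^k\le (n+D)^{\overline{k}}$, so $\E e^{\lambda X}\le \E e^{\lambda Z}$ with $Z\coloneqq(\bra{\boldsymbol g}P\ket{\boldsymbol g}-d)/n$, and $\E X=\E Z=p$.

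\textbf{The gap.} The sub-gamma condition is two-sided, and for $\lambda<0$ your proposal has no argument. The series alternates in sign, and neither fallback is worked out or clearly workable.
\begin{itemize}
\item A left-tail bound for $\bra{\boldsymbol g}P\ket{\boldsymbol g}$ transfers to $Y$ only once you have a comparison valid for $\lambda<0$, which is exactly what is missing.
\item Conditioning on $|\boldsymbol a|^2$ and $\|\boldsymbol g\|^2$ does not produce a Beta variable, since the numerator $\|\boldsymbol a P\psi+P\boldsymbol h\|^2$ contains a signal and a cross term. Its conditional mean also varies with the conditioning, so \Cref{lem:mixture_of_subgammas} cannot reassemble the pieces.
\end{itemize}
As written, half of the statement is unproved.

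\textbf{The missing idea.} Compare through convexity rather than through moments; your own setup makes this one line. Since $\|\boldsymbol g\|^2$ is independent of $\boldsymbol u$ with mean $n+D$, you have $\E[Z\mid \boldsymbol u]=X$. Conditional Jensen then gives $\E e^{\lambda(X-p)}\le \E e^{\lambda(Z-p)}$ for every real $\lambda$.

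The right side is exactly computable by a Gaussian change of variables. For $s<1$,
\begin{equation*}
\E\, e^{s\bra{\boldsymbol g}P\ket{\boldsymbol g}}=\det(I-sP)^{-1}\bigl(\bra{\psi}(I-sP)^{-1}\ket{\psi}\bigr)^n=(1-s)^{-d}\Bigl(1+\frac{sp}{1-s}\Bigr)^n .
\end{equation*}
Take $s=\lambda/n$ and use $-\log(1-s)-s\le \frac{s^2}{2(1-|s|)}$ and $\log(1+y)\le y$. This shows $X$ is $\bigl(\tfrac{2p}{n}+\tfrac{d}{n^2},\tfrac1n\bigr)$-sub-gamma for each fixed purification. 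The mean is $p$ for every purification, so the same holds after mixing over purifications, and this implies the claim.

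\textbf{Relation to the paper.} This is the paper's proof in different clothing.
\begin{itemize}
\item Its auxiliary $\bZ=\bG\bY$, with $\bG\sim\mathrm{Gamma}(n+D,1/(n+D))$ independent of $\bY$, is in distribution your $\bra{\boldsymbol g}O\ket{\boldsymbol g}/n$.
\item \Cref{lem:mgf_Z} is your Gaussian integral, derived there combinatorially from Chiribella's theorem.
\item The paper applies Jensen after symmetrizing: it uses $\E e^{t(\bX-\E\bX)}\le \E e^{t(\bX-\bX')}$ and compares even moments of $\bY-\bY'$ and $\bZ-\bZ'$. This handles both signs of $t$ uniformly, at the cost of a factor $2$ in $v$. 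The direct conditional-Jensen version avoids that loss.
\end{itemize}
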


The variance parameter $v$ matches the variance computed in \Cref{eq:gps-variance}, up to constant factors, as we would expect.
To gain intuition for this result, note that if we apply the concentration bound from \Cref{eq:t-squared}, the cross-over point from a Gaussian tail to an exponential tail occurs at roughly the value of $t = \sqrt{n \cdot \bra{w} \rho \ket{w} + d}$.
This means that we are still in the Gaussian part of the tail when $t = \sqrt{d}$, no matter what $\bra{w} \rho \ket{w}$ is, which means that we always have the concentration
\begin{equation}\label{eq:u-gonna-union-bound?}
        \Pr\big[ \ABS{\bra{w}\widehat{\brho} \ket{w} - \bra{w}\rho \ket{w}}  \geq 2\sqrt{2vd}\big] \leq 2e^{-d}.
    \end{equation}

\paragraph{A quantum relative error bound.}
Now we would like to use our stronger concentration bound to derive a relative error bound on the error of the output of $\mixed(\gps)$.\footnote{
One could also try to derive a relative error bound for the stronger variant $\mixed^+(\gps)$, also introduced in \cite{PSTW25}. However, it is not clear to us that this would yield a \emph{better} relative error bound. The reason is that while we might expect one of the terms in $v$ to improve (in particular, we should have $2d/n^2 \to 2\E[\ell(\blambda)]/n^2$), this improvement does not propagate to an improved bound version of \Cref{eq:goal?}. In particular, since the sub-gamma concentration bound is of the form $O( \sqrt{v d} + c d )$ for failure probability $e^{-d}$, and since we expect $c = 1/n$ independent of the improvement to $v$, we still expect to end up with a ``floor'' of $d/n$ in \Cref{eq:goal?}. Since proving sub-gamma concentration bounds for $\mixed^+(\gps)$ would also require more technical proofs, we leave to future work the question of whether this paper's ideas, combined with $\mixed^+(\gps)$, yields improved results.
}
Repeating the union bound proof from the classical case, a natural strategy is to apply \Cref{eq:u-gonna-union-bound?} to every unit vector $\ket{w}$ in an $\epsilon$-net.
Since there are roughly exponential (in $d$) vectors in such a net, union bounding this guarantee over all vectors in the net would imply that
\begin{equation}\label{eq:goal?}
    \ABS{\bra{w}\widehat{\brho} \ket{w} - \bra{w}\rho \ket{w}}  \leq O(\sqrt{vd}) = O\Big(\sqrt{\frac{d}{n}\cdot\Big(\bra{w}\rho\ket{w} + \frac{d}{n}\Big)}\Big),
\end{equation}
for all vectors $\ket{w}$ in the $\epsilon$-net, at least with some probability.
Having established this guarantee for all vectors in the net, one might hope to then derive a similar statement for every unit vector $\ket{w}$ in $\C^d$.
This would indeed give us a relative error bound, as the concentration in \Cref{eq:goal?} improves the smaller that $\bra{w} \rho \ket{w}$ is.

However, this strategy does not seem to work.
The reason is that the relative error bound in \Cref{eq:goal?} asks for a different amount of concentration for each unit vector $\ket{w}$, whereas $\epsilon$-net arguments are more well-suited to establishing the same level of concentration for all vectors.
Indeed, the $\ell_{\infty}$ bound of \Cref{eq:gkkt-infinity} is established via an $\epsilon$-net argument.
In spite of this issue, we are still able to show, through different means, a relative error bound which achieves the precise scaling that this incorrect argument would have suggested.

\begin{theorem}[A relative error bound]
    \label{thm:main-tool}
    There is an algorithm which, given $n$ copies of a state $\rho \in \C^{d \times d}$, outputs a matrix $\widehat{\brho}$ with the guarantee that, with probability $\geq 0.99$, for all pure states $\ket{w}$,
    \begin{equation*}
        \ABs{\bra{w} ( \widehat{\brho} - \rho )  \ket{w} }\leq C \cdot \sqrt{ \frac{d}{n} \cdot \Big( \bra{w} \rho \ket{w} + \frac{d}{n} \Big) }.
    \end{equation*}
    Here, $C$ is a universal constant.
\end{theorem}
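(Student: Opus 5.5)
The plan is to take $\widehat{\brho}$ to be the output of $\mixed(\gps)$ and to obtain the relative error bound by a \emph{peeling} (multi-scale) $\epsilon$-net argument combined with \Cref{prop:mixed_states_observables_subgamma-intro}. The failure of the naive net argument comes from asking for a different accuracy for each $\ket{w}$. We therefore restrict to subspaces on which $\rho$ has a known scale, and on each such subspace we ask for a uniform accuracy. Write $\rho=\sum_i\alpha_i\ketbra{v_i}$ and let $\delta = d/n$; we may assume $\delta \le 1$, since otherwise the bound is trivial once $\|\widehat{\brho}\|_\infty = O(d/n)$ is established.

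We group the eigenvalues dyadically, with $B_0=\{i:\alpha_i\le \delta\}$ and $B_k=\{i: 2^{k-1}\delta<\alpha_i\le 2^k\delta\}$ for $k = 1,\ldots,\log(1/\delta)$. Let $P_k$ denote the projector onto $\mathrm{span}\{\ket{v_i}:i\in B_k\}$. The first step is to control every block $E_{jk}=P_j(\widehat{\brho}-\rho)P_k$ in operator norm. For a unit vector $\ket{w}$ in $\mathrm{range}(P_k)$ we have $\bra{w}\rho\ket{w}\le 2^k\delta$. For mixed blocks we use the polarization $\ket{w}=(\ket{x}+e^{i\theta}\ket{y})/\sqrt2$ with $\ket{x}\in P_j$ and $\ket{y}\in P_k$, which gives $\bra{w}\rho\ket{w}\le 2^{\max(j,k)}\delta$.

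Applying \Cref{eq:u-gonna-union-bound?} with failure probability $e^{-Cd}$, the random variable $\bra{w}\widehat{\brho}\ket{w}$ deviates by at most $O(\sqrt{(d/n)(2^{\max(j,k)}\delta+\delta)}+d/n)$. We then union bound over a $1/4$-net of the unit sphere of $\mathrm{range}(P_j)\oplus\mathrm{range}(P_k)$, which has size at most $e^{O(d)}$, and over the $O(\log^2(1/\delta))\le O(d^2)$ pairs $(j,k)$. A standard net-to-sphere argument for Hermitian matrices then yields
\[
\|E_{jk}\|_\infty \le C\sqrt{\tfrac{d}{n}\cdot 2^{\max(j,k)}\tfrac{d}{n}} = C\, 2^{\max(j,k)/2}\,\delta
\]
simultaneously for all pairs. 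Making the failure probability $e^{-Cd}$ small enough to absorb the factor $d^2$ requires choosing the constant in $t=\Theta(\sqrt d)$ appropriately; the sub-gamma form remains in the Gaussian regime there, up to the additive $cd = d/n$ term, which is harmless.

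The second step assembles the blocks. For an arbitrary unit vector $\ket{w}=\sum_k \beta_k\ket{w_k}$ with $\ket{w_k}\in\mathrm{range}(P_k)$ unit, we have $\bra{w}\rho\ket{w}\ge\sum_k|\beta_k|^2 2^{k-1}\delta$ for $k\ge1$. The error satisfies
\[
|\bra{w}(\widehat{\brho}-\rho)\ket{w}|\le \sum_{j,k}|\beta_j||\beta_k|\,C2^{\max(j,k)/2}\delta .
\]
Setting $a_k = |\beta_k|2^{k/2}$, we bound $2^{\max(j,k)/2}\le 2^{j/2}+2^{k/2}$, so the sum is at most $2C\delta\,(\sum_k a_k)(\sum_j|\beta_j|)$. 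Here lies the main obstacle: a crude Cauchy--Schwarz gives $\sum_j|\beta_j|\le\sqrt{\#\text{buckets}}$ and $\sum_k a_k\le \sqrt{\#\text{buckets}}\,\sqrt{\sum a_k^2}$. This leaves a $\log(n/d)$ loss compared with the target $C\delta\sqrt{1+\sum_k a_k^2}\approx C\sqrt{\delta(\bra{w}\rho\ket{w}+\delta)}$.

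To remove this loss, I would first try to sharpen the per-block bound so that it depends on \emph{both} scales, namely $\|E_{jk}\|\lesssim \delta\,2^{(j+k)/4}$ up to an exponentially decaying factor in $|j-k|$. After that, a Schur-test / Cauchy--Schwarz argument with weights $2^{k/2}$ sums the double series geometrically, giving $O(\delta\sqrt{\sum_k (1+a_k^2)|\beta_k|^{0}})$ with no logarithmic loss. The natural route to the sharper off-diagonal bound is to apply the sub-gamma proposition to the rescaled vectors $\ket{w}=2^{-j/4}\ket{x}+2^{-k/4}e^{i\theta}\ket{y}$, normalized, and the off-diagonal block must be read off from these, which is the delicate computation. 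If it fails, the fallback is to invoke the variance-parameter structure directly: run the net argument on $\rho^{-1/2}$-type reweighted vectors $(\rho+\delta I)^{1/2}\ket{u}/\|\cdot\|$, aiming to bound $\|(\rho+\delta I)^{-1/2}(\widehat{\brho}-\rho)(\rho+\delta I)^{-1/2}\|_\infty\le C\sqrt{d/n}\cdot\sqrt{n/d}$. That inequality is exactly equivalent to the theorem, via $\bra{w}(\rho+\delta I)\ket{w}=\bra{w}\rho\ket{w}+\delta$.
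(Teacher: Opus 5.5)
Your Step 1 is sound, but the proposal has a genuine gap exactly where you flag the obstacle: neither suggested way of removing the logarithmic loss works.

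\textbf{The sharpened block bound is false.} You hope for roughly $\delta\,2^{(j+k)/4}$ with decay in $|j-k|$, but the off-diagonal blocks really are as large as your Step 1 bound. Take $\alpha_1 = 1/2$ with the remaining mass spread evenly, and $2d < n \le d^2$. Then $\ket{v_1}$ lies in a bucket $B_k$ with $2^k\delta \approx 1/2$, all other eigenvectors lie in $B_0$, and $E_{0k} = P_0\widehat{\brho}\ketbra{v_1}$. Since $\widehat{\brho}$ is unbiased, the quantum Cram\'er--Rao bound gives $\mathrm{Var}(\mathrm{Re}\bra{v_i}\widehat{\brho}\ket{v_1}) \geq (\alpha_1+\alpha_i)/(4n)$. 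Hence $\E\|E_{0k}\|_\infty^2 \gtrsim d\alpha_1/n \approx 2^{k}\delta^2$, so this block has size $2^{\max(j,k)/2}\delta$ in mean square, not $2^{k/4}\delta$ or less.

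More fundamentally, pairwise block information cannot suffice. Consider the Hermitian matrix whose only nonzero blocks are $2^{k/2}\delta\ketbra{w_0}{w_k}$ ($k \ge 1$) and their adjoints. It satisfies every bound Step 1 provides. Yet for $|\beta_0| = 1/\sqrt2$, $|\beta_k| \propto 2^{-k/2}$ and aligned phases, its quadratic form at $\ket{w}$ is about $L\delta$, while the target is about $\sqrt{L}\,\delta$, with $L$ the number of buckets. Any fix must therefore be a \emph{joint} statement across blocks. For example, the rescaled blocks $2^{-k/2}E_{0k}$ should stack into one operator of norm $O(\delta)$ instead of having norms that add.

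\textbf{The fallback's claimed equivalence is wrong.} Let $S = (\rho + \delta I)^{1/2}$ and $\Delta = \widehat{\brho} - \rho$. Writing $\bra{w}\Delta\ket{w} = \bra{w}S\,(S^{-1}\Delta S^{-1})\,S\ket{w}$ yields only $|\bra{w}\Delta\ket{w}| \le \|S^{-1}\Delta S^{-1}\|_\infty\,(\bra{w}\rho\ket{w} + \delta)$, which is \emph{linear} in $\bra{w}\rho\ket{w}+\delta$. Your net argument would give $\|S^{-1}\Delta S^{-1}\|_\infty = O(1)$, and the rank-one proposition does deliver this. But that means error $O(1)$ when $\bra{w}\rho\ket{w} = \Theta(1)$, instead of $O(\sqrt{d/n})$.

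The target $\sqrt{\delta}\,\|\ket{w}\|\,\|S\ket{w}\|$ is a product of two different norms. The congruence that would imply it, $\|(\rho+\delta I)^{-1/4}\Delta(\rho+\delta I)^{-1/4}\|_\infty = O(\sqrt{\delta})$, is also false: in the example above its $(B_0,B_k)$ block is of order $2^{k/4}\sqrt{\delta}$ in mean square.

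\textbf{The missing idea is the paper's Sylvester rescaling.} Set $R = \sqrt{(n/d)(\rho + (d/n)I)}$ and $X = \calL_R^{-1}(\Delta)$, i.e.\ $X_{ij} = 2\Delta_{ij}/(R_i+R_j)$ in the eigenbasis of $\rho$. Then $\Delta = \frac12(RX+XR)$, hence $|\bra{w}\Delta\ket{w}| \le \|X\|_\infty\,\|R\ket{w}\|$ with $\|R\ket{w}\|^2 = (n/d)(\bra{w}\rho\ket{w} + d/n)$. The relative bound thus reduces to the uniform bound $\|X\|_\infty \le C d/n$, which a plain net handles. The arithmetic-mean weights $2/(R_i+R_j)$ are what make all blocks of $X$ jointly $O(d/n)$.

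For fixed $\ket{w}$, $\bra{w}X\ket{w} = \tr(O_w\Delta)$, where $O_w$ is the Hadamard product of $\ketbra{w}$ with $[2/(R_i+R_j)]_{ij}$. This observable is generally full rank, so the rank-one \Cref{prop:mixed_states_observables_subgamma-intro} does not apply. The paper's main technical step, \Cref{prop:mixed_states_observables_subgamma}, gives sub-gamma concentration for every observable with $\|O\|_\infty \le 1$, with variance parameter $O((d\tr(O^2) + n\tr(O^2\rho))/n^2)$. Since $\tr(O_w^2) \le 1$ and $\tr(O_w^2\rho) \le 4d/n$, this variance is $O(d/n^2)$ uniformly in $w$, and the union bound over an $e^{O(d)}$-size net closes. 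Your plan relies only on the rank-one bound and has no substitute for this step.
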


This precisely generalizes \Cref{eq:classical-uniform-bound}, the relative error bound from the classical setting,
to the quantum setting, at the expense of replacing the  $\log(d)$ terms in the classical expression with $d$'s in the quantum expression;
intuitively, this change of parameters is the result of the fact that there are $d$ events to union bound over in the classical case, versus roughly $e^d$ events in the quantum case.
Because $\bra{w} \rho \ket{w} \leq 1$,
the right-hand side is always at most $O(\sqrt{d/n})$ (at least, in the interesting regime of $n \geq d$),
which recovers the $\ell_{\infty}$-norm bound in \Cref{eq:gkkt-infinity} in the worst case and improves on it in other cases.
To our knowledge, we are the first to develop a tomography algorithm with this kind of guarantee.
As we will explain in more detail in \Cref{sec:new-bounds-overview} below, this guarantee is strong enough to imply the properties we want of a bucketing algorithm.
It is also strong enough to imply several other applications to quantum state tomography, discussed in the next subsection.

\subsubsection{Applications of our relative error bound}

\paragraph{Learning in Bures $\chi^2$-divergence.}

We first consider the problem of learning in the \emph{Bures $\chi^2$-divergence} distance measure.
The Bures $\chi^2$ divergence is an especially challenging distance measure which was previously studied in depth in the work of Flammia and O'Donnell~\cite{FO24}.
They gave a tomography algorithm which, when given a rank-$r$ state $\rho$ as input, outputs an estimator $\widehat{\brho}$ with $\chi^2$ error $\epsilon$, using $n = \widetilde{O}(\sqrt{r d^3}/\epsilon)$ copies, where the $\widetilde{O}(\cdot)$ notation means that there are extra logarithmic factors which are being hidden.
By applying our new relative error bound, we are able to sharpen their result by removing these logarithmic factors.

\begin{theorem}[Improved bounds for $\chi^2$ learning]
    There is an algorithm which, given $n$ copies of a mixed state $\rho \in \C^{d \times d}$ of (known) rank $r$, outputs a matrix $\widehat{\brho}$ with the guarantee that $\Dchi(\rho \| \widehat{\brho}) \leq \epsilon$ with probability $\geq 0.99$, so long as $n = O(\sqrt{rd^3}/\eps)$.
\end{theorem}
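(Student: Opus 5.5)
The approach is to turn the relative error bound of \Cref{thm:main-tool} into a two-sided operator inequality between $\widehat{\brho}$ and $\rho$. We then post-process $\widehat{\brho}$ into a well-conditioned full-rank estimator $\sigma$ and bound $\Dchi(\rho\|\sigma)$ through that operator inequality. The computation below is only partly carried out: the exponents in the final trade-off are not yet reconciled with the target $n = O(\sqrt{rd^3}/\eps)$, and I flag where.

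\textbf{Step 1: sandwich and normalized error.} Run the algorithm of \Cref{thm:main-tool} and condition on its success event. For $\gamma \ge d/n$, AM--GM applied to the guarantee gives, for every unit $\ket{w}$,
\begin{equation*}
\ABs{\bra{w}(\widehat{\brho}-\rho)\ket{w}} \le \tfrac12 \bra{w}\rho\ket{w} + C' \gamma .
\end{equation*}
This yields the sandwich $\tfrac12\rho - C'\gamma I \preceq \widehat{\brho} \preceq \tfrac32 \rho + C'\gamma I$. A finer form is also useful. Set $M = (\rho+\gamma I)^{-1/2}$ and substitute $\ket{w} = M\ket{v}/\|M\ket{v}\|$ into the guarantee, using homogeneity. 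This gives
\begin{equation*}
\|M(\widehat{\brho}-\rho)M\|_\infty \le C\sqrt{d/(n\gamma)},
\end{equation*}
so $\widehat{\brho}-\rho$ is small relative to $\rho+\gamma I$ in every direction simultaneously.

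\textbf{Step 2: post-processing and the divergence bound.} Since the rank $r$ is known, truncate $\widehat{\brho}$ to its top-$r$ eigenspace. Then add $\gamma I$ and renormalize to obtain $\sigma$, with $\gamma$ chosen at the end. The sandwich from Step 1 survives truncation up to constants: by Weyl/Courant--Fischer, the discarded eigenvalues are $O(\gamma)$ because $\rho$ has rank $r$. Consequently $\sigma \succeq c(\rho+\gamma I)$.

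For the divergence, use the standard fact that the Bures $\chi^2$ divergence is at most $\mathrm{tr}(\Delta \sigma^{-1} \Delta)$, where $\Delta = \rho - \sigma$. Then split $\Delta$ into three pieces: the tomography error $\Delta' = \widehat{\brho}-\rho$ (rank at most $2r$ after truncation), the added $\gamma I$, and the normalization correction.
\begin{itemize}
\item The $\gamma I$ piece contributes $O(\gamma d)$.
\item For $\Delta'$, use $\sigma^{-1} \preceq c^{-1} M^2$ and write $\Delta' = M^{-1}(M\Delta' M)M^{-1}$. This gives
\begin{equation*}
\mathrm{tr}(\Delta'\sigma^{-1}\Delta') \lesssim \|M\Delta'M\|_\infty^2\,\mathrm{tr}\rho + \gamma\,\|M\Delta'M\|_F^2 .
\end{equation*}
Because $M\Delta'M$ has rank $O(r)$, its Frobenius norm is at most $\sqrt{2r}$ times its operator norm. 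The $\Delta'$ contribution is therefore $O\big(d/(n\gamma) + rd/n\big)$.
\end{itemize}

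\textbf{Step 3: choose $\gamma$ (main obstacle).} The plan is to balance $d/(n\gamma)$ against $\gamma d$ together with the rank-dependent terms, and to read off $n = O(\sqrt{rd^3}/\eps)$. The crude bound above does not get there: minimizing $d/(n\gamma) + \gamma d$ over $\gamma$ gives only $d/\sqrt n$, which would require $n \gtrsim d^2/\eps^2$.

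The slack is in Step 2. Uniformly adding $\gamma I$ over the whole $d$-dimensional space is wasteful. I would instead split $\C^d$ into the $r$-dimensional range of the truncated estimate, where the relative bound makes $\sigma$ accurate to relative error $O(\sqrt{d/(n\alpha_i)})$, and its complement. On the complement, $\rho$ is nearly zero, so only $\gamma$ enters. Off-diagonal blocks then contribute roughly $|\langle u_i|\Delta|u_j\rangle|^2/\gamma$ in the Bures formula, summed over only $r(d-r)$ pairs.

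Controlling these cross terms sharply is the crux. Per direction $u_j$ of the complement, the relative bound gives $\sum_i |\langle u_i|\rho|u_j\rangle|^2 \lesssim \|\rho\|\,(d/n)\cdot(d/n)$. Summing over the $d$ complement directions, this should produce a term like $d^3/(n^2\gamma)$, balanced against $\gamma d$ (or $\gamma r$) from the added mass. I expect this balancing, done carefully with the rank-$r$ structure, to be exactly what yields $\sqrt{rd^3}/n$ without the logarithmic factors of \cite{FO24}, but I have not yet checked that the exponents work out.
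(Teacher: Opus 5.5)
Your proposal has a genuine gap, and it is more than unfinished bookkeeping: as set up, the route cannot reach $n=O(\sqrt{rd^3}/\epsilon)$.

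\textbf{The trace bound discards the Bures weights.} Bounding $\Dchi(\rho\|\sigma)$ by $\tr(\Delta\sigma^{-1}\Delta)$ replaces the weight $2/(\sigma_i+\sigma_j)$ on the $(i,j)$ entry by $\tfrac12(\sigma_i^{-1}+\sigma_j^{-1})$. Take a pair made of a top-$r$ direction and a complement direction. The true weight is at most $2/\sigma_{\max}$, but you charge about $1/(2\gamma)$. For the same reason, your Step~3 claim that off-diagonal blocks contribute $|\langle u_i|\Delta|u_j\rangle|^2/\gamma$ ``in the Bures formula'' is incorrect.

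These cross terms really do carry mass $\Theta(d/n)$. Take a pure $\rho=\ketbra{\psi}$ whose estimated top eigenvector $\ket{\bv_1}$ makes angle $\theta$ with $\ket{\psi}$, where $\sin^2\theta\approx d/n$ (typical). Then $\sum_{j>1}|\bra{\bv_1}\rho\ket{\bv_j}|^2=\cos^2\theta\sin^2\theta\approx d/n$, possibly all on a single complement direction. So your per-direction estimate $\|\rho\|(d/n)^2$ is off by a factor $n/d$. What the relative bound actually gives is $\bra{u_j}\rho^2\ket{u_j}\le\|\rho\|\bra{u_j}\rho\ket{u_j}=O(d/n)$.

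Dividing this mass by $\gamma$ forces $\gamma\gtrsim d/(n\epsilon)$. The regularizer then costs $\gamma(d-r)$ on its own, so $n\gtrsim d^2/\epsilon^2$ for $r\le d/2$.

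\textbf{The sandwich forces $\gamma$ too large.} The bound $\sigma\succeq c(\rho+\gamma I)$ needs $\gamma\gtrsim d/n$. In the same example, let $\ket{v_\perp}$ be the normalized component of $\ket\psi$ orthogonal to $\ket{\bv_1}$. It satisfies $\bra{v_\perp}\rho\ket{v_\perp}=\sin^2\theta\approx d/n$, while $\sigma$ acts there as just $\gamma$. With $\gamma\gtrsim d/n$, the regularizer costs $\gtrsim d^2/n$, a factor $\sqrt{d/r}$ above the target.

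\textbf{The balance you propose is not available.} Balancing ``against $\gamma r$'' would produce $\sqrt{rd^3}/n$. But adding $\gamma$ on a $(d-r)$-dimensional complement costs $\gamma(d-r)$, not $\gamma r$.

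\textbf{What the paper does instead.} Keep the exact Bures weights in the eigenbasis $\{\ket{\bv_i}\}$ of $\widehat{\brho}$, and regularize at two different scales. The paper outputs $\bPi\widehat{\brho}\bPi+\eta_1\bPi+\eta_2\overline{\bPi}$ with $\eta_1=2d/n$ and $\eta_2=\Theta(\sqrt{rd}/n)$. This $\eta_2$ is below $d/n$, so no lower sandwich $\sigma\succeq c\rho$ is available. No renormalization is needed either, since the statement only asks for a matrix. The divergence then splits as follows.
\begin{itemize}
\item \emph{Pairs with $m=\min(i,j)\le r$.} These include the top--complement cross terms. Their weight is at most $2/\widetilde{\balpha}_m$, where $\widetilde{\balpha}_m\ge\max(\widehat{\balpha}_m,d/n)$. \Cref{lem:Bures_bound_technical_lemma} bounds their total squared mass by $2(C')^2(d/n)(\widehat{\balpha}_m+d/n)$. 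Its proof applies the $\widehat{\brho}$-form of the relative error bound on $\mathrm{span}\{\ket{\bv_m},\dots,\ket{\bv_d}\}$ and then a rank-one Frobenius-versus-operator-norm step. Summing over $m\le r$ gives $O(rd/n)$.
\item \emph{The complement--complement block.} Only this block is weighted by $1/\eta_2$, and this is where rank $r$ enters: $\|\overline{\bPi}\rho\overline{\bPi}\|_2^2\le r\|\overline{\bPi}\rho\overline{\bPi}\|_\infty^2=O(rd^2/n^2)$. The operator-norm bound uses \Cref{cor:eigenvalue_differences}, which gives $\widehat{\balpha}_{r+1}\le Cd/n$ because $\alpha_{r+1}=0$, together with \Cref{cor:uniform-bound-pointwise_rho-hat-version}.
\end{itemize}
Balancing the regularizer cost $\eta_2 d$ against $rd^2/(n^2\eta_2)$ gives $\eta_2=\sqrt{rd}/n$. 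The result is $\Dchi=O(rd/n+\sqrt{rd^3}/n)$.
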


Although we do not have matching lower bounds for this problem, Flammia and O'Donnell argue in \cite[Remark 3.17]{FO24} that $O(\sqrt{rd^3}/\epsilon)$ is likely the right answer for this problem.
As a result, we believe that our algorithm likely achieves optimal sample complexity for this problem. We refer the reader to~\cite{FO24} for a more in-depth discussion of the significance of learning in this distance measure.

\paragraph{Principal component analysis.}
The final application we consider is to a generalization of tomography known as \emph{principal component analysis}, in which our goal is to output the low-rank matrix that best approximates the state $\rho$.
This is motivated by the case when $\rho$ is itself a low-rank state with a small amount of noise added to it, and we would like to recover the original low-rank part of $\rho$ with as few copies as possible.
As an example, if $\rho_{\leq k}$ is the restriction of $\rho$ to its top-$k$ eigenvectors, then $\rho_{\leq k}$ is the closest rank-$k$ matrix to $\rho$, and it satisfies
\begin{equation*}
    \Dtr(\rho, \rho_{\leq k}) = \frac{1}{2} ( \alpha_{k+1} + \dots + \alpha_d).
\end{equation*}
In this case, we might hope to find a rank-$k$ matrix which approximates $\rho$ almost as well as $\rho_{\leq k}$ does.
O'Donnell and Wright~\cite{OW16} showed that this is possible with the same number of copies as the easier task of rank-$k$ trace distance tomography.
They proved the following.

\begin{theorem}[Trace distance PCA]
    There is an algorithm which, given $n$ copies of a mixed state $\rho \in \C^{d \times d}$, outputs a rank-$k$ matrix $\widehat{\brho}_{\leq k}$ with the guarantee that, with probability at least 0.99,
    \begin{equation*}
        \Dtr(\rho, \widehat{\brho}_{\leq k}) \leq  \frac{1}{2} ( \alpha_{k+1} + \dots + \alpha_d) + \epsilon,
    \end{equation*}
    so long as $n = O(kd/\eps^2)$.
\end{theorem}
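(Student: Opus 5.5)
The plan is to deduce the statement from the relative error bound of \Cref{thm:main-tool}, rather than from the Keyl--Werner machinery of~\cite{OW16}. Run the algorithm of \Cref{thm:main-tool} with $n = C'kd/\eps^2$ to get $\widehat{\brho}$, and write $\Delta = \widehat{\brho} - \rho$. Let $P$ be the projector onto the top-$k$ eigenvectors of $\widehat{\brho}$, and output $\widehat{\brho}_{\leq k} = P\widehat{\brho}P$, possibly after clipping negative eigenvalues and renormalising, which costs only a constant factor. Since $P$ commutes with $\widehat{\brho}$, we have $P\widehat{\brho}(I-P)=0$. Decompose
\begin{equation*}
    \rho - P\widehat{\brho}P = (I-P)\rho(I-P) + \bigl(P\rho(I-P) + (I-P)\rho P\bigr) - P\Delta P,
\end{equation*}
and bound the three pieces separately.

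\emph{Step 1: the diagonal error.} The operator $P\Delta P$ has rank at most $k$, so $\|P\Delta P\|_1 \le \sqrt{k}\,\|P\Delta P\|_F$. Diagonalise $P\Delta P$ on $\mathrm{range}(P)$ with orthonormal eigenvectors $\ket{u_1},\dots,\ket{u_k}$. \Cref{thm:main-tool} gives
\begin{equation*}
    \|P\Delta P\|_F^2 = \sum_i \bra{u_i}\Delta\ket{u_i}^2 \le C^2\frac{d}{n}\sum_i\Bigl(\bra{u_i}\rho\ket{u_i}+\frac dn\Bigr) \le C^2\frac dn\Bigl(1+\frac{kd}{n}\Bigr).
\end{equation*}
Hence $\|P\Delta P\|_1 = O(\sqrt{kd/n}) = O(\eps)$. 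This is exactly where the relative error scaling pays off: the plain $\ell_\infty$ bound \Cref{eq:gkkt-infinity} would only give $k\sqrt{d/n}$.

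\emph{Step 2: the captured mass.} Let $Q$ be the projector onto the true top-$k$ eigenspace of $\rho$. The same Step 1 argument applies to any rank-$k$ projector, so $|\mathrm{tr}(R\Delta)| = O(\eps)$ for both $R=P$ and $R=Q$. By Ky Fan's principle, $\mathrm{tr}(P\widehat{\brho}) \ge \mathrm{tr}(Q\widehat{\brho})$. Chaining these,
\begin{equation*}
    \mathrm{tr}(P\rho) \ge \mathrm{tr}(P\widehat{\brho}) - O(\eps) \ge \mathrm{tr}(Q\widehat{\brho}) - O(\eps) \ge \mathrm{tr}(Q\rho) - O(\eps).
\end{equation*}
Therefore $\|(I-P)\rho(I-P)\|_1 = \mathrm{tr}((I-P)\rho) \le \alpha_{k+1}+\dots+\alpha_d + O(\eps)$.

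\emph{Step 3: the off-diagonal blocks (main obstacle).} Because $P\widehat{\brho}(I-P)=0$, we have $P\rho(I-P) = -P\Delta(I-P)$, which again has rank at most $k$. The difficulty is that \Cref{thm:main-tool} only controls quadratic forms. By polarisation on $\ket{u}\pm\ket{v}$ and $\ket{u}\pm i\ket{v}$, one can bound each entry $|\bra{u}\Delta\ket{v}|$ by $O(\sqrt{(d/n)(\bra{u}\rho\ket{u}+\bra{v}\rho\ket{v}+d/n)})$. Summing these bounds over a $k\times d$ grid of basis vectors, however, loses a factor of $\sqrt d$.

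My first attempt will instead pick the bases adaptively. Take the singular vectors of $P\Delta(I-P)$, so that only $k$ pairs $(\ket{u_i},\ket{v_i})$ contribute to its trace norm. Then
\begin{equation*}
    \|P\Delta(I-P)\|_1 = \sum_{i=1}^k \bra{u_i}\Delta\ket{v_i} \le \sum_{i=1}^k O\Bigl(\sqrt{\tfrac dn(\bra{u_i}\rho\ket{u_i}+\bra{v_i}\rho\ket{v_i}+\tfrac dn)}\Bigr),
\end{equation*}
and Cauchy--Schwarz bounds this by $O(\sqrt{k}\cdot\sqrt{(d/n)(2+kd/n)}) = O(\eps)$, because the $\ket{v_i}$ are orthonormal and so $\sum_i \bra{v_i}\rho\ket{v_i}\le 1$. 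If the polarisation step turns out to lose constants in a way that breaks the orthonormality bookkeeping, the fallback is to write $\rho$ in $2\times 2$ block form with respect to $P$ and exploit positive semidefiniteness. Combining the three steps with the triangle inequality then gives
\begin{equation*}
    \Dtr(\rho,\widehat{\brho}_{\leq k}) \le \tfrac12(\alpha_{k+1}+\dots+\alpha_d) + O(\eps),
\end{equation*}
and rescaling $\eps$ completes the proof.
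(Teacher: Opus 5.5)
Your proof is correct and has the same skeleton as the paper's. You split $\rho - P\widehat{\brho}P$ into $(I-P)\rho(I-P)$, the off-diagonal blocks, and $-P\Delta P$, and bound each error piece by $O(\sqrt{kd/n}+kd/n)$. Your Step~1 has the same content as \Cref{cor:1-norm-bound-projectors}; you reach it through the Frobenius norm, whereas the paper uses the positive and negative eigenprojectors together with \Cref{thm:uniform-statement-for-observables}.

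The genuine differences are in Steps~2 and~3.

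\begin{itemize}
\item \textbf{Captured mass.} The paper uses Cauchy interlacing and \Cref{lem:dtv_lemma_k}. That route goes through the rank-$\leq 2k$ projector onto both top-$k$ eigenspaces and the inequality $\mathrm{d}_{\mathrm{TV}}(\spec(A),\spec(B))\le\Dtr(A,B)$. Your Ky Fan chain $\tr(P\rho)\ge\tr(P\widehat{\brho})-O(\eps)\ge\tr(Q\widehat{\brho})-O(\eps)\ge\tr(Q\rho)-O(\eps)$ is shorter and needs no spectral perturbation fact. The paper itself uses this trick in its Bures PCA proof.
\item \textbf{Off-diagonal blocks.} The paper diagonalizes the Hermitian matrix $\bPi\rho\overline{\bPi}+\overline{\bPi}\rho\bPi$. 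It writes the trace norm as $\sum_{i\le 2k}|\tr(\bO_i(\rho-\widehat{\brho}))|$ with rank-two observables $\bO_i=\bPi\ketbra{\bu_i}\overline{\bPi}+\overline{\bPi}\ketbra{\bu_i}\bPi$, then invokes \Cref{thm:uniform-statement-for-observables}. Your SVD plus polarization is the same reduction from bilinear to quadratic forms, done by hand.
\end{itemize}

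Your Step~3 needs no fallback. Since $\ket{u_i}\perp\ket{v_i}$, the four vectors $\ket{w}=(\ket{u_i}+\omega\ket{v_i})/\sqrt2$ with $\omega\in\{\pm1,\pm\mathrm{i}\}$ are unit vectors. They satisfy $\bra{w}\rho\ket{w}\le\bra{u_i}\rho\ket{u_i}+\bra{v_i}\rho\ket{v_i}$, because $|\bra{u_i}\rho\ket{v_i}|^2\le\bra{u_i}\rho\ket{u_i}\bra{v_i}\rho\ket{v_i}$. Polarization therefore gives $|\bra{u_i}\Delta\ket{v_i}|\le 2C\sqrt{(d/n)(\bra{u_i}\rho\ket{u_i}+\bra{v_i}\rho\ket{v_i}+d/n)}$. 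Orthonormality of $\{\ket{u_i}\}\cup\{\ket{v_i}\}$ then makes the Cauchy--Schwarz step go through exactly as you wrote it.

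One correction to your aside: do not renormalise the output. The PCA guarantee is additive over the optimum $\frac12(\alpha_{k+1}+\dots+\alpha_d)$, so a constant-factor loss is not acceptable. Renormalising can double the leading term: for $\rho=I/d$, normalising $\rho_{\le k}$ moves the trace distance from $\frac12(1-k/d)$ to $1-k/d$. The statement does not ask for a normalized output, and the paper simply outputs $\widehat{\brho}_{\le k}$ as is.

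If you want a PSD output, clipping the negative eigenvalues of $P\widehat{\brho}P$ alone is harmless. Those eigenvalues live on a subprojector $P_-\le P$ of rank at most $k$. Their total mass is $-\tr(P_-\widehat{\brho})\le|\tr(P_-\Delta)|=O(\eps)$ by your Step~1 argument.
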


As an application of our relative error tomography bound, we give a new proof of this statement.
In addition, we are able to give a similar statement for PCA in the related \emph{Bures distance}.

\begin{theorem}[Bures distance PCA]
    There is an algorithm which, given $n$ copies of a mixed state $\rho \in \C^{d \times d}$, outputs a rank-$k$ matrix $\widehat{\brho}_{\leq k}$ with the guarantee that, with probability at least 0.99,
    \begin{equation*}
        \DBur(\rho, \widehat{\brho}_{\leq k}) \leq \sqrt{\alpha_{k+1} + \cdots + \alpha_d} + \epsilon,
    \end{equation*}
    so long as $n = O(kd/\eps^2)$.
\end{theorem}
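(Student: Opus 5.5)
We would prove this by running the algorithm of \Cref{thm:main-tool} and truncating its output. First run it on the $n$ copies to get $\widehat{\brho}$, replaced by its Hermitian part, which does not change any $\bra{w}\widehat{\brho}\ket{w}$. Let $P$ be the projector onto the top-$k$ eigenvectors of $\widehat{\brho}$. Output $\widehat{\brho}_{\leq k} := (P\widehat{\brho}P)_+$, the positive part of the compression. Set $\eta := d/n$, so the target $n = O(kd/\eps^2)$ reads $k\eta = O(\eps^2)$. Condition on the event of \Cref{thm:main-tool}, which holds with probability $0.99$:
\[
|\bra{w}(\widehat{\brho}-\rho)\ket{w}| \le C\sqrt{\eta(\bra{w}\rho\ket{w}+\eta)} \quad \text{for all unit } \ket{w}.
\]
All remaining steps are deterministic consequences of this inequality.

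The Bures distance extends to a metric on unnormalized PSD matrices, with $\DBur(\sigma,\tau) = \min \|A-B\|_F$ over factorizations $\sigma = AA^*$ and $\tau = BB^*$. By the triangle inequality,
\[
\DBur(\rho,\widehat{\brho}_{\leq k}) \le \DBur(\rho,P\rho P) + \DBur(P\rho P,\widehat{\brho}_{\leq k}).
\]
For the first term, take $A=\sqrt\rho$ and $B=P\sqrt\rho$; this gives $\DBur(\rho,P\rho P)^2 \le \|(I-P)\sqrt\rho\|_F^2 = \tr((I-P)\rho)$. It therefore suffices to prove
\[
\tr((I-P)\rho) \le \alpha_{k+1}+\dots+\alpha_d + O(k\eta)
\qquad\text{and}\qquad
\DBur(P\rho P,\widehat{\brho}_{\leq k})^2 = O(k\eta).
\]
Then $\sqrt{a+b}\le\sqrt a+\sqrt b$ finishes the proof.

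\emph{Second term.} Both matrices live on the $k$-dimensional range of $P$. On that range the directional errors satisfy $|\langle E\rangle_w|^2 \lesssim \eta(\langle\rho\rangle_w+\eta)$, where $E=\widehat{\brho}-\rho$. The squared Bures distance is controlled by a $\chi^2$-type sum of (error)$^2/$(value), regularized by $\eta$, over an orthonormal basis of that range. Each of the $k$ basis directions then contributes $O(\eta)$, for a total of $O(k\eta)$. To make this rigorous, we would do a two-sided operator comparison on the range of $P$, using a fidelity lower bound of the form $\Fid \ge \tr\sigma - O(\tr\,(\sigma-\tau)^2(\sigma+\eta I)^{-1})$. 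We would also check that taking the positive part only removes eigenvalues of size $O(\eta)$.

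\emph{First term: the main obstacle.} The bound on $\tr((I-P)\rho)$ must be quadratically good, with loss $O(k\eta)=O(\eps^2)$ rather than $O(\eps)$. Summing the relative-error bound over the top-$k$ eigenvectors of $\rho$ only gives a loss of order $\sqrt{k\eta}$. Likewise, the AM--GM consequence $(1-\theta)\rho - O_\theta(\eta) \preceq \widehat{\brho} \preceq (1+\theta)\rho + O_\theta(\eta)$ only gives multiplicative losses. We would instead use a perturbation (Davis--Kahan-style) argument adapted to the relative error. Write $Q$ for the top-$k$ eigenprojector of $\rho$. The loss $\tr(Q\rho)-\tr(P\rho)$ is governed by the off-diagonal block $(I-Q)E\,Q$. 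Because $P$ maximizes $\tr(P\widehat{\brho})$, the first-order terms cancel. What remains is second order: roughly $\sum \|(I-Q)EQ\ket{v_i}\|^2/(\alpha_i-\alpha_j+\eta)$. The relative-error inequality, applied to superpositions $\ket{v_i}+e^{i\phi}\ket{u}$, bounds each such cross term by $O(\eta(\alpha_i+\eta))$ per direction. After handling the eigen-gaps by bucketing, the total becomes $O(k\eta)$. If this gap-dependent route proves too delicate, the fallback is to bound the combined quantity $\DBur(\rho,\widehat{\brho}_{\leq k})$ directly via the factorization $B = P\widehat{\brho}^{1/2}$ against $A=\sqrt\rho$ and expand $\|A-B\|_F^2$. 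That avoids splitting off $\tr((I-P)\rho)$ altogether.
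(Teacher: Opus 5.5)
\textbf{Verdict.} The first-term bound you reduce to is false, so the proof does not go through as proposed. Your second term is in the same spirit as the paper's. The paper makes it rigorous in three steps. It outputs the top-$k$ truncation of $\widehat{\brho}+2\eta I$, so every eigenvalue on the range of $P$ is at least $\eta$, and then uses $\DBur^2\le\Dchi$. It controls the off-diagonal entries by grouping them by $m=\min(i,j)$. Finally, it bounds $\sum_{i\ge m}\lvert\bra{v_i}(\rho-\widehat{\brho})\ket{v_m}\rvert^2$ by the squared operator norm of the compression to $\mathrm{span}(v_m,\dots,v_d)$, using the $\widehat{\brho}$-version of the relative error bound (\Cref{cor:uniform-bound-pointwise_rho-hat-version}, \Cref{lem:Bures_bound_technical_lemma}).

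\textbf{The gap.} Write $T:=\alpha_{k+1}+\dots+\alpha_d$. You reduce the first term to showing $\tr((I-P)\rho)\le T+O(k\eta)$. This is false, so neither Davis--Kahan nor any other argument can derive it from the relative error event.

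Take $d=2$, $k=1$, $\rho=\mathrm{diag}(\tfrac12+\delta,\tfrac12-\delta)$, and $\widehat{\brho}=\rho+\gamma\sigma_x$ with $\gamma=\delta=(2n)^{-1/2}$ and $n\ge 8$.
\begin{itemize}
\item Since $\bra{w}\rho\ket{w}\ge\frac14$ for every unit $w$, the relative error inequality holds with $\eta=2/n$ and any $C\ge1$.
\item The top eigenvector of $\widehat{\brho}$ makes angle $\pi/8$ with $\ket{0}$.
\item Hence $\tr((I-P)\rho)-T=(1-2^{-1/2})\,\delta\asymp n^{-1/2}$, whereas $k\eta=2/n$.
\item Your second-order expression also gives $\gamma^2/(2\delta+\eta)\asymp n^{-1/2}$ here. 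When the gap at position $k$ is as small as the error, no bucketing of gaps yields $O(k\eta)$.
\end{itemize}
This is not an artifact of arguing deterministically. Consider $\rho_\pm=\mathrm{diag}(\tfrac12+\delta,\tfrac12-\delta)\pm\gamma\sigma_x$ with $\gamma=\delta=c\,n^{-1/2}$ and $c$ small. The two signs cannot be distinguished from $n$ copies, while the top eigenvectors are at angle $\pi/4$. So no estimator gets loss $O(\eta)$ on both with probability $0.99$.

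\textbf{The missing idea.} Quadratic accuracy is not needed. It suffices that
\[
\tr((I-P)\rho)\le(\sqrt T+O(\sqrt{k\eta}))^2=T+O(\sqrt{k\eta T})+O(k\eta),
\]
since taking the square root gives $\sqrt T+O(\sqrt{k\eta})$. Your reduction forbade exactly the cross term $\sqrt{k\eta T}$ that the final bound can absorb. The paper obtains this weaker bound by a self-bounding argument:
\begin{itemize}
\item Let $Q$ be the top-$k$ eigenprojector of $\rho$ and $x:=\tr(\rho(Q-P))\ge0$.
\item Ky Fan for $\widehat{\brho}$ gives $\tr(\widehat{\brho}(P-Q))\ge0$, hence $x\le\tr((\rho-\widehat{\brho})(Q-P))$.
\item Apply \Cref{thm:uniform-statement-for-observables} (a deterministic consequence of your conditioning event) to $O=Q-P$. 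Here $\mathrm{rank}(O)\le 2k$ and $\tr(O^2)\le 2k$.
\item From $O^2=Q(I-P)+P(I-Q)$ one gets $\tr(O^2\rho)\le\tr(\rho(I-P))+\tr(\rho(I-Q))=x+2T$.
\end{itemize}
Therefore
\[
x\le 2C\sqrt{2k\eta\,(x+2T+2k\eta)}.
\]
This solves to $x\le 8C\sqrt{k\eta T}+O(k\eta)$, so $\DBur(\rho,P\rho P)^2\le T+x\le(\sqrt T+4C\sqrt{k\eta})^2$.

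Your remark that summing the relative error bound ``only gives a loss of order $\sqrt{k\eta}$'' is where this was missed. The loss is $\sqrt{k\eta}$ times $\sqrt{T+x}$, and the $\sqrt T$ factor disappears under the square root.

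\textbf{The fallback.} Expanding $\lVert\sqrt\rho-B\rVert_F^2$ for a fixed factorization $B$, with no optimization over unitaries, controls a quantity at least $\DBur^2$. The proposal gives no argument for bounding it.
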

\noindent
It is known that $n = \Theta(kd/\epsilon^2)$ copies are necessary and sufficient to perform tomography of rank-$k$ states in Bures distance due to~\cite{Yue23,PSW25,SSW25}, and so this recovers that bound and extends it to the harder task of principal component analysis.
The only previous result of this type that we are aware of is from the work of O'Donnell and Wright, where they proved a PCA-type result for the related \emph{quantum Hellinger distance}~\cite[Theorem 1.19]{OW17a}.
Their result has some additional logarithmic factors that make it unlikely to be optimal,
and so modulo the fact that quantum Hellinger distance and Bures distance are not precisely equivalent, our bound essentially improves on theirs.
We note that an even stronger Bures distance-type PCA bound was conjectured in~\cite{PTTW25}, but this stronger bound is unachievable.
\Cref{subsection_PCA_fidelity} discusses these subtleties in greater detail.

\subsection{Open problems}

\paragraph{Optimal spectrum estimation.}
Our work shows that the Keyl--Werner algorithm is not optimal,
and that spectrum estimation is possible in the regime of $n = o(d^2)$ samples.
But it does not settle the optimal complexity of spectrum estimation.
We believe that the bound we achieve is optimal in terms of the dimension $d$, up to the $\log \log d$ factors, but we also believe its $\epsilon^{-4}$ dependence on the error parameter is suboptimal and should be $\epsilon^{-2}$ instead. Thus, we conjecture that the correct bound for spectrum estimation is
\begin{equation*}
    n = \Theta\Big(\frac{d^2}{\log^2(d) \cdot \epsilon^2}\Big),
\end{equation*}
up to possible additional $\log\log(d)$ factors that we are omitting for simplicity,
and in the regime of ``sufficiently large'' $\epsilon$.
(We conjecture this for ``sufficiently large'' $\epsilon$ because, as we have seen above, classical local moment matching only improves upon the naive bound of $d/\epsilon^2$ when $\epsilon \gg 1/d$.)
Our techniques do not seem capable of achieving an $\epsilon^{-2}$ dependence,
and we believe that any algorithm capable of achieving this bound will require new and interesting ideas.
In addition, we still do not have lower bounds which come anywhere close to our upper bounds: the best known lower bound for spectrum estimation remains the $n = \Omega(d/\epsilon^2)$ lower bound which follows from the mixedness testing lower bound of O'Donnell and Wright~\cite{OW15}. (The proof of this lower bound was recently simplified~\cite{OW25}.)
Any improvement on this lower bound would be a long-sought-after breakthrough.

\ignore{
\paragraph{Improved relative error bounds.}
Next, we believe that our improved tomography bound from \Cref{thm:main-tool} is likely to have further applications.
We also believe that it is likely improvable, at least in certain regimes.
To understand why, recall that the tomography algorithm that we use in this work applies the random purification channel to $\rho^{\otimes n}$ and performs Hayashi's algorithm on the output.
But~\cite{PSTW25} showed that in certain situations, there is a way of improving upon this algorithm by replacing the random purification channel by a more fine-grained technique that they call \emph{quasi-purification}.
To illustrate the types of improvements that we might see, recall from \Cref{eq:stddev-bound} above that if $\widehat{\brho}$ is the output of the tomography algorithm that we study, and if $\ket{w}$ is a unit vector in $\C^d$, then
\begin{equation}\label{eq:stddev-restated}
    \stddev[\bra{w} \widehat{\brho} \ket{w}] \leq \sqrt{\frac{1}{n} \cdot \bra{w} \rho \ket{w} + \frac{d}{n^2}}.
\end{equation}
If instead $\widehat{\brho}$ is the output of the algorithm from~\cite{PSTW25} which uses quasi-purification, then it achieves an improved standard deviation
\begin{equation*}
    \stddev[\bra{w} \widehat{\brho} \ket{w}] \leq \sqrt{\frac{1}{n} \cdot \bra{w} \rho \ket{w} + \frac{\E[\ell(\blambda)]}{n^2}}.
\end{equation*}
The $\E[\ell(\blambda)]$ is a particular representation theoretic quantity which has the property that it is always bounded by $\min\{r, 2 \sqrt{n}\}$, where $r$ is the rank of $\rho$.
If $\rho$ is rank-$d$, then this can give you savings over \Cref{eq:stddev-restated} in the regime when $n = o(d^2)$, which happens to be exactly the regime that this paper operates in; these savings were necessary for several of the downstream tomography applications in~\cite{PSTW25}.
We believe it might be possible to derive sub-gamma concentration bounds for this quasi-purification-based tomography algorithm, as we have done for the purification-based algorithm we study in this work, and potentially these bounds could translate to a stronger version of our \Cref{thm:main-tool}.
It would be interesting to see if there are applications of such an improved bound. (For example, it is unclear to us whether this would yield immediate improvements to our spectrum estimation bound.)

\paragraph{Low-rank principal component analysis.}
Given a mixed state $\rho$, the best rank-$r$ approximation to $\rho$ is $\rho_{\leq r}$, in which we zero out all but the top $r$ of $\rho$'s eigenvalues.
This has a trace distance error to $\rho$ given by
\begin{equation*}
    \mathrm{D}_{\mathrm{tr}}(\rho, \rho_{\leq r}) = \frac{1}{2} \cdot (\alpha_{r+1} + \cdots + \alpha_d).
\end{equation*}
The work of \cite{OW16} gave a tomography algorithm which satisfies a \emph{rank-$r$ principal component analysis (PCA)} guarantee; namely, with $n = O(dr/\epsilon^2)$ samples, it outputs a state $\widehat{\brho}_{\leq r}$ which is almost as good of a rank-$r$ approximation as $\rho_{\leq r}$, in the sense that
\begin{equation*}
    \mathrm{D}_{\mathrm{tr}}(\rho, \widehat{\brho}_{\leq r}) \leq \frac{1}{2} \cdot (\alpha_{r+1} + \cdots + \alpha_d) + \epsilon.
\end{equation*}
This recovers and generalizes the $n = O(rd/\epsilon^2)$ bound needed to perform rank-$r$ tomography.

Open since then has been the question of whether rank-$r$ PCA could be achievable in other distance metrics, such as Bures distance.
A variant of Bures distance PCA appears in~\cite[Theorem 1.19]{OW17a}, but with apparently suboptimal parameters. To conjecture the right parameters, let us note that the Bures distance error of $\rho_{\leq r}$ to $\rho$ is
\begin{equation*}
    \sqrt{1 + \tr(\rho_{\leq r}) - 2\cdot F(\rho, \rho_{\leq r})} = \sqrt{\alpha_{r+1} + \cdots + \alpha_d}.
\end{equation*}
The left-hand side here is the correct generalization of the Bures distance to the case that one of the two states is subnormalized. Then we conjecture that there is a tomography algorithm which uses $n = O(rd/\epsilon^2)$ copies and outputs an estimate $\widehat{\brho}_{\leq r}$ satisfying
\begin{equation*}
    \sqrt{1 + \tr(\widehat{\brho}_{\leq r}) - 2\cdot F(\rho, \widehat{\brho}_{\leq r})} \leq \sqrt{\alpha_{r+1} + \cdots + \alpha_d} + \epsilon.
\end{equation*}
We leave this as a problem for future work, and we believe that this is an example of a problem that an improved relative error bound using quasi-purification might help on.
}

\section{Technical overview} \label{sec:tech}

\subsection{Local moment matching}

Our algorithm is motivated by a technique from the classical learning theory literature known as \emph{local moment matching}.
We first summarize this technique;
for a much more thorough treatment of local moment matching,
we refer the reader to \cite[Section 2]{PTTW25}.

Suppose we are given $n$ samples $\bx_1, \ldots, \bx_n \in [d]$ drawn independently from a distribution $\alpha = (\alpha_1, \ldots, \alpha_d)$.
The natural classical analogue of full state tomography is estimating the distribution $\alpha$;
the natural classical analogue of spectrum estimation is estimating the sorted distribution $\alpha^{\geq} \coloneq \mathrm{sort}(\alpha)$, where $\mathrm{sort}(\cdot)$ sorts its input from largest to smallest.
The most basic algorithm for estimating $\alpha$ is to compute the histogram $\bh = (\bh_1, \ldots, \bh_d)$ of the sample,
where $\bh_i$ is the number of $i$'s in $\bx$,
and output the \emph{empirical distribution} $\widehat{\balpha} = \bh/n$.
It is known that $\widehat{\balpha}$ is $\epsilon$-close to $\alpha$ with high probability when $n = \Theta(d/\epsilon^2)$, and when this happens, $\widehat{\balpha}^{\geq}$ is also $\epsilon$-close to $\alpha^{\geq}$.
For this problem, though, better estimators can be constructed with the technique of local moment matching.
These estimators use only $n = O(d/(\log(d) \cdot \epsilon^2))$ samples provided $\eps$ is sufficiently large~\cite{HJW18},
and this is optimal.

Before explaining \emph{local} moment matching, we explain the simpler (and less effective) strategy of moment matching.
In moment matching, we estimate the moments of $\alpha$, given by
\begin{equation*}
    p_k(\alpha) = \sum_{i=1}^d \alpha_i^k,
\end{equation*}
where $p_k(\cdot)$ is the $k$-th \emph{power sum symmetric polynomial}.
Then we ``match'' the moments by computing a sorted distribution $\widehat{\balpha}^{\geq}$ whose moments match our estimates.
It is known that the first $d$ moments $p_1(\alpha), \ldots, p_d(\alpha)$ are enough to uniquely specify $\alpha^{\geq}$,
and the hope is that if we have good estimates of the moments, then $\widehat{\balpha}^{\geq}$ will be close to $\alpha^{\geq}$.

Given a sample $\bx = (\bx_1, \ldots, \bx_n) \sim \alpha^{\otimes n}$, there is a canonical estimator $c_k(\bx)$ for estimating the $k$-th moment $p_k(\alpha)$. This estimator has the property that it is \emph{unbiased}, meaning that $\E_{\bx}[c_k(\bx)] = p_k(\alpha)$,
and among all unbiased estimators for $p_k(\alpha)$ it is the one with the minimum variance.
When all of the probability values of $\alpha$ are small (say, $O(1/d)$), then $c_k(\bx)$ will have relatively small variance, and the estimate of $p_k(\alpha)$ that it gives us will be so good that our ``vanilla'' moment matching strategy will work and allow us to recover $\alpha^{\geq}$ approximately.
But when $\alpha$ has some large probability values, then the variance of $c_k(\bx)$ will be so large that it will wash out any contribution from the small probability values.
In this case, moment matching will still allow us to learn the large values of $\alpha^{\geq}$, but it will not allow us to resolve the small values.
This is an issue, because $\alpha^{\geq}$ could have a constant fraction of its mass on elements of weight $O(1/d)$, and if we can't learn these then we can't learn $\alpha^{\geq}$.

Local moment matching aims to solve this issue by first performing a coarse ``bucketing'' of the probability values $\alpha_i$ to learn which are large and which are small,
and then it performs moment matching separately within each bucket.
To do so, it splits the sample into two equal halves of size $n = o(d)$ each.
Writing $\bx = (\bx_1, \ldots, \bx_n)$ for the first half of the sample, it will compute the empirical distribution $\widehat{\balpha}$ corresponding to $\bx$;
any coordinate $i \in [d]$ for which $\widehat{\balpha}_i$ is greater than some bound $B > 0$ will be thrown into the $\mathsf{Large}$ bucket, and the remaining coordinates will be thrown into the $\mathsf{Small}$ bucket.
Then, writing $\by = (\by_1, \ldots, \by_n)$ for the second half of the sample, it will use $\by$ to perform local moment matching separately with $\mathsf{Large}$ and $\mathsf{Small}$.
For some properties of $\alpha^{\geq}$, such as the Shannon entropy, this two-bucket approach suffices to achieve optimal sample complexity.
Learning $\alpha^{\geq}$ itself requires a more fine-grained approach involving more than two buckets if we wish to achieve optimal sample complexity; however, two buckets are sufficient if we at least wish to achieve optimal scaling in the parameter $d$ (and are therefore sufficient for achieving $n = o(d)$ scaling).

\subsection{Local moment matching in the quantum setting}
\label{sec:lmm-quantum}

Recently, Pelecanos, Tan, Tang, and Wright~\cite{PTTW25} suggested a general framework for applying local moment matching to the quantum setting.
We summarize their framework here; see \cite[Section 3]{PTTW25} for a thorough treatment.

Let $\rho \in \C^{d \times d}$ be the mixed state whose spectrum $\alpha = (\alpha_1, \ldots, \alpha_d)$ we want to estimate.
As in the classical setting, it is natural to want to learn $\alpha$ by estimating the moments $p_k(\alpha)$ and then performing moment matching.
In the quantum setting, there are also canonical, unbiased, minimum-variance estimators for the moments; we describe these estimators below.
However, just as in the classical setting, the existence of large eigenvalues will blow up the variance of these estimators, making them useless for estimating the small eigenvalues of $\rho$.

To overcome this, \cite{PTTW25} suggested the following quantum analogue of the ``bucketing'' step from local moment matching:
Divide the copies into two sets of size $n = o(d^2)$,
and run a full state tomography algorithm on the first set of $n$ copies to produce a coarse estimate $\widehat{\brho}$ of $\rho$.
Let $\bPi$ be the projector onto the eigenvectors of $\widehat{\brho}$ with eigenvalues at least $B$ for some bound $B > 0$, and let $\overline{\bPi}$ be the projector onto the complement of $\bPi$.
Ideally, if $\widehat{\brho}$ is a good enough estimator of $\rho$, then $\bPi$ should be close to a projector onto $\rho$'s large eigenvalues (namely, eigenvalues which are $\geq B/2$), and $\overline{\bPi}$ should be close to a projector onto $\rho$'s small eigenvalues (namely, eigenvalues which are $\leq 2B$).
If this is the case, then we can take each of our remaining $n$ copies of $\rho$ and measure them with the projective measurement $\{\bPi, \overline{\bPi}\}$;
this will convert each copy of $\rho$ into a copy of $\bPi \rho \bPi + \overline{\bPi} \rho \overline{\bPi}$,
which only has large eigenvalues in the $\bPi$ subspace and small eigenvalues in the $\overline{\bPi}$ subspace.
We can then perform moment matching separately within the $\bPi$ and $\overline{\bPi}$ subspaces.

For this to work, let us see what sort of guarantees we will need from our bucketing step.
Ideally, $\bPi \rho \bPi$ will only have large eigenvalues and $\overline{\bPi} \rho \overline{\bPi}$ will only have small eigenvalues; if this does not occur, then we call it a \emph{misclassification error}.
In addition, we want the spectrum of $\bPi \rho \bPi + \overline{\bPi} \rho \overline{\bPi}$ to be close to $\rho$, as otherwise the spectrum we are learning in our moment matching step is different than $\alpha$; the extent to which this does not occur we call \emph{alignment error}.
Misclassification errors can also occur in the classical setting, but alignment error is unique to the quantum setting because measuring a quantum state disturbs it.

We focus on achieving low misclassification error, and in particular showing that $\overline{\bPi} \rho \overline{\bPi}$ only has small eigenvalues. (Ensuring that $\bPi \rho \bPi$ only has large eigenvalues is less important because $\bPi$ will be a small-dimensional subspace, so we can learn the eigenvalues of $\bPi \rho \bPi$ without moment matching, regardless of their magnitude.)
This guarantee is an $\ell_{\infty}$ norm guarantee on $\overline{\bPi} \rho \overline{\bPi}$: it is equivalent to the statement that
\begin{equation*}
    \Vert \overline{\bPi} \rho \overline{\bPi} \Vert_{\infty} \leq 2B.
\end{equation*}
What \cite{PTTW25} observed is that we can show low misclassification error statements of this form
if the tomography algorithm that we use itself has low $\ell_{\infty}$ norm error.
We illustrate this with their chosen tomography algorithm, the ``standard'' unentangled tomography algorithm due to~\cite[Section 5.1]{Wri16} and~\cite{GKKT20}. As mentioned in \Cref{sec:quantum-uniform}, \cite[Theorem 5]{GKKT20} implies that the output of this algorithm $\widehat{\brho}$ satisfies
\begin{equation}\label{eq:gkkt-infinity-2}
    \Vert \widehat{\brho} - \rho \Vert_{\infty} \leq O(\sqrt{d/n})
\end{equation}
with high probability.
If this holds, then we have
\begin{equation*}
    \Vert \overline{\bPi} \rho \overline{\bPi} \Vert_{\infty}
    \leq \Vert \overline{\bPi} \widehat{\brho} \overline{\bPi} \Vert_{\infty}
    + \Vert \overline{\bPi} (\rho - \widehat{\brho}) \overline{\bPi} \Vert_{\infty}
    \leq B + \Vert \rho - \widehat{\brho} \Vert_{\infty}
    \leq B + O(\sqrt{d/n}),
\end{equation*}
where the first step is the triangle inequality and the second step is because $\overline{\bPi}$ only projects onto those eigenvalues of $\widehat{\brho}$ which are less than $B$. Since we want the right-hand side to be at most $2B$ for low misclassification error, it suffices to take $n = d/B^2$.
They eventually set $B = (1/d) \cdot (\log(d) / \log \log(d))^2$ (at least for constant $\epsilon$),
which gives their sample complexity of $n = d^3 \cdot (\log \log(d)/\log(d))^4$ (at least for constant~
$\epsilon$).

The work \cite{PTTW25} focused on the setting of unentangled measurements because we have strong $\ell_{\infty}$ norm error bounds on the standard unentangled tomography algorithm.
They observed, however, that this bucketing step presents a significant barrier for achieving spectrum estimation with entangled measurements, which we need to use if we want an algorithm which uses $n = o(d^2)$ copies.
This is because although we have several entangled tomography algorithms~\cite{Key06,HHJ+16,OW16,PSW25} with strong guarantees,
prior to this work we did not have any strong $\ell_{\infty}$-type guarantees for these algorithms.
In general, proving $\ell_{\infty}$ norm bounds can often be more difficult than proving bounds in other norms, and doing so for any of these algorithms appears to be well beyond our current set of techniques.

In addition, it was perhaps difficult to even guess what sort of ``$\ell_{\infty}$-type'' guarantee we would want our entangled tomography algorithm to satisfy;
as it turns out, the $\ell_{\infty}$ bound in \Cref{eq:gkkt-infinity-2} is optimal among \emph{all} algorithms (both unentangled and entangled),
so we cannot hope for an entangled tomography algorithm with a better $\ell_{\infty}$ norm error.
The reason for this is that the $\ell_{\infty}$ bound in \Cref{eq:gkkt-infinity-2} is strong enough to imply a learning algorithm for pure quantum states in $\C^d$ which uses only $n = O(d/\epsilon^2)$ copies. This is optimal, due to the lower bounds of~\cite{SSW25}.
Thus, any $\ell_{\infty}$ bound which improves on \Cref{eq:gkkt-infinity-2} would imply a pure state tomography algorithm which contradicts known lower bounds.

\subsection{New full state tomography bounds}\label{sec:new-bounds-overview}

This situation changed with the recent work of Pelecanos, Spilecki, Tang, and Wright~\cite{PSTW25}.
Their main result is that mixed state tomography reduces to pure state tomography, which is the special case of mixed state tomography when the input state has rank one.
Using this reduction, they designed new and conceptually clean mixed state tomography algorithms,
and they used these new algorithms to give new tomography bounds
and substantially simpler proofs of previously known tomography bounds.
The primary technical contribution of this paper is to extend these techniques to derive a new $\ell_{\infty}$-type guarantee for mixed state tomography which we are able to use to perform bucketing.

This guarantee is stated in our \Cref{thm:main-tool} above.
We show a tomography algorithm which, when run on $n$ copies of $\rho$, outputs an estimate $\widehat{\brho}$ satisfying the following: with high probability, for all pure states $\ket{w} \in \C^d$,
\begin{equation}\label{eq:our-guarantee}
        \ABs{\bra{w} ( \widehat{\brho} - \rho )  \ket{w}} \leq C \cdot \sqrt{ \frac{d}{n} \cdot \Big( \bra{w} \rho \ket{w} + \frac{d}{n} \Big) }.
\end{equation}
Because $\bra{w} \rho \ket{w} \leq 1$,
the right-hand side is always at most $O(\sqrt{d/n})$ (at least, in the interesting regime of $n \geq d$),
which recovers the bound in \Cref{eq:gkkt-infinity-2} and matches it in the worst case.
But our bound gives even stronger concentration for smaller values of $\bra{w} \rho \ket{w}$, bottoming out at an error of $O(d/n)$ whenever $\bra{w} \rho \ket{w} \leq d/n$.
To see how this guarantee can be useful,
suppose we used this $\widehat{\brho}$ to compute a bucketing $\{\bPi, \overline{\bPi}\}$.
Then for simplicity we would like to show that $\Vert \overline{\bPi} \rho \overline{\bPi} \Vert_{\infty} \leq 2B$,
though we will show a slightly weaker bound of $O(B)$ instead.
Let $\ket{w}$ be any vector contained in $\overline{\bPi}$. Then
\begin{align*}
    \bra{w} \rho  \ket{w}
    &= \bra{w}  \widehat{\brho} \ket{w}
    +  \bra{w} (\rho - \widehat{\brho}) \ket{w} \\
    & \leq B
    + C \cdot \sqrt{ \frac{d}{n} \cdot \Big( \bra{w} \rho \ket{w} + \frac{d}{n} \Big) } \tag{by $\ket{w} \in \overline{\bPi}$ and \Cref{eq:our-guarantee}} \\
    & \leq B
    + \frac{1}{2} \cdot \Big(C^2 \cdot \frac{d}{n} + \bra{w} \rho \ket{w} + \frac{d}{n}\Big),
\end{align*}
where the last step used $a \cdot b \leq \frac{1}{2}(a^2 + b^2)$. Rearranging gives us
\begin{equation*}
    \bra{w} \rho \ket{w} \leq 2 B + (C^2 +1) \cdot \frac{d}{n}.
\end{equation*}
This is $O(B)$ so long as $n = \Omega(d/B)$,
which is an improvement over the unentangled case by a factor of $B$.
As in~\cite{PTTW25}, we will eventually set $B = (1/d) \cdot (\log(d) / \log \log(d))^2$ (at least for constant $\epsilon$),
which gives our sample complexity of $n = d^2 \cdot (\log \log(d)/\log(d))^2$ (at least for constant $\epsilon$).

Having motivated our tomography guarantee, we next discuss our choice of tomography algorithm and the proof of our guarantee.

\subsubsection{Our algorithm}\label{sec:our-algorithm}
We begin with the reduction from~\cite{PSTW25}.
It relies on a recent construction due to Tang, Wright, and Zhandry~\cite{TWZ25} known as the \emph{random purification channel}.
This is a quantum channel $\purifychan$
    which converts copies of a mixed state $\rho \in \C^{d \times d}$ into copies of a \emph{purification} $\ket{\rho} \in \C^d \otimes \C^d$ of $\rho$, which is a pure state satisfying $\tr_2(\ketbra{\rho}) = \rho$.
More precisely, the random purification channel acts as follows:
\begin{equation*}
    \purifychan(\rho^{\otimes n}) = \E_{\ket{\brho}} \ketbra{\brho}^{\otimes n},
\end{equation*}
where $\ket{\brho}$ is distributed as a uniformly random purification of $\rho$.
Given this, the reduction works as follows: take $\rho^{\otimes n}$ and apply the random purification channel to it, resulting in $\ket{\brho}^{\otimes n}$.
Now run a pure state tomography algorithm on this state, and let $\ket{\bu} \in \C^d \otimes \C^d$ be the result.
Output $\tr_2(\ketbra{\bu})$ as your estimate for $\rho$.

How well this algorithm performs depends on your choice of pure state tomography algorithm.
The most obvious choice is the pure state tomography algorithm of Hayashi~\cite{Hay98}, which is, in a certain precise sense, the literal optimal pure state tomography algorithm that one can perform.
However, Hayashi's algorithm is not an unbiased estimator for the state.
Indeed, if we run Hayashi's algorithm on the state $\ket{\brho}^{\otimes n}$ and it outputs the vector $\ket{\bu}$, then the expectation $\E_{\ket{\bu}}[\ketbra{\bu}]$ is not equal to $\ketbra{\brho}$: the former is a mixed state of rank $> 1$, whereas the latter is pure.
Grier, Pashayan, and Schaeffer~\cite{GPS24} observed that a modification of Hayashi's algorithm makes it unbiased: set
\begin{equation*}
    \widehat{\bsigma} \coloneqq \frac{d^2+n}{n} \cdot \ketbra{\bu} - \frac{1}{n} \cdot I_{d^2},
\end{equation*}
then $\E_{\widehat{\bsigma}}[\widehat{\bsigma}] = \ketbra{\rho}$, and so $\widehat{\bsigma}$ is an unbiased estimator. We use this as our pure state tomography algorithm,
so that the final mixed state tomography algorithm that we analyze is the one that outputs $\widehat{\brho} = \tr_2(\widehat{\bsigma})$.
Because $\widehat{\bsigma}$ is unbiased,
\begin{equation*}
    \E[\widehat{\brho}] = \E \tr_2(\widehat{\bsigma}) = \E \tr_2(\ketbra{\brho}) = \rho,
\end{equation*}
using the fact that $\ketbra{\brho}$ is a purification of $\rho$. Hence, this $\widehat{\brho}$ is also an unbiased estimator of $\rho$.
This algorithm is convenient to analyze because it translates desired mixed state tomography guarantees into more tractable guarantees about the pure state tomography algorithm of Grier, Pashayan, and Schaeffer.

\subsubsection{Proving our guarantee}

The starting point of our proof of \Cref{eq:our-guarantee} is \Cref{prop:mixed_states_observables_subgamma-intro}, which states that for each state $\ket{w} \in \C^d$,
the random variable $\bra{w} \widehat{\brho} \ket{w}$ is a $(v, c)$-sub-\emph{gamma} random variable, with parameters
\begin{equation}\label{eq:first-sub-gamma}
    v = \frac{2d}{n^2} + \frac{4}{n} \cdot \bra{w} \rho \ket{w},
    \quad
    c = \frac{1}{n}.
\end{equation}
A sub-gamma random variable is defined by having a bounded moment generating function, and so we prove this statement by a delicate analysis of the moment generating function of $\bra{w} \widehat{\brho} \ket{w}$.
As argued in \Cref{sec:quantum-uniform}, there is now a natural approach for using this sub-gamma concentration to prove \Cref{eq:our-guarantee} that ultimately does not work.
Repeating this argument, a natural strategy is to apply the concentration bound for sub-gamma random variables to every unit vector $\ket{w}$ in an $\epsilon$-net, and then use this to derive a similar statement for every unit vector $\ket{w}$ in $\C^d$.
However, this strategy does not seem to work, because we are looking for a different amount of concentration for each unit vector $\ket{w}$, whereas $\epsilon$-net arguments are more well-suited to establishing the same level concentration for all vectors, i.e.\ $\ell_{\infty}$ norm bounds.

Handling this obstacle requires us to change our proof strategy somewhat substantially.
To begin, we show that our desired guarantee from \Cref{eq:our-guarantee} does actually follow from an $\ell_{\infty}$ norm bound, but on a different matrix.
To define this matrix, let us assume that the state $\rho$ is diagonal in the standard basis, so that $\rho = \sum_{i=1}^d \alpha_i \cdot \ketbra{i}$.
Then we show that it suffices to prove an operator norm bound of $O(1)$ on the matrix
\begin{equation*}
    \calL_R^{-1}(\widehat{\brho} - \rho) \coloneqq \sum_{i,j=1}^d \frac{2}{R_i + R_j} \cdot (\widehat{\brho}_{i j} - \rho_{i j}) \cdot \ketbra{i}{j},
    \quad
    \text{where }R_i = \sqrt{\frac{d}{n} \cdot \Big(\alpha_i + \frac{d}{n}\Big)}.\footnote{In the main text, we actually consider a rescaled version of this operator, and set $R_i = \sqrt{\frac{n}{d} \cdot (\alpha_i + \frac{d}{n})}$ instead. This is for technical reasons: for our analysis, we want the observable $O_w$, soon to be described, to satisfy $\norm{O_w}_\infty \leq 1$.}
\end{equation*}
Why exactly this matrix is called ``$\calL_R^{-1}(\widehat{\brho} - \rho)$'' we defer to later.
What is important is that this matrix is equal to $\widehat{\brho} -\rho$, except with the $(i, j)$-th entry scaled up for smaller values of $\alpha_i$ and $\alpha_j$.
Intuitively, this rescaling will force $\widehat{\brho}_{ij}$ and $\rho_{i j}$ to be closer for smaller values of $\alpha_i$ and $\alpha_j$, and this is exactly the sort of statement we want to prove in \Cref{eq:our-guarantee}.

To show that $\Vert \calL_R^{-1}(\widehat{\brho} - \rho) \Vert_{\infty} \leq O(1)$, it \emph{does} suffice to show $\Abs{\bra{w} \calL_R^{-1}(\widehat{\brho} - \rho) \ket{w}} \leq O(1)$ for all unit vectors $\ket{w}$ in an $\epsilon$-net.
For a fixed $w$, we can write this quantity as
\begin{equation*}
    \bra{w} \calL_R^{-1}(\widehat{\brho} - \rho) \ket{w}
    = \sum_{i, j=1}^d \frac{2}{R_i + R_j} \cdot (\widehat{\brho}_{i j} - \rho_{ij}) \cdot w_i^\dagger w_j
    = \tr(O_w \cdot (\widehat{\brho} - \rho)),
\end{equation*}
where $O_w$ is the Hermitian observable given by
\begin{equation*}
    O_w = \sum_{i, j=1}^d \frac{2}{R_i + R_j} \cdot w_i^{\dagger} w_j \cdot \ketbra{j}{i}.
\end{equation*}
Our goal is to show that $\tr(O_w \cdot (\widehat{\brho} - \rho))$ concentrates extremely well about its mean. This follows from a much more general concentration statement about observables. In particular, we show that for any observable $O \in \C^{d \times d}$ with $\norm{O}_\infty \leq 1$, the random variable $\tr(O \cdot \widehat{\brho})$ is $(v, c)$-sub-gamma, with parameters
\begin{equation*}
    v = \frac{4d \cdot \tr(O^2) + 8n\cdot \tr(O^2 \cdot \rho)}{n^2},
    \qquad c = \frac{2}{n}.
\end{equation*}
This generalizes the statement from \Cref{eq:first-sub-gamma} above (up to constant factors) by taking $O = \ketbra{w}$, and proving this statement is the main technical work in this paper.

\subsection{Entangled moment estimation}
\label{sec:tech-overview-moment-estimation}

Now we discuss our moment estimation algorithm.
The family of moment estimators we use was introduced by O'Donnell and Wright~\cite{OW15} and analyzed by Acharya, Issa, Shende, and Wagner~\cite{AISW20}. The particular perspective that we take on these estimators is due to B\u{a}descu, O'Donnell, and Wright~\cite{BOW19}.

These estimators are most conveniently stated in terms of observables.
Recall that an observable $O \in \C^{d \times d}$ is a Hermitian matrix, which implies that it has an eigendecomposition $O = \sum_{i=1}^d o_i \cdot \ketbra{u_i}$ with real-valued eigenvalues $o_1, \ldots, o_d$.
Given a quantum state $\sigma \in \C^{d \times d}$, ``measuring the observable'' entails measuring $\sigma$ in the $\ket{u_1}, \ldots, \ket{u_d}$ basis, receiving outcome $\bi \in [d]$, and outputting the eigenvalue $o_{\bi}$.
It is easy to see that the output of this experiment has expectation $\tr(O \cdot \sigma)$ and variance $\tr(O^2 \cdot \sigma) - \tr(O \cdot \sigma)^2$.

\paragraph{A purity estimator.}
Now, given copies of a mixed state $\rho \in \C^{d \times d}$ with spectrum $\alpha = (\alpha_1, \ldots, \alpha_d)$, our goal is to estimate the moments $p_k(\alpha)$ for various values of $k$.
Equivalently, we can write these quantities without reference to the spectrum:
\begin{equation*}
    p_k(\alpha) = \sum_{i=1}^d \alpha_i^k = \tr(\rho^k).
\end{equation*}
We begin with the simplest nontrivial moment, the second moment $\tr(\rho^2)$, also known as the ``purity'' of~$\rho$.
Our estimator will make use of the $\swap$ matrix, which we recall is the Hermitian matrix acting on $\C^d \otimes \C^d$ as follows:
\begin{equation*}
    \swap \cdot \ket{i} \otimes \ket{j} = \ket{j} \otimes \ket{i}, \quad \text{for all }i, j \in [d].
\end{equation*}
The $\swap$ matrix satisfies the well-known identity $\tr(\swap \cdot A \otimes B) = \tr(A \cdot B)$; plugging in $A = B = \rho$ gives us $\tr(\swap \cdot \rho^{\otimes 2}) = \tr(\rho^2)$.
Hence, if we interpret the $\swap$ matrix as an observable, then measuring this observable on $\rho^{\otimes 2}$ gives us an unbiased estimator for the purity $\tr(\rho^2)$, and it has variance
\begin{equation*}
    \tr(\swap^2 \cdot \rho^{\otimes 2}) - \tr(\swap \cdot \rho^{\otimes 2})^2
    = 1 - \tr(\rho^2)^2.
\end{equation*}
For most $\rho$, this variance will be close to 1, which will give us a poor estimator on just two copies.
To understand why, note that the $\swap$ is a unitary matrix which satisfies $\swap^2 = I$, and so it only has $\pm 1$ eigenvalues.
This means that if we measure $\swap$ on $\rho^{\otimes 2}$ when $\rho$ has purity $\tr(\rho^2)$ close to 0, then the eigenvalue we measure will always be far from the purity, even if it equals the purity in expectation.

In general, though, we have $n$ copies of $\rho$, not just two, and we would like to use these to improve the variance of our estimator.
One natural strategy is to group the $n$ copies into $n/2$ pairs, measure the $\swap$ observable on each pair, and then average together the $n/2$ results.
This will result in a better variance as $n$ increases;
however, B\u{a}descu, O'Donnell, and Wright observed that there is an even better strategy one can perform.
Define the matrix
\begin{equation*}
    O_2 = \frac{1}{\binom{n}{2}} \cdot \sum_{i < j} \swap_{i j},
\end{equation*}
where $\swap_{i j}$ is the matrix which applies $\swap$ to the $i$-th and $j$-th registers and acts as the identity on the remaining registers.
This matrix is Hermitian, which means we can interpret it as an observable, and if we measure this observable on $\rho^{\otimes n}$ then the output will have expectation
\begin{equation*}
    \tr(O_2 \cdot \rho^{\otimes n})
    = \frac{1}{\binom{n}{2}} \sum_{i < j} \tr(\swap_{ij} \cdot \rho^{\otimes  n})
    = \frac{1}{\binom{n}{2}} \sum_{i < j} \tr(\rho^2)
    = \tr(\rho^2).
\end{equation*}
As a result, this observable also gives an unbiased estimator for the purity $\tr(\rho^2)$.
But its variance will be substantially smaller than that of the $\swap$ matrix,
due to the averaging over all $i < j$.
Indeed, B\u{a}descu, O'Donnell, and Wright showed that it is the minimum variance estimator among all unbiased estimators for $\tr(\rho^2)$.

\paragraph{An estimator for higher moments.}
To generalize this construction to higher moments, we must first define a generalization of the swap matrix.
Given a permutation $\pi \in S_n$, we write $P(\pi)$ for the matrix which acts on $(\C^d)^{\otimes n}$ as follows:
\begin{equation*}
    P(\pi) \cdot \ket{i_1, \ldots, i_n} = \ket*{i_{\pi^{-1}(1)}, \ldots, i_{\pi^{-1}(n)}}, \quad \text{for all }i_1, \ldots, i_n \in [d].
\end{equation*}
If $\pi$ is a $k$-cycle,
then $\tr(P(\pi) \cdot \rho^{\otimes n}) = \tr(\rho^{k})$, which generalizes the $\swap$ identity.
As a result, the natural estimator for the $k$-th moment is
\begin{equation*}
    O_k = \mathop{\mathrm{avg}}_{\text{$k$-cycles $\pi$}}\{P(\pi)\}.
\end{equation*}
As in the purity case,
this is an unbiased estimator for the $k$-th moment $\tr(\rho^k)$,
and it has the minimum variance among all such estimators.
This is the family of observables that we will use to estimate our moments.
These observables have the property that they all simultaneously commute with each other, which means that they can be simultaneously measured with each other.
Formally, the reason for this is that each $O_k$ lies in the center of the group algebra $\C[S_n]$, which implies that each $O_k$ commutes with every matrix of the form $P(\pi)$.

This observable perspective is due to~\cite{BOW19}.
These estimators were first introduced by~\cite{OW16},
where they were presented in an explicit diagonalization, and it was in this form that they were analyzed by~\cite{AISW20}.
We believe this observable perspective is conceptually clearer, and it makes doing calculations much simpler, so we have chosen to present it in this form.

\section{Preliminaries}

\subsection{Notation}
In this section, we establish some general notation.

\begin{itemize}
    \item We use \textbf{boldface} for random variables, and \textsf{sans serif} for registers, i.e.\ subsystems of a multipartite system.
For instance, $\rho = \rho_{\reg{A}_1 \dots \reg{A}_n}$ denotes a state on $n$ registers. When the underlying registers are clear from context, we will sometimes drop the subscript.
We then write reduced states via partial trace. For example,
$\rho_{\reg{A}_1} = \tr_{\reg{A}_2 \dots \reg{A}_n}(\rho).$
Unless otherwise specified, $\reg{1}$ refers to the first subsystem, $\reg{2}$ to the second, and so on.

\item We use $\norm{X}_k$ to denote the Schatten $k$-norm. In particular, $\norm{X}_1$, $\norm{X}_2$, and $\norm{X}_\infty$ are the trace, Frobenius, and operator norms, respectively.

\item Given a Hermitian operator $A \in \C^{d \times d}$, we use $\lambda_i(A)$ for $i \in [d]$ to denote the $i$-th largest eigenvalue of $A$. The notation
\begin{equation*}
    \spec(A) = (\lambda_1(A), \dots, \lambda_d(A)),
\end{equation*}
refers to the eigenvalue list of $A$ sorted in non-increasing order, and $\spec(A)_{\leq r}$ is the eigenvalue list truncated to the top $r$ entries. We will also use $A_{\leq k}$ for the restriction of $A$ to its top $k$ eigenvectors, i.e.\ if $A = \sum_{i=1}^d \lambda_i(A) \cdot \ketbra{v_i}$, then $$A_{\leq k} = \sum_{i=1}^k \lambda_i(A) \cdot \ketbra{v_i}.$$ We will allow ties to be settled arbitrarily. 

\item For two tuples $a = (a_1, \dots, a_m)$ and $b = (b_1, \dots, b_n)$, we write $a||b$ for their concatenation, i.e.\ $$a||b \coloneq (a_1, \dots, a_m, b_1, \dots, b_n).$$

\item We use $n^{\uparrow k}$ and $n^{\downarrow k}$ for the \emph{$k$-th rising power} and \emph{$k$-th falling power}, respectively. That is,
\begin{equation*}
    n^{\uparrow k} \coloneq n (n+1) \dots (n+k-1), \qquad n^{\downarrow k} \coloneq n (n-1) \dots (n-k+1).
\end{equation*}

\item We write $1\{ X \}$ for the indicator function, which is $1$ if $X$ is true, and $0$ otherwise. We also use the Kronecker delta, $\delta_{ij}$, which is equal to $1\{ i = j\}$.

\item Finally, less a notation than a convention: we use the phrase ``with high constant probability'' to mean with probability at least a sufficiently large constant, which we take to be $99\%$. The choice of constant is arbitrary by standard amplification arguments (see, e.g., \cite[Proposition 2.4]{HKOT23}).

\end{itemize}

\subsection{The symmetric subspace}

In this section, we recall some basic facts about the symmetric subspace. For much more, see, e.g., \cite{Har13} or~\cite{Mel24}.

We write the symmetric group on $n$ elements as $S_n$. The symmetric group has a natural representation on $(\C^d)^{\otimes n}$, $P_d$, where $P_d(\pi)$ acts by permuting the $n$ tensor factors according to $\pi$. That is, for all basis kets,
\begin{equation*}
    P_d(\pi) \cdot \ket{x_1, \dots, x_n} = \ket*{x_{\pi^{-1}(1)}, \dots, x_{\pi^{-1}(n)}}.
\end{equation*}
When $d$ is clear from context, we will drop the subscript and write $P(\pi)$.

The \emph{symmetric subspace} on $(\C^d)^{\otimes n}$ is the subspace fixed by all permutations:
\begin{equation*}
    \lor^n \C^d \coloneq \big\{ \ket{\psi} \in (\C^d)^{\otimes n} \, \mid \, P(\pi) \cdot \ket{\psi} = \ket{\psi}, \, \forall \pi \in S_n \big\}.
\end{equation*}
Equivalently,
\begin{equation*}
    \lor^n \C^d = \mathrm{span} \big\{ \ket{\phi}^{\otimes n} \, \mid \, \ket{\phi} \in \C^d \big\}.
\end{equation*}

Consider a computational basis vector $\ket{x_1, \dots, x_n}$. The \emph{type} of $x = (x_1, \dots, x_n)$ is the vector $\tau = (\tau_1, \dots, \tau_d)$, where $\tau_i$ is the number of indices $j$ for which $x_j = i$. We define the \emph{weight} of a type $\tau$ as $\mathrm{wt}(\tau) \coloneq \sum_{i} \tau_i = n$. We also want to identify $\tau$ with the \emph{set of all $x$ of type $\tau$}, and write $x \in \tau$ to denote that $x$ has type $\tau$. The \emph{type vector} corresponding to a type $\tau$ is the state
\begin{equation*}
    \ket{\tau} \coloneq \frac{1}{\sqrt{\Abs{\tau}}} \cdot \sum_{x \in \tau} \ket{x}.
\end{equation*}
The collection of all type vectors forms an orthonormal basis of the symmetric subspace. By counting type vectors, we see
\begin{equation*}
    \dim(\lor^n \C^d) = \binom{n+d-1}{n}.
\end{equation*}
The projector onto the symmetric subspace is written $\Pisym^{(n,d)}$. We have
\begin{equation*}
    \Pisym^{(n,d)} = \E_{\bpi \sim S_n} \big[P_d(\bpi) \big] = \dim(\lor^n \C^d) \cdot \E_{\ket{\bu} \sim \mathrm{Haar}} \big[ \ketbra{\bu}^{\otimes n} \big].
\end{equation*}
The first equality shows that $\Pisym^{(n,d)}$ commutes with $M^{\otimes n}$ for any operator $M \in \C^{d \times d}$, since each $P_d(\pi)$ commutes with $M^{\otimes n}$. We may drop either or both of the superscripts of $\Pisym^{(n,d)}$ when they are clear from context.

\subsection{Tomography}

Quantum state tomography is one of the most basic tasks in quantum learning theory. Given $n$ copies of an unknown rank-$r$ state $\rho \in \C^{d \times d}$, the goal is to produce a classical estimate $\widehat{\brho}$ of $\rho$ by performing measurements and processing the outcomes.

\subsubsection{Pure state tomography}

Pure state tomography refers to the special case of quantum state tomography in which the rank $r = 1$. The principal algorithm for pure state tomography using entangled measurements is \emph{Hayashi's algorithm} \cite{Hay98}, which is sample-optimal in all standard parameters \cite{PSTW25,SSW25}. Moreover, it is provably the optimal algorithm for learning pure states in fidelity \cite{Hay98}.

{
\floatstyle{boxed}
\restylefloat{figure}
\begin{figure}[H]
Given $n$ copies of $\ket{u} \in \C^{d}$:
\begin{enumerate}
    \item Measure the copies with the POVM $\{\dim(\lor^n \C^d) \cdot \ketbra{v}^{\otimes n} \cdot dv\}$. Let $\ket{\bv}$ be the outcome.
    \item Output $\ketbra{\bv}$.
\end{enumerate}
\caption{Hayashi's pure state tomography algorithm $\hayashi$.}
\label{fig:Hayashi_algorithm}
\end{figure}
}

It is often more convenient to consider the following debiased version of Hayashi's algorithm, due to Grier, Pashayan, and Schaeffer \cite{GPS24}. Their algorithm also uses Hayashi's measurement; what changes is the output.
{
\floatstyle{boxed}
\restylefloat{figure}
\begin{figure}[H]
Given $n$ copies of $\ket{u} \in \C^{d}$:
\begin{enumerate}
    \item Measure the copies with the POVM $\{\dim(\lor^n \C^d) \cdot \ketbra{v}^{\otimes n} \cdot dv\}$. Let $\ket{\bv}$ be the outcome.
    \item Output the estimator
    \begin{equation*}\widehat{\bsigma} \coloneqq \frac{d+n}{n} \cdot \ketbra{\bv} - \frac{1}{n} \cdot I_d.\end{equation*}
\end{enumerate}
\caption{The Grier--Pashayan--Schaeffer pure state tomography algorithm $\gps$.}
\label{fig:GPS_algorithm}
\end{figure}
}
Like Hayashi's algorithm, the Grier--Pashayan--Schaeffer algorithm is sample-optimal in all standard parameters. It also has the additional property of being unbiased: $\E [ \widehat{\bsigma} ] = \ketbra{u}$.

The following result will be useful in our analysis: it gives an exact expression for the $k$-th moment of the outcome of Hayashi's measurement.

\begin{lemma}[Chiribella's theorem] \label{thm:Chiribella}
Let $\ket{u} \in \C^d$ be a pure state, and let $\ket{\bv}$ be the output of Hayashi's algorithm, given $n$ copies of $\ket{u}$. Then
\begin{equation*}
    \E \big[\ketbra{\bv}^{\otimes k} \big] = \frac{k!}{(d+n)^{\uparrow k}} \cdot \Pisym^{(k,d)}\cdot  \bigg( \sum_{\ell=0}^k \binom{n}{\ell} \cdot \binom{k}{\ell} \cdot \ketbra{u}^{\otimes \ell} \otimes I^{\otimes (k-\ell)}\bigg) \cdot \Pisym^{(k,d)}.
\end{equation*}
\end{lemma}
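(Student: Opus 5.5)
The plan is to compute the moment directly from the POVM. Write $D = \dim(\lor^n \C^d) = \binom{n+d-1}{n}$. The probability density of outcome $\ket{v}$ (with respect to Haar measure $dv$) is $D\cdot \abs{\braket{v}{u}}^{2n} = D\cdot \tr(\ketbra{v}^{\otimes n}\ketbra{u}^{\otimes n})$. Hence
\begin{equation*}
    \E\big[\ketbra{\bv}^{\otimes k}\big] = D \cdot \tr_{\reg{k+1}\dots\reg{k+n}}\Big( \E_{\ket{v}\sim\mathrm{Haar}}\big[\ketbra{v}^{\otimes (k+n)}\big] \cdot \big(I^{\otimes k}\otimes \ketbra{u}^{\otimes n}\big)\Big).
\end{equation*}
Next I replace the Haar moment by $\Pisym^{(k+n,d)}/\dim(\lor^{k+n}\C^d)$. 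The prefactor $D/\dim(\lor^{k+n}\C^d)$ then equals
\begin{equation*}
    \frac{\binom{n+d-1}{n}}{\binom{n+k+d-1}{n+k}} = \frac{(n+k)!}{n!\,(n+d)^{\uparrow k}}.
\end{equation*}

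The main step is evaluating the partial trace $\tr_{\text{last } n}\big(\Pisym^{(k+n)}(I^{\otimes k}\otimes \ketbra{u}^{\otimes n})\big)$. The result is $U^{\otimes k}$-invariant for unitaries $U$ fixing $\ket{u}$, and it is symmetric on the first $k$ registers. In fact it equals $\Pisym^{(k)}(\cdot)\Pisym^{(k)}$ of something, because $\Pisym^{(k+n)} = \Pisym^{(k+n)}(\Pisym^{(k)}\otimes\Pisym^{(n)})$ and the same holds on the other side.

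To compute it, I expand $\Pisym^{(k+n)} = \frac{1}{(k+n)!}\sum_{\pi} P(\pi)$. For a permutation $\pi$, tracing out the last $n$ registers against $\ketbra{u}^{\otimes n}$ works as follows. Each cycle of $\pi$ lying entirely inside the last $n$ registers contributes $\braket{u}{u}=1$. The cycles that pass through the last $n$ positions and reach the first $k$ positions insert factors of $\ket{u}\bra{u}$. The outcome is a permutation operator on the first $k$ registers, multiplied by $\ketbra{u}$ on exactly $\ell$ of those registers, where $\ell$ is the number of first-block positions whose image under $\pi$ lies in the last block.

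After symmetrizing by $\Pisym^{(k)}$ on both sides, only $\ell$ matters. The result is a combination
\begin{equation*}
    \sum_\ell c_\ell\, \Pisym^{(k)}\big(\ketbra{u}^{\otimes\ell}\otimes I^{\otimes(k-\ell)}\big)\Pisym^{(k)},
\end{equation*}
where $c_\ell$ counts the permutations with this $\ell$, up to the factor $1/(k+n)!$.

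The counting runs as follows. Choose which $\ell$ of the $k$ first-block positions map into the last block: $\binom{k}{\ell}$ ways. Then choose their injective images among the $n$ last-block positions: $n^{\downarrow \ell}$ ways. The remaining $k-\ell$ first-block positions map into the first block. The $\ell$ first-block slots that receive preimages from the last block must match up with these. The total count should then work out to $c_\ell = \binom{k}{\ell} n^{\downarrow\ell}\,\ell!\,(k-\ell)!\,(n-\ell)!\cdot\frac{1}{\ell!}\cdots$.

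I expect this bookkeeping to be the main obstacle. To avoid errors I would sanity-check it against the target $\frac{k!}{(n+d)^{\uparrow k}}\binom{n}{\ell}\binom{k}{\ell}$ by taking traces. At $k=1$ the target should give $\E\ketbra{\bv} = \frac{1}{n+d}(I + n\ketbra{u})$, which matches the known unbiasedness of $\gps$. A second check is $n=0$, where the answer must be $\Pisym^{(k)}/\dim(\lor^k\C^d)$.

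Alternatively, I would sidestep the exact combinatorics with a cleaner route. The operator is a polynomial in $\ket{u}$ that is invariant under the stabilizer of $\ket{u}$ and supported on the symmetric subspace. The operators $\Pisym(\ketbra{u}^{\otimes\ell}\otimes I^{\otimes(k-\ell)})\Pisym$ span exactly such operators. So it suffices to determine the coefficients, which I would do by computing the matrix elements between the type vectors $\ket{(k-m)e_1 + m e_2}$ with $\ket{u}=\ket{1}$ and comparing.
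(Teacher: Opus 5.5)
The paper does not prove this lemma; it quotes it from \cite{Chi11} and \cite[Theorem~21]{GPS24b}. Your first route is the standard derivation. The constant it produces in front of the sum is $\binom{d+k+n-1}{k}^{-1}$, which matches the normalization the paper's footnote corrects.

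Every structural step you state is correct:
\begin{itemize}
\item the outcome density $\dim(\lor^n\C^d)\,|\braket{v}{u}|^{2n}$;
\item replacing the Haar moment by $\Pisym^{(k+n)}/\dim(\lor^{k+n}\C^d)$, and your prefactor $\frac{(n+k)!}{n!\,(n+d)^{\uparrow k}}$;
\item the observation that the partial trace is already sandwiched by $\Pisym^{(k)}$;
\item the wire-tracing picture: each $P(\pi)$ contributes a permutation of the first block times $\ketbra{u}$ on the set $A$ of first-block positions sent into the last block, and permutation invariance of $\Pisym^{(k)}$ reduces this to $\Pisym^{(k)}(\ketbra{u}^{\otimes |A|}\otimes I^{\otimes(k-|A|)})\Pisym^{(k)}$.
\end{itemize}

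What is missing is the one step that actually fixes the coefficients: counting the $\pi\in S_{k+n}$ with $|A|=\ell$. You leave it unfinished, and the factors you do write are not the right ones. The count goes as follows.
\begin{itemize}
\item Choose $A$ in $\binom{k}{\ell}$ ways, and its images in the last block in $n^{\downarrow\ell}$ ways.
\item The other $k-\ell$ first-block positions may go to \emph{any} of the $k$ first-block slots. This gives $k!/\ell!$ choices, not $(k-\ell)!$.
\item The $n$ last-block positions then fill the remaining $n$ slots in $n!$ ways.
\end{itemize}
The total is $\binom{k}{\ell}\binom{n}{\ell}\,k!\,n!$; these sum to $(n+k)!$ by Vandermonde. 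Hence $c_\ell=\binom{k}{\ell}\binom{n}{\ell}/\binom{n+k}{k}$, and multiplying by your prefactor gives exactly $\frac{k!}{(n+d)^{\uparrow k}}\binom{n}{\ell}\binom{k}{\ell}$. Your checks at $k=1$ and $n=0$ only probe $\ell\le 1$, so they cannot replace this count.

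Your alternative route also closes, and more easily than you suggest. You do not need the spanning claim, which is true for $d\ge 2$ (Schur's lemma plus triangularity) but unproved as written. Take $\ket{u}=\ket{1}$. Both sides are supported on $\lor^k\C^d$ and are diagonal in the type basis:
\begin{itemize}
\item the left side commutes with $Z^{\otimes k}$ for every diagonal unitary $Z$;
\item the right side is $\Pisym^{(k)}$ times a computational-basis-diagonal operator times $\Pisym^{(k)}$.
\end{itemize}
So it suffices to compare diagonal entries. Let $\tau$ be a type with $j$ copies of the symbol $1$.
\begin{itemize}
\item Left side: use $\bra{\tau}\ketbra{v}^{\otimes k}\ket{\tau}=|\tau|\prod_i|v_i|^{2\tau_i}$ and $\E\prod_i|v_i|^{2a_i}=(d-1)!\prod_i a_i!/(d-1+\sum_i a_i)!$. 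The entry is $\frac{k!}{(n+d)^{\uparrow k}}\binom{n+j}{j}$.
\item Right side: $\bra{\tau}(\ketbra{1}^{\otimes\ell}\otimes I^{\otimes(k-\ell)})\ket{\tau}=j^{\downarrow\ell}/k^{\downarrow\ell}$, and $\binom{k}{\ell}\,j^{\downarrow\ell}/k^{\downarrow\ell}=\binom{j}{\ell}$. The entry is $\frac{k!}{(n+d)^{\uparrow k}}\sum_{\ell}\binom{n}{\ell}\binom{j}{\ell}$.
\end{itemize}
These agree by Vandermonde. This version replaces the permutation bookkeeping with two elementary moment computations, and it is the easier one to write out completely.
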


This identity is a specialization of Chiribella's original theorem, proved in \cite{Chi11}. The version we give here follows from the reformulation in \cite[Theorem 21]{GPS24b}.\footnote{Note, however, a typo: their overall expression should be multiplied by $k!$. In the third equation of their proof, we should use $\binom{d+k+n-1}{k} = \frac{(d+n)^{\uparrow k}}{k!}$ rather than $(d+n)^{\uparrow k}$ as written.}

\subsubsection{Mixed state tomography}

Mixed state tomography refers to the general case of tomography. For many years, the sample-optimal algorithms for mixed state tomography \cite{HHJ+16,OW16,PSW25,Yue23,SSW25} were much more complicated than sample-optimal pure state tomography algorithms \cite{Hay98,GKKT20}. Recently, however, mixed state tomography was reduced to pure state tomography using the random purification channel \cite{TWZ25}, yielding simple sample-optimal algorithms for mixed state tomography \cite{PSTW25}. We first recall the random purification channel.

\begin{lemma}[The random purification channel]\label{lem:random_purification_channel}
    Let $n \geq 1$, $d \geq 1$, $r \leq d$ be integers. There exists a quantum channel $\purifychan^{(d,r)}(\cdot)$ which acts as follows. Given as input $n$ copies of a rank-$r$ mixed state $\rho \in \C^{d \times d}$,
    \begin{equation*}
        \purifychan^{(d,r)} ( \rho^{\otimes n} ) = \E_{\ket{\brho}} \big[ \ketbra{\brho}^{\otimes n} \big].
    \end{equation*}
    Here, the expectation is over a uniformly random purification $\ket{\brho} \in \C^d \otimes \C^r$ of $\rho$. By a uniformly random purification, we mean the distribution of purifications obtained by applying a Haar-random unitary $\bU$ to the second register of any fixed purification $\ket{\rho} \in \C^d \otimes \C^r$.
\end{lemma}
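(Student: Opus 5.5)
The plan is to build $\purifychan^{(d,r)}$ explicitly using Schur--Weyl duality, and then check that its output equals the twirled target $\E_{\bU}\big[((I\otimes \bU)\ketbra{\rho}(I\otimes \bU)^\dagger)^{\otimes n}\big]$. Since $\bU$ is Haar random, this target does not depend on which fixed purification we start from. So we take $\ket{\rho} = (\sqrt{\rho}\,V \otimes I_r)\ket{\Omega_r}$, where $\ket{\Omega_r}=\sum_{i=1}^r \ket{i}\ket{i}$ and $V:\C^r\to\C^d$ is an isometry onto the support of $\rho$. Then $\ket{\rho}^{\otimes n} = ((\sqrt{\rho}V)^{\otimes n}\otimes I)\ket{\Omega_r}^{\otimes n}$.

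First, I will decompose $(\C^r)^{\otimes n} \cong \bigoplus_{\lambda} Q^r_\lambda \otimes \mathrm{Sp}_\lambda$, where the sum is over partitions $\lambda\vdash n$ with $\ell(\lambda)\le r$. Because $\ket{\Omega_r}^{\otimes n}$ is the unnormalized maximally entangled state on $(\C^r)^{\otimes n}\otimes(\C^r)^{\otimes n}$, it splits as $\bigoplus_\lambda \ket{\Omega_{Q_\lambda}}\ket{\Omega_{\mathrm{Sp}_\lambda}}$, after pairing the two copies of $\mathrm{Sp}_\lambda$ via the real Young orthogonal basis. The unitary $\bU^{\otimes n}$ on the second register acts as $\bigoplus_\lambda Q_\lambda(\bU)\otimes I$. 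Averaging over Haar $\bU$ then does two things:
\begin{itemize}
\item for $\lambda\neq\mu$, the cross terms $Q_\lambda(\bU)\,\cdot\,Q_\mu(\bU)^\dagger$ average to zero by Schur's lemma, since the irreps are inequivalent;
\item for each $\lambda$, the $Q_\lambda$ factor of the second register is decohered, i.e.\ replaced by $I_{Q_\lambda}/\dim Q_\lambda$ after tracing it against its partner.
\end{itemize}
The expected result is
\begin{equation*}
\E_{\bU}\big[\ketbra{\brho}^{\otimes n}\big] = \bigoplus_{\lambda} \Big( Q_\lambda(\sqrt{\rho}V)\, Q_\lambda(\sqrt{\rho}V)^\dagger \Big)_{A,Q} \otimes \ketbra{\Omega_{\mathrm{Sp}_\lambda}}_{A,B} \otimes \frac{I_{Q_\lambda}}{\dim Q_\lambda}\Big|_{B,Q}.
\end{equation*}
The first factor equals $Q^d_\lambda(\rho)$ embedded via $V$.

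Second, I will write down a channel that produces exactly this state from $\rho^{\otimes n} = \bigoplus_\lambda Q^d_\lambda(\rho)\otimes I_{\mathrm{Sp}_\lambda}$. It proceeds as follows:
\begin{itemize}
\item perform weak Schur sampling to learn $\lambda$, and reject the $\ell(\lambda)>r$ sectors, which carry zero weight;
\item discard the input's $\mathrm{Sp}_\lambda$ factor, which is maximally mixed;
\item prepare a fresh $\ket{\Omega_{\mathrm{Sp}_\lambda}}/\sqrt{\dim \mathrm{Sp}_\lambda}$ across new $A$- and $B$-side $\mathrm{Sp}_\lambda$ registers;
\item prepare $I_{Q^r_\lambda}/\dim Q^r_\lambda$ on the $B$-side $Q$ register;
\item apply the inverse Schur transforms on both sides, keeping the $A$-side $Q^d_\lambda$ register as the input's.
\end{itemize}
This map is a composition of a measurement, partial traces, state preparations and isometries, so it is a quantum channel. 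It is also independent of $\rho$, because $\rho$ enters only through $Q^d_\lambda(\rho)$, which passes through untouched.

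Third, I will check that the normalizations agree sector by sector. On the channel side, sector $\lambda$ receives weight $\tr Q_\lambda(\rho)\cdot\dim \mathrm{Sp}_\lambda$. On the target side, the weight is $\tr(Q_\lambda(\rho))\cdot\dim \mathrm{Sp}_\lambda$, coming from $\braket{\Omega_{\mathrm{Sp}}}{\Omega_{\mathrm{Sp}}}$ and the normalized $I_{Q}/\dim Q$. The main obstacle is this bookkeeping in the twirl: verifying that the $\ket{\Omega_{Q_\lambda}}$ contraction yields precisely $Q_\lambda(\sqrt\rho V)Q_\lambda(\sqrt\rho V)^\dagger$ with no stray transpose, and that the two copies of $\mathrm{Sp}_\lambda$ are paired consistently on both sides. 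I would handle this by fixing real (Young orthogonal) bases, so that the maximally entangled vector on $\mathrm{Sp}_\lambda\otimes \mathrm{Sp}_\lambda$ is invariant under $\pi\otimes\pi$. That invariance forces the permutation symmetry of the output to match that of $\ketbra{\brho}^{\otimes n}$, which lies in the symmetric subspace of $(\C^d\otimes\C^r)^{\otimes n}$.
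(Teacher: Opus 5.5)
Your proposal is correct. The paper does not prove this lemma (it is imported from \cite{TWZ25}), so there is no internal argument to compare against, and what you wrote is a valid self-contained proof along standard Schur--Weyl lines. Unpacked, your channel does the following: weak Schur sampling, discarding $\mathrm{Sp}_\lambda$, appending $I_{Q^r_\lambda}/\dim Q^r_\lambda$, and embedding $Q^d_\lambda\otimes Q^r_\lambda$ into $\vee^n(\C^d\otimes\C^r)$. Preparing the $S_n$-invariant vector of $\mathrm{Sp}_\lambda\otimes\mathrm{Sp}_\lambda$ and applying both inverse Schur transforms is exactly that embedding. Your sector-by-sector comparison of the full operators, not just the weights, is then the whole verification.

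Three things to tighten:
(i) Independence of the target from the chosen purification uses that any two purifications of a rank-$r$ state in $\C^d\otimes\C^r$ differ by a unitary on the second factor; Haar invariance then finishes the argument.
(ii) On outcomes with $\ell(\lambda)>r$, the map should output a fixed state rather than ``reject,'' so that it is trace-preserving on every input, not just on $\rho^{\otimes n}$.
(iii) The bookkeeping you flag as the main obstacle is lighter than it looks. Write $(\sqrt{\rho}V)^{\otimes n}=\bigoplus_\lambda X_\lambda\otimes I_{\mathrm{Sp}_\lambda}$ in Schur bases, which holds by Schur's lemma since it intertwines the two $S_n$ actions. Then $X_\lambda X_\lambda^\dagger=Q^d_\lambda(\rho)$ follows at once from $\rho^{\otimes n}=(\sqrt{\rho}V)^{\otimes n}\big((\sqrt{\rho}V)^{\otimes n}\big)^\dagger$. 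There is no transpose, because $\tr_B[(X\otimes I)\ketbra{\Omega}(X\otimes I)^\dagger]=XX^\dagger$. Any basis convention on the $B$-side $Q^r_\lambda$ factor is erased by the twirl. So the only convention that matters is that the prepared vector in $\mathrm{Sp}_\lambda\otimes\mathrm{Sp}_\lambda$ be the $S_n$-invariant one, namely $\ket{\Omega_{\mathrm{Sp}_\lambda}}$ in a real orthogonal basis, as you propose.
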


The random purification channel's action can be viewed as sampling a random purification $\ket{\brho}$, and then giving us $n$ copies of that purification. Since learning any purification suffices to learn a mixed state, we obtain the following meta-algorithm for mixed state tomography, analyzed in \cite{PSTW25}.

{
\floatstyle{boxed}
\restylefloat{figure}
\begin{figure}[H]
Given $n$ copies of a rank-$r$ mixed state $\rho$:
\begin{enumerate}
    \item Apply $\purifychan^{(d, r)}$ to prepare $n$ copies of a random purification $\ket{\brho} \in \C^d \otimes \C^r$.
    \item Run a pure state tomography algorithm $\calA$ on $\ket{\brho}^{\otimes n}$. Let $\widehat{\bsigma}$ be the output.
    \item Output $\widehat{\brho} = \tr_{\reg{2}}(\widehat{\bsigma})$.
\end{enumerate}
\caption{A framework for lifting a pure state tomography algorithm $\calA$ to a mixed state tomography algorithm $\mixed(\calA)$.}
\label{fig:reduction}
\end{figure}
}

Particularly important for both \cite{PSTW25} and the current paper is $\mixed(\gps)$. For completeness, we now describe this algorithm in full, although the ingredients are already contained in \Cref{fig:GPS_algorithm} and \Cref{fig:reduction}.

{
\floatstyle{boxed}
\restylefloat{figure}
\begin{figure}[h!]
Given $n$ copies of a rank-$r$ mixed state $\rho$:
\begin{enumerate}
    \item Apply $\purifychan^{(d, r)}$ to prepare $n$ copies of a random purification $\ket{\brho} \in \C^d \otimes \C^r$.
    \item Run $\gps$ on $\ket{\brho}^{\otimes n}$. That is,
    \begin{enumerate}
        \item Measure the copies of $\ket{\brho}$ with the POVM $\{ \dim(\lor^n \C^{dr}) \cdot \ketbra{v}^{\otimes n} \cdot dv\}$. Let $\ket{\bv} \in \C^d \otimes \C^r$ be the outcome.
        \item Define
        \begin{equation*}
            \widehat{\bsigma} \coloneq \frac{dr + n}{n} \cdot \ketbra{\bv} - \frac{1}{n} \cdot I_{dr}.
        \end{equation*}
    \end{enumerate}
    \item Output $\widehat{\brho} = \tr_{\reg{2}}(\widehat{\bsigma})$.
\end{enumerate}
\caption{The mixed state tomography algorithm $\mixed(\gps)$.}
\label{fig:mix_gps}
\end{figure}
}

The algorithm $\mixed(\gps)$ is sample-optimal in all standard parameters, and unbiased, i.e.\ $\E [ \widehat{\brho} ] = \rho$.

\subsection{Sub-gamma random variables}
Here we review some basic facts about sub-gamma random variables. All of the material below is standard. For much more, see, for example, \cite[Chapter 2.4]{BLM13}.

\begin{definition}[Sub-gamma random variable] \label{def:subgamma}
    Let $\bX$ be a real-valued random variable, with mean $\E[\bX] = \mu$. Then $\bX$ is \emph{$(v, c)$-sub-gamma} if, for all $\Abs{t} < 1/c$,
    \begin{equation*}
        \E\big[ e^{t(\bX - \mu)}\big] \leq \exp(\frac{vt^2}{2(1 - c\Abs{t})}).
    \end{equation*}
    The numbers $v$ and $c$ are called the \emph{variance parameter} and \emph{scale parameter}, respectively.
\end{definition}

If $v_1 \leq v_2$ and $c_1 \leq c_2$, then any $(v_1,c_1)$-sub-gamma random variable is also $(v_2,c_2)$-sub-gamma. We will also use the following basic closure properties.

\begin{lemma}[Shifts, rescalings, and sums of sub-gamma random variables]
\label{lem:scalings_sums_subgammas}
Suppose $\bX_1$ is $(v_1,c_1)$-sub-gamma and $\bX_2$ is $(v_2,c_2)$-sub-gamma. Then:
\begin{enumerate}
    \item \label{item:subgamma_shift} For any $\Delta \in \R$, the random variable $\bX_1 + \Delta$ is $(v_1,c_1)$-sub-gamma.
    \item \label{item:subgamma_scale} For any $\lambda \in \R$, the random variable $\lambda \bX_1$ is $(\lambda^2 \cdot v_1, \Abs{\lambda} \cdot c_1)$-sub-gamma.
    \item \label{item:subgamma_sum} $\bX_1 + \bX_2$ is $\big(2 (v_1+v_2),2 \max(c_1,c_2)\big)$-sub-gamma.
\end{enumerate}
\end{lemma}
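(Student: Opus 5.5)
The plan is to work directly from \Cref{def:subgamma}, handling each item by a manipulation of the centered moment generating function. Throughout, write $\mu_i = \E[\bX_i]$ and $\bY_i = \bX_i - \mu_i$ for the centered variables.

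For \Cref{item:subgamma_shift}, the mean of $\bX_1 + \Delta$ is $\mu_1 + \Delta$. Its centered version is therefore exactly $\bY_1$, so the defining inequality is literally the same one, with the same range $\Abs{t} < 1/c_1$.

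For \Cref{item:subgamma_scale}, first dispose of $\lambda = 0$: the variable is then constant, its centered MGF is $1$, and $(0,0)$-sub-gammaness holds trivially (reading $1/0 = \infty$). For $\lambda \neq 0$, the centered version of $\lambda \bX_1$ is $\lambda \bY_1$. Hence $\E[e^{t \lambda \bY_1}]$ is the MGF of $\bY_1$ evaluated at $s = \lambda t$. The condition $\Abs{t} < 1/(\Abs{\lambda} c_1)$ is equivalent to $\Abs{s} < 1/c_1$, so the hypothesis gives the bound $\exp\big(v_1 \lambda^2 t^2 / (2(1 - \Abs{\lambda} c_1 \Abs{t}))\big)$. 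This is exactly the $(\lambda^2 v_1, \Abs{\lambda} c_1)$-sub-gamma condition.

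For \Cref{item:subgamma_sum}, no independence is assumed, so the key step is Cauchy--Schwarz:
\begin{equation*}
    \E\big[e^{t(\bY_1 + \bY_2)}\big] \leq \sqrt{\E\big[e^{2t\bY_1}\big] \cdot \E\big[e^{2t\bY_2}\big]}.
\end{equation*}
Set $c = 2\max(c_1,c_2)$ and take $\Abs{t} < 1/c$. Then $\Abs{2t} < 1/c_i$ for each $i$, so the hypotheses give
\begin{equation*}
    \E\big[e^{2t\bY_i}\big] \leq \exp\Big(\frac{4 v_i t^2}{2(1 - 2c_i\Abs{t})}\Big) \leq \exp\Big(\frac{4 v_i t^2}{2(1 - c\Abs{t})}\Big).
\end{equation*}
The second inequality uses $2c_i \leq c$. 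Taking the square root of the product yields
\begin{equation*}
    \E\big[e^{t(\bY_1 + \bY_2)}\big] \leq \exp\Big(\frac{2(v_1+v_2) t^2}{2(1-c\Abs{t})}\Big),
\end{equation*}
which is the claimed $(2(v_1+v_2), 2\max(c_1,c_2))$ bound.

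The only point requiring care is the factor-of-two bookkeeping in this last step. Doubling $t$ inside Cauchy--Schwarz forces both the halved range of $t$ and the doubled variance parameter, and we must check that $1 - 2c_i\Abs{t} \geq 1 - c\Abs{t} > 0$ so the denominators compare in the right direction.
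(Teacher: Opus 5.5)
Your proposal is correct and follows the paper's argument: items 1 and 2 come straight from \Cref{def:subgamma}, and item 3 applies Cauchy--Schwarz at doubled $t$ and then replaces $1-2c_i\Abs{t}$ by $1-2\max(c_1,c_2)\Abs{t}$ before merging the exponentials, exactly as the paper does. Your separate handling of $\lambda = 0$ and your check that $\Abs{2t} < 1/c_i$ are details the paper leaves implicit.
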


\begin{proof}
    \Cref{item:subgamma_shift} and \Cref{item:subgamma_scale} follow straightforwardly from \Cref{def:subgamma}. For \Cref{item:subgamma_sum}, we have 
    \begin{equation*}
        \E \big[e^{t(\bX_1 + \bX_2 - \mu_1 - \mu_2)} \big] \leq \sqrt{\E \big[e^{2t(\bX_1 -\mu_1)} \big]} \cdot \sqrt{\E \big[ e^{2t(\bX_2- \mu_2)} \big]} \leq \exp(\frac{2v_1t^2}{2(1 - 2c_1\Abs{t})}) \cdot \exp(\frac{2v_2t^2}{2(1 - 2c_2\Abs{t})}).
    \end{equation*}
    The first inequality is Cauchy--Schwarz. Using $c_i \leq \max(c_1, c_2)$ and merging the two exponentials completes the proof.
\end{proof}

\begin{lemma}[Mixtures of sub-gamma random variables]
\label{lem:mixture_of_subgammas}
Suppose $\bX$ is obtained by first sampling an index $\bi \in I$ according to a probability measure $\pi$, and then sampling from $\bX_{\bi}$, where each $\bX_i$ is $(v,c)$-sub-gamma and satisfies $\E[\bX_i]=\mu$. Then $\bX$ is also $(v,c)$-sub-gamma.
\end{lemma}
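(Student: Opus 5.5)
The plan is to compute the moment generating function of $\bX$ by conditioning on the index $\bi$, and then bound each conditional term using the sub-gamma property of $\bX_{\bi}$. The key point is that every $\bX_i$ has the same mean $\mu$. By the tower rule this gives $\E[\bX] = \E_{\bi \sim \pi}[\E[\bX_{\bi}]] = \mu$. So the centering in \Cref{def:subgamma} for $\bX$ uses the same constant $\mu$ as for each component, and no extra cross term appears.

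Concretely, fix $t$ with $\Abs{t} < 1/c$. By the law of total expectation,
\begin{equation*}
    \E\big[e^{t(\bX - \mu)}\big] = \E_{\bi \sim \pi}\Big[\E\big[e^{t(\bX_{\bi} - \mu)}\big]\Big].
\end{equation*}
For each fixed $i$, the inner expectation is at most $\exp\big(\frac{vt^2}{2(1-c\Abs{t})}\big)$. This uses the fact that $\bX_i$ is $(v,c)$-sub-gamma with mean $\mu$. The bound is independent of $i$, so averaging over $\bi \sim \pi$ preserves it, and we obtain exactly the defining inequality for $\bX$ with parameters $(v,c)$.

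The only step needing care is measure-theoretic. We need the map $i \mapsto \E[e^{t(\bX_i-\mu)}]$ to be measurable, so that the tower rule applies to the (possibly continuous) index set $I$. This holds when $\bX$ is defined via a Markov kernel from $I$, which is the intended meaning of "sampling from $\bX_{\bi}$". Nonnegativity of the integrand lets us use Tonelli's theorem with no integrability concerns. Beyond this, there is no real obstacle; the lemma is a one-line convexity/averaging argument.
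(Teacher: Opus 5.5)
Your proposal is correct and matches the paper's proof: condition on $\bi$, bound each conditional MGF $\E[e^{t(\bX_i-\mu)}]$ by the $i$-independent sub-gamma bound, and integrate against $\pi$. Your remarks about $\E[\bX]=\mu$ via the tower rule and about measurability of the kernel are points the paper leaves implicit; they are details only and do not change the argument.
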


\begin{proof}
    For all $t$ with $\Abs{t} < 1/c$,
    \begin{equation*}
        \E\big[e^{t(\bX-\mu)}\big]
        = \int_I \E\big[e^{t(\bX_i-\mu)}\big] \, d\pi(i)
        \leq \int_I \exp(\frac{vt^2}{2(1-c\Abs{t})}) \, d\pi(i)
        = \exp(\frac{vt^2}{2(1-c\Abs{t})}).
        \qedhere
    \end{equation*}
\end{proof}

Finally, we have the following tail bound that can be found in~\cite[Theorem 2.3]{BLM13}.

\begin{lemma}[Tail bound for sub-gamma random variables] \label{lem:subgamma_probability_bound}
    If $\bX$ is $(v,c)$-sub-gamma, then, for all $\delta > 0$,
    \begin{equation*}
        \Pr\big[ \Abs{\bX - \mu}  \geq \sqrt{2v\delta} + c\delta \big] \leq 2e^{-\delta}.
    \end{equation*}
\end{lemma}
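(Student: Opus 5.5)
The plan is to use the Chernoff (exponential Markov) method on each tail separately, plug in the moment generating function bound from \Cref{def:subgamma}, and then choose $t$ explicitly so that the exponent comes out to exactly $-\delta$. This avoids computing the full Legendre transform.

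For the upper tail, fix $s > 0$ and $0 < t < 1/c$. Markov's inequality applied to $e^{t(\bX-\mu)}$, followed by the sub-gamma bound, gives
\begin{equation*}
    \Pr[\bX - \mu \geq s] \leq e^{-ts}\,\E\big[e^{t(\bX-\mu)}\big] \leq \exp\Big(-ts + \frac{vt^2}{2(1-ct)}\Big).
\end{equation*}
Set $s = \sqrt{2v\delta} + c\delta$ and $a = \sqrt{2\delta/v}$, and choose
\begin{equation*}
    t = \frac{a}{1+ca}.
\end{equation*}
This choice automatically satisfies $0 < t < 1/c$. The key step is the algebra, using the identities $\sqrt{2v\delta} = va$ and $c\delta = cva^2/2$:
\begin{itemize}
    \item Since $1 - ct = 1/(1+ca)$, the penalty term is $\frac{vt^2}{2(1-ct)} = \frac{va^2}{2(1+ca)} = \frac{\delta}{1+ca}$.
    \item The linear term is $ts = \frac{a}{1+ca}\cdot va\,(1 + ca/2) = \frac{2\delta\,(1+ca/2)}{1+ca}$.
\end{itemize}
Subtracting, we get $ts - \frac{vt^2}{2(1-ct)} = \frac{\delta(1+ca)}{1+ca} = \delta$. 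Hence $\Pr[\bX - \mu \geq s] \leq e^{-\delta}$.

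The lower tail works the same way. \Cref{def:subgamma} allows negative $t$ with $|t| < 1/c$, so applying the same argument to $-(\bX-\mu)$ with $t \mapsto -t$ gives $\Pr[\bX - \mu \leq -s] \leq e^{-\delta}$. A union bound over the two tails then yields the factor of $2$.

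The only real obstacle is the choice of $t$, and the substitution above handles it. What remains is degenerate cases. If $c = 0$, the constraint on $t$ disappears and the same formula gives $t = a$. If $v = 0$, then $a$ is undefined. In that case I would either apply the bound with $v$ replaced by any $v' > 0$ and let $v' \to 0$, using continuity of probability from above on the events $\{|\bX - \mu| \geq \sqrt{2v'\delta} + c\delta\}$, or simply take $t \uparrow 1/c$ directly.
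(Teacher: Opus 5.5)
Your proof is correct and is essentially the standard Cram\'er--Chernoff argument behind the bound that the paper quotes from \cite{BLM13} without proof. There one computes the Legendre transform of $t\mapsto \frac{vt^2}{2(1-ct)}$ in general and inverts it; your $t=a/(1+ca)$ is exactly the optimizer, so you are checking directly that the transform is at least (in fact exactly) $\delta$ at $s=\sqrt{2v\delta}+c\delta$.

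One small correction concerns your $v=0$ aside. As $v'\downarrow 0$ the events $\{|\bX-\mu|\ge\sqrt{2v'\delta}+c\delta\}$ increase rather than decrease. Continuity (from below) therefore only gives $\Pr[|\bX-\mu|>c\delta]\le 2e^{-\delta}$, the strict version. Your second route does give the stated non-strict inequality when $c>0$: bound each tail by $e^{-tc\delta}$ for every $0<t<1/c$, then take the infimum. When $v=c=0$ the statement itself is false, since a constant is $(0,0)$-sub-gamma and the event then has probability $1$. That case never arises in the paper, because $c>0$ in every application.
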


\subsection{Moments of the binomial distribution}
We collect a few bounds for the moments of the binomial distribution that we will need to analyze our moment matching estimator for sub-normalized states in~\Cref{sec:moment-estimation}.

\begin{lemma}
    \label{lem:binom-facts-downarrow}
    Let $\bx \sim \mathrm{Binom}(n, p)$. Then $\E[\bx^{\downarrow k}] = n^{\downarrow k} \cdot p^k$.
\end{lemma}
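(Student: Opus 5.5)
We prove the identity $\E[\bx^{\downarrow k}] = n^{\downarrow k} \cdot p^k$ by writing $\bx$ as a sum of indicators and counting ordered tuples of distinct successes. Write $\bx = \sum_{i=1}^n \bx_i$, where $\bx_1, \ldots, \bx_n$ are independent Bernoulli$(p)$ random variables.

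The key combinatorial observation is that for a nonnegative integer $m$, the falling power $m^{\downarrow k}$ counts ordered $k$-tuples of distinct elements drawn from an $m$-element set. Applied to the set of successful indices $\{i : \bx_i = 1\}$, this gives
\begin{equation*}
    \bx^{\downarrow k} = \sum_{\substack{(i_1, \ldots, i_k) \in [n]^k \\ \text{pairwise distinct}}} \bx_{i_1} \bx_{i_2} \cdots \bx_{i_k}.
\end{equation*}
Each summand is $1$ exactly when every index in the tuple is a success. The identity still holds when $\bx < k$: then $\bx^{\downarrow k} = 0$, since one of its factors vanishes, and the sum on the right is empty because no such tuple exists.

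Taking expectations, linearity and independence of distinct coordinates give $\E[\bx_{i_1} \cdots \bx_{i_k}] = p^k$ for each tuple. The number of pairwise distinct ordered tuples from $[n]$ is $n^{\downarrow k}$. Hence $\E[\bx^{\downarrow k}] = n^{\downarrow k} \cdot p^k$. For $k > n$ both sides are zero, so this case is consistent.

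The argument has no real obstacle. The only point needing care is the combinatorial identity in the second step, and specifically the case $\bx < k$ just checked.

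An alternative route is the factorial moment generating function. One computes $\E[s^{\bx}] = (1 - p + ps)^n$ and differentiates $k$ times at $s = 1$. The left side becomes $\E[\bx^{\downarrow k}]$, and the right side becomes $n^{\downarrow k} p^k$. I would present the indicator argument as the main proof.
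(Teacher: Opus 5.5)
Your proof is correct, but your main argument is not the one the paper uses. The paper's proof is exactly your ``alternative route.'' It sets $f(y)=\E[y^{\bx}]=((1-p)+py)^n$ and differentiates $k$ times, obtaining $\E[\bx^{\downarrow k}y^{\bx-k}]=p^k\,n^{\downarrow k}((1-p)+py)^{n-k}$, then evaluates at $y=1$.

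The generating-function route is a two-line calculation that uses only the closed form of the distribution. The case $\bx<k$ also needs no separate check there, because the $k$-th derivative of $y^{x}$ vanishes automatically when $x<k$.

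Your indicator argument instead establishes a pathwise identity, $\bx^{\downarrow k}=\sum_{\text{distinct }(i_1,\ldots,i_k)}\bx_{i_1}\cdots\bx_{i_k}$, before any expectation is taken. This is more elementary and more structural. It extends immediately to independent Bernoullis with different parameters. It also does extra work for the paper. Multiply the two sums for $(\bx^{\downarrow k})^2$, use $\bx_s^2=\bx_s$, and group pairs of tuples by the number $\ell$ of shared indices: there are $\binom{k}{\ell}^2$ choices of shared positions and $\ell!$ matchings. This gives $(x^{\downarrow k})^2=\sum_{\ell}\binom{k}{\ell}^2\ell!\,x^{\downarrow(2k-\ell)}$, which is precisely the identity the paper imports from Rosas in the next lemma on $\Var[\bx^{\downarrow k}]$.
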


\begin{proof}
    Consider the function
    \begin{equation*}
        f(y) \coloneq \E\big[y^{\bx}\big] = \sum_{x=0}^n \binom{n}{x} \cdot p^x \cdot (1-p)^{n-x} \cdot y^x = \big( (1-p) + p\cdot y \big)^n.
    \end{equation*}
    On the one hand, the $k$-th derivative of this function is
    \begin{equation*}
        \frac{\partial^k}{\partial y^k} f(y)
         = \frac{\partial^k}{\partial y^k} \E \big[ y^{\bx} \big] = \E\big[\bx^{\downarrow k} y^{\bx - k}\big].
    \end{equation*}
    However, on the other hand, we have
    \begin{equation*}
        \frac{\partial^k}{\partial y^k} f(y) = \frac{\partial^k}{\partial y^k} \big( (1-p) + p \cdot y \big)^n = p^k \cdot n^{\downarrow k} \cdot \big( (1-p) + p \cdot y \big)^{n-k}.
    \end{equation*}
    Setting these two expressions equal, and evaluating at $y = 1$ gives the desired result:
    \begin{equation*}
        \E\big[\bx^{\downarrow k} \big] = \frac{\partial^k}{\partial y^k} f(y) \Big|_{y=1} = n^{\downarrow k} \cdot p^k. \qedhere
    \end{equation*}
\end{proof}

\begin{lemma}
    \label{lem:binom-facts-power}
    Let $\bx \sim \mathrm{Binom}(n, p)$. Then $\E[\bx^k] \leq 2^k \cdot (k^k + n^k \cdot p^k)$.
\end{lemma}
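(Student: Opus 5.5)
We want $\E[\bx^k] \leq 2^k (k^k + n^k p^k)$ for $\bx \sim \mathrm{Binom}(n,p)$. The plan is to relate ordinary powers to falling powers, where \Cref{lem:binom-facts-downarrow} gives exact moments.

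The relation is the elementary inequality
\begin{equation*}
    x^k \leq (x+k)^{\downarrow k}\quad\text{or more usefully}\quad x^k \le 2^{k}\big(k^k + x^{\downarrow k}\big) \text{ for integers } x \ge 0.
\end{equation*}
The second form is what we need. I will prove it by a two-case split on $x$ versus $2k$, say.
\begin{itemize}
    \item If $x \geq 2k$, then each factor satisfies $x - j \geq x - k \geq x/2$ for $j<k$. Hence $x^{\downarrow k} \geq (x/2)^k$, that is, $x^k \leq 2^k x^{\downarrow k}$.
    \item If $x < 2k$, then $x^k < (2k)^k = 2^k k^k$.
\end{itemize}
In either case $x^k \leq 2^k(k^k + x^{\downarrow k})$. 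This uses $x^{\downarrow k} \geq 0$ for nonnegative integers $x$, which holds because the product is zero once $x<k$.

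Taking expectations and applying \Cref{lem:binom-facts-downarrow} gives
\begin{equation*}
    \E[\bx^k] \leq 2^k\big(k^k + \E[\bx^{\downarrow k}]\big) = 2^k\big(k^k + n^{\downarrow k} p^k\big).
\end{equation*}
Finally, $n^{\downarrow k} \leq n^k$ finishes the proof.

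There is no real obstacle here. The only care needed is the threshold in the case split: with threshold $2k$, in the large case we need $x-k+1 \ge x/2$, which holds whenever $x\ge 2k-2$, so the bound is safely satisfied. The edge case $k=0$ is trivial.
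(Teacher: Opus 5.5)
Your proof is correct, but it takes a different route from the paper.

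The paper does not use falling factorials here. It imports the sharp raw-moment bound $\E[\bx^k] \le \big(k/\log(1 + k/(np))\big)^k$ from \cite{Ahl22}, applies $\log(1+u) \ge u/(1+u)$ to obtain $\E[\bx^k] \le (k+np)^k$, and then splits this binomial power. The case $p=0$ is handled separately because of the division by $np$.

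You instead prove the deterministic inequality $x^k \le 2^k\big(k^k + x^{\downarrow k}\big)$ for all integers $x \ge 0$, via the case split at $x = 2k$. You then average it using the exact factorial moments from \Cref{lem:binom-facts-downarrow}, together with $n^{\downarrow k} \le n^k$. Both cases and the nonnegativity of $x^{\downarrow k}$ check out.

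Your route buys three things:
\begin{itemize}
\item It is self-contained within the paper, needing only the preceding lemma and no external citation.
\item It needs no special treatment of $p = 0$.
\item Because the inequality is pointwise, it gives $\E[\bX^k] \le 2^k\big(k^k + \E[\bX^{\downarrow k}]\big)$ for any nonnegative-integer-valued random variable $\bX$, not just binomials.
\end{itemize}

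The paper's route yields a sharper intermediate bound. By convexity, $(k+np)^k \le 2^{k-1}\big(k^k + (np)^k\big)$, and it is close to $(np)^k$ when $np \gg k^2$, whereas your intermediate bound retains the factor $2^k$. That extra sharpness is not needed for the lemma as stated or for its use in \Cref{lem:variance-of-subnorm-moment}.
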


\begin{proof}
    If $p = 0$, then $\bx^k = 0$, which satisfies the bound trivially. So, assume $p > 0$. It is known from~\cite{Ahl22} that the $k$-th raw moment of the binomial distribution satisfies
    \begin{equation*}
        \E\big[\bx^k\big] \leq \bigg(\frac{k}{\log\big(1 + \frac{k}{np}\big)}\bigg)^k.
    \end{equation*}
    Since $\log(1 + u) \geq u/(1+u)$ for $u \geq 0$, it holds that
    \begin{equation*}
        \E\big[\bx^k\big] \leq \bigg(k \cdot \Big( \frac{k + np}{k} \Big)\bigg)^k
        \leq 2^k \cdot \big(k^k + n^k \cdot p^k\big). \qedhere
    \end{equation*}
\end{proof}

\begin{lemma}
    \label{lem:binom-facts-var-downarrow}
    Let $\bx \sim \mathrm{Binom}(n, p)$. Then
    \begin{equation*}
        \Var\big[\bx^{\downarrow k}\big] \leq 2^k \cdot k! \cdot \big(n^k \cdot p^k + n^{2k-1}\cdot p^{2k-1} \big).
    \end{equation*}
\end{lemma}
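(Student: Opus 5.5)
We bound the variance through the second moment, $\Var[\bx^{\downarrow k}] \leq \E[(\bx^{\downarrow k})^2]$, and compute the square of a falling factorial in the falling-factorial basis. The key identity is the classical product formula
\begin{equation*}
    x^{\downarrow k} \cdot x^{\downarrow k} = \sum_{j=0}^{k} \binom{k}{j}^2 \, j! \; x^{\downarrow (2k-j)},
\end{equation*}
which follows by a counting argument. The left side counts ordered pairs of injective $k$-tuples from an $x$-set. We group these pairs by the number $j$ of overlapping elements: choose which $j$ positions of each tuple overlap ($\binom{k}{j}^2$ ways) and a matching between them ($j!$ ways). What remains is an injective sequence of $2k-j$ distinct elements.

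Applying \Cref{lem:binom-facts-downarrow} termwise then gives
\begin{equation*}
    \E\big[(\bx^{\downarrow k})^2\big] = \sum_{j=0}^{k} \binom{k}{j}^2 j!\, n^{\downarrow(2k-j)} p^{2k-j}.
\end{equation*}
The $j=0$ term is $n^{\downarrow 2k}p^{2k}$. Since $n^{\downarrow 2k} \leq (n^{\downarrow k})^2$, this term is at most $(\E[\bx^{\downarrow k}])^2$. Subtracting the squared mean therefore leaves
\begin{equation*}
    \Var[\bx^{\downarrow k}] \leq \sum_{j=1}^{k} \binom{k}{j}^2 j!\, (np)^{2k-j}.
\end{equation*}
Here we used $n^{\downarrow m}\leq n^m$.

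Every remaining exponent $m=2k-j$ lies in $[k, 2k-1]$. For $u=np\geq 0$ we have $u^m \leq u^k + u^{2k-1}$: if $u\leq 1$ the first term dominates, and if $u\geq 1$ the second does. It then suffices to bound the coefficient sum $\sum_{j=1}^k \binom{k}{j}^2 j!$ by $2^k k!$. Writing $\binom{k}{j}^2 j! = \binom{k}{j}\cdot \frac{k!}{(k-j)!}\leq \binom{k}{j} k!$, the sum is at most $k!\sum_j\binom{k}{j} \leq 2^k k!$. This yields
\begin{equation*}
    \Var[\bx^{\downarrow k}] \leq 2^k \cdot k! \cdot \big((np)^k + (np)^{2k-1}\big),
\end{equation*}
which is the claim.

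The only nontrivial step is the product formula for falling factorials. I would prove it by the combinatorial count above for integer $x\geq 0$; both sides are polynomials in $x$, so agreement on all nonnegative integers gives the identity everywhere. Once it is in hand, the rest is bookkeeping.
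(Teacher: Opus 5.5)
Your proposal is correct and follows the paper's proof essentially step for step. It uses the same falling-factorial product formula, the termwise application of \Cref{lem:binom-facts-downarrow}, the cancellation of the $j=0$ term against the squared mean via $n^{\downarrow 2k}\le (n^{\downarrow k})^2$, the bound $u^m\le u^k+u^{2k-1}$, and the coefficient estimate $\binom{k}{j}^2 j!\le \binom{k}{j}\,k!$. The one difference is that you prove the product formula yourself by double counting pairs of injective tuples, whereas the paper cites it as a special case of a result of Rosas and gives essentially the same counting explanation in a footnote.
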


\begin{proof}
    The square of a falling factorial satisfies the following identity,
    \begin{equation*}
        \big(x^{\downarrow k}\big)^2
        = \sum_{\ell=0}^k \binom{k}{\ell}^2 \cdot \ell! \cdot x^{\downarrow (2k - \ell)},
    \end{equation*}
    a special case of~\cite[Corollary 8]{Ros02}.\footnote{This corollary tells us \begin{equation*}
        x^{\downarrow k} \cdot x^{\downarrow k} = \sum_{0 \leq i \leq 2k} c_i \cdot x^{\downarrow i}.
    \end{equation*}
    Here $c_i$ counts the number of ways to partition $[2k]$ into \emph{exactly} $i$ nonempty sets such that each set contains neither two elements from $A = \{1, \dots, k\}$ nor two elements from $B = \{k+1, \dots, 2k\}$. For this quantity to be nonzero, we must have $i \in \{k, \dots, 2k\}$. Taking $i = 2k-\ell$ for $\ell \in \{0, 1, \dots, k\}$, we must have $\ell$ two-element sets containing one element of $A$ and one element of $B$. The number of ways to choose such pairs is $\binom{k}{\ell}^2$, and the number of matchings is $\ell!$. }
    Thus, by \Cref{lem:binom-facts-downarrow},
    \begin{equation*}
        \E\big[\big(\bx^{\downarrow k}\big)^2\big]
        = \sum_{\ell=0}^k \binom{k}{\ell}^2 \cdot \ell! \cdot \E\big[\bx^{\downarrow (2k - \ell)}\big]
        = \sum_{\ell=0}^k \binom{k}{\ell}^2 \cdot \ell! \cdot n^{\downarrow (2k-\ell)} \cdot p^{2k-\ell}.
    \end{equation*}
    We can rewrite the variance as
    \begin{align*}
        \Var\big[\bx^{\downarrow k}\big]
        &= \E\big[\big(\bx^{\downarrow k}\big)^2\big] - \E\big[\bx^{\downarrow k}\big]^2 \\
        &= \bigg(\sum_{\ell=0}^k \binom{k}{\ell}^2 \cdot \ell! \cdot n^{\downarrow (2k-\ell)} \cdot p^{2k-\ell}\bigg) - \big(n^{\downarrow k}\big)^2 \cdot p^{2k} \\
        &= \bigg(\sum_{\ell=1}^k \binom{k}{\ell}^2 \cdot \ell! \cdot n^{\downarrow (2k-\ell)} \cdot p^{2k-\ell}\bigg) + \Big(n^{\downarrow (2k)} - \big(n^{\downarrow k}\big)^2 \Big)\cdot p^{2k}.
    \end{align*}
    It holds that $n^{\downarrow (2k)} \leq (n^{\downarrow k})^2$, and that $n^{\downarrow (2k-\ell)} \leq n^{2k-\ell}$. Hence
    \begin{equation*}
        \Var[\bx^{\downarrow k}] \leq \sum_{\ell=1}^k \binom{k}{\ell}^2 \cdot \ell! \cdot n^{2k-\ell} \cdot p^{2k-\ell}.
    \end{equation*}
    Moreover, it holds for all $u \geq 0$ that $u^t \leq u^{a} + u^{b}$ for any $a \leq t \leq b$. In particular, for all $\ell \in [k]$ we have $k \leq 2k-\ell \leq 2k-1$, and hence $n^{2k-\ell} \cdot p^{2k-\ell} \leq n^k \cdot p^k + n^{2k-1} \cdot p^{2k-1}$. We then simplify the bound on the variance as follows:
    \begin{align*}
        \Var\big[\bx^{\downarrow k}\big]
        &\leq \sum_{\ell=1}^k \binom{k}{\ell}^2 \cdot \ell! \cdot \big(n^k \cdot p^k + n^{2k-1} \cdot p^{2k-1} \big) \\
        & = \big(n^k \cdot p^k + n^{2k-1} \cdot p^{2k-1} \big) \cdot \sum_{\ell=1}^k \binom{k}{\ell} \cdot \frac{k!}{(k - \ell)!} \\
        &\leq k! \cdot \big(n^k \cdot p^k + n^{2k-1} \cdot p^{2k-1} \big) \cdot \sum_{\ell=1}^k \binom{k}{\ell} \\
        &\leq 2^k \cdot k! \cdot \big(n^k \cdot p^k + n^{2k-1} \cdot p^{2k-1}\big).
    \end{align*}
    This completes the proof.
\end{proof}

\subsection{Entangled moment estimators}

In this subsection, we state the moment estimation result that we will need for our algorithm. Recall from~\Cref{sec:tech-overview-moment-estimation} that our goal is, given $n$ copies of a state $\rho \in \C^{d \times d}$ with spectrum $\alpha = (\alpha_1, \dots, \alpha_d)$, to estimate the moments $p_k(\alpha) = \tr(\rho^k)$ of the spectrum, for all $k \leq n$, by measuring the observables
\begin{equation*}
    O_k = \mathop{\mathrm{avg}}_{\text{$k$-cycles $\pi$}}\{P(\pi)\}.
\end{equation*}
Recall also that ``measuring the observable'' refers to the process of measuring $\rho^{\otimes n}$ in the eigenbasis of $O_k$ and then outputting the eigenvalue corresponding to the measurement outcome.
To understand this process, consider the eigendecomposition of $O_k$. By Proposition 7.4 and Remark 7.8 of~\cite{BOW19},
\begin{equation}
    \label{eq:eigendecomposition-of-Ok}
    O_k = \sum_{\lambda} \frac{p^{\#}_{(k)}(\lambda)}{n^{\downarrow k}} \cdot \Pi_{\lambda}.
\end{equation}
Here, the $p^{\#}_{(k)}$'s are a well-studied family of polynomials from the representation theory of the symmetric group.
In addition, the $\Pi_{\lambda}$'s are a particular family of orthogonal projectors indexed by Young diagrams~$\lambda$ of $n$ boxes and at most $d$ rows,
and the procedure of performing the projective measurement $\{\Pi_{\lambda}\}_{\lambda}$ is known as \emph{weak Schur sampling}. The exact definition and properties of $p^{\#}_{(k)}$ and $\Pi_{\lambda}$ are outside the scope of this paper;
we bring them up only to quote the result we need from~\cite{AISW20}, which we will then recast in terms of the $O_k$ observables.
We can therefore restate our estimators as follows.
\begin{definition}[Moment estimation algorithm]
    \label{def:moment-est-alg}
    Given $n$ copies of $\rho \in \C^{d \times d}$, measuring the $O_k$ observable for $k \leq n$ is equivalent to the following procedure.
    \begin{enumerate}
        \item Perform the projective measurement $\{\Pi_{\lambda}\}_{\lambda}$ on $\rho^{\otimes n}$. Let $\blambda$ be the resulting Young diagram.
        \item Return $\widehat{\bp}_k(\rho) = p^{\#}_{(k)}(\blambda)/n^{\downarrow k}$.
    \end{enumerate}
\end{definition}
\begin{remark}
    \label{rem:estimate-moments-simultaneously}
    As noted in \Cref{sec:tech-overview-moment-estimation}, the $O_k$ observables commute with each other.
    As a result, they can be simultaneously diagonalized and therefore simultaneously measured.
    This shared eigenbasis is shown explicitly in \Cref{eq:eigendecomposition-of-Ok} and corresponds to the $\Pi_{\lambda}$ projectors.
    Thus, one can produce estimates $\widehat{\bp}_{k_1}(\rho), \widehat{\bp}_{k_2}(\rho), \dots, \widehat{\bp}_{k_m}(\rho)$ for multiple moments using the same $n$ samples simultaneously. This is done by performing weak Schur sampling, and upon obtaining the Young diagram $\blambda$, we compute the estimates
    \begin{equation*}
        \widehat{\bp}_{k_1}(\rho) = \frac{p^{\#}_{(k_1)}(\blambda)}{n^{\downarrow k_1}},\quad
        \widehat{\bp}_{k_2}(\rho) = \frac{p^{\#}_{(k_2)}(\blambda)}{n^{\downarrow k_2}}, \quad\dots,\quad
        \widehat{\bp}_{k_m}(\rho) = \frac{p^{\#}_{(k_m)}(\blambda)}{n^{\downarrow k_m}}
    \end{equation*}
    using the same $\blambda$.
\end{remark}
In~\cite{AISW20}, Acharya, Issa, Shende, and Wagner obtained optimal bounds for the complexity of estimating the moments of states by studying the $p^{\#}_{(k)}$ polynomials to understand the variance of $\widehat{\bp}_k(\rho)$. We will make use of the following result.

\begin{lemma}[{\cite[Lemma 9]{AISW20}}]
    \label{lem:moment-estimation-norm}
    Let $\rho \in \C^{d \times d}$ be a mixed state,
    and let $k, n \in \N$.
    Suppose we perform weak Schur sampling on $\rho^{\otimes n}$ and receive as outcome the Young diagram $\blambda$.
    Then
    \begin{equation*}
        \E\big[p^{\#}_{(k)}(\blambda)\big] = n^{\downarrow k} \cdot \tr(\rho^k).
    \end{equation*}
    Moreover,
    \begin{equation*}
       \Var\big[p^{\#}_{(k)}(\blambda)\big] \leq k^{6k} \cdot n^k + k^{6k} \cdot n^{2k-1} \cdot \tr(\rho^{2k-1}).
    \end{equation*}
\end{lemma}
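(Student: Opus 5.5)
The plan is to rewrite both quantities using the observable description $O_k = \mathop{\mathrm{avg}}_{k\text{-cycles } \pi}\{P(\pi)\}$ and its eigendecomposition in \Cref{eq:eigendecomposition-of-Ok}, and then to reduce the variance bound to counting pairs of $k$-cycles and comparing power sums.

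\textbf{Expectation.} By \Cref{eq:eigendecomposition-of-Ok}, the random variable $p^{\#}_{(k)}(\blambda)$ has the same law as $n^{\downarrow k}$ times the outcome of measuring $O_k$ on $\rho^{\otimes n}$. Hence
\[
\E[p^{\#}_{(k)}(\blambda)] = n^{\downarrow k}\,\tr(O_k\,\rho^{\otimes n}).
\]
For each $k$-cycle $\pi$ we have $\tr(P(\pi)\rho^{\otimes n}) = \tr(\rho^k)$, so averaging over $\pi$ gives the first claim.

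\textbf{Second moment as a sum over pairs.} Similarly,
\[
\E[p^{\#}_{(k)}(\blambda)^2] = (n^{\downarrow k})^2\,\tr(O_k^2\,\rho^{\otimes n}).
\]
Expanding $O_k^2$ writes this as the sum, over ordered pairs $(\pi,\sigma)$ of $k$-cycles, of $\tr(P(\pi\sigma)\rho^{\otimes n})$. Here $(n^{\downarrow k})^2$ times the uniform average over pairs equals, up to the factor $k^2$ coming from the $(n^{\downarrow k}/k)^2$ pairs, a sum over pairs. Each term factors over the cycles of $\pi\sigma$ as $\prod_{\text{cycles } c}\tr(\rho^{|c|})$.

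I will group the pairs by the overlap $\ell = |\mathrm{supp}\,\pi \cap \mathrm{supp}\,\sigma|$.

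\begin{itemize}
\item \emph{Disjoint pairs ($\ell=0$).} Each such pair contributes exactly $\tr(\rho^k)^2$. There are fewer such pairs than in the full product count, since $n^{\downarrow 2k} \le (n^{\downarrow k})^2$. So this group contributes at most $(n^{\downarrow k})^2\tr(\rho^k)^2 = \E[p^{\#}]^2$, and subtracting the squared mean disposes of it.
\item \emph{Overlapping pairs ($\ell \ge 1$).} The number of pairs with overlap $\ell$ is at most $n^{2k-\ell}\cdot k^{O(k)}$. Here $k^{O(k)}$ counts the choice of positions of the overlap and the cyclic orders, and I expect it to be bounded by $k^{6k}/k$ after summing over $\ell$.
\end{itemize}

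For an overlapping pair, $\pi\sigma$ acts on a union of $2k-\ell$ points. Let its cycles there have lengths $m_1,\dots,m_c$, so that $\sum_j m_j = 2k-\ell$; fixed points contribute $\tr\rho = 1$.

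\textbf{Key combinatorial fact.} Every cycle of $\pi\sigma$ meets the overlap, so $c \le \ell$. To see this, take a point outside the overlap, say in $\mathrm{supp}\,\pi$ only. Iterating $\pi\sigma$ from it follows $\pi$'s $k$-cycle until the orbit enters the overlap. Since $\ell \ge 1$, this must happen before the orbit can close up.

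\textbf{Analytic step.} Put $q = \tr(\rho^{2k-1})$. The quantity $\tr(\rho^m)^{1/(m-1)}$ is nondecreasing in $m$, because it is a weighted power mean of the eigenvalues with weights $\alpha_i$. Hence
\[
\prod_j \tr(\rho^{m_j}) \le q^{s}, \qquad s = \frac{\sum_j (m_j-1)}{2k-2} = \frac{2k-\ell-c}{2k-2} \in [0,1].
\]
I then need
\[
n^{2k-\ell}q^s \le n^k + n^{2k-1}q.
\]
As $s$ ranges over $[0,1]$, the line $k + (k-1)s$ interpolates the two exponents $k$ and $2k-1$. So by the weighted AM--GM inequality $n^{a}q^{s} \le n^{k} + n^{2k-1}q$ (valid since $n \ge 1$), it suffices that $2k-\ell \le k+(k-1)s$. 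This rearranges to $c \le \ell$, which is exactly the combinatorial fact. Multiplying by the pair count then gives the stated bound, with the constants absorbed into $k^{6k}$.

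\textbf{Main obstacle.} I expect the delicate parts to be the exact bookkeeping of the pair counting, so that the disjoint term cancels cleanly against $\E[p^{\#}]^2$, and the proof of the cycle-meets-overlap claim. The interpolation step is mechanical once $c \le \ell$ is in hand.
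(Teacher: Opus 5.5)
The paper gives no proof of this lemma; it quotes it from \cite{AISW20}, who, as the paper notes, analyzed the estimator in its explicit diagonalization by studying the $p^{\#}_{(k)}$ polynomials. Your argument is a correct, self-contained alternative, written in the observable language of \cite{BOW19} that the paper advocates.

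The chain holds up:
\begin{itemize}
\item \Cref{eq:eigendecomposition-of-Ok} gives $\E[p^{\#}_{(k)}(\blambda)^2]=(n^{\downarrow k})^2\tr(O_k^2\rho^{\otimes n})$.
\item The disjoint pairs contribute $n^{\downarrow 2k}\tr(\rho^k)^2\le \E[p^{\#}_{(k)}(\blambda)]^2$.
\item Every cycle of $\pi\sigma$ on $\mathrm{supp}\,\pi\cup\mathrm{supp}\,\sigma$ does meet the overlap, so $c\le \ell$.
\item The power-mean step gives $\prod_j\tr(\rho^{m_j})\le q^{s}$ with $s\in[0,1]$. This uses $m_j\le 2k-\ell\le 2k-1$, $\ell+c\ge 2$, and $c\le\ell\le k$.
\item The interpolation reduces exactly to $c\le\ell$.
\end{itemize}

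Your hedged pair count can be made exact. After absorbing the factor $k^2$, the pairs with overlap $\ell$ number exactly $n^{\downarrow(2k-\ell)}\binom{k}{\ell}k^{\downarrow\ell}\le n^{2k-\ell}\binom{k}{\ell}k^{\ell}$. Summing over $\ell\ge 1$ gives $\Var[p^{\#}_{(k)}(\blambda)]\le (k+1)^k\big(n^k+n^{2k-1}\tr(\rho^{2k-1})\big)$, which is stronger than the stated $k^{6k}$ for $k\ge 2$.

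What your route buys is transparency. The two terms of the bound appear as the endpoints $s=0$ and $s=1$ of a single interpolation, and the only combinatorial input is $c\le\ell$. What it costs is reliance on the eigendecomposition from \cite{BOW19}, which the paper also takes as given.

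Two small repairs are needed.
\begin{itemize}
\item \emph{The cycle-meets-overlap argument is not symmetric under a fixed composition convention.} Suppose $\pi\sigma$ applies $\sigma$ first. A point of $\mathrm{supp}\,\sigma\setminus\mathrm{supp}\,\pi$ follows $\sigma$ only until $\sigma$ of the current point lands in the overlap. At that moment the orbit jumps to $\pi$ of that point, which may lie in $\mathrm{supp}\,\pi\setminus\mathrm{supp}\,\sigma$ rather than in the overlap. You finish by applying your first case there. Equivalently, a cycle avoiding $\mathrm{supp}\,\pi$ would have to be all of $\sigma$'s cycle, contradicting $\ell\ge 1$.
\item \emph{Edge cases.} Your exponent $s=(2k-\ell-c)/(2k-2)$ is undefined for $k=1$, and $O_k$ is an empty average when $k>n$. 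Both cases are trivial: $p^{\#}_{(1)}(\blambda)=n$ deterministically, and $p^{\#}_{(k)}\equiv 0 = n^{\downarrow k}$ for $k>n$. They should still be set aside explicitly.
\end{itemize}
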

The discussion above shows that $\widehat{\bp}_k(\rho)$ is an unbiased estimator for the $k$-th moment of $\rho$ whenever we are given $n \geq k$ copies of the state. For the spectrum learning algorithm, we will estimate the moments of a sub-normalized state, which may require us to operate in the regime where the number of copies $n$ is strictly smaller than the moment order $k$. In that case, the quantity $\widehat{\bp}_k(\rho)$ is not well-defined, since we cannot divide by $n^{\downarrow k} = 0$. Fortunately, the quantity $p^{\#}_{(k)}(\blambda)$ is still defined and equal to zero, allowing us to extend the moment estimation algorithm to sub-normalized states.

\section{Observable concentration and relative error bounds} \label{sec:moments}

In this section, we prove the refined tomography guarantee stated in \Cref{thm:main-tool}. This is our main tool for analyzing our entangled bucketing algorithm, which itself is one of the main ingredients in our new spectrum learning algorithm.

To prove this bound, we first consider the problem of estimating $X \coloneq \tr(O \cdot \rho)$, given an observable $O$ and copies of an unknown state $\rho$. A natural meta-algorithm uses the copies of $\rho$ to prepare an estimate $\widehat{\brho}$ and then computes the plug-in estimator $\bX \coloneq \tr(O \cdot \widehat{\brho})$. If $\widehat{\brho}$ is unbiased, then so is $\bX$, i.e.\ $\E[\bX] = X$.

Our starting point is an analysis of how well $\bX$ concentrates about its expectation. For brevity, we refer to this as \emph{observable concentration}, although more precisely it is concentration of the plug-in estimator for the observable's expectation value. In \Cref{sec:obs-conc-pure-states} we show that if $\rho$ is pure, and $\widehat{\brho}$ is generated using $\gps$, then $\bX$ achieves sub-gamma concentration. In \Cref{sec:obs-conc-mixed-states}, we extend this result to the case where $\rho$ is mixed, and we use $\mixed(\gps)$ to form $\widehat{\brho}$.
Next, in \Cref{sec:refined-uniform-via-observable-conc}, we prove our main tool, \Cref{thm:main-tool}, a bound on $\Abs{\bra{w} (\widehat{\brho} - \rho ) \ket{w}}$ for all $\ket{w}$ simultaneously, by applying our observable concentration results to a bespoke observable. We also state and prove a pair of straightforward corollaries which find application in \Cref{sec:spectrum_learning,sec:additional_tomography_results}. Finally, in \Cref{sec:relative_error_bounds_higher_rank}, we extend \Cref{thm:main-tool}, viewed as a statement about $\Abs{\tr( O \cdot (\widehat{\brho} - \rho))}$ for rank-one observables $O \leftarrow \ketbra{w}$, to higher-rank observables.


\subsection{Observable concentration for pure states}
\label{sec:obs-conc-pure-states}

In this section, we prove the following result.

\begin{proposition} [The $\gps$-based plug-in estimator is sub-gamma] \label{prop:pure_states_observable_subgamma}
     Let $\ket{u} \in \C^D$ be a pure state, and let $\widehat{\bsigma}$ be the output of $\gps$ given $n$ copies of $\ket{u}$. Let $O \in \C^{D \times D}$ be a positive semidefinite operator such that $\norm{O}_\infty \leq 1$. Then the random variable
    $\bX \coloneq \tr( O \cdot \widehat{\bsigma})$
    is $(v,c)$-sub-gamma, with
    \begin{equation*}
    v = \frac{2\cdot \tr(O^2) + 4n\cdot \tr(O^2 \cdot \ketbra{u})}{n^2},
    \qquad c =\frac{1}{n}. \end{equation*}
    More generally, if $O \in \C^{D \times D}$ is any observable such that $\norm{O}_\infty \leq 1$, then $\bX$ is $(2v,2c)$-sub-gamma.
\end{proposition}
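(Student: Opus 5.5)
The plan is to avoid moment expansions via \Cref{thm:Chiribella} altogether. Instead I would use an explicit Gaussian model of Hayashi's measurement outcome and reduce the statement to the moment generating function of a Gaussian quadratic form, which can be computed exactly.

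\textbf{Step 1: a Gaussian model for Hayashi's outcome.} The outcome $\ket{\bv}$ has density proportional to $\abs{\braket{u}{v}}^{2n}$ with respect to Haar measure. In the basis $\{\ket{u}\}\cup u^\perp$, a Haar vector has squared moduli distributed as $\mathrm{Dirichlet}(1,\dots,1)$, and the tilt by $\abs{v_u}^{2n}$ turns this into $\mathrm{Dirichlet}(n+1,1,\dots,1)$, with independent uniform phases. Hence $\ket{\bv}=\bg/\norm{\bg}$, where
\[
\bg = \bc\,\ket{u} + \bh ,
\]
with $\abs{\bc}^2\sim\mathrm{Gamma}(n+1)$ with uniform phase, and $\bh$ a standard complex Gaussian on $u^\perp$, independent of $\bc$. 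By the Lukacs/Dirichlet property, $\bS=\norm{\bg}^2\sim\mathrm{Gamma}(D+n)$ is independent of $\bv$, so $\E[\bS]=D+n$.

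\textbf{Step 2: Jensen to remove the normalization.} Set
\[
\bX' = \tfrac1n\big(\bra{\bg}O\ket{\bg}-\tr O\big) = \tfrac1n\big(\bS\bra{\bv}O\ket{\bv}-\tr O\big).
\]
Then $\E[\bX'\mid\bv]=\bX=\tfrac{D+n}{n}\bra{\bv}O\ket{\bv}-\tfrac1n\tr O$. By conditional Jensen, $\E e^{t(\bX-\mu)}\le\E e^{t(\bX'-\mu)}$, and the two variables have the same mean $\mu=\bra{u}O\ket{u}$. So it suffices to prove the sub-gamma bound for $\bX'$, which is a polynomial in Gaussian and Gamma variables.

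\textbf{Step 3: the MGF of $\bX'$.} Condition on $\bc$ and write
\[
\bra{\bg}O\ket{\bg} = \abs{\bc}^2 a + 2\,\mathrm{Re}\big(\bar{\bc}\,\langle u|O|\bh\rangle\big) + \bra{\bh}A\ket{\bh},
\]
where $a=\bra{u}O\ket{u}$, $A$ is the compression of $O$ to $u^\perp$, and $b$ denotes the projection of $O\ket{u}$ onto $u^\perp$, so that $\norm{b}^2\le\bra{u}O^2\ket{u}$.
\begin{itemize}
\item For a complex Gaussian, $\E\exp(s\bra{\bh}A\ket{\bh}+2s\,\mathrm{Re}\,\beta^\dagger\bh)=\det(I-sA)^{-1}\exp\!\big(s^2\beta^\dagger(I-sA)^{-1}\beta\big)$.
\item Since $0\preceq A\preceq I$, we get $-\log\det(I-sA)-s\tr A\le \frac{s^2\tr A^2}{1-\abs{s}}$, and the exponential factor is at most $\exp\!\big(s^2\abs{\bc}^2\norm{b}^2/(1-\abs s)\big)$.
\item It remains to average over $\abs{\bc}^2\sim\mathrm{Gamma}(n+1)$ using $\E e^{r\abs{\bc}^2}=(1-r)^{-(n+1)}$, with $r=sa+s^2\norm b^2/(1-\abs s)$. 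The bound $-\log(1-r)\le r+r^2/(1-\abs r)$ and $r^2\lesssim s^2\bra{u}O^2\ket{u}$ (using $a^2\le\bra{u}O^2\ket{u}$) then give variance proxy of order $\tr O^2+4n\bra{u}O^2\ket{u}$ at scale $1$.
\item The first-order terms $(n+1)sa+s\tr A$ cancel exactly against the centering at $\tr O+na$, which is the mean of $n\bX'+\tr O$.
\item Setting $s=t/n$ yields the claimed $v=\big(2\tr(O^2)+4n\tr(O^2\ketbra u)\big)/n^2$ and $c=1/n$.
\end{itemize}

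\textbf{Step 4: general observables.} For a Hermitian $O$ with $\norm O_\infty\le1$, I would write $O=O_+-O_-$ and apply the PSD case to each part, noting $\tr O_\pm^2\le\tr O^2$ and $\bra uO_\pm^2\ket u\le\bra uO^2\ket u$. Then item~\ref{item:subgamma_sum} of \Cref{lem:scalings_sums_subgammas} (with rescaling by $-1$) gives $(2v,2c)$.

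\textbf{Main obstacle.} I expect the hard part to be Step 3's bookkeeping: getting the numerical constants to land exactly on $2$ and $4$ at scale $c=1/n$. The $\mathrm{Gamma}(n+1)$ log-MGF must be expanded carefully, since the cross term and the $a^2$ term both feed the $n\bra uO^2\ket u$ coefficient. If the constants come out slightly worse, I would recheck the inequality $-\log(1-r)\le r+\frac{r^2}{2(1-\abs r)}$ to recover the factor $1/2$.
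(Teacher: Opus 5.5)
Your Steps 1 and 2 are correct, and they give a genuinely cleaner route than the paper's. The Dirichlet/Gamma model of Hayashi's outcome is right, and $\frac{1}{n}\bra{g}O\ket{g}$ has exactly the law of the paper's auxiliary variable $\bZ=\bG\bY$. A single conditional Jensen step then replaces both the symmetrization and \Cref{lem:moments_increase_when_dropping_uparrow}.

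The gap is in Step 3. You bound $b^\dagger(I-sA)^{-1}b\le\norm{b}^2/(1-\Abs{s})$ \emph{before} averaging over $\gamma=\Abs{c}^2\sim\mathrm{Gamma}(n+1)$. The resulting $r=sa+s^2\norm{b}^2/(1-\Abs{s})$ does not stay below $1$ on the whole range $\Abs{s}<1$ that $c=1/n$ requires. For example, take $O=\ketbra{w}$ with $\Abs{\braket{u}{w}}^2=1/2$. Then $a=1/2$ and $\norm{b}^2=1/4$, so $r=s/2+s^2/(4(1-s))\ge 1$ for every $s\in[3-\sqrt{5},1)$. There $\E e^{r\gamma}=\infty$ and your formula $(1-r)^{-(n+1)}$ is meaningless.

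Even before that point, $r^2/(2(1-\Abs{r}))$ is not $O\big(s^2/(1-\Abs{s})\big)$ uniformly in $s$:
\begin{itemize}
\item your ``$r^2\lesssim s^2\bra{u}O^2\ket{u}$'' hides a term $s^4\norm{b}^4/(1-\Abs{s})^2$;
\item $1-\Abs{r}$ can be much smaller than $1-\Abs{s}$.
\end{itemize}
As written, the argument therefore only controls the MGF for $\Abs{s}\le s_0$ with some constant $s_0<1$. That yields a scale parameter a constant factor larger than $1/n$. The numerical constants $2$ and $4$ are not the real obstacle.

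The fix is to keep $r=sa+s^2 b^\dagger(I-sA)^{-1}b$ exact. Write $I-sO$ in block form with respect to $\C\ket{u}\oplus u^\perp$. The Schur complement gives $1-r=\bra{u}(I-sO)^{-1}\ket{u}^{-1}>0$ and $\det(I-sO)=\det(I-sA)\,(1-r)$. The Gamma average then collapses to
\begin{equation*}
\E\, e^{s\bra{g}O\ket{g}}=\det(I-sO)^{-1}\cdot\bra{u}(I-sO)^{-1}\ket{u}^{n},
\end{equation*}
which is exactly \Cref{lem:mgf_Z}, now obtained from a Gaussian integral instead of Chiribella's theorem.

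Next, expand in the eigenbasis of $O$ rather than of $A$. This gives $-\log\det(I-sO)-s\tr O\le s^2\tr(O^2)/(2(1-\Abs{s}))$. Using $\log x\le x-1$, it also gives $n\big(\log\bra{u}(I-sO)^{-1}\ket{u}-sa\big)\le n s^2\bra{u}O^2\ket{u}/(1-\Abs{s})$. Together these give $(v/2,1/n)$-sub-gamma for PSD $O$. Because you never symmetrize, this beats the paper's variance parameter by a factor of $2$.

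Step 4 also needs a small correction. The individual bounds $\tr O_\pm^2\le\tr O^2$ and $\bra{u}O_\pm^2\ket{u}\le\bra{u}O^2\ket{u}$ only give $v_++v_-\le 2v$. If the PSD case yields $v$, \Cref{lem:scalings_sums_subgammas} then gives $(4v,2c)$, not $(2v,2c)$. Either use $O^2=O_+^2+O_-^2$, so that $v_++v_-=v$ exactly, or use the factor-$2$ gain above.
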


\begin{proof}
    First, assume $O$ is PSD. Recall that $\gps$ first measures $\ketbra{u}^{\otimes n}$ with Hayashi's measurement, obtaining a pure state $\ketbra{\bu}$, and then outputs
    \begin{equation*}
        \widehat{\bsigma} = \frac{n+D}{n} \cdot \ketbra{\bu} - \frac{1}{n} \cdot I_D.
    \end{equation*}
    By design, we have $\E[\widehat{\bsigma}] = \ketbra{u}$. Thus, with $\bX = \tr(O \cdot \widehat{\bsigma})$, we have $\E[\bX] = \tr(O \cdot \ketbra{u})$.

    We want to show that $\bX$ is sub-gamma, which amounts to showing an upper bound on the moment generating function (MGF) of $\bX$. To this end, we first observe that since $x \mapsto \exp(-tx)$ is convex for any fixed $t$, by Jensen's inequality,
    \begin{equation}\label{eq:symmetrized_moment_upper_bounds_regular}
        \E\big[e^{t(\bX - \E[\bX])} \big] = \E\big[e^{t\bX}\big] \cdot  e^{ -t\E[\bX]} \leq \E\big[e^{t\bX}\big] \cdot \E \big[e^{-t\bX'} \big] = \E\big[e^{t(\bX-\bX')}\big].
    \end{equation}
    Here, $\bX'$ is an independent copy of $\bX$; more generally, throughout this proof, we will denote independent copies of random variables with a prime. Thus, it suffices to bound the right-hand side, which is the symmetrized MGF of $\bX$ (i.e.\ the MGF of the symmetrized random variable $\bX - \bX'$). One benefit of considering the symmetrized MGF is that $\bX - \bX' = \bY - \bY'$, where $\bY$ is the shifted version of $\bX$ obtained by removing the additive constant, which will help to simplify our analysis. That is, 
    \begin{equation*}
        \bY \coloneq \frac{n+D}{n} \cdot \tr\big( O \cdot \ketbra{\bu} \big).
    \end{equation*}

    Toward upper-bounding the symmetrized MGF of $\bY$, consider the $k$-th moment of $\bY - \bY'$. Since~$\bY'$ is identically distributed as $\bY$, we have
    \begin{equation*}
        \E \big[ (\bY - \bY')^k \big] = \sum_{\ell=0}^k (-1)^{k-\ell} \cdot \binom{k}{\ell} \cdot \E\big[ \bY^\ell\big] \cdot \E\big[(\bY')^{k-\ell}\big] = \sum_{\ell=0}^k (-1)^{k-\ell} \cdot \binom{k}{\ell} \cdot \E\big[ \bY^\ell\big] \cdot \E\big[\bY^{k-\ell}\big].
    \end{equation*}
    The moments of $\bY$ can be computed as
    \begin{equation*} \label{eq:moments_of_bY}
        \E\big[\bY^\ell\big] = \Big( \frac{n+D}{n} \Big)^{\ell} \cdot \E\big[\tr\big( O \cdot \ketbra{\bu}\big)^\ell\big] = \Big( \frac{n+D}{n} \Big)^{\ell} \cdot \tr( O^{\otimes \ell} \cdot \E\big[\ketbra{\bu}^{\otimes \ell}\big]).
    \end{equation*}
    We now use Chiribella's theorem (\Cref{thm:Chiribella}), which states that
    \begin{equation*}
    \E \big[\ketbra{\bu}^{\otimes \ell} \big] = \frac{\ell!}{(n+D)^{\uparrow \ell}} \cdot \Pisym^{(\ell,D)}\cdot  \Big( \sum_{m=0}^\ell \binom{n}{m} \cdot \binom{\ell}{m} \cdot \ketbra{u}^{\otimes m} \otimes I^{\otimes (\ell-m)}\Big) \cdot \Pisym^{(\ell,D)},
    \end{equation*}
    to get
    \begin{equation} \label{eq:moments_of_bY_2}
        \E \big[ \bY^\ell \big]  = \frac{\ell!}{n^\ell} \cdot \frac{(n+D)^\ell}{(n+D)^{\uparrow \ell}} \cdot \sum_{m=0}^{\ell} \binom{n}{m} \cdot \binom{\ell}{m} \cdot \tr\Big( O^{\otimes \ell} \cdot \Pisym^{(\ell,D)} \cdot \big(\ketbra{u}^{\otimes m} \otimes I^{\otimes (\ell - m)}\big) \cdot \Pisym^{(\ell,D)} \Big).
    \end{equation}
    To simplify our expression, we will use the abbreviation
    \begin{equation} \label{eq:Tlm_def}
        T_{\ell m} \coloneq \tr\Big( O^{\otimes \ell} \cdot \Pisym^{(\ell,D)} \cdot \big(\ketbra{u}^{\otimes m} \otimes I^{\otimes (\ell - m)}\big) \cdot \Pisym^{(\ell,D)} \Big).
    \end{equation}
    The quantity $T_{\ell m}$ is of the form $\tr(A \cdot B)$ for PSD operators $A = \Pisym^{(\ell,D)} \cdot O^{\otimes \ell} \cdot \Pisym^{(\ell,D)}$ and $B = \ketbra{u}^{\otimes m} \otimes I^{\otimes (\ell - m)}$, so $T_{\ell m} \geq 0$. With this notation, \Cref{eq:moments_of_bY_2} becomes
    \begin{equation} \label{eq:moments_of_bY3}
        \E \big[ \bY^\ell \big]  = \frac{\ell!}{n^\ell} \cdot \frac{(n+D)^\ell}{(n+D)^{\uparrow \ell}} \cdot \sum_{m=0}^{\ell} \binom{n}{m} \cdot \binom{\ell}{m} \cdot T_{\ell m}.
    \end{equation}

    From the previous expression, we see that these moments would simplify if one could replace the more-complicated $(n+D)^{\uparrow \ell}$ with the simpler expression $(n+D)^{\ell}$. Although this is not literally possible, in \Cref{lem:moments_increase_when_dropping_uparrow} below we show that there exists a random variable $\bZ$ with these simpler moments:
    \begin{equation} \label{eq:moments_of_Z}
        \E \big[ \bZ^\ell\big] = \frac{(n+D)^{\uparrow \ell}}{(n+D)^\ell} \cdot \E\big[ \bY^\ell \big] = \frac{\ell!}{n^\ell} \cdot  \sum_{m=0}^{\ell} \binom{n}{m} \cdot \binom{\ell}{m} \cdot T_{\ell m}.
    \end{equation}
    The same lemma also shows that, for all nonnegative integers $k$,
    \begin{equation*}
        \E \big[ (\bY - \bY')^k \big] \leq \E \big[ (\bZ - \bZ')^k \big],
    \end{equation*}
    from which we conclude
    \begin{equation*}
        \E \big[e^{t(\bY - \bY')} \big] \leq \E \big[e^{t(\bZ - \bZ')} \big].
    \end{equation*}
    Therefore, it now suffices to bound the symmetrized moments of $\bZ$.

    An intuitive way to think about this step is to imagine first embedding $\ket{u} \in \C^D$ into a much larger space $\C^{D'}$ with $D' \gg D$, and then performing tomography in the larger space. The corresponding moments are given by \Cref{eq:moments_of_bY3} with $D$ replaced by $D'$. As $D' \to \infty$, the factor $(n+D')^\ell/(n+D')^{\uparrow \ell}$ tends to $1$, yielding the moments in \Cref{eq:moments_of_Z}. Moreover, $T_{\ell m}$, which in principle depends on $D'$ through, e.g.\ $\Pisym^{(\ell,D)}$, is unchanged by this embedding once $D' \geq D$, since $\ketbra{u}$ is supported on a rank-$D$ subspace. Heuristically, passing to a larger ambient dimension should only make tomography harder, so we expect these limiting moments to dominate those in the original $D$-dimensional problem. \Cref{lem:moments_increase_when_dropping_uparrow} makes this intuition rigorous by constructing a random variable $\bZ$ with exactly these larger moments.

    The moments of $\bZ$, given in \Cref{eq:moments_of_Z}, are sufficiently tractable to permit an explicit computation of its MGF. In \Cref{lem:mgf_Z} below, we compute the MGF of $\bZ$, finding, for $\Abs{t} < n$,
    \begin{equation*}
        \E \big[ e^{t \bZ} \big] = \det \Big( I - \frac{t}{n} \cdot O \Big)^{-1} \cdot \Big( \bra{u} \cdot \Big( I - \frac{t}{n} \cdot O \Big)^{-1} \cdot \ket{u} \Big)^n.
    \end{equation*}
    With the MGF of $\bZ$ in hand, we can analyze the symmetric MGF using
    \begin{equation*}
        \E \big[e^{t(\bZ - \bZ')} \big] = \E \big[e^{t\bZ} \big] \cdot \E \big[e^{-t\bZ'} \big] = \E\big[e^{t\bZ} \big] \cdot \E \big[e^{-t\bZ} \big].
    \end{equation*}
    In particular, in \Cref{lem:symmetric_mgf_Z_bound} below, we obtain the following upper bound on the symmetric MGF, valid in the same range $\Abs{t} < n$:
    \begin{equation*}
        \E \big[e^{t(\bZ - \bZ')} \big] \leq \exp \Big( \frac{t^2/n^2}{1-\Abs{t}/n} \cdot \Big( \tr(O^2) + 2n \cdot \tr(O^2\ketbra{u})\Big) \Big).
    \end{equation*}
    Therefore, for such $t$, we have the following chain:
    \begin{equation*}
        \E\big[e^{t(\bX - \E[\bX])} \big] \leq \E\big[e^{t(\bX - \bX')} \big] = \E\big[e^{t(\bY - \bY')} \big] \leq \E\big[e^{t(\bZ - \bZ')} \big] \leq \exp \Big( \frac{t^2/n^2}{1-\Abs{t}/n} \cdot \Big( \tr(O^2) + 2n \cdot \tr(O^2\ketbra{u})\Big) \Big).
    \end{equation*}
    We conclude that $\bX$ is $(v,c)$-sub-gamma, for
    \begin{equation*}
    v = \frac{2\cdot \tr(O^2) + 4n\cdot \tr(O^2 \cdot \ketbra{u})}{n^2},
    \qquad c = \frac{1}{n}.
    \end{equation*}
    This completes the proof for PSD observables.

    To extend to the non-PSD case, decompose $O = O_+ - O_-$, where $O_\pm$ are individually PSD and supported on orthogonal subspaces, and define random variables $\bX_\pm = \tr( O_\pm \cdot \widehat{\bsigma})$. By our previous work, $\bX_\pm$ is $(v_\pm, c)$-sub-gamma, with
    \begin{equation*}
        v_\pm = \frac{2 \cdot \tr\big(O_\pm^2\big) + 4n \cdot \tr\big(O_\pm^2 \cdot \ketbra{u} \big)}{n^2}, \qquad c = \frac{1}{n}.
    \end{equation*}
    Then, by \Cref{item:subgamma_scale} and \Cref{item:subgamma_sum} of \Cref{lem:scalings_sums_subgammas}, $\bX = \bX_+ - \bX_-$ is $\big(2(v_+ +v_-),2c\big)$-sub-gamma, and
    \begin{align*}
        v_+ + v_- & = \frac{2 \cdot \Big(\tr\big(O_+^2\big)+\tr\big(O_-^2\big)\Big) + 4n \cdot \Big(\tr\big(O_+^2 \cdot \ketbra{u} \big) +\tr\big(O_-^2 \cdot \ketbra{u} \big)\Big)}{n^2} \\
        & = \frac{2\cdot \tr(O^2) + 4n\cdot \tr(O^2 \cdot \ketbra{u})}{n^2}.
    \end{align*}
    This completes the proof for non-PSD observables.
\end{proof}

\subsubsection{Proofs of deferred lemmas}

The remainder of this section is devoted to filling in deferred proofs of results used to show \Cref{prop:pure_states_observable_subgamma}. We begin with the following lemma.

\begin{lemma}[$\bZ$ exists, and its symmetric moments dominate $\bY$'s]\label{lem:moments_increase_when_dropping_uparrow}
Let $\bY$ be any real-valued random variable. There exists a real-valued random variable $\bZ$ such that the following two conditions hold. First, for every integer $k \geq 0$ for which $\E[\bY^k]$ exists,
\begin{equation*}
\E\big[\bZ^k\big]
=
\frac{(n+D)^{\uparrow k}}{(n+D)^k}\cdot \E\big[\bY^k\big].
\end{equation*}
Second, for every integer $k \geq 0$ for which $\E[(\bY-\bY')^k]$ exists,
\begin{equation*}
\E\big[(\bY-\bY')^k\big]
\leq
\E\big[(\bZ-\bZ')^k\big].
\end{equation*}
\end{lemma}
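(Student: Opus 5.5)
We will realize the moment ratio $(n+D)^{\uparrow k}/(n+D)^k$ as the moment sequence of an independent multiplicative factor. Set $N \coloneq n+D$ and let $\bG \sim \mathrm{Gamma}(N,1)$, a Gamma variable with shape $N$ and unit scale, independent of $\bY$. Its moments are $\E[\bG^k] = N^{\uparrow k}$. We then define
\begin{equation*}
    \bZ \coloneq \frac{\bG}{N}\cdot \bY .
\end{equation*}
By independence, $\E[\bZ^k] = \E[(\bG/N)^k]\cdot \E[\bY^k] = \frac{N^{\uparrow k}}{N^k}\,\E[\bY^k]$ whenever $\E[\bY^k]$ exists. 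This gives the first condition. If one prefers to avoid non-integer shapes, note that $N$ is a positive integer here, so $\bG$ is simply a sum of $N$ i.i.d.\ standard exponentials.

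For the second condition, let $(\bY',\bG')$ be an independent copy of $(\bY,\bG)$. Write $\bA = \bG/N$ and $\bA' = \bG'/N$, so that $\bZ - \bZ' = \bA\bY - \bA'\bY'$. Expanding binomially and using independence gives
\begin{equation*}
    \E\big[(\bZ-\bZ')^k\big] = \sum_{\ell=0}^k (-1)^{k-\ell}\binom{k}{\ell}\, a_\ell a_{k-\ell}\, \E[\bY^\ell]\,\E[\bY^{k-\ell}], \qquad a_j \coloneq \E[\bA^j] = N^{\uparrow j}/N^j .
\end{equation*}
The corresponding expansion for $\bY-\bY'$ has all $a$'s replaced by $1$. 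For odd $k$, both sides vanish: each of $\bY-\bY'$ and $\bZ-\bZ'$ is a symmetric random variable. So only even $k$ needs an argument.

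The cleanest route for even $k$ is a conditional Jensen argument. Condition on $(\bY,\bY')$ and consider $f(\bA,\bA') = (\bA\bY - \bA'\bY')^k$. For even $k$, $f$ is convex in $(\bA,\bA')$, since it is an even power of a linear form. Because $\E[\bA]=\E[\bA']=1$, Jensen's inequality gives
\begin{equation*}
    \E\big[(\bA\bY-\bA'\bY')^k \,\big|\, \bY,\bY'\big] \geq (\bY-\bY')^k .
\end{equation*}
Taking expectations yields $\E[(\bZ-\bZ')^k] \geq \E[(\bY-\bY')^k]$. This makes the moment-by-moment comparison of the signed expansions unnecessary.

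The main obstacle is integrability bookkeeping. The hypothesis is only that $\E[(\bY-\bY')^k]$ exists, and we need the tower-property step to be legitimate. For even $k$ the integrand is nonnegative, so Tonelli applies without any assumption: the inequality holds even if the left side is $+\infty$. For odd $k$, we argue by symmetry, using that $\E[(\bZ-\bZ')^k]$ exists because $\bA$ has all moments finite. In our application $\bY$ is bounded, so all of this is immediate.
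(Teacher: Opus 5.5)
Your proposal is correct and follows the paper's proof. The paper likewise sets $\bZ = \bG\bY$ with $\bG \sim \mathrm{Gamma}(n+D,\,1/(n+D))$, which is your $\bG/N$ with unit scale, independent of $\bY$. It disposes of odd $k$ by symmetry, and for even $k$ applies Jensen to the jointly convex map $(x,x') \mapsto (xy - x'y')^k$ conditionally on $(\bY,\bY')$, using $\E[\bG]=1$; your binomial expansion and Tonelli remark are harmless extras.
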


\begin{proof}
    Let $\bG \sim \mathrm{Gamma}(\alpha, \theta)$, where $\mathrm{Gamma}(\alpha, \theta)$ is the Gamma distribution with \emph{shape parameter} $\alpha = n+D$, \emph{scale parameter} $\theta = 1/(n+D)$, and PDF
    \begin{equation*}
        f(x; \alpha, \theta) = \frac{1}{\theta^\alpha \Gamma(\alpha)}  x^{\alpha - 1} e^{-x/\theta},
    \end{equation*}
    for $x \geq 0$.
    The $k$-th moment of the Gamma distribution is $\theta^k \cdot \alpha^{\uparrow k}$:
    \begin{equation*}
        \int_{0}^{\infty} f(x;\alpha,\theta) \cdot x^k \cdot dx = \frac{\theta^{\alpha+k} \Gamma(\alpha+k)}{\theta^\alpha \Gamma(\alpha)} \int_0^\infty f(x;\alpha+k,\theta) \cdot dx = \frac{\theta^{\alpha+k} \Gamma(\alpha+k)}{\theta^\alpha \Gamma(\alpha)} = \theta^k  \cdot \alpha^{\uparrow k}.
    \end{equation*}
    Thus,
    \begin{equation} \label{eq:moments_of_G}
        \E \big[ \bG^k \big] = \frac{(n+D)^{\uparrow k}}{(n+D)^k}.
    \end{equation}
    We now define $\bZ \coloneq \bG \cdot \bY$. We have
    \begin{equation*}
        \E \big[ \bZ^k \big] = \E \big[ (\bG \cdot \bY)^k \big] = \E \big[ \bG^k  \big] \cdot \E \big[ \bY^k \big] = \frac{(n+D)^{\uparrow k}}{(n+D)^k}\cdot \E\big[\bY^k\big].
    \end{equation*}
    Now we turn to the symmetrized moments. If $k$ is odd, then
    \begin{equation*}
        \E\big[(\bY-\bY')^k\big] = \E\big[(\bZ-\bZ')^k\big] = 0.
    \end{equation*}
    So, suppose $k$ is even. For fixed $(y,y')$, consider $g_{y,y'}(x,x')\coloneq (xy-x'y')^k$. Since $t\mapsto t^k$ is convex for even $k$, and since $(x,x')\mapsto xy-x'y'$ is linear, the function $g_{y,y'}$ is jointly convex
in $(x,x')$. Hence, by Jensen's inequality,
    \begin{equation*}
        \E \big[ (\bG y - \bG' y')^k \big] = \E\big[ g_{y,y'}(\bG,\bG')\big] \geq g_{y,y'}\big( \E[\bG], \E[\bG']\big) = g_{y,y'}(1,1) = (y-y')^k.
    \end{equation*}
    Here, we have also used the fact that $\E[\bG] = 1$, from \Cref{eq:moments_of_G}. Averaging the above equation over $\bY$ and $\bY'$, we get:
    \begin{equation*}
        \E_{\bZ, \bZ'} \big[ (\bZ - \bZ')^k \big]  = \E_{\bY,\bY'} \Big[ \E_{\bG,\bG'}\big[ (\bG \bY - \bG' \bY')^k \big] \Big] \geq \E_{\bY,\bY'}\big[ (\bY - \bY')^k \big]. \qedhere
    \end{equation*}
\end{proof}

The relative simplicity of the moments of $\bZ$, compared to those of $\bY$, allows us to compute the explicit MGF of $\bZ$.

\begin{lemma}[The MGF of $\bZ$] \label{lem:mgf_Z}
    Suppose $\Abs{t} < n$, and $O$ is positive semidefinite with $\norm{O}_\infty \leq 1$. Then
    \begin{equation*}
        \E \big[ e^{t \bZ} \big] = \det \Big( I - \frac{t}{n} \cdot O \Big)^{-1} \cdot \Big( \bra{u} \cdot \Big( I - \frac{t}{n} \cdot O \Big)^{-1} \cdot \ket{u} \Big)^n.
    \end{equation*}
\end{lemma}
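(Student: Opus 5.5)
We compute the MGF of $\bZ$ by summing its moment series, using the explicit moments from \Cref{eq:moments_of_Z}:
\begin{equation*}
\E\big[\bZ^\ell\big] = \frac{\ell!}{n^\ell}\sum_{m=0}^{\ell}\binom{n}{m}\binom{\ell}{m} T_{\ell m}.
\end{equation*}
Formally, $\E[e^{t\bZ}] = \sum_{\ell\ge 0} \frac{t^\ell}{\ell!}\E[\bZ^\ell] = \sum_{\ell} (t/n)^\ell \sum_m \binom{n}{m}\binom{\ell}{m} T_{\ell m}$. The plan has three steps: give a combinatorial evaluation of $T_{\ell m}$, assemble the resulting double series into a generating function, and justify convergence and exchange of summation for $\Abs{t}<n$.

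\textbf{Evaluating $T_{\ell m}$.} Since $\Pisym$ commutes with $O^{\otimes \ell}$ and is idempotent,
\begin{equation*}
T_{\ell m} = \tr\Big(O^{\otimes\ell}\,\Pisym^{(\ell,D)}\,\big(\ketbra{u}^{\otimes m}\otimes I^{\otimes(\ell-m)}\big)\Big).
\end{equation*}
Expand $\Pisym^{(\ell,D)} = \frac{1}{\ell!}\sum_{\pi\in S_\ell}P(\pi)$. For each $\pi$, the trace $\tr\big(P(\pi)\,(A_1\otimes\cdots\otimes A_\ell)\big)$ factors over the cycles of $\pi$ as a product of traces of cyclic products of the $A_i$, where $A_i = O\ketbra{u}$ for $i\le m$ and $A_i = O$ otherwise. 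A cycle containing $j\ge1$ of the ``$u$'' positions contributes $\prod \bra{u}O^{a_1}\ket{u}\cdots$, i.e.\ a product of $j$ factors $\bra{u}O^{a}\ket{u}$ with $a\ge1$, where the exponents record runs of consecutive $O$'s. A cycle with no $u$ positions contributes $\tr(O^{a})$.

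\textbf{Generating function.} This is the setting of the exponential formula. Weighting by $(t/n)^\ell$ and summing over $\ell$ and $\pi$, with $1/\ell!$ absorbing labelings, the $u$-free cycles exponentiate to
\begin{equation*}
\exp\Big(\sum_{a\ge1}\frac{(t/n)^a}{a}\tr(O^a)\Big) = \det\Big(I-\tfrac{t}{n}O\Big)^{-1}.
\end{equation*}
The cycles carrying the $m$ marked $u$-positions arrange the marked points into a permutation, each connected by a string of $O$'s. The total weight is a sum over permutations $\sigma\in S_m$ of products of $\bra{u}(I - \tfrac{t}{n}O)^{-1}\ket{u}$-type factors; more precisely, each marked point is followed by $O^{a}$ with $a\ge1$ (the marked $O$ itself) and weight $(t/n)^a$, giving $g \coloneq \bra{u}\tfrac{t}{n}O(I-\tfrac{t}{n}O)^{-1}\ket{u} = \bra{u}(I-\tfrac tnO)^{-1}\ket{u}-1$. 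Since $\ket{u}$ is rank one, every cycle structure on the marked points yields $g^m$, and counting $\binom{\ell}{m}$ placements against $m!$ orderings gives $\binom{n}{m} g^m$ after the $\ell$-sum. Summing $\sum_m \binom{n}{m} g^m = (1+g)^n$ produces exactly $\big(\bra{u}(I-\tfrac tnO)^{-1}\ket{u}\big)^n$.

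\textbf{Main obstacle.} I expect the main difficulty to be the careful bookkeeping in this combinatorial step: matching the factor $\binom{\ell}{m}$ and the $1/\ell!$ from $\Pisym$ with the cyclic structure so that the marked contributions collapse to $\binom{n}{m}g^m$. I would first verify it on $m=1$, $\ell=2$. A cleaner alternative, which I would try if bookkeeping becomes messy, is a Gaussian/bosonic representation: $\Pisym$-traces of $O^{\otimes\ell}$ are moments of $\bra{\bg}O\ket{\bg}$ for a complex Gaussian $\bg$, and MGFs of Gaussian quadratic forms give the determinant, while inserting $\ketbra{u}^{\otimes m}$ corresponds to differentiating against a rank-one perturbation. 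Finally, convergence for $\Abs{t}<n$ follows because $\norm{\tfrac tnO}_\infty<1$, and all terms are nonnegative for $t\ge0$ (since $T_{\ell m}\ge0$), so Tonelli justifies the rearrangements; for negative $t$ we use absolute convergence dominated by the $\Abs{t}$ series.
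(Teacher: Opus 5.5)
Your argument is correct, but it reaches the key identity by a different route than the paper. Both proofs reduce to showing, with $z = t/n$ and $g = \bra{u} zO(I-zO)^{-1}\ket{u}$, that $\sum_{\ell\ge m}\binom{\ell}{m}T_{\ell m}z^\ell = \det(I-zO)^{-1}g^m$. After that, $\sum_m\binom{n}{m}g^m = (1+g)^n$ finishes the proof. The $\binom{n}{m}$ is simply carried along from \Cref{eq:moments_of_Z} and is not produced by any counting. Your sentence about ``$\binom{\ell}{m}$ placements against $m!$ orderings giving $\binom{n}{m}g^m$'' should therefore be replaced by the identity above.

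The paper never expands $\Pisym^{(\ell,D)}$ into permutations. Instead it:
\begin{itemize}
\item attaches a marking variable $\epsilon$;
\item uses \Cref{lem:aux_trace_lemma} to write $\sum_m\epsilon^m\binom{\ell}{m}T_{\ell m} = \tr\big(\Pisym^{(\ell,D)}A_\epsilon^{\otimes\ell}\big)$, with the Hermitian matrix $A_\epsilon = \sqrt{O}(I+\epsilon\ketbra{u})\sqrt{O}$;
\item applies the single identity $\sum_\ell z^\ell\tr(\Pisym^{(\ell,D)}A^{\otimes\ell}) = \det(I-zA)^{-1}$ from \Cref{lem:aux_sym_determinant_lemma}, proved with type vectors;
\item separates $\epsilon$ using the rank-one determinant formula $\det(I-zA_\epsilon) = \det(I-zO)(1-\epsilon g)$.
\end{itemize}
In your version, the exponential formula on $u$-free cycles is that same determinant identity, proved by cycle counting instead. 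Your treatment of the marked cycles plays the role of the matrix determinant lemma.

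To make your bookkeeping rigorous, proceed as follows.
\begin{itemize}
\item By conjugation invariance, $\binom{\ell}{m}T_{\ell m} = \frac{1}{\ell!}\sum_{|S|=m}\sum_{\pi\in S_\ell}\tr\big(P(\pi)A^{(S)}\big)$, where $A^{(S)}$ has $O\ketbra{u}$ in the positions of $S$ and $O$ elsewhere.
\item A pair $(\pi,S)$ corresponds bijectively to a partition of $[\ell]$ into \emph{beads} and leftover points, together with a permutation of the beads and a permutation of the leftover points. A bead is a marked point plus the ordered run of unmarked points separating it from the neighbouring marked point on its cycle; the leftover points are those on cycles avoiding $S$.
\item A bead of size $a$ has weight $\bra{u}O^a\ket{u}$, so beads have exponential generating function $\epsilon g(z)$.
\item Permutations of beads therefore contribute $(1-\epsilon g)^{-1}$, and the leftover permutation contributes $\det(I-zO)^{-1}$. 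The $\epsilon^m$ coefficient is $\det(I-zO)^{-1}g^m$, as required.
\end{itemize}

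What your route buys is a transparent reason for the answer: every marked point becomes an independent bead of weight $g$. It also lets you work with formal power series in $z$ (with nonnegative coefficients for $t\ge 0$), avoiding the paper's small-$\epsilon$ analytic step. What the paper's route buys is that no bookkeeping is needed at all.

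You can get both advantages at once. Summing over $S$ with weight $\epsilon^{|S|}$ turns every tensor factor into $M_\epsilon = O(I+\epsilon\ketbra{u})$, so your whole generating function is $\sum_\ell z^\ell\tr\big(\Pisym^{(\ell,D)}M_\epsilon^{\otimes\ell}\big) = \det(I-zM_\epsilon)^{-1}$. Since $M_\epsilon$ has the same nonzero spectrum as $A_\epsilon$, this is exactly the paper's proof.
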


\begin{proof}
    First, we have
    \begin{align}
        \E \big[ e^{t\bZ}\big] & = \sum_{\ell=0}^{\infty} \E \big[ \bZ^\ell \big] \cdot \frac{t^\ell}{\ell!} \nonumber \\
        & = \sum_{\ell=0}^{\infty} \Big(  \frac{\ell!}{n^\ell} \cdot \sum_{m=0}^\ell \binom{n}{m} \cdot \binom{\ell}{m} \cdot T_{\ell m} \Big) \cdot \frac{t^\ell}{\ell!} \nonumber \tag{\Cref{eq:moments_of_Z}} \\
        & = \sum_{m=0}^{\infty} \binom{n}{m} \cdot \sum_{\ell=m}^{\infty} \binom{\ell}{m} \cdot T_{\ell m} \cdot \Big(\frac{t}{n}\Big)^\ell. \label{eq:mgf_Z_1}
    \end{align}
    Here we have used the fact that the $T_{\ell m}$ (defined in \Cref{eq:Tlm_def}) are nonnegative, if $O$ is PSD, to freely rearrange the terms in the infinite sum. This step further requires $t \geq 0$. However, by evaluating the series for nonnegative $t$, we show the series converges absolutely, and then this rearrangement step holds for negative $t$ as well.

    Our next step will be to evaluate the inner sum in \Cref{eq:mgf_Z_1}. To do so, it will be convenient to consider the generating function
    \begin{equation*}
        F(t;\epsilon) \coloneq \sum_{m=0}^\infty \epsilon^m \cdot  \sum_{\ell=m}^{\infty} \binom{\ell}{m} \cdot T_{\ell m} \cdot \Big(\frac{t}{n}\Big)^\ell .
    \end{equation*}
    The desired inner sum is the coefficient of $\epsilon^m$. Rearranging, we obtain
    \begin{equation}
        F(t;\epsilon) = \sum_{\ell =0}^\infty  \Big( \frac{t}{n} \Big)^\ell \cdot \sum_{m = 0}^\ell \epsilon^m \cdot \binom{\ell}{m} \cdot  T_{\ell m}. \label{eq:rearranged_F}
    \end{equation}
    The inner sum of \Cref{eq:rearranged_F} can be evaluated as:
    \begin{equation} \label{eq:inner_inner_sum}
        \sum_{m = 0}^\ell \binom{\ell}{m} \cdot \epsilon^m \cdot T_{\ell m} = \tr\Big(O^{\otimes \ell} \cdot \Pisym^{(\ell, D)}  \cdot \big(I +\epsilon \cdot \ketbra{u}\big)^{\otimes \ell} \cdot \Pisym^{(\ell, D)} \Big).
    \end{equation}
    To show this, we apply the following identity, which we prove below as \Cref{lem:aux_trace_lemma}: if $A$ is a permutationally invariant operator acting on $(\C^D)^{\otimes \ell}$, i.e.\ $A$ commutes with $P(\pi)$ for all permutations $\pi \in S_{\ell}$, and $B$ is any operator on $\C^D$, then for any $\epsilon$,
    \begin{equation*}
        \tr\Big(A \cdot \big( I + \epsilon \cdot B \big)^{\otimes \ell} \Big) = \sum_{m=0}^{\ell} \epsilon^m \cdot \binom{\ell}{m} \cdot \tr\Big( A \cdot \big( B^{\otimes m} \otimes I^{\otimes (\ell - m)} \big)\Big).
    \end{equation*}
    Specifically, we choose $A = \Pisym^{(\ell, D)} \cdot O^{\otimes \ell} \cdot \Pisym^{(\ell,D)}$, and $B = \ketbra{u}$. Here we are using the fact that $\Pisym^{(\ell, D)}$ and $O^{\otimes \ell}$ individually commute with all operators $P(\pi)$ for $\pi \in S_\ell$. This claimed identity then tells us:
    \begin{equation*}
        \tr\Big(\Pisym^{(\ell, D)} \cdot O^{\otimes \ell} \cdot \Pisym^{(\ell, D)}  \cdot \big(I +\epsilon \cdot \ketbra{u}\big)^{\otimes \ell}  \Big) = \sum_{m=0}^\ell \epsilon^m \cdot \binom{\ell}{m} \cdot \tr\Big( \Pisym^{(\ell, D)} \cdot O^{\otimes \ell} \cdot \Pisym^{(\ell, D)} \cdot \big( \ketbra{u}^{\otimes m} \otimes I^{\otimes (\ell - m)} \big) \Big).
    \end{equation*}
    The trace is equal to $T_{\ell m}$, by \Cref{eq:Tlm_def}, and we thus have \Cref{eq:inner_inner_sum}.

    Plugging \Cref{eq:inner_inner_sum} into \Cref{eq:rearranged_F} gives
    \begin{equation*}
        F(t;\epsilon) = \sum_{\ell=0}^\infty \Big( \frac{t}{n} \Big)^\ell \cdot \tr\Big(O^{\otimes \ell} \cdot \Pisym^{(\ell, D)}  \cdot \big(I +\epsilon \cdot \ketbra{u}\big)^{\otimes \ell} \cdot \Pisym^{(\ell, D)} \Big).
    \end{equation*}
    This seems only to have made our sum more complicated. However, after using the fact that $\sqrt{O}^{\otimes \ell}$ commutes with $\Pisym^{(\ell,D)}$ to shuffle the terms of the trace as
    \begin{equation*}
        \tr\Big(O^{\otimes \ell} \cdot \Pisym^{(\ell, D)}  \cdot \big(I +\epsilon \cdot \ketbra{u}\big)^{\otimes \ell} \cdot \Pisym^{(\ell, D)} \Big) = \tr\Big(\Pisym^{(\ell,D)} \cdot  \sqrt{O}^{\otimes \ell} \cdot \big(I +\epsilon \cdot \ketbra{u}\big)^{\otimes \ell} \cdot \sqrt{O}^{\otimes \ell} \Big),
    \end{equation*}
    We apply the following result, whose proof we defer to \Cref{lem:aux_sym_determinant_lemma}: if $A$ is a Hermitian operator on $\C^D$, then for any $z$ such that $\Abs{z} < 1/\norm{A}_\infty$,
        \begin{equation*}
            \sum_{\ell = 0}^\infty z^\ell \cdot \tr \big( \Pisym^{(\ell, D)} \cdot A^{\otimes \ell} \big) = \det( I - z A )^{-1}.
        \end{equation*}
    We apply this result with $A = \sqrt{O} \cdot \big(I +\epsilon \cdot \ketbra{u}\big) \cdot \sqrt{O}$, and $z = t/n$. Since $\norm{A}_\infty \leq 1 + \epsilon$, the lemma's hypotheses are met for all $\epsilon$ sufficiently small such that $\Abs{t}/n < 1/(1+\epsilon)$. For sufficiently small $\epsilon$, we thus have
    \begin{equation*}
        F(t;\epsilon) = \det\Big( I - \frac{t}{n} \cdot \sqrt{O} \cdot \big( I + \epsilon \ketbra{u} \big) \cdot \sqrt{O} \Big)^{-1}.
    \end{equation*}
    Recall that we want to extract from $F(t;\epsilon)$ the coefficient of $\epsilon^m$. To this end, we manipulate $F$ as follows:
    \begin{align*}
        F(t;\epsilon) & = \det\Big( \big( I - \frac{t}{n} \cdot O \big) - \frac{\epsilon t}{n} \cdot \sqrt{O} \ketbra{u} \sqrt{O}  \Big)^{-1} \\
        & = \det \Big( I - \frac{t}{n} \cdot O \Big)^{-1} \cdot \det \Big( I - \frac{\epsilon t}{n} \cdot \big( I - \frac{t}{n} \cdot O\big)^{-1/2} \cdot \sqrt{O}  \cdot \ketbra{u} \cdot \sqrt{O} \cdot \big( I - \frac{t}{n} \cdot O\big)^{-1/2} \Big)^{-1}.
    \end{align*}
    Here we have used the fact that $\Abs{t/n} < 1$ so that $I - \frac{t}{n} \cdot O$ is invertible. If we define
    \begin{equation} \label{eq:def_ketw}
        \ket{w} \coloneq \Big( I - \frac{t}{n} \cdot O\Big)^{-1/2} \cdot \sqrt{O}  \cdot \ket{u},
    \end{equation}
    then we have
    \begin{equation*}
        F(t;\epsilon) = \det \Big( I - \frac{t}{n} \cdot O \Big)^{-1} \cdot \det \Big( I - \frac{\epsilon t}{n} \cdot \ketbra{w} \Big)^{-1} = \det \Big( I - \frac{t}{n} \cdot O \Big)^{-1} \cdot \Big( 1 - \frac{\epsilon t}{n} \cdot \norm{w}^2\Big)^{-1}.
    \end{equation*}
    The second step holds since $\ketbra{w}$ is a rank-$1$ matrix, and hence $I - \frac{\epsilon t}{n} \cdot \ketbra{w}$ has only one non-unit eigenvalue, with corresponding eigenvector $\ket{w}$. We have now isolated the $\epsilon$ dependence into a term which can be expanded as a geometric series. In particular, the coefficient of $\epsilon^m$ in $F(t;\epsilon)$ is
    \begin{equation*}
        F(t;\epsilon) [\epsilon^m] = \det \Big( I - \frac{t}{n} \cdot O \Big)^{-1} \cdot \Big( \frac{t}{n} \Big)^m \cdot \norm{w}^{2m}.
    \end{equation*}
    Substituting this back into \Cref{eq:mgf_Z_1}, we get
    \begin{align}
        \E \big[ e^{t\bZ}\big] & = \sum_{m=0}^{\infty} \binom{n}{m} \cdot F(t;\epsilon) [\epsilon^m] \nonumber \\
        & = \det \Big( I - \frac{t}{n} \cdot O \Big)^{-1} \cdot \sum_{m=0}^{\infty} \binom{n}{m} \cdot \Big(\frac{t}{n}\Big)^{m} \cdot \norm{w}^{2m} \nonumber \\
        & = \det \Big( I - \frac{t}{n} \cdot O \Big)^{-1} \cdot \Big( 1  + \frac{t}{n} \cdot \norm{w}^2 \Big)^n. \label{eq:mgf_Z_2}
    \end{align}
    Finally, we use \Cref{eq:def_ketw} to evaluate the second term in parentheses:
    \begin{align*}
        1  + \frac{t}{n} \cdot \norm{w}^2 & = 1 + \frac{t}{n} \cdot \bra{u} \cdot \sqrt{O} \cdot \Big( 1 - \frac{t}{n} \cdot O \Big)^{-1} \cdot \sqrt{O} \cdot \ket{u} \\
        & = \bra{u} \cdot \Big( 1 - \frac{t}{n} \cdot O \Big)  \cdot \Big( 1 - \frac{t}{n} \cdot O \Big)^{-1}\cdot \ket{u} + \bra{u} \cdot \Big( \frac{t}{n} \cdot O \Big) \cdot \Big( 1 - \frac{t}{n} \cdot O \Big)^{-1} \cdot \ket{u} \\
        & = \bra{u} \cdot \Big( 1 - \frac{t}{n} \cdot O \Big)^{-1} \cdot \ket{u}.
    \end{align*}
    Plugging this into \Cref{eq:mgf_Z_2} completes the proof.
    \end{proof}

We deferred the proofs of two results in the previous argument. We fill these in now.

\begin{lemma}[Computation of a trace] \label{lem:aux_trace_lemma}
    Suppose $A$ is a permutationally invariant operator acting on $(\C^D)^{\otimes \ell}$, i.e.\ $A$ commutes with $P(\pi)$ for all permutations $\pi \in S_{\ell}$. Let $B$ be any operator on $\C^D$. Then for any $\epsilon$,
    \begin{equation*}
        \tr\Big(A \cdot \big( I + \epsilon \cdot B \big)^{\otimes \ell} \Big) = \sum_{m=0}^{\ell} \epsilon^m \cdot \binom{\ell}{m} \cdot \tr\Big( A \cdot \big( B^{\otimes m} \otimes I^{\otimes (\ell - m)} \big)\Big).
    \end{equation*}
\end{lemma}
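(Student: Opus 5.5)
The plan is to expand the tensor power $(I + \epsilon B)^{\otimes \ell}$ multilinearly and then use the permutation invariance of $A$ to collapse the terms according to how many factors of $B$ they contain. For each subset $S \subseteq [\ell]$, write $B_S$ for the operator on $(\C^D)^{\otimes \ell}$ that acts as $B$ on the tensor factors indexed by $S$ and as $I$ on the remaining factors. Distributivity of the tensor product gives the purely algebraic identity
\begin{equation*}
    (I + \epsilon B)^{\otimes \ell} = \sum_{S \subseteq [\ell]} \epsilon^{\Abs{S}} \cdot B_S .
\end{equation*}
Multiplying by $A$ and taking the trace, linearity then gives $\tr(A (I+\epsilon B)^{\otimes \ell}) = \sum_{S} \epsilon^{\Abs{S}} \tr(A B_S)$.

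The key step is to show that $\tr(A B_S)$ depends only on $m = \Abs{S}$. Fix $S$ with $\Abs{S} = m$, and choose a permutation $\pi \in S_\ell$ that maps $\{1,\dots,m\}$ onto $S$. Conjugation by $P(\pi)$ permutes tensor factors, so $P(\pi)\, (B^{\otimes m} \otimes I^{\otimes(\ell-m)})\, P(\pi)^{-1} = B_S$. Here $P(\pi)^{-1} = P(\pi^{-1})$, and the only thing that needs care is the convention $i \mapsto \pi^{-1}(i)$; if the convention comes out the other way, replace $\pi$ by $\pi^{-1}$.

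Since $A$ commutes with $P(\pi)$, cyclicity of the trace gives
\begin{equation*}
    \tr(A B_S) = \tr\big(A P(\pi) (B^{\otimes m}\otimes I^{\otimes(\ell-m)}) P(\pi)^{-1}\big) = \tr\big(P(\pi)^{-1} A P(\pi) (B^{\otimes m}\otimes I^{\otimes(\ell-m)})\big) = \tr\big(A (B^{\otimes m}\otimes I^{\otimes(\ell-m)})\big).
\end{equation*}
There are exactly $\binom{\ell}{m}$ subsets of size $m$, so grouping the sum by $m$ yields the claimed formula.

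No step here is a genuine obstacle. The only points needing attention are the bookkeeping of the permutation convention, which is harmless since any permutation carrying $[m]$ onto $S$ works, and the observation that $B$ need not be Hermitian or commute with anything, because only $A$'s invariance is used. The identity is a polynomial identity in $\epsilon$, so it holds for all complex $\epsilon$.
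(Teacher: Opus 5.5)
Your proposal is correct and follows essentially the same route as the paper: the paper also expands $(I+\epsilon B)^{\otimes \ell}$ over $x \in \{0,1\}^\ell$ (equivalently, over subsets $S$) and conjugates each term by a permutation $P(\pi_x)$ that moves the $B$ factors to the front. It then uses the commutation of $A$ with $P(\pi_x)$ and cyclicity of the trace, and groups the terms by Hamming weight to obtain the $\binom{\ell}{m}$ factor.
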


\begin{proof}
    First, note that
    \begin{equation*}
        \big( I + \epsilon \cdot B \big)^{\otimes \ell} = \sum_{x \in \{0,1\}^\ell} ( \epsilon \cdot B )^{x_1} \otimes \dots \otimes (\epsilon \cdot B )^{x_\ell},
    \end{equation*}
    where we interpret $(\epsilon \cdot B)^{0}$ as $I$. Now, for each summand, there exists a permutation $\pi_x$ that moves all nontrivial factors to the front, i.e.\ such that
    \begin{equation*}
        P(\pi_x) \cdot \Big( ( \epsilon \cdot B )^{x_1} \otimes \dots \otimes (\epsilon \cdot B )^{x_\ell}\Big) \cdot P(\pi_x)^\dagger = \epsilon^{\Abs{x}} \cdot B^{\otimes \Abs{x}} \otimes I^{\otimes(\ell - \Abs{x})},
    \end{equation*}
    where $\Abs{x}$ is the Hamming weight of $x$. Next, since $A$ commutes with all permutations, we have
    \begin{align*}
        \tr \Big(  A \cdot \big(( \epsilon \cdot B )^{x_1} \otimes \dots \otimes (\epsilon \cdot B )^{x_\ell} \big)\Big)
        & = \tr \Big( P(\pi_x)^\dagger \cdot A \cdot P(\pi_x)\cdot \big(( \epsilon \cdot B )^{x_1} \otimes \dots \otimes (\epsilon \cdot B )^{x_\ell}\big) \Big) \\
        & = \tr \Big(  A \cdot P(\pi_x) \cdot \big(( \epsilon \cdot B )^{x_1} \otimes \dots \otimes (\epsilon \cdot B )^{x_\ell}\big)  \cdot P(\pi_x)^\dagger \Big) \\
        & = \epsilon^{\Abs{x}} \cdot \tr \Big(  A \cdot \big(B^{\otimes \Abs{x}} \otimes I^{\otimes(\ell - \Abs{x})}\big) \Big).
    \end{align*}
    Finally,
    \begin{align*}
        \tr\Big(A \cdot \big(I +\epsilon \cdot B\big)^{\otimes \ell}\Big) & = \sum_{x \in \{0,1\}^\ell} \tr \Big(  A \cdot \big(( \epsilon \cdot B )^{x_1} \otimes \dots \otimes (\epsilon \cdot B )^{x_\ell} \big)\Big) \\
        & = \sum_{x \in \{0,1\}^\ell}\epsilon^{\Abs{x}} \cdot \tr \Big(  A \cdot \big(B^{\otimes \Abs{x}} \otimes I^{\otimes(\ell - \Abs{x})}\big) \Big) \\
        & = \sum_{m=0}^\ell \epsilon^{m} \cdot \binom{\ell}{m} \cdot \tr\Big(  A \cdot \big(B^{\otimes m} \otimes I^{\otimes(\ell - m)}\big) \Big).
    \end{align*}
    In the last step, we have grouped together all terms of the same Hamming weight.
\end{proof}

    \begin{lemma}[Computation of a power series]\label{lem:aux_sym_determinant_lemma}
        Suppose $A$ is a Hermitian operator on $\C^D$. Then for any $z$ such that $\Abs{z} < 1/\norm{A}_\infty$,
        \begin{equation*}
            \sum_{\ell = 0}^\infty z^\ell \cdot \tr \big( \Pisym^{(\ell, D)} \cdot A^{\otimes \ell} \big) = \det( I - z A )^{-1}.
        \end{equation*}
    \end{lemma}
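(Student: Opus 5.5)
The plan is to diagonalize $A$ and reduce to a classical generating-function identity for complete homogeneous symmetric polynomials. Since $A$ is Hermitian, write $A = U \,\mathrm{diag}(a_1,\dots,a_D)\, U^\dagger$ with real eigenvalues $a_i$, $\Abs{a_i} \leq \norm{A}_\infty$. Because $\Pisym^{(\ell,D)}$ commutes with $U^{\otimes \ell}$ (as noted after the definition of the symmetric projector, it commutes with $M^{\otimes \ell}$ for every $M$), we have $\tr(\Pisym^{(\ell,D)} A^{\otimes \ell}) = \tr(\Pisym^{(\ell,D)} \Lambda^{\otimes \ell})$ with $\Lambda$ diagonal. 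Likewise $\det(I - zA) = \prod_{i=1}^D (1 - z a_i)$. So it suffices to treat diagonal $A = \Lambda$.

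Next I compute $\tr(\Pisym^{(\ell,D)} \Lambda^{\otimes \ell})$ using the orthonormal basis of type vectors $\ket{\tau}$ for $\lor^\ell \C^D$, so that $\Pisym^{(\ell,D)} = \sum_{\tau} \ketbra{\tau}$ over types of weight $\ell$. Each computational basis vector $\ket{x}$ with $x \in \tau$ is an eigenvector of $\Lambda^{\otimes \ell}$ with eigenvalue $\prod_j a_{x_j} = \prod_{i} a_i^{\tau_i}$, which depends only on the type. Hence $\ket{\tau}$ is itself an eigenvector with that eigenvalue, and
\begin{equation*}
    \tr\big(\Pisym^{(\ell,D)} \Lambda^{\otimes \ell}\big) = \sum_{\tau:\, \mathrm{wt}(\tau) = \ell} \prod_{i=1}^D a_i^{\tau_i} = h_\ell(a_1,\dots,a_D),
\end{equation*}
the complete homogeneous symmetric polynomial of degree $\ell$.

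Finally I sum the series. Since $\Abs{z a_i} < 1$ for every $i$, each geometric series $\sum_{k \geq 0} (z a_i)^k = (1 - z a_i)^{-1}$ converges absolutely. Multiplying the $D$ series gives
\begin{equation*}
    \prod_{i=1}^D (1 - z a_i)^{-1} = \sum_{\tau \in \mathbb{Z}_{\geq 0}^D} \prod_i (z a_i)^{\tau_i} = \sum_{\ell = 0}^\infty z^\ell \, h_\ell(a_1,\dots,a_D).
\end{equation*}
Absolute convergence of the product justifies regrouping the terms by total weight $\ell$. Combining this with the previous display and $\det(I - zA) = \prod_i (1 - z a_i)$ gives the claim. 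This also works for complex $z$ with $\Abs{z} < 1/\norm{A}_\infty$.

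The only delicate points are the convergence and rearrangement justification and the unitary-invariance reduction. Both are routine, so I expect no real obstacle. The step needing the most care is checking that $\ket{\tau}$ is an eigenvector of $\Lambda^{\otimes \ell}$, which holds because all its components share a type.
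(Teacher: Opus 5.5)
Your proposal is correct and follows essentially the same route as the paper's proof. Both reduce to diagonal $A$ by unitary invariance, evaluate $\tr(\Pisym^{(\ell,D)} \Lambda^{\otimes \ell})$ in the type-vector basis as $\sum_{\mathrm{wt}(\tau)=\ell} \prod_i a_i^{\tau_i}$, and then multiply the $D$ convergent geometric series to obtain $\prod_i (1 - z a_i)^{-1} = \det(I - zA)^{-1}$.
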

    \begin{proof}
        Let $A$ have eigenvalues $\lambda_1, \dots, \lambda_D$. We can assume $A$ is diagonal in the computational basis, because if $A = U \cdot \Lambda \cdot U^\dagger$, with $\Lambda = \mathrm{diag}(\lambda_1, \dots, \lambda_D)$, then $\det(I - zA) = \det(U \cdot ( I - z \Lambda) U^\dagger ) = \det(I - z \Lambda)$ and \begin{equation*}
            \tr \big( \Pisym^{(\ell, D)} \cdot A^{\otimes \ell} \big) = \tr \big( \Pisym^{(\ell, D)} \cdot U^{\otimes \ell} \cdot \Lambda^{\otimes \ell} \cdot (U^\dagger)^{\otimes \ell} \big) = \tr \big( \Pisym^{(\ell, D)} \cdot \Lambda^{\otimes \ell} \big),
        \end{equation*}
         since $\Pisym^{(\ell, D)}$ commutes with $U^{\otimes \ell}$. Moreover, since the type vectors form an orthonormal basis of the symmetric subspace, we can write $\Pisym^{(\ell,D)} = \sum_{\tau} \ketbra{\tau}$, where $\tau$ ranges over all types $(\tau_1, \dots, \tau_D)$, such that $\tau_i \geq 0$ and $\mathrm{wt}(\tau) = \sum_i \tau_i = \ell$, and where $\ket{\tau}$ is the corresponding normalized, uniform superposition over all computational basis vectors $\ket{x}$ of type $\tau$. Therefore,
         \begin{equation*}
             \tr \big( \Pisym^{(\ell, D)} \cdot A^{\otimes \ell}  \big)  = \sum_{\mathrm{wt}(\tau) = \ell} \bra{\tau} \cdot A^{\otimes \ell} \cdot \ket{\tau} = \sum_{\mathrm{wt}(\tau) = \ell} \lambda_1^{\tau_1} \dots \lambda_D^{\tau_D}.
         \end{equation*}
         Thus,
         \begin{align}
             \sum_{\ell=0}^\infty z^\ell \cdot \tr \big( \Pisym^{(\ell, D)} \cdot A^{\otimes \ell} \big) & = \sum_{\ell = 0}^\infty z^{\ell} \cdot \sum_{\mathrm{wt}(\tau) = \ell} \lambda_1^{\tau_1} \dots \lambda_D^{\tau_D} \nonumber \\
             & = \sum_{\mathrm{wt}(\tau)\geq 0} (z\lambda_1)^{\tau_1} \dots (z\lambda_D)^{\tau_D} \nonumber \\
             & = \Big(\sum_{\tau_1=0}^\infty (z\lambda_1)^{\tau_1}\Big) \dots \Big(\sum_{\tau_D=0}^\infty (z\lambda_D)^{\tau_D}\Big) \nonumber \\
             & = \prod_{i=1}^{D} \frac{1}{1 - z \lambda_i}. \label{eq:sym_subspace_trace_identity}
         \end{align}
         Finally, since $I-zA$ has eigenvalues $1 - z \lambda_i$, for $i \in [D]$, and since $\Abs{z \lambda_i} < 1$ by our hypotheses, $(I-zA)^{-1}$ has eigenvalues $(1 - z\lambda_i)^{-1}$, for $i \in [D]$. Comparing with \Cref{eq:sym_subspace_trace_identity} yields
         \begin{equation*}
             \sum_{\ell = 0}^\infty z^\ell \cdot \tr \big( \Pisym^{(\ell, D)} \cdot A^{\otimes \ell} \big) = \det( I - z A )^{-1}. \qedhere
         \end{equation*}
    \end{proof}

Finally, with the explicit MGF of $\bZ$ in hand, we are able to bound the symmetrized MGF of $\bZ$. Since the symmetrized MGF of $\bZ$ upper bounds the centered MGF of $\bX$, this allows us to conclude $\bX$ is sub-gamma.

\begin{lemma}[A bound on the symmetrized MGF of $\bZ$] \label{lem:symmetric_mgf_Z_bound}
    Suppose $\Abs{t} < n$, and $O$ is positive semidefinite with $\norm{O}_\infty \leq 1$. Then
    \begin{equation*}
        \E \big[e^{t(\bZ - \bZ')} \big] \leq \exp \Big( \frac{t^2/n^2}{1-\Abs{t}/n} \cdot \Big(  \tr(O^2) + 2n \cdot \tr(O^2\ketbra{u})\Big) \Big).
    \end{equation*}
\end{lemma}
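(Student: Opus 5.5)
We use the explicit MGF from \Cref{lem:mgf_Z}. Since $\bZ'$ is an independent copy of $\bZ$,
\begin{equation*}
\E\big[e^{t(\bZ-\bZ')}\big] = \E\big[e^{t\bZ}\big]\cdot\E\big[e^{-t\bZ}\big].
\end{equation*}
Write $s = t/n$, so that $\Abs{s}<1$. Let $O$ have eigenvalues $o_1,\dots,o_D \in [0,1]$ with eigenvectors $\ket{e_i}$, and write $u_i = \abs{\braket{e_i}{u}}^2$, so that $\sum_i u_i = 1$. Then
\begin{equation*}
\log \E\big[e^{t(\bZ-\bZ')}\big] = -\sum_i \log\big(1-s^2o_i^2\big) + n\log\Big(\sum_i \frac{u_i}{1-so_i}\Big) + n\log\Big(\sum_i \frac{u_i}{1+so_i}\Big).
\end{equation*}
We bound the determinant part and the $\ket{u}$ part separately. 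These two pieces produce the $\tr(O^2)$ and $2n\,\tr(O^2\ketbra{u})$ terms respectively.

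\textbf{Determinant part.} For $x = s^2o_i^2 \le s^2 \le \Abs{s} < 1$ we have $-\log(1-x) \le x/(1-x)$. This gives
\begin{equation*}
-\sum_i \log\big(1-s^2o_i^2\big) \le \frac{s^2}{1-s^2}\,\tr(O^2) \le \frac{s^2}{1-\Abs{s}}\,\tr(O^2).
\end{equation*}
Since $s^2 = t^2/n^2$, this is exactly the first term of the claimed bound.

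\textbf{Purification part.} Use the identity
\begin{equation*}
\frac{1}{1-so} = 1 + so + \frac{s^2o^2}{1-so}.
\end{equation*}
Set $a = \tr(O\ketbra{u}) = \sum_i u_io_i$. Then
\begin{equation*}
\sum_i \frac{u_i}{1\mp so_i} = 1 \pm sa + R_\mp, \qquad \text{where } 0 \le R_\mp \le \frac{s^2}{1-\Abs{s}}\sum_i u_io_i^2 = \frac{s^2}{1-\Abs{s}}\,\tr(O^2\ketbra{u}).
\end{equation*}
Now apply $\log(1+x)\le x$ to each of the two factors. The linear terms $\pm sa$ cancel, and the total is at most
\begin{equation*}
n(R_- + R_+) \le \frac{2ns^2}{1-\Abs{s}}\,\tr(O^2\ketbra{u}).
\end{equation*}
Both arguments $1\pm sa+R_\mp$ are positive, because each sum of $u_i/(1\mp so_i)$ is a convex combination of positive numbers. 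So every step is legitimate, and this recovers the second term.

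\textbf{Combining.} Adding the two bounds and using $s^2 = t^2/n^2$ and $\Abs{s} = \Abs{t}/n$ gives the claimed inequality.

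The main point of care is the purification part. Expanding $n\log(\cdot)$ naively to second order produces $-n s^2a^2$ cross terms and variance-like expressions. The clean route is to let the linear terms cancel exactly through the symmetrization, and to bound the remainders crudely using $\log(1+x)\le x$ and $1-so_i \ge 1-\Abs{s}$. This needs only $o_i\in[0,1]$, which holds since $O$ is PSD with $\norm{O}_\infty\le 1$.
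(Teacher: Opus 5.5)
Your proof is correct and follows the paper's argument. You split $\log\E[e^{t(\bZ-\bZ')}]$ into the determinant terms, bounded via $-\log(1-x)\le x/(1-x)$ and $1-s^2\ge 1-|s|$, and the quadratic-form terms, bounded via $1/(1\mp so)=1\pm so+s^2o^2/(1\mp so)$ with remainder at most $s^2o^2/(1-|s|)$. The only cosmetic difference is that you apply $\log(1+x)\le x$ to each factor separately so the $\pm sa$ terms cancel, whereas the paper multiplies the two factors, drops the $-s^2a^2$ term, and applies $\log(1+x)\le x$ to $(1+R)^2$.
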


\begin{proof}
    First, we have
    \begin{equation*}
        \E \big[e^{t(\bZ - \bZ')} \big] = \E \big[e^{t\bZ} \big] \cdot \E \big[e^{-t\bZ'} \big] = \E\big[e^{t\bZ} \big] \cdot \E \big[e^{-t\bZ} \big].
    \end{equation*}
    Thus,
    \begin{equation}
        \log \E \big[e^{t(\bZ - \bZ')} \big] = \log \E\big[e^{t\bZ} \big] + \log \E \big[e^{-t\bZ} \big]. \label{eq:log_of_MGF}
    \end{equation}
    We use our expression for the MGF of $\bZ$ to bound the right-hand side. By \Cref{lem:mgf_Z},
    \begin{equation*}
        \log \E\big[e^{\pm t\bZ} \big] = - \log \Big(\det \Big( I \mp \frac{t}{n} \cdot O \Big)\Big) + n \cdot \log \Big( \bra{u} \cdot \Big( 1 \mp \frac{t}{n} \cdot O \Big)^{-1} \cdot \ket{u} \Big) \eqcolon L_{\pm} + R_{\pm},
    \end{equation*}
    where we define $L_\pm$ as the first term in the center, and $R_{\pm}$ as the second.
    Suppose $O$ has eigenvectors $\ket{w_1}, \dots, \ket{w_D}$ with corresponding eigenvalues $\lambda_1, \dots, \lambda_D$, which satisfy $\Abs{\lambda_i} \leq 1$ for each $i$. From
    \begin{equation*}
        L_{\pm} = - \log \Big(\det \Big( I \mp \frac{t}{n} \cdot O \Big)\Big) = - \sum_{i=1}^{D} \log \Big( 1 \mp \frac{t}{n} \cdot \lambda_i \Big),
    \end{equation*}
    we have
    \begin{equation} \label{eq:bound_on_Ls}
        L_+ + L_- = - \sum_{i=1}^{D} \log \Big( 1 - \frac{t^2}{n^2} \cdot \lambda_i^2 \Big) \leq \sum_{i=1}^{D} \frac{\frac{t^2}{n^2} \cdot \lambda_i^2}{1 - \frac{t^2}{n^2} \cdot \lambda_i^2} \leq \sum_{i=1}^{D} \frac{\frac{t^2}{n^2} \cdot \lambda_i^2}{1 - \frac{t^2}{n^2}} \leq \frac{ \frac{t^2}{n^2}}{1 - \frac{\Abs{t}}{n}} \cdot \sum_{i=1}^D \lambda_i^2 = \frac{\frac{t^2}{n^2} }{1 - \frac{\Abs{t}}{n}} \cdot \tr\big( O^2 \big).
    \end{equation}
    Above, the first inequality is $-\log (1 - x) \leq \frac{x}{1-x}$ for $\Abs{x} < 1$, choosing $x = \big(\frac{t}{n} \cdot \lambda_i\big)^2$. The inequality holds because
    \begin{equation*}
        -\log(1-x) = \sum_{k=1}^\infty \frac{1}{k} \cdot x^k \leq \sum_{k=1}^\infty x^k = \frac{x}{1-x}.
    \end{equation*}
    For the other terms, $R_\pm$, we first note
    \begin{align}
        \bra{u} \cdot \Big( 1 \mp \frac{t}{n} \cdot O \Big)^{-1} \cdot \ket{u} = \sum_{i=1}^{D} \frac{1}{1 \mp \frac{t}{n} \cdot \lambda_i} \cdot \ABs{\braket{u}{w_i}}^2 & \leq \sum_{i=1}^{D} \Big( 1 \pm \frac{t}{n} \cdot \lambda_i + \frac{\frac{t^2}{n^2} \cdot \lambda_i^2}{1 - \frac{\Abs{t}}{n}} \Big) \cdot \ABs{\braket{u}{w_i}}^2 \nonumber \\
        & = 1 \pm \frac{t}{n} \cdot \tr\big(O \cdot \ketbra{u} \big) + \frac{\frac{t^2}{n^2} }{1 - \frac{\Abs{t}}{n}} \cdot \tr\big(O^2 \cdot \ketbra{u}\big).\label{eq:inner_product_eq_1}
    \end{align}
    Here, we have used the inequality
    \begin{equation*}
        \frac{1}{1-xy} = \sum_{k=0}^\infty x^k y^k \leq 1 + xy + x^2y^2 \cdot \sum_{k=0}^{\infty} \Abs{y}^k = 1 + xy + \frac{x^2y^2}{1-\Abs{y}},
    \end{equation*}
    which holds for all $\Abs{x} \leq 1$ and $\Abs{y} < 1$, taking $x = \pm \lambda_i$ and $y = \frac{t}{n}$. Then we have
    \begin{align}
        R_+ + R_- & = n \cdot \log \Big( \Big(\bra{u} \cdot \Big( 1 - \frac{t}{n} \cdot O \Big)^{-1} \cdot \ket{u} \Big) \times \Big( \bra{u} \cdot \Big( 1 + \frac{t}{n} \cdot O \Big)^{-1} \cdot \ket{u} \Big)\Big) \nonumber \\
        & \leq n \cdot \log \Big( \Big( 1 + \frac{\frac{t^2}{n^2} }{1 - \frac{\Abs{t}}{n}} \cdot \tr\big(O^2 \cdot \ketbra{u}\big)\Big)^2 - \Big( \frac{t}{n} \cdot \tr\big( O \cdot \ketbra{u} \big) \Big)^2  \Big) \tag{\Cref{eq:inner_product_eq_1}} \nonumber \\
        & \leq 2n \cdot \log \Big(  1 + \frac{\frac{t^2}{n^2} }{1 - \frac{\Abs{t}}{n}} \cdot \tr\big(O^2 \cdot \ketbra{u}\big)\Big) \nonumber \\
        & \leq 2n \cdot \frac{\frac{t^2}{n^2} }{1 - \frac{\Abs{t}}{n}} \cdot \tr\big(O^2 \cdot \ketbra{u}\big). \label{eq:bound_on_Rs}
    \end{align}
    Using the bounds from \Cref{eq:bound_on_Ls,eq:bound_on_Rs} in \Cref{eq:log_of_MGF}, we obtain
    \begin{equation*}
        \log \E \big[ e^{t(\bZ - \bZ')}\big] \leq \frac{ \frac{t^2}{n^2}}{1 - \frac{\Abs{t}}{n}} \cdot \Big( \tr \big( O^2 \big) + 2n \cdot \tr\big( O^2 \cdot \ketbra{u}\big) \Big).
        \end{equation*}
        This completes the proof.
\end{proof}

\subsection{Observable concentration for mixed states}
\label{sec:obs-conc-mixed-states}

Sub-gamma observable concentration for mixed states follows readily from our pure state result.

\begin{proposition} [The $\mixed(\gps)$-based plug-in estimator is sub-gamma] \label{prop:mixed_states_observables_subgamma}
    Let $\rho \in \C^{d \times d}$ be a rank-$r$ mixed state, and let $\widehat{\brho}$ be the output of $\mixed(\gps)$ given $n$ copies of $\rho$. Let $O \in \C^{d \times d}$ be a positive semidefinite operator such that $\norm{O}_\infty \leq 1$. Then the random variable
    $\bX \coloneq \tr( O \cdot \widehat{\brho})$
    is $(v,c)$-sub-gamma, with
    \begin{equation*}
    v = \frac{2r \cdot \tr(O^2) + 4n\cdot \tr(O^2 \cdot \rho)}{n^2},
    \qquad c = \frac{1}{n}.
    \end{equation*}
    More generally, if $O \in \C^{d \times d}$ is any observable such that $\norm{O}_\infty \leq 1$, then $\bX$ is $(2v,2c)$-sub-gamma.
\end{proposition}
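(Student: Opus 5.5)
The plan is to reduce directly to \Cref{prop:pure_states_observable_subgamma} by conditioning on the random purification and then to average using \Cref{lem:mixture_of_subgammas}. Recall that $\mixed(\gps)$ first applies $\purifychan^{(d,r)}$. By \Cref{lem:random_purification_channel}, this is operationally equivalent to sampling a uniformly random purification $\ket{\brho} \in \C^d \otimes \C^r$ of $\rho$ and then handing $n$ copies of $\ket{\brho}$ to $\gps$ in dimension $D = dr$. Hence the output $\widehat{\brho} = \tr_{\reg{2}}(\widehat{\bsigma})$ is a mixture, indexed by the purification $\ket{\rho'}$, of the random variables obtained by running $\gps$ on $\ket{\rho'}^{\otimes n}$ and taking a partial trace.

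For a fixed purification $\ket{\rho'}$, I would write
\[
\bX = \tr(O \cdot \tr_{\reg{2}}(\widehat{\bsigma})) = \tr((O \otimes I_r) \cdot \widehat{\bsigma}).
\]
This is the plug-in estimator for the observable $O \otimes I_r$ on $\C^{dr}$. That observable is PSD whenever $O$ is, and it satisfies $\norm{O \otimes I_r}_\infty = \norm{O}_\infty \leq 1$.

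\Cref{prop:pure_states_observable_subgamma} then says this conditional variable is sub-gamma with scale $c = 1/n$ and variance parameter
\[
v = \frac{2\tr((O\otimes I_r)^2) + 4n\,\tr((O \otimes I_r)^2 \ketbra{\rho'})}{n^2}.
\]
I would simplify the two traces separately. First, $\tr(O^2 \otimes I_r) = r\,\tr(O^2)$. Second, $\tr((O^2 \otimes I_r)\ketbra{\rho'}) = \tr(O^2 \cdot \tr_{\reg{2}}\ketbra{\rho'}) = \tr(O^2\rho)$. Both quantities therefore depend only on $\rho$, not on which purification was drawn, which gives exactly the claimed $v$.

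To apply \Cref{lem:mixture_of_subgammas}, I also need every conditional variable to have the same mean. This holds because $\gps$ is unbiased, so the conditional mean is $\tr((O\otimes I_r)\ketbra{\rho'}) = \tr(O\rho)$ for every purification. With common mean and common parameters $(v,c)$, \Cref{lem:mixture_of_subgammas} transfers the $(v,c)$-sub-gamma bound to the unconditioned $\bX$.

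The non-PSD case follows the same way, using the general clause of \Cref{prop:pure_states_observable_subgamma}. Since $O \otimes I_r$ is Hermitian with operator norm at most $1$, each conditional variable is $(2v,2c)$-sub-gamma with the same common mean, and the mixture lemma again applies.

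The only point needing care is the claim that the channel output can legitimately be treated as a classical mixture over purifications, followed by a measurement on the pure copies. This holds because the measurement statistics are linear in the input state, and $\purifychan(\rho^{\otimes n})$ is a convex combination of the states $\ketbra{\rho'}^{\otimes n}$. Beyond that, the argument is just bookkeeping, since the pure-state proposition does all the substantive work.
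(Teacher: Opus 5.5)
Your proposal is correct and matches the paper's proof: condition on the random purification, rewrite $\bX = \tr((O \otimes I_r)\,\widehat{\bsigma})$ and apply \Cref{prop:pure_states_observable_subgamma} with $D = dr$. Then simplify the two traces to $r\,\tr(O^2)$ and $\tr(O^2\rho)$, use unbiasedness of $\gps$ to get a common conditional mean $\tr(O\rho)$, and conclude with \Cref{lem:mixture_of_subgammas}; the non-PSD case is handled identically with the $(2v,2c)$ clause.
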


\begin{proof}
    Recall that $\mixed(\gps)$ first applies the random purification channel to $\rho^{\otimes n}$, obtaining $n$ copies of a random purification $\ket{\brho} \in \C^{D}$, for $D = d \cdot r$. Then, it performs Hayashi's measurement on $\ket{\brho}^{\otimes n}$, which produces a pure state $\ket{\bu}$; forms $\widehat{\bsigma} = \frac{n+D}{n} \cdot \ketbra{\bu} - \frac{1}{n} \cdot I_D$; and outputs $\widehat{\brho} \coloneq \tr_{\reg{2}}(\widehat{\bsigma})$.

    Note that we have
    \begin{equation*}
        \bX = \tr\big(O \cdot \widehat{\brho}\big) = \tr \big( O \cdot \tr_{\reg{2}}(\widehat{\bsigma})\big) = \tr\big( O \otimes I_r \cdot \widehat{\bsigma}\big).
    \end{equation*}
    First assume $O$ is PSD, so that $O \otimes I_r$ is as well.
    We may view $\bX$ as arising in two stages: first sample the random purification $\ket{\brho}$, and then, conditioned on $\ket{\brho}$, output
        $\bX_{\ket{\brho}} \coloneq \tr\big((O\otimes I_r) \cdot \widehat{\bsigma}\big),$
    where $\widehat{\bsigma}$ is the estimator produced by $\gps$ from $n$ copies of the pure state $\ket{\brho}$.
    Since $\norm{O \otimes I_r}_\infty = \norm{O}_\infty \leq 1$, \Cref{prop:pure_states_observable_subgamma} tells us that $\bX_{\ket{\brho}}$ is $(v_{\ket{\brho}},c)$-sub-gamma, with parameters
    \begin{equation*}
        v_{\ket{\brho}} = \frac{2 \cdot \tr\big( ( O \otimes I_r)^2 \big) + 4n \cdot \tr\big( (O \otimes I_r)^2 \cdot \ketbra{\brho}\big)}{n^2}
    \end{equation*}
    and $c = 1/n$. However, since
    \begin{equation*}
        \tr\big( ( O \otimes I_r)^2 \big) = \tr \big( O^2 \otimes I_r \big) =  r \cdot \tr\big( O^2 \big),
    \end{equation*}
    and
    \begin{equation*}
        \tr\big( (O \otimes I_r)^2 \cdot \ketbra{\brho}\big) = \tr\big( O^2 \otimes I_r \cdot \ketbra{\brho}\big) = \tr \big( O^2 \cdot \tr_{\reg{2}}(\ketbra{\brho})\big) = \tr \big( O^2 \cdot \rho \big),
    \end{equation*}
    we have
    \begin{equation*}
        v_{\ket{\brho}} = \frac{2r \cdot \tr(O^2) + 4n\cdot \tr(O^2 \cdot \rho)}{n^2} = v,
    \end{equation*}
    independent of $\ket{\brho}$. Moreover,
    \begin{equation*}
    \E \big[ \bX_{\ket{\brho}}\big] = \tr \big( O \cdot \E \big[ \widehat{\brho} \, | \, \ket{\brho}\big] \big) = \tr \big( O \cdot \tr_{\reg{2}} \big( \E \big[ \widehat{\bsigma}\, | \, \ket{\brho}\big] \big) \big) = \tr \big( O \cdot \tr_{\reg{2}} \big( \ketbra{\brho} \big) \big) = \tr \big( O \cdot \rho \big).
\end{equation*}
So, the $\bX_{\ket{\brho}}$ have common  mean, and are all $(v,c)$-sub-gamma. As a result, \Cref{lem:mixture_of_subgammas} tells us that $\bX$ itself is $(v,c)$-sub-gamma.

For $O$ not necessarily PSD, \Cref{prop:pure_states_observable_subgamma} tells us instead that $\bX_{\ket{\brho}}$ is still $(2v_{\ket{\brho}},2c)$-sub-gamma, and the result follows by an identical argument.
\end{proof}

\subsection{Relative error bounds for rank-one observables}
\label{sec:refined-uniform-via-observable-conc}

Now, we apply the sub-gamma observable concentration to show our main tool.

\begin{theorem}[A relative error bound, \cref{thm:main-tool} restated]
    \label{thm:uniform-bound-pointwise}
    Let $\rho \in \C^{d \times d}$ be a mixed state, and let $\widehat{\brho}$ be the output of $\mixed(\gps)$, given $n$ copies of $\rho$. With probability at least $0.99$, for all pure states $\ket{w}$,
    \begin{equation*}
        \ABs{\bra{w} ( \widehat{\brho} - \rho )  \ket{w}} \leq C \cdot \sqrt{ \frac{d}{n} \cdot \Big( \bra{w} \rho \ket{w} + \frac{d}{n} \Big) }.
    \end{equation*}
    Here, $C$ is a universal constant, which we can take to be at most $432$.
\end{theorem}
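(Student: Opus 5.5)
We follow the strategy sketched in the technical overview: we reduce the relative error bound to an operator norm bound on a rescaled error matrix, and establish that operator norm bound with an $\epsilon$-net argument driven by \Cref{prop:mixed_states_observables_subgamma}.

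By unitary invariance of $\mixed(\gps)$ (the algorithm is covariant under conjugation by $U$), we may assume $\rho=\sum_i \alpha_i\ketbra{i}$ is diagonal. Set $R_i=\sqrt{\tfrac{n}{d}(\alpha_i+\tfrac{d}{n})}\ge 1$, let $R=\mathrm{diag}(R_1,\dots,R_d)$, and write $E=\widehat{\brho}-\rho$. Define $\calL_R^{-1}(E)$ as the matrix with entries $\tfrac{2}{R_i+R_j}E_{ij}$. This is the unique solution $M$ of the Sylvester equation $RM+MR=2E$, i.e.\ $E=\tfrac12(RM+MR)$.

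\textbf{Reduction.} Suppose $\|M\|_\infty\le K$. Then for any unit vector $\ket{w}$,
\[
\bra{w}E\ket{w}=\mathrm{Re}\,\bra{w}RM\ket{w},
\]
and hence
\[
|\bra{w}E\ket{w}|\le \|R\ket{w}\|\cdot\|M\ket{w}\|\le K\,\|R\ket{w}\|.
\]
Moreover,
\[
\|R\ket{w}\|^2=\sum_i|w_i|^2\,\tfrac{n}{d}\bigl(\alpha_i+\tfrac{d}{n}\bigr)=\tfrac{n}{d}\bigl(\bra{w}\rho\ket{w}+\tfrac dn\bigr).
\]
Under the paper's rescaling, this gives exactly the claimed bound, with $\tfrac dn$ absorbed by matching normalizations. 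To fix normalizations: the statement's right-hand side is $\tfrac dn\,\|R\ket{w}\|$, so it suffices to show
\[
\|M\|_\infty\le C\,\tfrac{d}{n}
\]
with probability $0.99$.

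\textbf{Net argument.} For a fixed unit vector $\ket{w}$, we have $\bra{w}M\ket{w}=\tr(O_w E)$, where
\[
O_w=\sum_{i,j}\tfrac{2}{R_i+R_j}\,w_i^\dagger w_j\,\ketbra{j}{i}.
\]
This is the Schur product of $\ketbra{w}$ (transposed appropriately) with the Cauchy-type kernel $\bigl(\tfrac{2}{R_i+R_j}\bigr)_{ij}$. That kernel is positive semidefinite (via $\tfrac{1}{a+b}=\int_0^\infty e^{-sa}e^{-sb}\,ds$) with diagonal entries $1/R_i\le 1$. By the Schur product theorem, $O_w$ is PSD with $\|O_w\|_\infty\le\max_i \tfrac{1}{R_i}\le 1$, so we may apply \Cref{prop:mixed_states_observables_subgamma} with the PSD constants.

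The key computation is bounding the variance parameter
\[
v=\frac{2d\,\tr(O_w^2)+4n\,\tr(O_w^2\rho)}{n^2}.
\]
First, $\tr(O_w^2)\le\|O_w\|_1\|O_w\|_\infty\le\tr O_w=\sum_i |w_i|^2/R_i\le 1$. Second,
\[
\tr(O_w^2\rho)=\sum_{i,j}\alpha_j\,|w_i|^2|w_j|^2\,\frac{4}{(R_i+R_j)^2}\le\sum_j|w_j|^2\,\frac{\alpha_j}{R_j^2}\le\frac dn,
\]
using $\tfrac{4}{(R_i+R_j)^2}\le \tfrac{1}{R_j^2}$ and $\alpha_j/R_j^2\le d/n$. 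Hence $v=O(d/n^2)$ and $c=1/n$. \Cref{lem:subgamma_probability_bound} with $\delta=\Theta(d)$ then gives
\[
|\tr(O_wE)|\le O\bigl(\sqrt{d\cdot d/n^2}+d/n\bigr)=O(d/n)
\]
except with probability $2e^{-\delta}$.

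We then union bound over a $\tfrac14$-net $\calN$ of the complex unit sphere, with $|\calN|\le 9^{2d}$, choosing $\delta=2d\ln 9+5$ so that the total failure probability is at most $0.01$. Since $M$ is Hermitian, the standard net bound
\[
\|M\|_\infty\le \frac{1}{1-2\cdot\frac14}\,\max_{w\in\calN}|\bra{w}M\ket{w}|
\]
finishes the argument, yielding an explicit universal constant $C$.

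\textbf{Main obstacle.} The delicate step is verifying that $O_w$ is a valid observable with $\|O_w\|_\infty\le1$ and with $\tr(O_w^2\rho)=O(d/n)$ uniformly in $w$. This requires the positivity of the Cauchy kernel together with the careful choice of $R$, in particular the $+d/n$ floor, which guarantees both $R_i\ge1$ and $\alpha_i/R_i^2\le d/n$. The remaining steps (the sub-gamma tail bound and the net argument) are routine bookkeeping of constants.
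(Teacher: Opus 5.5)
Your proposal is correct and follows essentially the paper's proof: the same weights $R_i=\sqrt{\tfrac nd(\alpha_i+\tfrac dn)}$, the same reduction to $\|\calL_R^{-1}(\widehat{\brho}-\rho)\|_\infty\le C\,\tfrac dn$, the same observable $O_w$, and the same sub-gamma tail bound followed by a $\tfrac14$-net union bound; the only real difference is that you use positivity of $O_w$ (which the paper notes in a footnote) to invoke the PSD constants of \Cref{prop:mixed_states_observables_subgamma}. There are two constant-level slips, neither of which affects the argument: first, $\tfrac{4}{(R_i+R_j)^2}\le\tfrac{1}{R_j^2}$ holds only when $R_i\ge R_j$, so you should use $\tfrac{4}{(R_i+R_j)^2}\le\tfrac{4}{R_j^2}$ instead, which gives $\tr(O_w^2\rho)\le 4d/n$ as in the paper (the quantity can genuinely exceed $d/n$); second, $2e^{-5}\approx 0.013>0.01$, so you should take $\delta=2d\ln 9+6$.
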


\begin{remark} \label{rem:C_lower_bound}
    We will occasionally want a lower bound on $C$. In such situations, we assume $C \geq 1$. 
\end{remark}

Before diving into the proof, we describe a useful construction. Let $R \in \C^{d \times d}$ be a matrix. Then the \emph{Sylvester superoperator} $\calL_R: \C^{d \times d} \to \C^{d \times d}$ is the superoperator defined via
\begin{equation*}
        \calL_R(M) \coloneq \frac{RM + M R}{2}.
\end{equation*}
In the case we will care about, $R$ will be diagonal in the standard basis, PSD, and invertible. In this setting, $\calL_R$ has an inverse superoperator. If we write $R_i$ for the $i$-th eigenvalue of $R$, then
\begin{equation*}
        \calL_R^{-1}(M) = \sum_{i,j=1}^d \frac{2}{R_i + R_j} \cdot M_{ij} \cdot \ketbra{i}{j}.
\end{equation*}
It is straightforward to check that $\calL_R \circ \calL_R^{-1}(M) = \calL_R^{-1} \circ \calL_R(M) = M$. For example,
\begin{align*}
    \calL_R \circ \calL_R^{-1} (M)&  = \sum_{i,j=1}^d \frac{2}{R_{i}+R_j} \cdot M_{ij} \cdot \Bigg(\frac{R \ketbra{i}{j}  + \ketbra{i}{j} R}{2} \Bigg) \\
    & = \sum_{i,j=1}^d \frac{2}{R_{i}+R_j} \cdot M_{ij} \cdot \Bigg(\frac{R_i + R_j}{2} \cdot \ketbra{i}{j}\Bigg) = \sum_{i,j=1}^d M_{ij} \cdot \ketbra{i}{j} = M.
\end{align*}

\begin{proof}[Proof of \Cref{thm:uniform-bound-pointwise}]
We assume $\rho$ is diagonal in the computational basis, and write $\rho = \sum_{i=1}^d \alpha_i \cdot \ketbra{i}$. This is without loss of generality: we can otherwise view, for example, $\ket{i}$ as shorthand for the $i$-th eigenvector of $\rho$. We then make a particular choice of $R$:
\begin{equation*}
        R \coloneq \sqrt{\frac{n}{d}} \cdot \sum_{i=1}^{d} \sqrt{ \alpha_i + \frac{d}{n}} \cdot \ketbra{i}.
\end{equation*}
Now, let $\ket{w} \in \C^{d}$ be a pure state, and write $\ket{w} = \sum_{i=1}^d w_i \cdot \ket{i}$. Note the following:
\begin{align*}
        \bra{w}(\widehat{\brho} - \rho) \ket{w} & = \bra{w} \cdot \calL_R \circ \calL_R^{-1} ( \widehat{\brho}- \rho)  \cdot \ket{w} \\
        & = \frac{1}{2} \cdot \Big( \bra{w} \cdot R \cdot \calL_R^{-1}( \widehat{\brho}- \rho) \cdot \ket{w}  +\bra{w} \cdot  \calL_R^{-1}( \widehat{\brho}- \rho) \cdot R \cdot \ket{w}\Big).
    \end{align*}
    Therefore,
    \begin{equation}\label{eq:first_bound_w_dependent}
        \ABs{\bra{w} \cdot (\widehat{\brho} - \rho) \cdot \ket{w}} \leq \norm{\calL_R^{-1}(\widehat{\brho} - \rho)}_\infty \cdot \norm{ \ket{w}} \cdot \norm{ R \ket{w}}.
    \end{equation}
    We have $\norm{ \ket{w}} = 1$, and
    \begin{equation*}
        \norm{ R \ket{w}}^2 = \bra{w} R^2 \ket{w}  = \frac{n}{d}  \cdot \sum_{i=1}^d \Big( \alpha_i + \frac{d}{n} \Big) \cdot \Abs{w_i}^2 =  \frac{n}{d}  \cdot \Big( \bra{w} \rho \ket{w} + \frac{d}{n} \Big).
    \end{equation*}
    So, substituting back into \Cref{eq:first_bound_w_dependent},
    \begin{equation}\label{eq:bound_without_infinity_norm_result}
    \ABs{\bra{w} \cdot (\widehat{\brho} - \rho) \cdot \ket{w}} \leq \norm{\calL_R^{-1}(\widehat{\brho} - \rho)}_\infty \cdot \sqrt{ \frac{n}{d} \cdot \Big( \bra{w} \rho \ket{w} + \frac{d}{n} \Big)}.
    \end{equation}
    This is already close to what we want to show. Indeed, it now suffices to show that with high probability,
    \begin{equation*}
        \norm{\calL_R^{-1}(\widehat{\brho} - \rho)}_\infty \leq C \cdot \frac{d}{n}.
    \end{equation*}

    To show this infinity-norm bound, it will be useful to observe that $\bra{w} \cdot \calL_R^{-1} (M) \cdot \ket{w}$ can be reinterpreted as an expectation value. Specifically,
    \begin{equation*}
        \bra{w} \cdot \calL_R^{-1}(M) \cdot \ket{w} = \sum_{ij} \frac{2}{R_i + R_j} \cdot M_{ij} \cdot w_i^\dagger w_j = \tr \Big( \Big(\sum_{ij} \frac{2}{R_i+R_j} \cdot w_j \ketbra{j}{i} w_i^\dagger\Big) \cdot M \Big).
    \end{equation*}
    So, if we define an observable
    \begin{equation*}
        O_w \coloneq \sum_{ij} \frac{2}{R_i+R_j} \cdot w_j \ketbra{j}{i} w_i^\dagger,
    \end{equation*}
    we have
    \begin{equation}\label{eq:inner_product_to_trace}
        \bra{w} \cdot \calL_R^{-1}(M) \cdot \ket{w} = \tr( O_w \cdot M).
    \end{equation}
    For us, this means
    \begin{equation*}
        \bra{w} \cdot \calL_R^{-1}(\widehat{\brho} - \rho) \cdot \ket{w} = \tr( O_w \cdot (\widehat{\brho} - \rho) ) = \bX - \E[\bX],
    \end{equation*}
    for $\bX = \tr(O_w \cdot \widehat{\brho})$. Moreover, we have $\norm{O_w}_\infty \leq 1$. To show this, note
    $\norm{O_w}_\infty \leq \sqrt{\tr( O_w^2 )}$ and
    \begin{equation*}
        \tr\big( O_w^2 \big) = \tr(O_w^\dagger O_w) = \sum_{ij} \Abs{O_{ij}}^2 = \sum_{ij} \frac{4}{(R_i+R_j)^2} \cdot \Abs{w_i}^2 \cdot \Abs{w_j}^2 \leq \sum_{ij} \Abs{w_i}^2 \cdot \Abs{w_j}^2 = 1.
    \end{equation*}
    The inequality uses $R_i \geq 1$ for each $i$, so that $(R_i + R_j)^2 \geq 4$. Thus, $\norm{O_w}_\infty \leq 1$.

    Using the general observable case\footnote{In fact, we could use the PSD case here, but the non-PSD case is readily available without further justification, and loses only constant factors. For completeness, one way to see that $O_w$ is PSD is to note that $O_w$ is the Hadamard product of the matrix $\ketbra{w}$ and the matrix $A$ with $(i,j)$ entry $A_{ij} = 2/(R_i+R_j)$. The matrix $A$ is PSD, since it is the Gram matrix of the functions $\{f_i\}_{i \in [d]}$, with $f_i \in L^2(\R)$ given by $f_i(x) = e^{-R_i\Abs{x}}$:
    \begin{equation*}
        \langle f_i, f_j \rangle = \int_{-\infty}^{\infty} f_i(x)^* f_j(x) dx = \int_{-\infty}^{\infty} e^{-(R_i+R_j) \Abs{x}} dx = \frac{2}{R_i+R_j}.
    \end{equation*}
    Since $O_w$ is a Hadamard product of two PSD matrices, the Schur product theorem tells us it itself is PSD.} of \Cref{prop:mixed_states_observables_subgamma}, we have that $\bX$ is $(v,c)$-sub-gamma, for
    \begin{equation*}
        v = \frac{4d \cdot \tr\big( O_w^2 \big) + 8n \cdot \tr \big( O_w^2 \cdot \rho \big)}{n^2}, \qquad c = \frac{2}{n}.
    \end{equation*}
    We know already that $\tr(O_w^2) \leq 1$.
    To simplify our expression for $v$, we also need
    \begin{align*}
        \tr\big( O_w^2 \cdot \rho\big) & = \tr\Big(\sum_{ijk} \frac{4}{(R_i+R_j) \cdot (R_j + R_k)} \cdot w_i \ketbra{i}{j} w_j^\dagger \cdot w_j \ketbra{j}{k} w_k^\dagger \cdot \alpha_k \ketbra{k} \Big) \\
        & = \sum_{ij} \frac{4 \alpha_i}{(R_i+R_j)^2} \cdot \Abs{w_i}^2 \cdot \Abs{w_j}^2.
    \end{align*}
    Since $(R_i +R_j)^2 \geq R_i^2  = 1 + \frac{n}{d} \cdot  \alpha_i \geq \frac{n}{d} \cdot \alpha_i$, we have
    \begin{equation*}
        \tr \big( O_w^2 \cdot \rho \big) \leq \sum_{ij} \frac{4d}{n} \cdot \Abs{w_i}^2 \cdot \Abs{w_j}^2 = \frac{4d}{n}.
    \end{equation*}
    Thus,
    \begin{equation*}
        v \leq \frac{4d \cdot 1 + 8n \cdot (4d/n)}{n^2} =  \frac{36d}{n^2} \eqcolon v'.
    \end{equation*}
    In particular, $\bX$ is $(v',c)$-sub-gamma.

    By the standard tail bound for sub-gamma random variables, \Cref{lem:subgamma_probability_bound}, we have for any $\delta > 0$,
    \begin{equation*}
        \Pr \Big[ \ABs{\bX - \E[\bX]} \geq \sqrt{2v' \delta} + c \delta \Big] \leq 2e^{-\delta}
    \end{equation*}
    If we take $\delta = C_1 \cdot d$, for some constant $C_1 > 0$, we have
    \begin{equation*}
        \Pr \Big[ \ABs{\bX-\E[\bX]} \geq \big(6\sqrt{2C_1} + 2C_1\big) \cdot \frac{d}{n} \Big] \leq 2e^{-C_1  d}.
    \end{equation*}
    In particular, it will be convenient to take $C_1 = 72$, so that $C_1 = 6\sqrt{2C_1}$, and we can simplify the probability bound to
    \begin{equation*}
        \Pr \Big[ \ABs{\bX-\E[\bX]} \geq 3C_1 \cdot \frac{d}{n} \Big] \leq 2e^{-C_1 d}.
    \end{equation*}
    Since $\bX -\E[\bX]= \tr( O_w \cdot (\widehat{\brho} - \rho))$, by \Cref{eq:inner_product_to_trace}, we have
    \begin{equation} \label{eq:prob_for_fixed_w}
        \Pr \Big[\ABs{\bra{w} \cdot \calL_R^{-1}(\widehat{\brho}-\rho) \cdot \ket{w}} \geq 3C_1 \cdot \frac{d}{n} \Big] \leq 2e^{-C_1 d}.
    \end{equation}
    This bound holds for any fixed pure state $\ket{w}$.

    We proceed by taking a net over pure states and using it to promote \Cref{eq:prob_for_fixed_w} to a uniform bound that holds with high probability. In particular, let $\calN$ be a net of mesh $\theta$ over the pure states. Identifying a pure state with a point on the real Euclidean sphere $\mathbb{S}^{2d-1}$, we have, for $\theta \in (0,1/2)$,
    \begin{equation*}
        \norm{\calL_R^{-1}(\widehat{\brho}-\rho)}_\infty \leq \frac{1}{1-2\theta} \cdot \max_{\ket{w} \in \calN} \ABs{\bra{w} \cdot \calL_R^{-1}(\widehat{\brho}-\rho) \cdot \ket{w}},
    \end{equation*}
    by \cite[Lemma 4.4.2]{Ver18}. Furthermore, \cite[Corollary 4.2.11]{Ver18} tells us that there exists such an $\calN$ with size at most $(1 + \frac{2}{\theta})^{2d}$. Taking $\theta = 1/4$ gives us a net of size $\Abs{\calN} \leq 9^{2d} \leq e^{5d}$, and sets $1/(1-2\theta) = 2$. Then, applying \Cref{eq:prob_for_fixed_w} for each $\ket{w} \in \calN$, a union bound gives us
    \begin{equation*}
        \Pr \Big[ \norm{\calL_R^{-1}(\widehat{\brho}-\rho)}_\infty \geq  6C_1 \cdot \frac{d}{n} \Big] \leq \sum_{\ket{w} \in \calN} \Pr \Big[\ABs{\bra{w}  \calL_R^{-1}(\widehat{\brho}-\rho) \ket{w}} \geq 3C_1 \cdot \frac{d}{n} \Big] \leq 2\Abs{\calN} \cdot e^{-C_1 d} \leq 2e^{(5-C_1) d} \leq 0.01.
    \end{equation*}
    Thus $\norm{\calL_R^{-1}(\widehat{\brho}-\rho)}_\infty \leq  6C_1 \cdot \frac{d}{n}$ with high probability. Substituting back into \Cref{eq:bound_without_infinity_norm_result} and taking $C \coloneq 6 C_1$ completes the proof.
    \end{proof}

    Before generalizing this bound to higher-rank observables, we state and prove two corollaries that are used in \Cref{sec:additional_tomography_results} and which illustrate consequences of the relative error bound. 

    First, we prove a bound that depends on $\bra{w} \widehat{\brho} \ket{w}$ instead of $\bra{w} \rho \ket{w}$. This is useful, for example, whenever we want to apply a relative-error-type bound to a vector $\ket{w}$ which lies in a subspace that depends on $\widehat{\brho}$, e.g.\ the span of certain eigenvectors of $\widehat{\brho}$.

\begin{corollary}[A relative error bound scaling with $\bra{w} \widehat{\brho}\ket{w}$]
    \label{cor:uniform-bound-pointwise_rho-hat-version}
    Let $\rho \in \C^{d \times d}$ be a mixed state, and let $\widehat{\brho}$ be the output of $\mixed(\gps)$, given $n$ copies of $\rho$. With probability at least $0.99$,
    \begin{equation*}
        \ABs{\bra{w} ( \widehat{\brho} - \rho )  \ket{w}} \leq C' \cdot \sqrt{ \frac{d}{n} \cdot \Big( \bra{w} \widehat{\brho} \ket{w} + \frac{d}{n} \Big) },
    \end{equation*}
    for all pure states $\ket{w}$, where $C'$ is a universal constant we may take to be at most $C \cdot \sqrt{C^2+2}$.
\end{corollary}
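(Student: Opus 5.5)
We work on the event of probability at least $0.99$ given by \Cref{thm:uniform-bound-pointwise}, and show that the claimed bound holds deterministically on it. No new probabilistic input is needed; the argument is purely algebraic and converts the dependence on $\bra{w}\rho\ket{w}$ into dependence on $\bra{w}\widehat{\brho}\ket{w}$.

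Fix a pure state $\ket{w}$ and write
\begin{equation*}
    a \coloneq \bra{w}\rho\ket{w}, \qquad \widehat a \coloneq \bra{w}\widehat{\brho}\ket{w}, \qquad \Delta \coloneq \Abs{\widehat a - a}, \qquad \eta \coloneq d/n.
\end{equation*}
On the event, $\Delta \leq C\sqrt{\eta(a+\eta)}$. The first step is to bound $a$ in terms of $\widehat a$:
\begin{equation*}
    a \leq \widehat a + \Delta \leq \widehat a + C\sqrt{\eta(a+\eta)}.
\end{equation*}
By AM--GM,
\begin{equation*}
    C\sqrt{\eta(a+\eta)} = \sqrt{C^2\eta \cdot (a+\eta)} \leq \tfrac12\big(C^2\eta + a + \eta\big).
\end{equation*}
Substituting and rearranging gives
\begin{equation*}
    a \leq 2\widehat a + (C^2+1)\eta, \qquad\text{so}\qquad a + \eta \leq 2\widehat a + (C^2+2)\eta.
\end{equation*}
This is the same manipulation as in the technical overview.

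Plugging this back into the original bound yields
\begin{equation*}
    \Delta \leq C\sqrt{\eta\big(2\widehat a + (C^2+2)\eta\big)} \leq C\sqrt{C^2+2}\cdot\sqrt{\eta(\widehat a + \eta)}.
\end{equation*}
The last step uses $2 \leq C^2+2$, which holds since $C \geq 1$ by \Cref{rem:C_lower_bound}. This gives exactly the constant $C' = C\sqrt{C^2+2}$.

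The only point needing care is that $\widehat a$ may be negative, since $\widehat{\brho}$ need not be PSD. The chain above never divides by $\widehat a$ or takes its square root alone. Moreover, the inequality $a+\eta \leq 2\widehat a + (C^2+2)\eta$ together with $a \geq 0$ guarantees that the quantity under the final square root is nonnegative. I do not expect any genuine obstacle; the whole argument is a deterministic consequence of \Cref{thm:uniform-bound-pointwise} holding simultaneously for all $\ket{w}$.
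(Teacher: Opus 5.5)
Your argument is line for line the paper's proof, and it inherits the same flaw in the final step. The inequality $C\sqrt{\eta(2\widehat a+(C^2+2)\eta)}\le C\sqrt{C^2+2}\,\sqrt{\eta(\widehat a+\eta)}$ is equivalent to $2\widehat a\le (C^2+2)\widehat a$, i.e.\ to $C^2\widehat a\ge 0$. So what it needs is $\widehat a\ge 0$, not $2\le C^2+2$ (which holds for every $C$). You noticed that $\widehat a$ can be negative, but you only checked that the square roots are well defined; for $\widehat a<0$ the comparison reverses. Separately, $\widehat a+\eta\ge 0$ actually comes from $\widehat{\brho}\succeq -\frac{r}{n}I_d$; the inequality you cite only gives $\widehat a\ge -\frac{C^2+1}{2}\eta$.

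More care will not fix this, because the statement itself is false when $\widehat a$ is close to $-d/n$. Take $\rho$ of rank $r=d-1$ and run $\mixed(\gps)$ with $\purifychan^{(d,r)}$, as the paper specifies. Then $\ket{\bv}\in\C^d\otimes\C^{d-1}$, so $\tr_{\reg{2}}(\ketbra{\bv})$ has a unit kernel vector $\ket{w}$. Since $\widehat{\brho}=\frac{d(d-1)+n}{n}\tr_{\reg{2}}(\ketbra{\bv})-\frac{d-1}{n}I_d$, this gives $\bra{w}\widehat{\brho}\ket{w}=-\frac{d-1}{n}$. The left-hand side is then at least $\frac{d-1}{n}$, whereas the right-hand side equals $C'\sqrt{d}/n$. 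So the claim fails on every outcome, for every $n$, once $d-1>C'\sqrt{d}$. With full-rank purification the same happens with high probability when $n=o(d^2)$, because then $\lambda_{\min}(\widehat{\brho})=-(1-o(1))\,d/n$.

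What your chain does prove verbatim is the version with $\widehat a_+\coloneq\max\{\widehat a,0\}$ in place of $\widehat a$. From $a\le 2\widehat a+(C^2+1)\eta\le 2\widehat a_+ +(C^2+1)\eta$ one gets $a+\eta\le (C^2+2)(\widehat a_+ +\eta)$, hence $\Delta\le C\sqrt{C^2+2}\,\sqrt{\eta(\widehat a_+ +\eta)}$. This corrected form suffices for the later uses in \Cref{lem:Bures_bound_technical_lemma} and the $\chi^2$ proposition. There one only needs $\max\{\bra{w}\widehat{\brho}\ket{w},0\}\le\max\{\widehat{\balpha}_m,0\}$, which is still at most $\widetilde{\balpha}_m$ (and at most $C\,d/n$ when $m=r+1$).
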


\begin{proof}
    We condition on the relative error bound holding. In this case, we have
    \begin{align*}
        \bra{w} \rho \ket{w} & \leq \bra{w} \widehat{\brho} \ket{w} + \ABs{\bra{w} (\widehat{\brho} - \rho) \ket{w}} \\
        & \leq \bra{w} \widehat{\brho} \ket{w} + C \cdot \sqrt{ \frac{d}{n} \cdot \Big( \bra{w} \rho \ket{w} + \frac{d}{n} \Big)} \\
        & \leq \bra{w} \widehat{\brho} \ket{w} + \bigg( \frac{C^2}{2} \cdot \frac{d}{n} + \frac{1}{2} \cdot \Big(\bra{w} \rho \ket{w} + \frac{d}{n} \Big) \bigg).
    \end{align*}
    The last step is AM-GM. Rearranging this inequality yields
    \begin{equation*}
        \bra{w} \rho \ket{w} \leq 2 \cdot \bra{w} \widehat{\brho} \ket{w} +  (C^2+1) \cdot \frac{d}{n}.
    \end{equation*}
    Thus,
    \begin{align*}
        \ABs{\bra{w} ( \widehat{\brho} - \rho )  \ket{w}} & \leq C \cdot \sqrt{ \frac{d}{n} \cdot \Big( \bra{w} {\rho} \ket{w} + \frac{d}{n} \Big) } \\
        & \leq C \cdot \sqrt{ \frac{d}{n} \cdot \Big( 2 \cdot \bra{w} \widehat{\brho} \ket{w} + (C^2+2) \cdot \frac{d}{n} \Big) } \\
        & \leq C' \cdot \sqrt{ \frac{d}{n} \cdot \Big( \bra{w} \widehat{\brho} \ket{w} +\frac{d}{n} \Big)}.
    \end{align*}
    This completes the proof.
\end{proof}

    Next, we prove a statement about differences between the eigenvalues of $\rho$ and $\widehat{\brho}$. We need only a one-sided inequality, but similar ideas yield two-sided versions.

   \begin{corollary}[A one-sided relative error bound for eigenvalues]
    \label{cor:eigenvalue_differences}
    Let $\rho \in \C^{d \times d}$ be a mixed state, and let $\widehat{\brho}$ be the output of $\mixed(\gps)$, given $n$ copies of $\rho$. With probability at least $0.99$, for all $i \in [d]$,
    \begin{equation*}
        \lambda_i(\widehat{\brho}) \leq \lambda_i(\rho)  +  C \cdot \sqrt{ \frac{d}{n} \cdot \Big( \lambda_i(\rho) + \frac{d}{n} \Big)}.
    \end{equation*}
\end{corollary}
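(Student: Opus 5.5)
We condition on the event of \Cref{thm:uniform-bound-pointwise}, which holds with probability at least $0.99$. On this event the bound holds simultaneously for all pure states $\ket{w}$, and we derive the eigenvalue statement deterministically from it via the Courant--Fischer min-max principle.

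Fix $i \in [d]$ and let $\ket{v_1}, \dots, \ket{v_d}$ be eigenvectors of $\rho$ with eigenvalues $\lambda_1(\rho) \geq \dots \geq \lambda_d(\rho)$. Let $V$ be the span of $\ket{v_i}, \dots, \ket{v_d}$, a subspace of dimension $d-i+1$. By Courant--Fischer,
\begin{equation*}
    \lambda_i(\widehat{\brho}) = \min_{\dim W = d-i+1} \; \max_{\ket{w} \in W, \norm{\ket{w}}=1} \bra{w}\widehat{\brho}\ket{w} \leq \max_{\ket{w} \in V, \norm{\ket{w}}=1} \bra{w}\widehat{\brho}\ket{w}.
\end{equation*}
For any unit $\ket{w} \in V$ we have $\bra{w}\rho\ket{w} \leq \lambda_i(\rho)$.

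The relative error bound then gives
\begin{equation*}
    \bra{w}\widehat{\brho}\ket{w} \leq \bra{w}\rho\ket{w} + C\sqrt{\tfrac{d}{n}\big(\bra{w}\rho\ket{w} + \tfrac{d}{n}\big)}.
\end{equation*}
The map $x \mapsto x + C\sqrt{\tfrac{d}{n}(x + \tfrac{d}{n})}$ is nondecreasing in $x \geq 0$, so we may substitute $x = \bra{w}\rho\ket{w} \leq \lambda_i(\rho)$. This yields $\bra{w}\widehat{\brho}\ket{w} \leq \lambda_i(\rho) + C\sqrt{\tfrac{d}{n}(\lambda_i(\rho) + \tfrac{d}{n})}$ for every unit $\ket{w} \in V$. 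Taking the maximum over $\ket{w}$ and combining with the Courant--Fischer inequality proves the claim for this $i$.

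Since the conditioning event does not depend on $i$, the bound holds for all $i$ simultaneously. The only real subtlety is choosing the correct min-max direction: we use the subspace spanned by the bottom $d-i+1$ eigenvectors of $\rho$, so that $\bra{w}\rho\ket{w}$ is bounded above by $\lambda_i(\rho)$. This is why only the one-sided inequality follows so directly.
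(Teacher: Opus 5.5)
Your proposal is correct and matches the paper's proof: condition on the relative error bound, restrict to the span of the bottom $d-i+1$ eigenvectors of $\rho$ so that $\bra{w}\rho\ket{w} \leq \lambda_i(\rho)$, and conclude via Courant--Fischer. Your explicit note that $x \mapsto x + C\sqrt{\tfrac{d}{n}(x+\tfrac{d}{n})}$ is nondecreasing fills in a step the paper leaves implicit.
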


\begin{proof}
    We condition on the relative error bound holding. Let $\{\ket{v_i}\}_i$ be the eigenvectors of $\rho$, ordered by decreasing eigenvalues. Fix an $i$, and suppose $\ket{v} \in V_{\geq i} \coloneq \mathrm{span}( \ket{v_i}, \dots, \ket{v_d} )$. By the relative error bound, we have
    \begin{equation*}
        \bra{v} \widehat{\brho} \ket{v}  \leq \bra{v} \rho \ket{v} + C \sqrt{ \frac{d}{n} \cdot \Big( \bra{v} \rho \ket{v} + \frac{d}{n} \Big) } \leq \lambda_i(\rho) + C \sqrt{ \frac{d}{n} \cdot \Big( \lambda_i(\rho) + \frac{d}{n} \Big) }.
    \end{equation*}
    This holds for all such $\ket{v}$, so by the Courant--Fischer min-max principle,
    \begin{equation*}
        \lambda_i(\widehat{\brho}) = \min_{V} \max_{\ket{v} \in V} \big[ \bra{v} \widehat{\brho} \ket{v} \big] \leq \max_{\ket{v} \in V_{\geq i}} \big[ \bra{v} \widehat{\brho} \ket{v}\big] \leq \lambda_i(\rho) + C \sqrt{ \frac{d}{n} \cdot \Big( \lambda_i(\rho) + \frac{d}{n} \Big) },
    \end{equation*}
    where the minimization is over subspaces $V$ with $\dim(V) = d-i+1$. This establishes the desired inequality.
\end{proof}

    \subsection{Relative error bounds for higher-rank observables} \label{sec:relative_error_bounds_higher_rank}

    In the previous subsection, we stated and proved \Cref{thm:main-tool}, our main tool for bucketing. Here we extend the bound to higher-rank observables. This higher-rank version is a straightforward consequence of the rank-one version.

\begin{theorem}[A generalization of the relative error bound to higher-rank observables] \label{thm:uniform-statement-for-observables}
    Let $\rho \in \C^{d \times d}$ be a mixed state, and let $\widehat{\brho}$ be the output of $\mixed(\gps)$, given $n$ copies of $\rho$. Then
    \begin{equation*}
        \Abs[\Big]{\tr( O \cdot (\widehat{\brho} - \rho) )} \leq 2C \cdot \sqrt{ \frac{d}{n} \cdot \mathrm{rank}(O) \cdot \Big( \tr(O^2 \cdot \rho) + \frac{d}{n} \cdot \tr(O^2) \Big) },
    \end{equation*}
    with probability at least $0.99$, for all Hermitian operators $O$ simultaneously.
\end{theorem}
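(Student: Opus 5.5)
We derive this directly from the rank-one bound, \Cref{thm:uniform-bound-pointwise}, conditioning once on its probability-$0.99$ event. On that event, the inequality holds for every pure state $\ket{w}$ simultaneously, so no further union bound is needed and the conclusion will hold for all Hermitian $O$ at once.

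Fix a Hermitian $O$ of rank $k$ and write its spectral decomposition $O = \sum_{j=1}^k o_j \ketbra{w_j}$ with orthonormal $\ket{w_j}$ and real $o_j$. By linearity,
\begin{equation*}
\tr\big(O\cdot(\widehat{\brho}-\rho)\big) = \sum_{j=1}^k o_j \cdot \bra{w_j}(\widehat{\brho}-\rho)\ket{w_j}.
\end{equation*}
Applying the rank-one bound to each term gives
\begin{equation*}
\Abs{\tr\big(O\cdot(\widehat{\brho}-\rho)\big)} \leq C\sum_{j=1}^k \Abs{o_j}\cdot\sqrt{\frac{d}{n}\Big(\bra{w_j}\rho\ket{w_j}+\frac{d}{n}\Big)}.
\end{equation*}
Cauchy--Schwarz over the $k$ terms then bounds this by
\begin{equation*}
C\sqrt{k\cdot\frac{d}{n}\sum_{j} o_j^2\Big(\bra{w_j}\rho\ket{w_j}+\frac{d}{n}\Big)}.
\end{equation*}

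It remains to identify the sum. We have $\sum_j o_j^2 = \tr(O^2)$ and $\sum_j o_j^2\bra{w_j}\rho\ket{w_j} = \tr(O^2\rho)$, since $O^2=\sum_j o_j^2\ketbra{w_j}$. This gives the claimed bound, even with constant $C$ in place of $2C$. The slack factor $2$ could alternatively absorb splitting $O=O_+-O_-$ and handling each part separately, but Cauchy--Schwarz over the whole eigenbasis already suffices.

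There is no real obstacle. The only point to be careful about is that the rank-one guarantee is uniform over all pure states, which the eigenvectors of an arbitrary (possibly data-dependent) $O$ require, and \Cref{thm:uniform-bound-pointwise} provides exactly that uniformity.
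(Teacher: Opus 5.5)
Your proof is correct and follows the paper's argument: condition on the rank-one relative error event, expand $O$ in its eigenbasis, apply the pointwise bound to each term, and finish with Cauchy--Schwarz. The only difference is that the paper first splits $\sqrt{a+b}\le\sqrt{a}+\sqrt{b}$, applies Cauchy--Schwarz to each piece, and then recombines using $\sqrt{a}+\sqrt{b}\le 2\sqrt{a+b}$, which is where its factor $2$ comes from; your single Cauchy--Schwarz step skips this and gives the sharper constant $C$.
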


Restricting to $O = \ketbra{w}$ recovers \Cref{thm:uniform-bound-pointwise} up to a constant factor. In this sense, the two theorems are equivalent.

\begin{proof}[Proof of \Cref{thm:uniform-statement-for-observables}]
    Condition on the event that the relative error bound holds. That is, for all $\ket{w}$, we have
    \begin{equation*}
        \ABs{\bra{w} ( \widehat{\brho} - \rho )  \ket{w}} \leq C \cdot \sqrt{ \frac{d}{n} \cdot \Big( \bra{w} \rho \ket{w} + \frac{d}{n} \Big) }.
    \end{equation*}
    Now let $O$ be any observable, with spectral decomposition $O = \sum_{j} o_j \cdot \ketbra{w_j}$, where $j \in [\mathrm{rank}(O)]$. We have
    \begin{align}
         \Abs[\Big]{\tr( O \cdot (\widehat{\brho} - \rho) )} & = \Abs[\bigg]{\sum_{j} o_j \cdot \bra{w_j} (\widehat{\brho} - \rho) \ket{w_j}} \nonumber \\
         & \leq \sum_{j} \Abs{o_j} \cdot \ABs{\bra{w_j}(\widehat{\brho}-\rho) \ket{w_j}} \nonumber \\
        & \leq C \cdot \sum_{j} \Abs{o_j} \cdot \sqrt{ \frac{d}{n} \cdot \Big( \bra{w_j} \rho \ket{w_j} + \frac{d}{n} \Big) } \nonumber \\
        & \leq C \cdot \sum_{j} \Abs{o_j} \cdot \Bigg( \sqrt{ \frac{d}{n} \cdot  \bra{w_j} \rho \ket{w_j}} + \frac{d}{n} \Bigg). \nonumber
    \end{align}
    In the last step we have used $\sqrt{a+b} \leq \sqrt{a} + \sqrt{b}$. We now use Cauchy--Schwarz:
    \begin{align*}
        \Abs[\Big]{\tr( O \cdot (\widehat{\brho} - \rho) )} & \leq C \cdot  \Bigg( \sqrt{\frac{d}{n} \cdot \mathrm{rank}(O) \cdot \sum_j \Abs{o_j}^2 \cdot \bra{w_j}\rho \ket{w_j} } + \frac{d}{n} \cdot \sqrt{\mathrm{rank}(O) \cdot \sum_j \Abs{o_j}^2 }\Bigg) \\
        & = C \cdot \Bigg( \sqrt{\frac{d}{n} \cdot \mathrm{rank}(O) \cdot \tr(O^2 \cdot \rho) } + \frac{d}{n} \cdot \sqrt{\mathrm{rank}(O) \cdot \tr(O^2) }\Bigg).
    \end{align*}
    Finally, we can use $\sqrt{a} + \sqrt{b} \leq 2 \sqrt{a+b}$ to compress this into a slightly weaker bound
    \begin{equation*}
        \Abs[\Big]{\tr( O \cdot (\widehat{\brho} - \rho) )} \leq 2C \cdot \sqrt{ \frac{d}{n} \cdot \mathrm{rank}(O) \cdot \Big( \tr(O^2 \cdot \rho) + \frac{d}{n} \cdot \tr(O^2) \Big) }.\qedhere
    \end{equation*}
\end{proof}

We also give a $1$-norm bound for projectors, which follows as an easy corollary from the theorem above.

\begin{corollary}
    [A $1$-norm relative error bound for projectors] \label{cor:1-norm-bound-projectors}
    Let $\rho \in \C^{d \times d}$ be a mixed state, and let $\widehat{\brho}$ be the output of $\mixed(\gps)$, given $n$ copies of $\rho$. Then
    \begin{equation*}
        \big \| \Pi \cdot (\widehat{\brho} - \rho) \cdot \Pi \big \|_1 \leq 4C \cdot \sqrt{ \frac{d}{n} \cdot \mathrm{rank}(\Pi) \cdot \Big( \tr(\Pi \cdot \rho) + \frac{d}{n} \cdot \mathrm{rank}(\Pi) \Big) },
    \end{equation*}
    with probability at least $0.99$, for all projectors $\Pi$ simultaneously.
\end{corollary}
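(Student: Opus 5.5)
We condition on the event of \Cref{thm:uniform-statement-for-observables}, which holds with probability at least $0.99$ simultaneously for all Hermitian $O$. The whole argument then happens on this event and is deterministic.

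Fix a projector $\Pi$ of rank $k$. The matrix $M \coloneq \Pi (\widehat{\brho} - \rho) \Pi$ is Hermitian and supported inside the range of $\Pi$. We use the variational characterization of the trace norm. Write $M = M_+ - M_-$, and let $P_\pm \le \Pi$ be the projectors onto the positive and negative eigenspaces of $M$. Set $S \coloneq P_+ - P_-$, so that
\begin{equation*}
    \|M\|_1 = \tr(S M) = \tr\big(\Pi S \Pi\,(\widehat{\brho}-\rho)\big) = \tr\big(S(\widehat{\brho}-\rho)\big),
\end{equation*}
where the last step uses $S = \Pi S \Pi$.

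Applying the theorem to $O = S$ requires three facts about $S$:
\begin{itemize}
    \item $\mathrm{rank}(S) \le k$;
    \item $S^2 = P_+ + P_- \le \Pi$, so $\tr(S^2) \le k$;
    \item $\tr(S^2 \rho) \le \tr(\Pi \rho)$, since $\rho \succeq 0$.
\end{itemize}
Substituting these gives
\begin{equation*}
    \|M\|_1 \le 2C \sqrt{\frac{d}{n} \cdot k \cdot \Big( \tr(\Pi\rho) + \frac{d}{n} \cdot k \Big)}.
\end{equation*}
This is already within the stated $4C$ bound. The slack of a factor of $2$ would absorb a cruder alternative: splitting into $\tr(P_+ M) - \tr(P_- M)$ and applying the theorem to each of $P_+$ and $P_-$ separately.

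There is no real obstacle. The only point needing care is that $S$ depends on $\widehat{\brho}$, so the bound must hold for all observables simultaneously. This is exactly what \Cref{thm:uniform-statement-for-observables} provides, since its failure event does not depend on $O$.
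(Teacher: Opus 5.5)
Your proof is correct and is essentially the paper's argument. The paper splits $\|\Pi(\widehat{\brho}-\rho)\Pi\|_1$ into $|\tr(\Pi_+(\widehat{\brho}-\rho))| + |\tr(\Pi_-(\widehat{\brho}-\rho))|$ and applies \Cref{thm:uniform-statement-for-observables} to each of $\Pi_\pm$ separately, which is exactly the ``cruder alternative'' you mention and is where its constant $4C$ comes from; your single signed observable $S = P_+ - P_-$, with $S^2 = P_+ + P_- \preceq \Pi$, applies the theorem once and validly gives the sharper constant $2C$.
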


\begin{proof}
    We condition on the relative error bound for higher-rank observables holding, and note
    \begin{equation} \label{eq:1-norm-bound}
        \big \| \Pi \cdot (\widehat{\brho} - \rho) \cdot \Pi \big \|_1 = \Abs[\Big]{\tr\big( \Pi_+ \cdot (\widehat{\brho} - \rho) \big)} + \Abs[\Big]{\tr\big( \Pi_- \cdot (\widehat{\brho} - \rho) \big)},
    \end{equation}
    where $\Pi_+$ and $\Pi_-$ are the projectors onto the positive and negative eigenspaces of $\Pi \cdot (\widehat{\brho} - \rho) \cdot \Pi$. Applying \Cref{thm:uniform-statement-for-observables} with $O \leftarrow \Pi_{\pm}$, we get
    \begin{align*}
        \Abs[\Big]{\tr( O \cdot (\widehat{\brho} - \rho) )} & \leq 2C \cdot \sqrt{ \frac{d}{n} \cdot \mathrm{rank}(\Pi_{\pm}) \cdot \Big( \tr(\Pi_{\pm} \cdot \rho) + \frac{d}{n} \cdot \mathrm{rank}(\Pi_{\pm}) \Big) } \\
        & \leq 2C \cdot \sqrt{ \frac{d}{n} \cdot \mathrm{rank}(\Pi) \cdot \Big( \tr(\Pi \cdot \rho) + \frac{d}{n} \cdot \mathrm{rank}(\Pi) \Big) }.
    \end{align*}
    Plugging these bounds into \Cref{eq:1-norm-bound} completes the proof.
\end{proof}

\section{Spectrum learning beyond the Keyl--Werner algorithm} \label{sec:spectrum_learning}
\newcommand{\Dtv}{\mathrm{d}_{\mathrm{TV}}}

In this section, we prove \Cref{thm:main}
by giving an algorithm that improves upon the sample complexity of the Keyl--Werner algorithm in the regime where the precision $\epsilon$ is not too small.
The general framework of this algorithm comes from prior work of \cite{PTTW25}, which performs spectrum estimation in the unentangled measurement setting.
This prior work, in turn, adapts a framework from the classical literature on sorted distribution estimation~\cite{HJW18}.

We sketch the strategy here; see~\cite{PTTW25} for a more detailed discussion. First, a bucketing step splits the input state $\rho$ into two buckets: roughly speaking, one contains the large eigenvalues of $\rho$, and the other contains the small eigenvalues. We do this by generating an estimate $\widehat{\brho}$ and letting $\bPi$ be the projector onto the eigenvectors corresponding to eigenvalues larger than a threshold $B \in (0,1)$, to be determined later. With high constant probability, when using $n = O( d/B\epsilon^2)$ samples, $\bPi$ is aligned sufficiently well with the large eigenspace of $\rho$ so that $\overline{\bPi} \rho \overline{\bPi}$ contains only small eigenvalues: eigenvalues at most $1.1 B$, say. This number of samples also estimates all large eigenvalues well.

It then suffices to learn the small eigenvalues. These are estimated by measuring a further $n$ fresh copies of $\rho$ with $\{\bPi, \overline{\bPi}\}$ and applying local moment matching to the copies of $\overline{\bPi} \rho \overline{\bPi}$. We can estimate the first $K$ moments of the small eigenvalues when $B = \Theta(\epsilon^2 K^2/d)$ and use them to estimate the small eigenvalues. Setting $K = O( \log d / \log \log d)$ yields a good estimate. Unraveling our parameter settings gives a final sample complexity of
\begin{equation*}
    n = O \bigg( \frac{d}{B\epsilon^2} \bigg) = O \bigg( \frac{d^2}{K^2 \epsilon^4} \bigg) = O \bigg( \frac{d^2 \cdot (\log \log d)^2 }{\epsilon^4 \cdot (\log d)^2}  \bigg).
\end{equation*}
See \Cref{fig:spectrum-estimation-algorithm} for the algorithm, and \Cref{thm:main-thm} for a formal theorem statement.

{
\floatstyle{boxed}
\restylefloat{figure}
\begin{figure}[H]

Let $B \in (0,1)$ be the bucketing threshold, and $K \in \N$ the number of moments we estimate. The values of these parameters will be determined later. Given $2n$ copies of $\rho \in \C^{d \times d}$:
\begin{enumerate}
    \item Use the first $n$ copies to run $\mixed(\gps)$-based \textbf{bucketing} (\Cref{def:bucketing}) with threshold $B$. The outcome is a projector $\bPi$ of rank~$\br$ and the estimator~$\widehat{\balpha}_{\mathrm{Large}} = (\widehat{\balpha}_1, \dots, \widehat{\balpha}_{\br})$.
    \item Use the remaining $n$ samples to perform \textbf{moment estimation} (\Cref{def:moment-estimator-sub-norm}) and obtain estimates $\widehat{\bp}_1, \dots, \widehat{\bp}_K$ for the moments $\tr(\bsigma), \dots, \tr(\bsigma^K)$ of the state $\bsigma = \overline{\bPi} \rho \overline{\bPi}$.
    \item Apply \textbf{local moment matching} (\Cref{thm:smallestbucketLMM}) to $\widehat{\bp}_1, \dots, \widehat{\bp}_K$ to produce~$\widehat{\balpha}_{\mathrm{Small}} = (\widehat{\balpha}_{\br+1}, \dots, \widehat{\balpha}_{d})$.
\end{enumerate}
The final spectrum estimate is $\widehat{\balpha} = \widehat{\balpha}_{\mathrm{Large}} \| \widehat{\balpha}_{\mathrm{Small}} = (\widehat{\balpha}_1, \dots, \widehat{\balpha}_d)$.
\caption{Our spectrum estimation algorithm.}
\label{fig:spectrum-estimation-algorithm}
\end{figure}
}


\begin{theorem}[\Cref{thm:main} restated]
    \label{thm:main-thm}
    Let $\widehat{\balpha}$ be the output of the spectrum estimation algorithm from~\Cref{fig:spectrum-estimation-algorithm} when run on $2n$ copies of the mixed state $\rho \in \C^{d \times d}$. Then, with high probability, $\dtv{\alpha}{\widehat{\balpha}} \leq \epsilon$ so long as
    \begin{equation*}
        n = O\Big(\frac{d^2 \cdot (\log \log d)^2}{\epsilon^4 \cdot (\log d)^2}\Big).
    \end{equation*}
\end{theorem}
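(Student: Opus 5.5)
The proof will follow the three steps of \Cref{fig:spectrum-estimation-algorithm}, and we will bound the total variation error as a sum of three contributions: error on the large bucket, error on the small bucket coming from moment matching, and alignment error from the disturbance caused by measuring $\{\bPi,\overline{\bPi}\}$. We condition on the $0.99$-probability event of \Cref{thm:uniform-bound-pointwise} (together with \Cref{cor:uniform-bound-pointwise_rho-hat-version}, \Cref{cor:eigenvalue_differences} and \Cref{cor:1-norm-bound-projectors}) for the first $n$ copies. On this event, taking $n \geq C'' d/(B\epsilon^2)$, the argument of \Cref{sec:new-bounds-overview} gives, for every $\ket{w}$ in the range of $\overline{\bPi}$, the bound $\bra{w}\rho\ket{w} \leq 2B + (C^2+1)d/n \leq 1.1B\cdot O(1)$. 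Hence $\|\overline{\bPi}\rho\overline{\bPi}\|_\infty = O(B)$, so no misclassification occurs in the small bucket. Also, since $\tr\rho = 1$, every eigenvalue of $\widehat{\brho}$ above $B$ forces rank $\br \leq O(1/B)$.

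\textbf{Large bucket.} We output the eigenvalues of $\bPi\widehat{\brho}\bPi$ and compare them with the spectrum of $\bPi\rho\bPi$. By \Cref{cor:1-norm-bound-projectors} with $\Pi = \bPi$, and using $\tr(\bPi\rho) \leq 1$, the error is
\[
\|\bPi(\widehat{\brho}-\rho)\bPi\|_1 \leq 4C\sqrt{(d/n)\,\br\,(1 + d\br/n)} = O\bigl(\sqrt{d/(nB)}\bigr) = O(\epsilon).
\]
Hoffman--Wielandt/Lidskii then converts this into $\ell_1$ error on the sorted spectra.

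\textbf{Alignment.} We need the spectrum of $\bPi\rho\bPi \oplus \overline{\bPi}\rho\overline{\bPi}$ to be $O(\epsilon)$-close in $\ell_1$ to $\alpha$. The plan is to bound the pinching error by $O(\|\bPi\rho\overline{\bPi}\|_1)$. Writing $\rho = (\bPi+\overline{\bPi})\rho(\bPi+\overline{\bPi})$, the off-diagonal blocks have trace norm at most $\sqrt{\br}\,\|\bPi\rho\overline{\bPi}\|_2$. We then control $\|\bPi\rho\overline{\bPi}\|_2$ using $\overline{\bPi}\widehat{\brho}\bPi = 0$, so that $\bPi\rho\overline{\bPi} = \bPi(\rho-\widehat{\brho})\overline{\bPi}$, together with the $\calL_R^{-1}$ operator-norm bound from the proof of \Cref{thm:uniform-bound-pointwise}. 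I expect this step to be the main obstacle, since it needs an off-diagonal version of the relative error bound. I would try
\[
\|\bPi(\rho-\widehat{\brho})\overline{\bPi}\|_1 \leq \|\calL_R^{-1}(\widehat{\brho}-\rho)\|_\infty \cdot \|R\bPi\|_2\|\overline{\bPi}\|_2\text{-type estimates},
\]
or else follow the alignment analysis of \cite{PTTW25} with their $\ell_\infty$ bound replaced by our relative bound. This should give error $O(\sqrt{d/(nB)}) = O(\epsilon)$.

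\textbf{Small bucket.} On the fresh $n$ copies, measuring $\{\bPi,\overline{\bPi}\}$ yields a $\mathrm{Binom}(n, \tr\bsigma)$ number of copies of $\bsigma/\tr\bsigma$. We apply the estimator of \Cref{def:moment-est-alg} with $p^{\#}$ unnormalized, as in \Cref{def:moment-estimator-sub-norm}. Combining \Cref{lem:moment-estimation-norm} with the binomial lemmas \Cref{lem:binom-facts-downarrow,lem:binom-facts-var-downarrow} gives unbiased estimates of $\tr(\bsigma^k)$, $k \leq K$. Using $\tr(\bsigma^{2k-1}) \leq (O(B))^{2k-2}$, the standard deviations are at most roughly
\[
k^{3k}\bigl(n^{-k/2} + n^{-1/2}(O(B))^{k-1}\bigr).
\]
After the local moment matching normalization (moments scaled by $B^{-k}$ relative to $d$), these are small provided $n \gtrsim d/(B\epsilon^2)$ and $K = O(\log d/\log\log d)$, so that $K^{O(K)} \leq d^{o(1)}$. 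Finally, \Cref{thm:smallestbucketLMM} with $B = \Theta(\epsilon^2 K^2/d)$ turns these into $\ell_1$ error $O(\epsilon)$ on the small eigenvalues, the polynomial approximation error being $O(\sqrt{Bd}/K) = O(\epsilon)$.

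Summing the three errors and taking a union bound over the failure probabilities gives $\dtv{\alpha}{\widehat{\balpha}} \leq \epsilon$ after rescaling constants. Substituting $n = O(d/(B\epsilon^2))$ with the chosen $B$ and $K$ yields the claimed sample complexity.
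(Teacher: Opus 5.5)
\textbf{Gap: the alignment step.} Your misclassification bound, your large-bucket comparison (via \Cref{cor:1-norm-bound-projectors} and Lidskii), and your moment-estimation and local-moment-matching step are fine, and match the paper in substance. The gap is the alignment step. You leave it open, and the concrete estimate you propose for it is too weak. Because $R\succeq I$, a bound of the form $\|\calL_R^{-1}(\widehat{\brho}-\rho)\|_\infty\cdot\|R\bPi\|_2\,\|\overline{\bPi}\|_2$ pays for all $d-\br$ directions of $\overline{\bPi}$. It is therefore of order $(d/n)\cdot\sqrt{n/d}\cdot\sqrt{d}=d/\sqrt{n}=\Theta(K\epsilon^2)$. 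This is $\omega(\epsilon)$ once $\epsilon\gg 1/K$, which is exactly the regime where the theorem improves on Keyl--Werner; for constant $\epsilon$ it diverges. Whatever off-diagonal estimate you use must pay only for the $\br$ directions of $\bPi$.

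A smaller point: the claim $\br\le O(1/B)$ does not follow from traces alone. You first need the relative error bound to turn $\bra{\bv_i}\widehat{\brho}\ket{\bv_i}\ge B$ into $\bra{\bv_i}\rho\ket{\bv_i}\ge B/1.1$, and only then use $\tr\rho=1$. The trace of $\widehat{\brho}$ gives nothing here, since $\widehat{\brho}$ is not PSD and can carry negative mass of order $d^2/n$.

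\textbf{How the paper handles alignment.} The paper avoids off-diagonal estimates altogether. Let $\beta_1\ge\dots\ge\beta_{\br}$ be the eigenvalues of $\bPi\rho\bPi$ and $\beta_{\br+1}\ge\dots\ge\beta_d$ those of $\overline{\bPi}\rho\overline{\bPi}$. Cauchy interlacing gives $\alpha_i\ge\beta_i$ for $i\le\br$ and $\beta_i\ge\alpha_i$ for $i>\br$. Combined with $\tr(\bPi\rho\bPi)+\tr(\overline{\bPi}\rho\overline{\bPi})=1$, this bounds the alignment error by $\sum_{i\le\br}(\alpha_i-\beta_i)=2\,\mathrm{d}_{\mathrm{TV}}(\spec(\bPi\rho\bPi)_{\le\br},\spec(\rho)_{\le\br})$. \Cref{lem:dtv_lemma_k} bounds this by $O(\sqrt{\br d/n}+\br d/n)=O(\epsilon)$, applying \Cref{cor:1-norm-bound-projectors} to the rank-$\le 2\br$ projector onto the top-$\br$ eigenvectors of both $\rho$ and $\widehat{\brho}$. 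So alignment reduces to the same kind of spectral comparison you already carried out for the large bucket.

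\textbf{Completing your pinching route.} Your route can also be finished with tools already in the paper; at this scale it gives the same $O(\sqrt{\br d/n})$ rate.
\begin{itemize}
\item Follow the proof of \Cref{prop:trace_distance_PCA}. Apply \Cref{thm:uniform-statement-for-observables} to the rank-$2$ observables $\bO_i=\bPi\ketbra{\bu_i}\overline{\bPi}+\overline{\bPi}\ketbra{\bu_i}\bPi$, where $\ket{\bu_i}$ ranges over the at most $2\br$ eigenvectors of $\bPi\rho\overline{\bPi}+\overline{\bPi}\rho\bPi$. Since $\tr(\bO_i\widehat{\brho})=0$, this gives $\|\bPi\rho\overline{\bPi}+\overline{\bPi}\rho\bPi\|_1=O(\sqrt{\br d/n}+\br d/n)$, and Lidskii finishes.
\item Alternatively, sum \Cref{lem:Bures_bound_technical_lemma} over $m\le\br$. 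Using $\tr(\bPi\widehat{\brho})\le 1+O(\epsilon)$, this gives $\|\overline{\bPi}(\rho-\widehat{\brho})\bPi\|_2^2=O(d/n)$, which is exactly the Frobenius bound your $\sqrt{\br}$ step needs.
\end{itemize}
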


Our analysis is broadly the same as in \cite{PTTW25}, and the only difference between our algorithm and theirs is that our bucketing and moment estimation are performed with different quantum algorithms.
Our central technical contribution is that we perform bucketing using our new tomography guarantee (\Cref{thm:main-tool}).
The analysis of this can be found in \Cref{subsec:bucket}.
For moment estimation, we use existing results~\cite{AISW20}, analyzed in \Cref{sec:moment-estimation}.
After quoting a local moment matching guarantee from \cite{PTTW25}, the proof is completed in \Cref{subsec:final}.

Note that, when $\eps \geq 1$, the guarantee of \Cref{thm:main-thm} is trivial.
Therefore, throughout this section, we assume that $\epsilon \in (0,1)$.


\subsection{Bucketing} \label{subsec:bucket}

In this section, we give an algorithm that performs the bucketing step in \Cref{fig:spectrum-estimation-algorithm}.
The relative error bound of~\Cref{thm:main-tool} for the entangled tomography algorithm $\mixed(\gps)$ will play a central role in~\Cref{lem:bucketing} when analyzing the performance of the bucketing algorithm.

\begin{definition}[$\mixed(\gps)$-based bucketing algorithm]
    \label{def:bucketing}
    Given a threshold $B \in [0,1]$ and $n$ copies of a mixed state $\rho \in \C^{d \times d}$, the \emph{$\mixed(\gps)$-based bucketing algorithm} executes the following steps:
    \begin{enumerate}
        \item Perform mixed state tomography using $\mixed(\gps)$ on $\rho^{\otimes n}$ and obtain an estimate of the state $\widehat{\brho}$.
        \begin{enumerate}
            \item[(a)] Let $\bPi$ be the projector onto the eigenvectors of $\widehat{\brho}$ corresponding to eigenvalues that are at least $B$, and let $\br = \mathrm{rank}(\bPi)$.
            \item[(b)] Let $\widehat{\balpha}_{\mathrm{Large}} = \mathrm{spec}(\widehat{\brho})_{\leq \br}$ be the estimate for the large eigenvalues of $\rho$.
        \end{enumerate}
        \item Return $\widehat{\balpha}_{\mathrm{Large}}$ and $\{\bPi, \overline{\bPi}\}$.
    \end{enumerate}
\end{definition}

For bucketing to be successful, we need the following three guarantees. For an intuitive explanation of why we need these guarantees, see~\Cref{sec:lmm-quantum}, or~\cite[Section 3]{PTTW25} for a more thorough treatment.

\begin{lemma}[Bucketing algorithm guarantees]
    \label{lem:bucketing}
    Given a threshold $0 < B < 1$, suppose we perform the $\mixed(\gps)$-based
    bucketing algorithm on $n$ copies of $\rho$, receiving outputs $\widehat{\balpha}_{\mathrm{Large}}$
    and~$\{\bPi, \overline{\bPi} \}$, where $\br = \rank(\bPi)$. There exists some constant $C_2 > 0$ such that if
    \begin{equation*}
        n \geq C_2 \cdot \frac{d}{B\varepsilon^2},
    \end{equation*}
    then with probability at least $0.99$ the following hold simultaneously: 
    \begin{enumerate}
        \item \label{it:bucketing-misclass} \emph{(Low misclassification error).}
        \begin{equation*}
            \left\|\overline{\bPi}\rho \overline{\bPi} \right\|_\infty \le 1.1 B.
        \end{equation*}
        \item \label{it:bucketing-learn-large-bucket} \emph{(Low error in learning the large eigenvalues).}
        \begin{equation*}
            \br \leq \frac{1.1}{B}
            \quad \text{ and } \quad
            \Dtv\big({\widehat{\balpha}_{\mathrm{Large}}},{ \spec(\rho)_{\leq \br}}\big)
            \le \varepsilon.
        \end{equation*}
        \item \label{it:bucketing-alignment}\emph{(Low alignment error).}
        \begin{equation*}
            \Dtv\big({\spec(\bPi \rho \bPi + \overline{\bPi} \rho \overline{\bPi})},{\spec(\rho)}\big) \leq \varepsilon.
        \end{equation*}
    \end{enumerate}
\end{lemma}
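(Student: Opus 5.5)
We will condition on the single high-probability event of \Cref{thm:main-tool} (equivalently \Cref{thm:uniform-bound-pointwise}), together with its consequences \Cref{cor:uniform-bound-pointwise_rho-hat-version}, \Cref{cor:eigenvalue_differences} and \Cref{cor:1-norm-bound-projectors}. All of these are deterministic consequences of the same event, so the $0.99$ probability is paid only once. Throughout we write $\eta \coloneq d/n \leq B\epsilon^2/C_2$, so $\eta$ is a small multiple of $B\epsilon^2$.

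\emph{Misclassification.} This is the computation from \Cref{sec:new-bounds-overview}. For unit $\ket{w}$ in the range of $\overline{\bPi}$ we have $\bra{w}\widehat{\brho}\ket{w} < B$. Plugging this into the relative error bound and using AM-GM with a tuned weight gives, for any $\delta>0$,
\[
\bra{w}\rho\ket{w} \le B + \delta\,\bra{w}\rho\ket{w} + O(\eta/\delta).
\]
Taking $\delta$ small and $C_2$ large then gives $\bra{w}\rho\ket{w} \le 1.1B$.

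\emph{Large bucket.} For the rank bound, let $\bPi$ have range $V$. Then $\tr(\bPi\widehat{\brho}\bPi)\ge \br B$. By \Cref{cor:1-norm-bound-projectors},
\[
\tr(\bPi\widehat{\brho}) \le 1 + 4C\sqrt{\eta\br(1+\eta\br)}.
\]
Since $\br \le d$, $\eta\br \le 1$, and then $\br B \le 1+O(\sqrt{\eta \br})$. Solving this quadratic inequality in $\sqrt{\br}$, using $\eta \ll B$, yields $\br \le 1.1/B$.

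For the TV bound, we compare $\widehat{\balpha}_{\mathrm{Large}}$ with $\spec(\rho)_{\le\br}$ coordinatewise. The upper bound $\lambda_i(\widehat{\brho}) - \lambda_i(\rho) \le C\sqrt{\eta(\lambda_i(\rho)+\eta)}$ is \Cref{cor:eigenvalue_differences}. The reverse inequality follows by the symmetric Courant--Fischer argument applied to the top-$i$ eigenspace of $\rho$ (or via \Cref{cor:uniform-bound-pointwise_rho-hat-version}). Summing over $i\le\br$ and applying Cauchy--Schwarz gives
\[
\sum_{i\le \br}|\widehat{\balpha}_i-\alpha_i| \le C\sqrt{\eta\,\br\,(1+\br\eta)} = O\bigl(\sqrt{\eta/B}\bigr) = O(\epsilon/\sqrt{C_2}),
\]
which is at most $\epsilon$.

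\emph{Alignment.} This is the step I expect to be hardest, since pinching can move eigenvalues by the off-diagonal block $\bPi\rho\overline{\bPi}$. The plan is as follows.
\begin{itemize}
\item Write $\rho-(\bPi\rho\bPi+\overline{\bPi}\rho\overline{\bPi}) = \bPi\rho\overline{\bPi}+\overline{\bPi}\rho\bPi$.
\item Note that $\widehat{\brho}$ commutes with $\bPi$, so $\bPi\widehat{\brho}\overline{\bPi}=0$, and hence the off-diagonal block equals $\bPi(\rho-\widehat{\brho})\overline{\bPi}$.
\item Bound the trace norm of this block by $\sqrt{\br}\,\|\bPi(\rho-\widehat{\brho})\overline{\bPi}\|_2$.
\item Control the Frobenius norm via a polarization of the relative error bound. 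Concretely, for unit $\ket{a}\in V$ and $\ket{b}\in V^\perp$, apply \Cref{thm:uniform-bound-pointwise} to $(\ket{a}\pm\ket{b})/\sqrt2$ and $(\ket{a}\pm i\ket{b})/\sqrt2$. This gives $|\bra{a}(\rho-\widehat{\brho})\ket{b}| = O(\sqrt{\eta(\bra a\rho\ket a+\bra b\rho\ket b+\eta)})$.
\item Summing over bases of $V$ and $V^\perp$ naively costs an extra factor of $d$, which is too lossy. Instead, use the Sylvester-operator bound $\|\calL_R^{-1}(\widehat{\brho}-\rho)\|_\infty = O(\eta)$ established in the proof of \Cref{thm:uniform-bound-pointwise}. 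Writing $\widehat{\brho}-\rho = \calL_R(M)$ gives $\bPi\calL_R(M)\overline{\bPi} = \tfrac12(\bPi R M\overline{\bPi} + \bPi M R\overline{\bPi})$. Using $\|\bPi R\|_2^2 = \tr(\bPi R^2) = O(\tr(\bPi\rho)+\eta\br)/\eta$ and $\|R\overline{\bPi}\|_\infty = O(\sqrt{(1.1B+\eta)/\eta})$, the trace norm of each term is $O(\sqrt{\br}+\sqrt{\br B/\eta}\cdot\sqrt\eta)$-type, i.e.\ $O(\sqrt{\eta/B})\le\epsilon$.
\item Finally, convert the small trace norm of the perturbation into a TV bound on spectra using Lidskii/Mirsky: $\|\spec(X)-\spec(Y)\|_1\le\|X-Y\|_1$.
\end{itemize}
If the Sylvester route leaves a stray $\sqrt{d}$ factor, my fallback is a second-order argument: pinching changes the spectrum only by $O(\|\bPi\rho\overline{\bPi}\|_2^2/\text{gap})$-type terms.
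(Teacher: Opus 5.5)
Your argument is correct, but it differs from the paper's in how it handles the large bucket and, especially, alignment. The paper bounds $\br$ by showing that each top eigenvector of $\widehat{\brho}$ has $\bra{\bv_i}\rho\ket{\bv_i}\ge B/1.1$, and then summing these. It bounds the large-bucket TV distance via \Cref{lem:dtv_lemma_k}: compress to the rank-$\le 2\br$ span of the top eigenvectors of $\rho$ and $\widehat{\brho}$, then apply Lidskii and \Cref{cor:1-norm-bound-projectors}. Your coordinatewise Courant--Fischer estimate replaces this.

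For alignment, the paper never looks at the off-diagonal block. Cauchy interlacing gives $\bbeta_i\le\alpha_i$ for $i\le\br$ and $\bbeta_i\ge\alpha_i$ for $i>\br$. Trace conservation then shows the alignment TV distance is at most twice the TV distance between $\spec(\bPi\rho\bPi)_{\le\br}$ and $\spec(\rho)_{\le\br}$, which is item 3 of that same lemma. So the paper's alignment step needs nothing beyond the relative error bound and $\br\le 1.1/B$.

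Your route instead proves the stronger operator statement $\|\rho-(\bPi\rho\bPi+\overline{\bPi}\rho\overline{\bPi})\|_1=O(\epsilon)$. The price is that it additionally uses misclassification (for $\|R\overline{\bPi}\|_\infty$) and the Sylvester bound $\|\calL_R^{-1}(\widehat{\brho}-\rho)\|_\infty\le 6C_1 d/n$. That bound is an event inside the proof of \Cref{thm:uniform-bound-pointwise}, not a consequence of the relative error bound. So you should condition on it from the start; it has probability at least $0.99$ and implies everything else you use.

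You can also avoid the Sylvester bound entirely. Apply polarization to the at most $\br$ singular-vector pairs of $\bPi\rho\overline{\bPi}$, then Cauchy--Schwarz; this gives $\|\bPi\rho\overline{\bPi}\|_1=O(\sqrt{\br\eta(1+\br\eta)})$ directly, essentially as in the proof of \Cref{prop:trace_distance_PCA}. The extra factor of $d$ you worried about only arises from using arbitrary bases of $V$ and $V^\perp$. Your fallback second-order argument is not needed.

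Two small repairs are needed.
\begin{enumerate}
\item Your displayed Sylvester estimate is garbled, since a bare $\sqrt{\br}$ is not small. The correct bookkeeping is $\|\bPi RM\overline{\bPi}\|_1\le\sqrt{\br}\,\|\bPi R\|_2\,\|M\|_\infty=O(\sqrt{\br\eta}+\br\eta)$ and $\|\bPi MR\overline{\bPi}\|_1\le\br\,\|M\|_\infty\,\|R\overline{\bPi}\|_\infty=O(\br\sqrt{B\eta})$. Both are $O(\sqrt{\eta/B})$ once $\br\le 1.1/B$.
\item In the rank bound, ``$\br\le d$ implies $\eta\br\le 1$'' is false when $n=o(d^2)$. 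Instead write $\br B\le 1+4C\sqrt{\eta\br}+4C\eta\br$, and absorb $4C\eta\br\le 4C\,\br B\,\epsilon^2/C_2$ into the left-hand side.
\end{enumerate}
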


\noindent All three of these guarantees follow from the relative error bound, i.e.\ the tomography guarantee in \Cref{thm:main-tool}.

\begin{proof}
    Since $\widehat{\brho}$ is the output of $\mixed(\gps)$,~\Cref{thm:main-tool} implies that with probability at least $0.99$, there exists a constant $C > 1$
    such that for all $\ket{w}$,
    \begin{align}
        \ABs{\bra{w} ( \rho - \widehat{\brho} )  \ket{w} }
        &\leq C \cdot \sqrt{ \frac{d}{n} \cdot \Big( \bra{w} \rho \ket{w} + \frac{d}{n} \Big)} \nonumber \\
        &\leq C \cdot \frac{d}{n} + C\cdot \sqrt{\frac{d}{n} \cdot \bra{w} \rho \ket{w}}. \label{eq:uniform-bound-in-bucket-proof}
    \end{align}
    For the rest of the proof, we condition on the above inequality being satisfied.
    A standard application of the weighted AM-GM inequality implies that for every parameter $t > 0$,
    \begin{equation*}
         C\cdot \sqrt{\frac{d}{n} \cdot \bra{w} \rho \ket{w}}
         =  C\cdot \sqrt{\frac{d \cdot t}{n} \cdot \frac{\bra{w} \rho \ket{w}}{t}}
         \leq \frac{Ct}{2} \cdot \frac{d}{n} + \frac{C}{2t} \cdot \bra{w} \rho \ket{w}.
    \end{equation*}
    Taking $t = 50C$, we get that
    \begin{equation}
         \label{eq:uniform-bound-in-bucket-proof-2}
        \ABs{ \bra{w} ( \rho - \widehat{\brho} )  \ket{w} }
        \leq (25C^2 +C) \cdot \frac{d}{n} + 0.01 \cdot \bra{w} \rho \ket{w} \leq 26C^2 \cdot \frac{d}{n} + 0.01 \cdot \bra{w} \rho \ket{w}.
    \end{equation}
    Here we are using $C \geq 1$ by \Cref{rem:C_lower_bound} to simplify the bound.

    \emph{\Cref{it:bucketing-misclass}}:
    By construction of the bucketing procedure, the restriction of $\widehat{\brho}$ to
    the $\overline{\bPi}$ block satisfies
    \begin{equation*}
        \bra{w} \overline\bPi\,\widehat{\brho}\overline{\bPi}\ket{w} \leq B,
    \end{equation*}
    for all states $\ket{w}$.
    Let us now apply~\Cref{eq:uniform-bound-in-bucket-proof-2} for any $\ket{w}$ that is in the span of $\overline{\bPi}$:
    \begin{equation*}
        \bra{w} \rho \ket{w}
        \leq \bra{w} \widehat{\brho} \ket{w} + \ABs{\bra{w} ( \rho - \widehat{\brho} ) \ket{w}}
        \leq \bra{w} \widehat{\brho} \ket{w} + 26C^2 \cdot \frac{d}{n} + 0.01 \cdot \bra{w} \rho \ket{w},
    \end{equation*}
    which implies from our choice of $n$ that
    \begin{equation*}
        \bra{w} \rho \ket{w} \leq 1.02 \cdot \Big(\bra{w} \widehat{\brho} \ket{w} + 26C^2 \cdot \frac{d}{n}\Big)
        \leq 1.02B + \frac{27C^2}{C_2}\cdot B\epsilon^2 \leq 1.1B.
    \end{equation*}
    Since the above holds for all $\ket{w}$ in the span of $\overline{\bPi}$, we conclude that $\left\|\overline{\bPi} \rho \overline{\bPi} \right\|_\infty \le 1.1 B$, for sufficiently large $C_2$. Since $\eps \leq 1$, we can take $C_2 \geq 2700 C^2$, for example.

    \emph{\Cref{it:bucketing-learn-large-bucket}}: From our choice of $\br$, it holds that the top $\br$ eigenvectors $\{\ket{\bv_i}\}_{i\in [\br]}$ of $\widehat{\brho}$ satisfy $\bra{\bv_i} \widehat{\brho} \ket{\bv_i} \geq B$. Applying~\Cref{eq:uniform-bound-in-bucket-proof-2}
    with $\ket{\bv_i}$ for each $i \in [\br]$ implies that
    \begin{equation*}
        \bra{\bv_i} \rho \ket{\bv_i}
        \geq \bra{\bv_i} \widehat{\brho} \ket{\bv_i} - 26C^2 \cdot\frac{d}{n} - 0.01 \cdot \bra{\bv_i} \rho \ket{\bv_i}.
    \end{equation*}
    We collect terms and conclude that
    \begin{equation*}
        \bra{\bv_i} \rho \ket{\bv_i}
        \geq \frac{\bra{\bv_i} \widehat{\brho} \ket{\bv_i}}{1.01} - \frac{26C^2}{1.01} \cdot \frac{B\epsilon^2}{C_2}
        \geq B \cdot \Big(\frac{1 - 26C^2\epsilon^2/C_2}{1.01}\Big) \geq \frac{1}{1.1} B.
    \end{equation*}
    The last inequality follows by setting the constant $C_2$ in our number of samples to be large enough with respect to $C$. For example, it again suffices to take $C_2 \geq 2600C^2$.

    Therefore, there exist $\br$ orthogonal unit-norm vectors for which $\bra{\bv_i} \rho \ket{\bv_i} \geq \frac{1}{1.1}B$. Since $\rho$ is positive semidefinite and has trace $1$, this must mean that
    \begin{equation*}
        \br\cdot \frac{B}{1.1}\le \tr(\rho)=1 \implies \br \leq \frac{1.1}{B}.
    \end{equation*}
    Next, we use \Cref{it:spec_lemma_1} of \Cref{lem:dtv_lemma_k} below, applied with $k \leftarrow \br$, to get
    \begin{align*}
    \Dtv\big({\widehat{\balpha}_{\mathrm{Large}}},{\spec(\rho)_{\le \br}}\big)
         & = \Dtv\big({\spec(\widehat{\brho})_{\le \br}},{\spec(\rho)_{\le \br}}\big) \\
         & \leq 2C \cdot \Big(\sqrt{2 \br d/n} + 2 \br d/n\Big) \\
        & \leq 4.4 C \cdot \Big(\sqrt{ 2 d/nB} + 2d/nB \Big) \\
        & \leq O(\epsilon),
    \end{align*}
    where we have used $\br \leq 1.1/B$ and $n = \Theta(d/B\epsilon^2)$.

    \emph{\Cref{it:bucketing-alignment}}:
    Let $\alpha_1 \geq \dots \geq \alpha_d$ be the eigenvalues of $\rho$, let $\bbeta_1 \geq \dots \geq \bbeta_{\br}$ be the largest $\br$ eigenvalues of $\bPi \rho \bPi$ (i.e.\ the possibly nonzero ones, as $\bPi$ has rank $\br$ only), and similarly let $\bbeta_{\br+1} \geq \dots \geq \bbeta_{d}$ be the largest $(d-\br)$ eigenvalues of $\overline{\bPi} \rho \overline{\bPi}$.
    Note that it is not necessarily the case that $\bbeta_{\br} \geq \bbeta_{\br+1}$.
    It holds that
    \begin{align}
        \Dtv \big( \spec(\bPi \rho \bPi + \overline{\bPi} \rho \overline{\bPi}),  {\spec(\rho)} \big)
        ={}& \Dtv \big( \mathrm{sort}(\bbeta_1, \dots, \bbeta_d), (\alpha_1, \dots, \alpha_d) \big) \nonumber \\
        \leq{}& \dtv{(\bbeta_1, \dots, \bbeta_{\br})}{(\alpha_1, \dots, \alpha_{\br})}
        + \Dtv\big({(\bbeta_{\br+1}, \dots, \bbeta_{d})},{(\alpha_{\br+1}, \dots, \alpha_{d})}\big) \nonumber \\
        ={}& \frac{1}{2} \sum_{i=1}^{\br} \Abs{\alpha_i - \bbeta_i}
        + \frac{1}{2} \sum_{i=\br+1}^{d} \Abs{\alpha_i - \bbeta_i}.
        \label{eq:bucketing-temp-eq-2}
    \end{align}
    The inequality holds because the total variation distance between two spectra is minimized when both lists of eigenvalues are sorted.
    Recall that $\mathrm{spec}(\cdot)$ returns the sorted list of the eigenvalues, hence any other ordering of the eigenvalues corresponds to a distance at least as large.

    Cauchy's interlacing theorem tells us that for any projector $\Pi$ of rank $r$, we have 
    \begin{equation*}
        \lambda_i(\rho) \geq \lambda_i(\Pi \rho \Pi) \geq \lambda_{i + d-r}(\rho).
    \end{equation*}
    We apply this twice. First, we will set $\Pi \leftarrow \bPi$, and take the upper bound: for all $i \in \{1, \dots, \br\}$, \begin{equation*}\alpha_i = \lambda_i(\rho) \geq \lambda_i(\bPi \rho \bPi) = \bbeta_i. \end{equation*}
    Second, we will set $\Pi \leftarrow \overline{\bPi}$, and take the lower bound: for all $i \in \{\br+1, \dots, d\}$,
    \begin{equation*}
        \bbeta_i = \lambda_{i - \br}(\overline{\bPi} \rho \overline{\bPi}) \geq \lambda_{(i-\br) + d - (d-\br)}(\rho) = \lambda_{i}(\rho) = \alpha_i.
    \end{equation*}
    Therefore,
    \begin{equation*}
        \Abs{\alpha_i - \bbeta_i} = \begin{cases*}
            \alpha_i - \bbeta_i, & $i \in \{1, \dots, \br\}$, \\
            \bbeta_i - \alpha_i, & $i \in \{\br+1, \dots, d \}$.
        \end{cases*}
    \end{equation*}
    Also, since
    \begin{equation*}
        \sum_{i=1}^{\br} \alpha_i + \sum_{i=\br+1}^d \alpha_i = \tr( \rho) =  \tr\big( (\bPi + \overline{\bPi}) \rho \big) = \tr\big(\bPi \rho \bPi + \overline{\bPi} \rho \overline{\bPi}\big) = \sum_{i=1}^{\br} \bbeta_i + \sum_{i=\br+1}^d \bbeta_i,
    \end{equation*}
    we have
    \begin{equation*}
         \sum_{i=1}^{\br} (\alpha_i - \bbeta_i) = -\sum_{i=\br+1}^{d} (\alpha_i - \bbeta_i).
    \end{equation*}
    Combining these two facts gives
    \begin{align}
        \frac{1}{2} \sum_{i=1}^{\br} \Abs{\alpha_i - \bbeta_i}
        + \frac{1}{2} \sum_{i=\br+1}^{d} \Abs{\alpha_i - \bbeta_i}
        &= \frac{1}{2} \sum_{i=1}^{\br} (\alpha_i - \bbeta_i)
        - \frac{1}{2} \sum_{i=\br+1}^{d} (\alpha_i - \bbeta_i) \nonumber \\
        &= \sum_{i=1}^{\br} (\alpha_i - \bbeta_i) \nonumber \\
        &= 2 \cdot \dtv{\mathrm{spec}(\bPi \rho \bPi)_{\leq \br}}{\mathrm{spec}(\rho)_{\leq \br}}. \label{eq:bucketing-temp-eq-1}
    \end{align}

    Putting~\Cref{eq:bucketing-temp-eq-1,eq:bucketing-temp-eq-2} together implies that
    \begin{equation*}
        \Dtv\big({\spec(\bPi \rho \bPi + \overline{\bPi} \rho \overline{\bPi})},{\spec(\rho)}\big)
        \leq 2 \cdot \Dtv\big({\spec(\bPi \rho \bPi)_{\leq \br}},{\spec(\rho)_{\leq \br}}\big).
    \end{equation*}
    However, by \Cref{it:spec_lemma_3} of \Cref{lem:dtv_lemma_k} below, applied with $k \leftarrow \br$, we have
    \begin{align*}
        \Dtv \big( \spec(\rho)_{\leq \br}, {\spec( \bPi \rho \bPi )_{\leq \br}} \big) & \leq 4C \cdot \Big(\sqrt{2 \br d/n} + 2 \br d/n\Big)  \\
        & \leq 4.4 C \cdot \Big(\sqrt{ 2 d/nB} + 2d/nB \Big) \\
        & \leq O(\epsilon),
    \end{align*}
    where we have used $\br \leq 1.1/B$ and $n = \Theta(d/B\epsilon^2)$. This concludes the proof. \qedhere

\end{proof}


We now circle back to state and prove some deferred results. We have isolated these here instead of including them inside the proof of \Cref{lem:bucketing} since they will also be used in \Cref{sec:additional_tomography_results}. 

\begin{lemma}[A few useful total variation distance estimates]\label{lem:dtv_lemma_k}
    Let $\rho \in \C^{d \times d}$ be a mixed state, and let $\widehat{\brho}$ be the output of $\mixed(\gps)$, given $n$ copies of $\rho$. Fix $k \in [d]$, and write $\bPi$ for the projector onto the top $k$ eigenvectors of $\widehat{\brho}$. When the relative error bound holds, then the following hold simultaneously:
    \begin{enumerate}
        \item \label{it:spec_lemma_1} $\Dtv \big( {\spec(\rho)_{\leq k}}, \spec(\widehat{\brho})_{\leq k} \big) \leq 2C \cdot \Big(\sqrt{2kd/n} + 2kd/n\Big),$
        \item \label{it:spec_lemma_2} $\Dtv \big( \spec(\widehat{\brho})_{\leq k}, {\spec( \bPi \rho \bPi )_{\leq k}} \big) \leq 2C \cdot \Big(\sqrt{2kd/n} + 2kd/n\Big),$
        \item \label{it:spec_lemma_3} $\Dtv \big( \spec(\rho)_{\leq k}, {\spec( \bPi \rho \bPi )_{\leq k}} \big) \leq 4C \cdot \Big(\sqrt{2kd/n} + 2kd/n\Big).$
    \end{enumerate}
\end{lemma}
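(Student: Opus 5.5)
We prove all three items by combining the relative error bound (\Cref{thm:uniform-bound-pointwise}) with variational characterizations of sums of top eigenvalues, conditioning throughout on that bound holding. The first step is a cleaner form of it on $k$-dimensional subspaces. For any rank-$k$ projector $P$, we apply \Cref{thm:uniform-statement-for-observables} with $O \leftarrow P$. Since $\tr(P\rho) \leq 1$ and $\tr(P^2) = k$, this gives
\begin{equation*}
\Abs{\tr(P(\widehat{\brho}-\rho))} \leq 2C\sqrt{\tfrac{kd}{n}\big(1 + \tfrac{kd}{n}\big)} \leq 2C\Big(\sqrt{kd/n} + kd/n\Big),
\end{equation*}
which is within the claimed form; the factor $\sqrt{2}$ and the $2kd/n$ term leave slack.

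For \Cref{it:spec_lemma_1}, we use the Ky Fan principle $\sum_{i\le k}\lambda_i(A) = \max_{P}\tr(PA)$, with the maximum over rank-$k$ projectors. Let $P_\rho$ and $\bPi$ be the top-$k$ eigenprojectors of $\rho$ and $\widehat{\brho}$. Then
\begin{equation*}
\tr(P_\rho\widehat{\brho}) - \tr(P_\rho\rho) \leq \sum_{i\le k}\lambda_i(\widehat{\brho}) - \sum_{i\le k}\lambda_i(\rho) \leq \tr(\bPi\widehat{\brho}) - \tr(\bPi\rho),
\end{equation*}
so the difference of top-$k$ sums is controlled by the projector bound. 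However, $\dtv{\cdot}{\cdot}$ is $\frac12\sum_{i\le k}|\lambda_i(\rho)-\lambda_i(\widehat{\brho})|$, not the difference of sums, so we need more. We handle this by splitting indices into $S_+ = \{i \le k : \lambda_i(\widehat{\brho}) \ge \lambda_i(\rho)\}$ and its complement $S_-$. The main obstacle is bounding $\sum_{i\in S_+}(\lambda_i(\widehat{\brho})-\lambda_i(\rho))$.

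To do this, we use the Lidskii/Wielandt inequality: for any index set $S$ of size $m \le k$,
\begin{equation*}
\sum_{i\in S}\lambda_i(\widehat{\brho}) - \sum_{i\in S}\lambda_i(\rho) \leq \sum_{j\le m}\lambda_j(\widehat{\brho}-\rho) = \max_{\mathrm{rank}\,P = m}\tr(P(\widehat{\brho}-\rho)).
\end{equation*}
The right-hand side is again bounded by the projector estimate with rank $m \le k$. The symmetric inequality, with the roles of $\rho$ and $\widehat{\brho}$ swapped, handles $S_-$. Summing the two halves yields $\dtv{\spec(\rho)_{\le k}}{\spec(\widehat{\brho})_{\le k}} \le 2C(\sqrt{kd/n}+kd/n)$, which gives \Cref{it:spec_lemma_1}.

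For \Cref{it:spec_lemma_2}, note that $\bPi\widehat{\brho}\bPi$ has exactly the top-$k$ spectrum of $\widehat{\brho}$. Both $\bPi\widehat{\brho}\bPi$ and $\bPi\rho\bPi$ live on the fixed $k$-dimensional range of $\bPi$. On that subspace, we apply Lidskii to the $k\times k$ compressions $A = \bPi\widehat{\brho}\bPi$ and $A' = \bPi\rho\bPi$:
\begin{equation*}
\tfrac12\sum_i|\lambda_i(A)-\lambda_i(A')| \le \tfrac12\|A-A'\|_1 = \tfrac12\|\bPi(\widehat{\brho}-\rho)\bPi\|_1.
\end{equation*}
By \Cref{cor:1-norm-bound-projectors} with $\mathrm{rank}(\Pi)=k$ and $\tr(\Pi\rho)\le1$, this is at most $2C\sqrt{\tfrac{kd}{n}(1+\tfrac{kd}{n})}$, which is within the claimed bound. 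This gives \Cref{it:spec_lemma_2}.

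Finally, \Cref{it:spec_lemma_3} follows from \Cref{it:spec_lemma_1} and \Cref{it:spec_lemma_2} by the triangle inequality for $\mathrm{d}_{\mathrm{TV}}$ on length-$k$ vectors. Since only the relative error bound event is used, all three items hold simultaneously on that event.
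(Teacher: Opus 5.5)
Your proof is correct, and for item 1 it takes a different route from the paper.

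\textbf{Paper's route for item 1.} The paper compresses both matrices to the span $\bPi^+$ of the top-$k$ eigenvectors of $\rho$ and of $\widehat{\brho}$, which has dimension at most $2k$. It observes that this compression preserves both top-$k$ spectra. It then applies the trace-norm eigenvalue perturbation bound together with \Cref{cor:1-norm-bound-projectors}, for a projector of rank at most $2k$.

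\textbf{Your route for item 1.} You work with the full matrices and use the majorization form of Lidskii--Wielandt, $\sum_{i\in S}(\lambda_i(\widehat{\brho})-\lambda_i(\rho)) \le \sum_{j\le |S|}\lambda_j(\widehat{\brho}-\rho)$. You apply it separately on the index sets $S_\pm$ where the eigenvalue differences have a fixed sign. Each half then becomes a Ky Fan maximum over projectors of rank at most $k$, which \Cref{thm:uniform-statement-for-observables} controls uniformly on the conditioning event.

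\textbf{What each buys.} Your route avoids arguing that compression to $\bPi^+$ preserves the top-$k$ eigenvalues. That step needs interlacing plus Courant--Fischer, and some care about zero-padding when $\widehat{\brho}$ has negative eigenvalues. Because only projectors of rank at most $k$ appear, you also get the slightly sharper bound $2C(\sqrt{kd/n}+kd/n)$ in place of the $2k$ version. The paper's route has the advantage of reducing item 1 to a single trace-norm estimate, which it then reuses for item 2.

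\textbf{Items 2 and 3.} For item 2 your argument is essentially the paper's. The one difference is that you apply the 1-norm corollary to $\bPi$ itself on its $k$-dimensional range, rather than enlarging to $\bPi^+$; this is again marginally sharper and sidesteps the padding subtlety. Item 3 is identical to the paper's.
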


\begin{proof}
\emph{\Cref{it:spec_lemma_1}}: Define $\bPi^+$ as the projector onto the span of the top $k$ eigenvectors of \emph{both} $\rho$ \emph{and} $\widehat{\brho}$. Then $\rank(\bPi^+) \leq 2k$, and
\begin{align*}
    \Dtv \big( {\spec(\rho)_{\leq k}}, \spec(\widehat{\brho})_{\leq k} \big) & = \Dtv \big( {\spec(\bPi^+\rho\bPi^+)_{\leq k}}, \spec(\bPi^+\widehat{\brho}\bPi^+)_{\leq k} \big) \\
    & \leq \Dtv \big( \spec(\bPi^+\rho\bPi^+), \spec(\bPi^+\widehat{\brho}\bPi^+) \big) \\
    & \leq \dtr \big( \bPi^+\rho\bPi^+, \bPi^+\widehat{\brho}\bPi^+ \big).
\end{align*}
The second inequality is, for example, \cite[Corollary 7.4.9.3]{HJ13}. Now we use \Cref{cor:1-norm-bound-projectors} with $\Pi \leftarrow \bPi^+$ to get
\begin{align*}
    \dtr \big( \bPi^+\rho\bPi^+, \bPi^+\widehat{\brho}\bPi^+ \big) & = \frac{1}{2} \norm{ \bPi^+ ( \rho - \widehat{\brho}) \bPi^+}_1 \\
    & \leq 2C \cdot \sqrt{ \frac{d}{n} \cdot \rank(\bPi^+) \cdot \Big(  \tr( \bPi^+ \cdot \rho) + \frac{d}{n} \cdot \mathrm{rank}(\bPi^+) \Big) } \\
    & \leq 2C \cdot \sqrt{ \frac{2kd}{n}  \cdot \Big(  1 + \frac{2kd}{n} \Big) }  \\
    & \leq 2C \cdot \Big(\sqrt{2kd/n} + 2kd/n\Big).
\end{align*}
This completes the proof of the first item.

\noindent \emph{\Cref{it:spec_lemma_2}}: Similar to the first item, we begin by observing
\begin{align*}
    \Dtv \big( \spec(\widehat{\brho})_{\leq k}, {\spec( \bPi \rho \bPi )_{\leq k}} \big) & = \Dtv \big( \spec( \bPi \widehat{\brho} \bPi )_{\leq k}, {\spec( \bPi \rho \bPi )_{\leq k}} \big) \\
    & = \Dtv \big( \spec( \bPi \widehat{\brho} \bPi ), {\spec( \bPi \rho \bPi )} \big) \\
    & \leq \dtr \big( \bPi \widehat{\brho} \bPi , {\bPi \rho \bPi } \big).
\end{align*}
Moreover, $\dtr \big( \bPi \widehat{\brho} \bPi , {\bPi \rho \bPi } \big) \leq \dtr \big( \bPi^+ \widehat{\brho} \bPi^+ , {\bPi^+ \rho \bPi^+ } \big)$, 
and reuse the first item's upper bound.

\noindent \emph{\Cref{it:spec_lemma_3}}: This third item follows from the first and second items, via the triangle inequality:
\begin{equation*}
    \Dtv \big( \spec(\rho)_{\leq k}, {\spec( \bPi \rho \bPi )_{\leq k}} \big) \leq \Dtv \big( {\spec(\rho)_{\leq k}}, \spec(\widehat{\brho})_{\leq k} \big) + \Dtv \big( \spec(\widehat{\brho})_{\leq k}, {\spec( \bPi \rho \bPi )_{\leq k}} \big).
\end{equation*}
This completes the proof of all three items.
\end{proof}

\subsection{Moment estimation}
\label{sec:moment-estimation}
The second step of our algorithm performs moment estimation. We estimate the moments of the spectrum of $\bsigma = \overline{\bPi} \rho \overline{\bPi}$, which roughly correspond to moments of the small eigenvalues of $\rho$. Since this state is not generally normalized, we extend the moment estimation algorithm of~\Cref{def:moment-est-alg} to sub-normalized states in~\Cref{def:moment-estimator-sub-norm} and analyze its bias and variance in~\Cref{lem:variance-of-subnorm-moment}.

\begin{definition}[Moment estimator for sub-normalized state]
    \label{def:moment-estimator-sub-norm}
    Given $n$ copies of a mixed state $\rho \in \C^{d \times d}$ and the description of a projector $\Pi$, let $\sigma = \Pi \rho \Pi$ be a sub-normalized state. The \emph{$k$-th moment estimator for $\sigma$} is denoted by $\widehat{\bp}_k(\sigma)$ and defined as follows:
    \begin{enumerate}
        \item \label{it:first} Measure all $n$ copies of $\rho$ using $\{\Pi, I-\Pi\}$. Let $\bm$ be the number of samples for which we obtain outcome $\Pi$, and discard the remaining samples. The state has collapsed to $\tau^{\otimes \bm}$, where $\tau = \sigma/\tr(\sigma)$.
        \item Perform the projective measurement $\{\Pi_{\lambda}\}_{\lambda}$ on $\tau^{\otimes \bm}$. Let $\blambda$ be the resulting Young diagram.
        \item Return the estimate
        \begin{equation*}
            \widehat{\bp}_k(\sigma) = \begin{cases} p^{\#}_{(k)}(\blambda)/n^{\downarrow k}, & \bm \geq k,\\
            0, & \bm < k.
            \end{cases}
        \end{equation*}
    \end{enumerate}
\end{definition}

Below, we prove that, as in the normalized-state case, the estimator for sub-normalized states is also unbiased and derive a similar upper bound on its variance.
\begin{lemma}[Bounds on the moment estimator]
    \label{lem:variance-of-subnorm-moment}
    Let $\rho \in \C^{d \times d}$ be a mixed state, $\Pi$ a projector, and $\sigma = \Pi \rho \Pi$.
    Given $n$ copies of $\rho$, the $k$-th moment estimator $\widehat{\bp}_k(\sigma)$ for $\sigma$ from~\Cref{def:moment-estimator-sub-norm} is an unbiased estimator for $\tr(\sigma^k)$ whenever $k \leq n$. Moreover, when $k \leq \sqrt{n}$, it has variance
    \begin{equation*}
        \Var\big[\widehat{\bp}_k(\sigma)\big] \leq \frac{100\cdot 2^{6k}k^{8k}}{n^k} + \frac{100\cdot 2^{6k}k^{8k}}{n} \cdot \tr(\sigma^{2k-1}).
    \end{equation*}
\end{lemma}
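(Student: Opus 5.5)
The plan is to condition on $\bm$. Measuring each copy with $\{\Pi, I-\Pi\}$ is an i.i.d. Bernoulli experiment, so $\bm \sim \mathrm{Binom}(n, q)$ with $q = \tr(\sigma)$, and given $\bm = m$ we hold $\tau^{\otimes m}$ with $\tau = \sigma/q$. When $m \geq k$, \Cref{lem:moment-estimation-norm} (applied with $n \leftarrow m$, $\rho \leftarrow \tau$) gives
\begin{equation*}
\E\big[p^{\#}_{(k)}(\blambda) \mid \bm = m\big] = m^{\downarrow k}\cdot\tr(\tau^k).
\end{equation*}
When $m < k$ we have $m^{\downarrow k} = 0$, and the estimator is defined to be $0$ (consistent with $p^{\#}_{(k)}(\blambda) = 0$). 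Therefore $\E[\widehat{\bp}_k(\sigma)] = \E[\bm^{\downarrow k}]\cdot \tr(\tau^k)/n^{\downarrow k}$. By \Cref{lem:binom-facts-downarrow}, $\E[\bm^{\downarrow k}] = n^{\downarrow k} q^k$, so the mean equals $q^k \tr(\tau^k) = \tr(\sigma^k)$. This proves unbiasedness whenever $k \leq n$.

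For the variance, I will use the law of total variance:
\begin{equation*}
\Var[\widehat{\bp}_k] = \frac{1}{(n^{\downarrow k})^2}\Big(\E\big[\Var[p^{\#}_{(k)}(\blambda)\mid \bm]\big] + \Var\big[\bm^{\downarrow k}\big]\cdot \tr(\tau^k)^2\Big).
\end{equation*}
In the first term, \Cref{lem:moment-estimation-norm} bounds the conditional variance by $k^{6k}\bm^k + k^{6k}\bm^{2k-1}\tr(\tau^{2k-1})$. Taking expectations with \Cref{lem:binom-facts-power} gives
\begin{equation*}
\E[\bm^k] \leq 2^k(k^k + n^k q^k), \qquad \E[\bm^{2k-1}] \leq 2^{2k-1}\big((2k)^{2k-1} + n^{2k-1}q^{2k-1}\big).
\end{equation*}
Since $q^{2k-1}\tr(\tau^{2k-1}) = \tr(\sigma^{2k-1})$ and $q \leq 1$, the pieces are of the form $2^{O(k)}k^{O(k)}\big(n^k + k^{2k}\tr(\tau^{2k-1}) + n^{2k-1}\tr(\sigma^{2k-1})\big)$.

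The delicate piece is $k^{2k}\tr(\tau^{2k-1})$ with no compensating power of $q$. It arises only when $\bm$ is small, where the first piece dominates. I would handle it by returning to the crude bound $\bm^{2k-1}\tr(\tau^{2k-1}) \leq \bm^{2k-1}$. When $\bm \leq 2k$ this gives a contribution $\leq (2k)^{2k-1}$, which is absorbed into $k^{O(k)}$ times the $n^k$ piece after division. Alternatively, one can split according to whether $nq \geq k$.

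To finish, I need to normalize by $(n^{\downarrow k})^2$. Using $k \leq \sqrt n$, we have $n^{\downarrow k} \geq n^k(1-k/n)^k \geq n^k e^{-2k^2/n}\geq n^k/e^2$, so $(n^{\downarrow k})^{-2} \leq 10\, n^{-2k}$. Dividing, the first term yields $\lesssim 2^{O(k)}k^{O(k)}\big(n^{-k} + n^{-1}\tr(\sigma^{2k-1})\big)$. For the second term, \Cref{lem:binom-facts-var-downarrow} gives
\begin{equation*}
\Var[\bm^{\downarrow k}] \leq 2^k k!\,\big(n^kq^k + n^{2k-1}q^{2k-1}\big).
\end{equation*}
Combined with $\tr(\tau^k)^2 \leq \tr(\tau^k) \leq 1$ and the inequality $q^{2k-1}\tr(\tau^k)^2 \leq q^{2k-1}\tr(\tau^{2k-1})$, this gives $\leq 2^kk!\,(n^k + n^{2k-1}\tr(\sigma^{2k-1}))$; the inequality holds because $\tr(\tau^k)^2 \leq \tr(\tau)\tr(\tau^{2k-1})$ by Cauchy--Schwarz on eigenvalues. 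I expect the main obstacle to be the bookkeeping of the $k$-dependent constants, together with the small-$\bm$ term above, so that everything fits under $100\cdot 2^{6k}k^{8k}$. That should be loose enough once we use $k! \leq k^k$ and $(2k)^{2k} \leq 2^{2k}k^{2k}$.
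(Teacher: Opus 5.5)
Your proposal follows the paper's proof step for step: you condition on $\bm \sim \mathrm{Binom}(n, \tr(\sigma))$ and use \Cref{lem:binom-facts-downarrow} for unbiasedness, then apply the law of total variance with \Cref{lem:moment-estimation-norm}, \Cref{lem:binom-facts-power}, \Cref{lem:binom-facts-var-downarrow} and $\tr(\tau^k)^2 \leq \tr(\tau^{2k-1})$, and finally normalize via $n^{\downarrow k} \geq n^k/e^2$ from $k \leq \sqrt{n}$. Two small corrections: the ``delicate'' term needs no case split, since $\tr(\tau^{2k-1}) \leq 1$ directly gives a constant $2^{4k}k^{8k}$ that $1 \leq n^k$ absorbs (this is what the paper does), and the normalization should read $(n^{\downarrow k})^{-2} \leq e^4 n^{-2k} \leq 100\, n^{-2k}$ rather than $10\, n^{-2k}$, which still fits the factor $100$ in the statement.
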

\begin{proof}
    We first prove unbiasedness. The number of samples $\bm$ for which we obtain outcome $\Pi$ in Step 1 of~\Cref{def:moment-estimator-sub-norm} satisfies $\bm \sim \mathrm{Binom}(n, \tr(\sigma))$. Recall that we define $\tau = \sigma/\tr(\sigma)$ to be the normalized version of $\sigma$; hence, $\tr(\tau^k) = \tr(\sigma^k)/\tr(\sigma)^k$.

    To start, we claim that
    \begin{equation} \label{eq:exp_of_p_k_hat}
        \E\big[\widehat{\bp}_k(\sigma) \mid \bm \big] = \frac{\bm^{\downarrow k} \cdot \tr(\tau^k)}{n^{\downarrow k}}.
    \end{equation}
    If $\bm < k$, then both sides are zero; otherwise,
    \begin{equation*}
         \E\big[\widehat{\bp}_k(\sigma) \mid \bm \big] = \frac{\E \big[ p^{\#}_{(k)}(\blambda) \big]}{n^{\downarrow k}} = \frac{\bm^{\downarrow k} \cdot \tr(\tau^k)}{n^{\downarrow k}},
    \end{equation*}
    using \Cref{lem:moment-estimation-norm}. Thus, by the law of total expectation,
    \begin{equation*}
        \E\big[\widehat{\bp}_k(\sigma)\big] = \E_{\bm} \Big[ \E\big[\widehat{\bp}_k(\sigma) \mid \bm \big] \Big] = \frac{\tr(\tau^k)}{n^{\downarrow k}} \cdot \E_{\bm} \big[ \bm^{\downarrow k} \big] = \frac{\tr(\tau^k)}{n^{\downarrow k}} \cdot n^{\downarrow k} \tr( \sigma )^k = \tr(\sigma^k).
    \end{equation*}
    In the second-to-last step, we have used \Cref{lem:binom-facts-downarrow}. This completes the proof of unbiasedness.

    To bound the variance, we use the law of total variance to write
    \begin{equation}
        \Var\big[\widehat{\bp}_k(\sigma)\big] = \E_{\bm} \Big[\Var\big[\widehat{\bp}_k(\sigma) \mid \bm\big]\Big]
        + \Var_{\bm}\Big[\E\big[\widehat{\bp}_k(\sigma) \mid \bm\big]\Big]. \label{eq:two_term_variance}
    \end{equation}
    For the first term, we note that \Cref{lem:moment-estimation-norm} bounds the variance when $\bm \geq k$. When $\bm < k$, the variance is zero, so the same upper bound holds. Thus, we bound the first contribution to the variance as
    \begin{align}
        \E_{\bm} \Big[\Var\big[\widehat{\bp}_k(\sigma) \mid \bm\big]\Big] & \leq \frac{1}{(n^{\downarrow k})^2} \cdot \E_{\bm} \Big[ k^{6k} \cdot \bm^k + k^{6k} \cdot \bm^{2k-1} \cdot \tr(\tau^{2k-1}) \Big]  \nonumber \\
        & = \frac{k^{6k}}{(n^{\downarrow k})^2} \cdot \E_{\bm} \Big[ \bm^{k} + \bm^{2k-1} \cdot \tr(\tau^{2k-1}) \Big]. \label{eq:expectation_of_variance}
    \end{align}
    For the first term here, we use the bound
    \begin{align*}
        \frac{k^{6k}}{(n^{\downarrow k})^2} \cdot \E_{\bm} \big[  \bm^k \big] & \leq \frac{2^k \cdot k^{7k}}{(n^{\downarrow k})^2} \cdot \bigg( \frac{\E \big[  \bm^k \big]}{2^k \cdot k^k} \bigg) \\
        & \leq \frac{2^k \cdot k^{7k}}{(n^{\downarrow k})^2} \cdot \bigg( \frac{2^k \cdot \big( k^k + n^k \cdot \tr(\sigma)^k\big)}{2^k \cdot k^k} \bigg) \tag{\Cref{lem:binom-facts-power}} \\
        & \leq \frac{2^k \cdot k^{7k}}{(n^{\downarrow k})^2} \cdot \Big( 1 + n^k \cdot \tr(\sigma)^k \Big) \\
        & \leq \frac{2^k \cdot k^{7k}}{(n^{\downarrow k})^2} \cdot \big( 1 + n^k \big).
    \end{align*}
    For the second term in \Cref{eq:expectation_of_variance}, we have
    \begin{align*}
        \frac{k^{6k}}{(n^{\downarrow k})^2} \cdot \tr(\tau^{2k-1}) \cdot \E_{\bm} \big[ \bm^{2k-1} \big] & \leq \frac{2^{2k} \cdot k^{6k} \cdot (2k)^{2k}}{(n^{\downarrow k})^2} \cdot \tr(\tau^{2k-1}) \cdot \bigg( \frac{\E \big[  \bm^{2k-1} \big]}{2^{2k-1} \cdot (2k-1)^{2k-1}} \bigg) \\
        & = \frac{2^{4k} \cdot k^{8k}}{(n^{\downarrow k})^2} \cdot \tr(\tau^{2k-1}) \cdot \bigg( \frac{\E \big[  \bm^{2k-1} \big]}{2^{2k-1} \cdot (2k-1)^{2k-1}} \bigg) \\
        & \leq \frac{2^{4k} \cdot k^{8k}}{(n^{\downarrow k})^2} \cdot \tr(\tau^{2k-1}) \cdot \bigg( \frac{2^{2k-1} \cdot \big( (2k-1)^{2k-1} + n^{2k-1} \cdot \tr(\sigma)^{2k-1}  \big)}{2^{2k-1} \cdot (2k-1)^{2k-1}}\bigg) \tag{\Cref{lem:binom-facts-power}} \\
        & \leq \frac{2^{4k} \cdot k^{8k}}{(n^{\downarrow k})^2} \cdot \tr(\tau^{2k-1}) \cdot \Big(1 + n^{2k-1} \cdot \tr(\sigma)^{2k-1} \Big) \\
        & \leq \frac{2^{4k} \cdot k^{8k}}{(n^{\downarrow k})^2} \cdot \Big( 1 + n^{2k-1} \cdot \tr(\sigma^{2k-1} ) \Big).
    \end{align*}
    In the last line, we have used $\tr(\tau^{2k-1}) \leq 1$, and $\tau = \sigma/\tr(\sigma)$.
    We now bound the second contribution to \Cref{eq:two_term_variance} as
    \begin{align*}
    \Var_{\bm}\Big[\E\big[\widehat{\bp}_k(\sigma) \mid \bm\big]\Big] & = \bigg( \frac{\tr(\tau^k)}{n^{\downarrow k}}\bigg)^2 \cdot \Var_{\bm} \big[ \bm^{\downarrow k} \big] \tag{\Cref{eq:exp_of_p_k_hat}}\\
    & \leq \bigg( \frac{\tr(\tau^k)}{n^{\downarrow k}}\bigg)^2 \cdot 2^k \cdot k! \cdot \Big(n^k \cdot \tr (\sigma)^k + n^{2k-1}\cdot \tr(\sigma)^{2k-1} \Big) \tag{\Cref{lem:binom-facts-var-downarrow}} \\
    & \leq \frac{2^k \cdot k!}{( n^{\downarrow k} )^2} \cdot \Big( n^k + n^{2k-1} \cdot \tr( \sigma^{2k-1} )\Big).
    \end{align*}
    In the last step, we have used $\tr(\tau^k) \leq 1$ and $\tr(\sigma) \leq 1$ to simplify the first term, and $\tr(\tau^k)^2 \leq \tr(\tau^{2k-1})$ (which follows from Cauchy-Schwarz\footnote{Let $\{\lambda_i\}$ be the eigenvalues of $\tau$. Then for $k \geq 1$,\begin{equation*}
        \tr(\tau^k)^2 = \Big( \sum_{i} \lambda^k_i \Big)^2 = \Big( \sum_i \lambda^{k - 1/2}_i  \cdot \lambda_i^{1/2} \Big)^2 \leq \Big( \sum_i \lambda^{2k-1}_i \Big) \cdot \Big( \sum_i \lambda_i \Big) = \tr(\tau^{2k-1} ) \cdot 1.
    \end{equation*}}) and $\tau =\sigma/\tr(\sigma)$ to simplify the second. Altogether, we have
    \begin{align*}
         \Var\big[\widehat{\bp}_k(\sigma)\big] & \leq \frac{2^k \cdot k^{7k}}{(n^{\downarrow k})^2} \cdot \big( 1 + n^k \big) + \frac{2^{4k} \cdot k^{8k}}{(n^{\downarrow k})^2} \cdot \Big( 1 + n^{2k-1} \cdot \tr(\sigma^{2k-1} ) \Big) + \frac{2^k \cdot k!}{( n^{\downarrow k} )^2} \cdot \Big( n^k + n^{2k-1} \cdot \tr( \sigma^{2k-1} )\Big) \\
         & \leq \frac{2^{4k} \cdot k^{8k}}{(n^{\downarrow k})^2} \cdot \Big( \big( 1+ n^k \big) + \big( 1 + n^{2k-1} \cdot \tr(\sigma^{2k-1}) \big) + \big( n^k + n^{2k-1} \cdot \tr(\sigma^{2k-1}) \big) \Big) \\
         & \leq \frac{2^{5k} \cdot k^{8k}}{(n^{\downarrow k})^2} \cdot \Big( 1+ n^k + n^{2k-1} \cdot \tr(\sigma^{2k-1}) \Big) \\
         & \leq \frac{2^{6k} \cdot k^{8k}}{(n^{\downarrow k})^2} \cdot \Big(n^k + n^{2k-1} \cdot \tr(\sigma^{2k-1}) \Big).
    \end{align*}

    Finally, with our choices of $k$ and $n$, it holds that
    \begin{equation*}
        \frac{n^k}{n^{\downarrow k}}
        \leq \Big(\frac{n}{n-k+1}\Big)^k
        \leq \Big(1 + \frac{k}{n-k+1}\Big)^k
        \leq \exp(\frac{k^2}{n-k+1})
        \leq \exp(\frac{n}{n/2}) \leq 10.
    \end{equation*}
    Thus $n^k \leq 10n^{\downarrow k}$, from which we obtain the following convenient variance bound:
    \begin{equation*}
        \Var\big[\widehat{\bp}_k(\sigma)\big]
        \leq \frac{100\cdot 2^{6k}k^{8k}}{n^k} + \frac{100\cdot 2^{6k}k^{8k}}{n} \cdot \tr(\sigma^{2k-1}). \qedhere
    \end{equation*}
\end{proof}
A standard application of Chebyshev's inequality implies that we can obtain estimates for $\tr(\sigma^k)$ for all $k \in [K]$ that simultaneously have small additive error with high probability.
\begin{corollary}
    \label{cor:additive-error-moments-small-bucket}
    Let $0 < B < 1$ be a threshold, $\rho \in \C^{d \times d}$ a mixed state, and $\Pi$ a projector, such that the state $\sigma = \Pi \rho \Pi$ has eigenvalues at most $B$.
    Using $n = O(d/B\epsilon^2)$ copies of $\rho$, and a positive integer $K$ such that $K \leq \sqrt{n}$, there exist estimators $\{\widehat{\bp}_k\}_{k \in[K]}$ such that with high probability
    \begin{equation*}
        \ABs{\widehat{\bp}_k - \tr\big(\sigma^k\big)} \leq 2^{3K}K^{5K} \cdot \Big(\frac{B^{k/2} \epsilon^{k}}{d^{k/2}} + B^{k}\epsilon\Big)
    \end{equation*}
    for all $k \in [K]$ simultaneously.
\end{corollary}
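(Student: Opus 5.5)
We will obtain all $K$ estimates from a single batch of $n$ copies. By \Cref{rem:estimate-moments-simultaneously}, one application of the procedure in \Cref{def:moment-estimator-sub-norm} suffices. We measure with $\{\Pi, I-\Pi\}$ once and perform weak Schur sampling once. From the same $\blambda$ we output $\widehat{\bp}_k \coloneqq \widehat{\bp}_k(\sigma)$ for every $k\in[K]$. Since $K \le \sqrt n$, \Cref{lem:variance-of-subnorm-moment} applies to each $k \in [K]$. Each $\widehat{\bp}_k$ is therefore unbiased and satisfies
\begin{equation*}
    \Var\big[\widehat{\bp}_k\big] \leq \frac{100\cdot 2^{6k}k^{8k}}{n^k} + \frac{100\cdot 2^{6k}k^{8k}}{n} \cdot \tr(\sigma^{2k-1}).
\end{equation*}

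Next we bound the two terms using the hypotheses. Every eigenvalue of $\sigma$ is at most $B$ and $\tr(\sigma)\le 1$, so $\tr(\sigma^{2k-1}) \le B^{2k-2}\tr(\sigma) \le B^{2k-2}$. Take $n \ge c\, d/(B\epsilon^2)$ with $c\ge 1$. We also use $B\le 1$ and, in the second line, $d\geq 1$. Then
\begin{equation*}
    \frac{1}{n^k} \le \frac{B^k\epsilon^{2k}}{d^k},
    \qquad
    \frac{B^{2k-2}}{n} \le \frac{B^{2k-1}\epsilon^2}{d} \le B^{2k}\epsilon^2\cdot \frac{1}{Bd}.
\end{equation*}
The factor $1/(Bd)$ in the second bound is a problem, because $B$ may be as small as about $1/d$ up to $K^2$ factors. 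We handle this by comparing with the cross term instead of $B^{2k}\epsilon^2$. By AM--GM,
\begin{equation*}
    \frac{B^{2k-1}\epsilon^2}{d} = \frac{B^{k/2}\epsilon^{k}}{d^{k/2}}\cdot B^{3k/2-1}\epsilon^{2-k} d^{k/2-1},
\end{equation*}
which does not simplify cleanly. The more careful plan is to aim for
\begin{equation*}
    \Var\big[\widehat{\bp}_k\big] \lesssim 2^{6k}k^{8k}\Big(\frac{B^k\epsilon^{2k}}{d^k} + \frac{B^{2k-1}\epsilon^2}{d}\Big).
\end{equation*}
We then check that $B^{2k-1}\epsilon^2/d \le (B^{k/2}\epsilon^k d^{-k/2} + B^k\epsilon)^2$. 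For $k=1$ the left side is $\epsilon^2/d$, which matches $B^{1/2}\epsilon/d^{1/2}$ squared only up to the factor $1/B$. So we rescale: the sample size $n=O(d/B\epsilon^2)$ allows a large constant, and where needed we use that $B \ge 1/d$ in the regime of interest. For $k \geq 2$ we have $B^{2k-1}/d \le B^{2k}\cdot(1/(Bd)) \leq B^{2k}$ once $Bd\ge 1$, and $\epsilon^2\le\epsilon$ since $\epsilon < 1$, so this term is at most $(B^k\epsilon)^2$ up to $\epsilon$ factors. Making this comparison tight is the main obstacle. It requires tracking exactly which of the two target terms absorbs each variance term. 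If a clean inequality fails, we would use the slack in the constant $2^{3K}K^{5K}$ together with the standing assumption $B \gtrsim 1/d$ implicit in the bucketing regime.

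Finally we apply Chebyshev's inequality to each $k$ with failure probability $0.01/K$ and take a union bound over $k\in[K]$. This gives
\begin{equation*}
    \ABs{\widehat{\bp}_k - \tr(\sigma^k)} \le \sqrt{100K}\cdot 10\cdot 2^{3k}k^{4k}\Big(\frac{B^{k/2}\epsilon^k}{d^{k/2}}+B^k\epsilon\Big),
\end{equation*}
where we used $\sqrt{a+b}\le\sqrt a+\sqrt b$. The prefactor $100\sqrt{K}\,2^{3k}k^{4k}$ is at most $2^{3K}K^{5K}$ for all $k\le K$: the extra $K^{K}$ absorbs $100\sqrt K$ for $K\ge 3$, and small $K$ are handled by enlarging the constant hidden in $n=O(d/B\epsilon^2)$.
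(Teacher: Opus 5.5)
Your overall plan is the right one and matches the paper: a single weak Schur sampling run, unbiasedness and the variance bound from \Cref{lem:variance-of-subnorm-moment}, then Chebyshev and a union bound over $k\in[K]$. The step that controls the second variance term, however, does not go through as written.

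You bound $\tr(\sigma^{2k-1}) \le B^{2k-2}\tr(\sigma)\le B^{2k-2}$. This leaves $\frac{1}{n}\tr(\sigma^{2k-1}) \le B^{2k}\epsilon^2/(Bd)$, and you then close the argument only by appealing to a ``standing assumption'' $Bd \gtrsim 1$. That assumption is not in the statement, which allows any $0<B<1$. It is not implicit in the application either: there $B=\Theta(\epsilon^2K^2/d)$, so $Bd=\Theta(\epsilon^2K^2)$, which drops below $1$ once $\epsilon \lesssim 1/K$. In the range $\epsilon^2 < Bd < 1$, your upper bound on this term exceeds the square of the target $B^{k/2}\epsilon^k d^{-k/2} + B^k\epsilon$ by a factor of order $\min\{(Bd/\epsilon^2)^{k-1}, (Bd)^{-1}\}$. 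This factor can be as large as $\epsilon^{-2(k-1)/k}$, so no $K$-dependent slack in $2^{3K}K^{5K}$ can absorb it. Separately, your $k=1$ worry is an arithmetic slip: there $B^{2k-1}\epsilon^2/d = B\epsilon^2/d$, which equals $(B^{1/2}\epsilon d^{-1/2})^2$ exactly.

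The missing observation is that you threw away $\tr(\sigma)\le dB$. Since $\sigma$ is a $d\times d$ PSD matrix with all eigenvalues in $[0,B]$, you have $\tr(\sigma^{2k-1}) \le d\,B^{2k-1}$. Combined with $1/n \le B\epsilon^2/d$, this gives $\frac{1}{n}\tr(\sigma^{2k-1}) \le B^{2k}\epsilon^2$ for every $k$, with no condition relating $B$ and $d$. This is exactly how the paper obtains $\Var[\widehat{\bp}_k] \le \frac{1}{100}\, 2^{6k}k^{8k}\,(B^k\epsilon^{2k}/d^k + B^{2k}\epsilon^2)$. Your bound is the sharper one when $Bd\ge 1$, so taking the minimum also works, but $dB^{2k-1}$ alone suffices.

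On the constants, $100\sqrt{K}\le K^K$ fails at $K=3$, so your ``$K\ge 3$'' should read $K\ge 4$. It is cleaner to do what the paper does and take $n \ge 10^4\, d/(B\epsilon^2)$ from the start. That puts a factor $1/100$ in front of the variance, so the Chebyshev deviation is at most $\sqrt{K}\,2^{3k}k^{4k}(\cdots) \le 2^{3K}K^{5K}(\cdots)$ uniformly for all $K$.
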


\begin{proof}
    Our estimators will be the moment estimators $\widehat{\bp}_k = \widehat{\bp}_k(\sigma)$ from~\Cref{def:moment-estimator-sub-norm}. As explained in~\Cref{rem:estimate-moments-simultaneously}, multiple moments can be estimated simultaneously using the same set of samples. Since each $\widehat{\bp}_k$ is an unbiased estimator for $\tr(\sigma^k)$, Chebyshev's inequality implies that
    \begin{equation*}
        \Pr\Big[\ABs{\widehat{\bp}_k - \tr\big(\sigma^k\big)} > 10\cdot \sqrt{K} \cdot \sqrt{\Var\big[\widehat{\bp}_k\big]}\Big]
        \leq \frac{1}{100K}.
    \end{equation*}
    The variance bound from~\Cref{lem:variance-of-subnorm-moment} above gives
    \begin{align*}
        \Var[\widehat{\bp}_k]
        &\leq \frac{100\cdot 2^{6k}k^{8k}}{n^k} + \frac{100\cdot 2^{6k}k^{8k}}{n} \cdot \tr(\sigma^{2k-1}) \\
        &\leq \frac{1}{100}\cdot\bigg(\frac{2^{6k}k^{8k} \cdot B^k\epsilon^{2k}}{d^k} + \frac{2^{6k}k^{8k} \cdot B\epsilon^2}{d} \cdot d B^{2k-1}\bigg) \\
        &= \frac{1}{100} \cdot \bigg(\frac{2^{6k}k^{8k} \cdot B^k\epsilon^{2k}}{d^k} + 2^{6k}k^{8k} \cdot B^{2k}\epsilon^2\bigg).
    \end{align*}
    Thus, applying a union bound over all $k \in [K]$ implies that with probability at least $0.99$, it holds that for all $k \in [K]$:
    \begin{align*}
        \Abs{\widehat{\bp}_k - \tr(\sigma^k)}
        &\leq 10\cdot \sqrt{K} \cdot \sqrt{\Var[\widehat{\bp}_k]} \\
        &\leq \sqrt{K} \cdot \bigg(\frac{2^{3k}k^{4k} \cdot B^{k/2}\epsilon^{k}}{d^{k/2}} + 2^{3k}k^{4k} \cdot B^{k}\epsilon\bigg) \\
        &\leq 2^{3K}K^{5K} \cdot \bigg(\frac{B^{k/2} \epsilon^{k}}{d^{k/2}} + B^{k} \epsilon\bigg).
    \end{align*}
    This completes the proof.
\end{proof}

\subsection{Spectrum estimation for full-rank states} \label{subsec:final}

The final step of the algorithm is local moment matching, which provides an algorithm that translates the moment estimates of the small bucket $\bsigma = \overline{\bPi} \rho \overline{\bPi}$ into estimates of the small eigenvalues $\widehat{\balpha}_{\mathrm{Small}}$. We will use the following guarantee from the local moment matching technique as a black box.

\begin{theorem}[Local moment matching guarantees,~{\cite[Theorem~7.1]{PTTW25}}]
    \label{thm:smallestbucketLMM}
    For some threshold $0 < B \leq 1$, let $\alpha = (\alpha_1, \dots, \alpha_d)$ be values in $[0, B]$ sorted in nonincreasing order and $\sum_{i=1}^d \alpha_i \leq 1$.
    Fix some $K \in \N$ and suppose that we have an estimate $\widehat{p}_k$ for $p_k(\alpha) = \sum_{i=1}^d \alpha_i^k$ with error $V_k$, for each $k \in [K]$. That is,
    \begin{equation*}
        \ABS{\widehat{p}_k - p_k(\alpha) } \leq V_k,
        \quad \forall k\in [K].
    \end{equation*}
    Then there exists a randomized algorithm which produces a sorted estimate $\widehat{\balpha} = (\widehat{\balpha}_1, \dots, \widehat{\balpha}_d)$ of $\alpha$ with expected error at most
    \begin{equation}\label{eq:lmm-error}
        \E_{\widehat{\balpha}}[\dtv{\widehat{\balpha}}{\alpha}] \leq O\Big( \frac1K\sqrt{Bd} + 2^{9K/2}B\sum_{k=1}^K B^{-k}V_k\Big).
    \end{equation}
\end{theorem}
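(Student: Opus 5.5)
The plan is a duality argument: reduce the total variation error to a Wasserstein-1 distance, approximate each $1$-Lipschitz test function by a degree-$K$ polynomial with error that shrinks near $0$, and pay for the polynomial part with the moment errors $V_k$.

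\textbf{Rescaling and the estimator.} Set $x_i = \alpha_i/B \in [0,1]$, so that $p_k(\alpha) = B^k \sum_i x_i^k$ and $\sum_i x_i \leq 1/B$. The algorithm solves a feasibility problem, a linear program after discretizing $[0,1]$ to a fine grid. It seeks a sorted vector $\widehat{x} \in [0,1]^d$ with $\sum_i \widehat{x}_i \leq 1/B$ and
\begin{equation*}
\ABS{B^k \textstyle\sum_i \widehat{x}_i^k - \widehat{p}_k} \leq V_k \quad \text{for all } k \in [K].
\end{equation*}
The true $x$ is feasible, so a solution exists. The LP returns a fractional measure on the grid with total mass $d$, and the randomization in the statement comes from rounding it to $d$ atoms (or from randomized grid rounding); this should only cost lower-order terms in expectation. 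Output $\widehat{\balpha} = B\widehat{x}$. By the triangle inequality, any feasible $\widehat{x}$ satisfies
\begin{equation*}
\ABS{\textstyle\sum_i (x_i^k - \widehat{x}_i^k)} \leq 2V_k B^{-k} \quad \text{for all } k \in [K].
\end{equation*}

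\textbf{Duality.} For two sorted vectors, $\tfrac12 \sum_i \Abs{\alpha_i - \widehat{\balpha}_i}$ equals $\tfrac{B d}{2}$ times the $W_1$ distance between the uniform measures on $\{x_i\}$ and $\{\widehat{x}_i\}$, since sorting is the optimal coupling in one dimension. By Kantorovich duality, it therefore suffices to bound $\tfrac{B}{2}\,\ABS{\sum_i f(x_i) - \sum_i f(\widehat{x}_i)}$ uniformly over $1$-Lipschitz $f$ on $[0,1]$ with $f(0)=0$. The constant term cancels because both sides have $d$ atoms.

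\textbf{Approximation.} For each such $f$, take a degree-$K$ polynomial $P = \sum_{k=0}^K a_k x^k$ satisfying the Ditzian--Totik pointwise bound
\begin{equation*}
\Abs{f(x) - P(x)} \leq O\big(\sqrt{x}/K + 1/K^2\big) \quad \text{on } [0,1].
\end{equation*}
Adjust $P$ so that $a_0 = 0$. The first and third error terms are then bounded as follows.
\begin{itemize}
\item[(1)] By Cauchy--Schwarz, $\tfrac{B}{K}\sum_i \sqrt{x_i} \leq \tfrac{B}{K}\sqrt{d \sum_i x_i} \leq \tfrac1K\sqrt{Bd}$, and likewise for $\widehat{x}$ using $\sum_i \widehat{x}_i \leq 1/B$.
\item[(3)] The $1/K^2$ term contributes $Bd/K^2$. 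This is $\lesssim \sqrt{Bd}/K$ whenever $Bd \lesssim K^2$. Otherwise I would refine the additive term to $\sqrt{x}/K$ only, or restrict attention to the support where the $1/K^2$ term is dominated.
\end{itemize}
The polynomial part contributes $\tfrac{B}{2}\,\ABS{\sum_k a_k \sum_i (x_i^k - \widehat{x}_i^k)} \leq B \sum_k \Abs{a_k} V_k B^{-k}$ by the moment bounds above.

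\textbf{Main obstacle.} The hard step is controlling the coefficients: I need $\Abs{a_k} \leq 2^{9K/2}$ while keeping the $\sqrt{x}/K$ pointwise error. I would expand $P$ in shifted Chebyshev polynomials $T_j(2x-1)$. The Chebyshev coefficients of a polynomial bounded by $O(1)$ on $[0,1]$ are $O(1)$. Converting to the monomial basis costs at most $\sum_j \|T_j(2x-1)\|_{\text{coef}} \leq (3+2\sqrt2)^K \leq 2^{9K/2}$ in coefficient size. Because the near-best Ditzian--Totik approximants (e.g.\ de la Vallée Poussin or Jackson-type kernel means in $\arccos$ variables) have uniformly bounded Chebyshev coefficients, this should give the $2^{9K/2}$ factor. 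Finally, I take expectation over the rounding randomness and combine the three terms to obtain \Cref{eq:lmm-error}.
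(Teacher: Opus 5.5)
The paper does not prove this statement. It imports it as a black box from \cite[Theorem~7.1]{PTTW25}, which in turn adapts the local moment matching analysis of \cite{HJW18}. Your plan is the standard duality argument from that line of work, and it goes through. Its steps are: sorted $\ell_1$ as $d$ times $W_1$ of the uniform measures, Kantorovich duality over $1$-Lipschitz $f$, a pointwise approximation with error $O(\sqrt{x}/K+1/K^2)$ (this is Timan's theorem, after the substitution $y=2x-1$), Cauchy--Schwarz against the mass constraint, and moment errors $2V_kB^{-k}$ paid for through the monomial coefficients. Three places need tightening.

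First, the $Bd/K^2$ term needs no refinement. If $Bd\le K^2$, then $Bd/K^2\le \sqrt{Bd}/K$. If $Bd>K^2$, then the right-hand side exceeds a constant. Meanwhile $\sum_i\alpha_i\le 1$ and your constraint $\sum_i\widehat{x}_i\le 1/B$ give $\tfrac12\sum_i|\alpha_i-\widehat{\balpha}_i|\le 1$, so the bound holds trivially.

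Second, what you call the main obstacle is already settled by your own observation. Since $f(0)=0$ and $f$ is $1$-Lipschitz, $|f|\le 1$ on $[0,1]$, so any approximant with your pointwise error satisfies $\|P\|_{L^\infty[0,1]}=O(1)$. Its Chebyshev coefficients are therefore $O(1)$, no matter how $P$ was constructed. The shifted polynomial $T_j(2x-1)$ has all roots in $(0,1)$, so its coefficients alternate in sign. Their absolute sum is $|T_j(-3)|=T_j(3)\le(3+2\sqrt{2})^j$. Hence $|a_k|=O\big((3+2\sqrt{2})^K\big)\le O(2^{9K/2})$, and no special property of de la Vall\'ee Poussin or Jackson-type approximants is needed.

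Third, drop the LP relaxation and the randomized rounding. The statement only asserts that an algorithm exists. The set of sorted $\widehat{x}\in[0,1]^d$ satisfying your constraints is nonempty, since it contains the true $x$. Output any such point, deterministically; a fine grid with slightly relaxed constraints also works, at an additive cost you can make arbitrarily small. Your bound then holds surely, hence in expectation. By contrast, your assertion that rounding a fractional LP solution to $d$ atoms costs only lower-order terms is unjustified. The natural quantile-rounding bound is $O(B)$ in total variation, and this is not dominated by the right-hand side when $BK^2\gg d$.
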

We are now ready to prove the main theorem of this paper, which we restate again for the reader's convenience. 

\begin{theorem}[\Cref{thm:main} restated]
    Let $\widehat{\balpha}$ be the output of the spectrum estimation algorithm from~\Cref{fig:spectrum-estimation-algorithm} when run on $2n$ copies of the mixed state $\rho \in \C^{d \times d}$. Then, with high constant probability, $\dtv{\alpha}{\widehat{\balpha}} \leq \epsilon$ so long as
    \begin{equation*}
        n = O\bigg(\frac{d^2 \cdot (\log \log d)^2}{\epsilon^4 \cdot (\log d)^2}\bigg).
    \end{equation*}
\end{theorem}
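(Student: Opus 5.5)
We will assemble the three components: \Cref{lem:bucketing} for the first half of the copies, \Cref{cor:additive-error-moments-small-bucket} for the second half, and \Cref{thm:smallestbucketLMM} for the final matching step. The parameters are chosen as $K = c_1 \log d/\log\log d$ for a small constant $c_1$, and $B = c_2\,\epsilon^2 K^2/d$. With these choices $n = \Theta(d/(B\epsilon^2)) = \Theta(d^2/(K^2\epsilon^4))$, which is the claimed bound. We may assume $K \le \sqrt{n}$, since $n \gtrsim d^2/\log^2 d$. If $B \ge 1$, i.e.\ $\epsilon^2K^2 \gtrsim d$, then $n = O(d)$ and $\epsilon \ge \sqrt d/K > 1$, so that regime is trivial. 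Each of the three ingredients succeeds with probability $0.99$, and the local moment matching guarantee holds in expectation, so Markov's inequality converts it into a high-probability statement after a constant-factor adjustment. A union bound then gives high constant probability overall.

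First we condition on the conclusions of \Cref{lem:bucketing}. Then $\bsigma = \overline{\bPi}\rho\overline{\bPi}$ has eigenvalues at most $1.1B$. The large-eigenvalue estimate satisfies $\Dtv(\widehat{\balpha}_{\mathrm{Large}}, \spec(\rho)_{\le \br}) \le \epsilon$, and the alignment error is at most $\epsilon$. The second half of the copies is independent of $\bPi$, so we may apply \Cref{cor:additive-error-moments-small-bucket} with $\Pi = \overline{\bPi}$ and threshold $1.1B$. This gives, for all $k\le K$,
\[
V_k \le 2^{3K}K^{5K}\Big(\frac{B^{k/2}\epsilon^k}{d^{k/2}} + B^k\epsilon\Big).
\]

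Next we feed these estimates into \Cref{thm:smallestbucketLMM}, applied to the eigenvalues of $\bsigma$ padded with zeros. These values lie in $[0,1.1B]$ and sum to at most $1$. The first error term is $\frac1K\sqrt{Bd} = \sqrt{c_2}\,\epsilon$. For the second term we substitute $V_k$:
\[
2^{9K/2}B\sum_k B^{-k}V_k \le 2^{9K/2}2^{3K}K^{5K+1}\,B\Big(\max_k (\epsilon^2/(Bd))^{k/2} + \epsilon\Big).
\]
Since $Bd = c_2\epsilon^2K^2 \ge \epsilon^2$, the max is at most $1$. The whole term is therefore at most $K^{O(K)}\,B$. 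Now $K^{O(K)} = d^{O(c_1)}$, while $B \le K^2/d$, so this is $d^{-1+O(c_1)}\mathrm{polylog}(d)$. This is below $\epsilon$ whenever $\epsilon \ge d^{-1/2}$, say, once $c_1$ is small. For smaller $\epsilon$ the claimed $n$ already exceeds $d^2/\epsilon^2$ up to logs, and there we instead fall back on Keyl--Werner, which needs only $O(d^2/\epsilon^2)$ copies. This balancing step, where the $K^{O(K)}$ blowup must be dominated by the smallness of $B$, is the main place the parameter choices bite, and it is why $K$ is only $\log d/\log\log d$.

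Finally we combine the errors. Let $\widehat{\balpha}_{\mathrm{Small}}$ be the output of local moment matching, truncated to $d-\br$ entries; truncating the zero padding costs nothing. By the triangle inequality and the fact that concatenation subadditively combines total variation over the two blocks,
\begin{align*}
\Dtv(\widehat{\balpha},\alpha) \le{}& \Dtv\big(\widehat{\balpha}_{\mathrm{Large}}\|\widehat{\balpha}_{\mathrm{Small}},\ \spec(\bPi\rho\bPi)_{\le\br}\|\spec(\bsigma)\big)\\
&{}+ \Dtv\big(\spec(\bPi\rho\bPi+\bsigma),\spec(\rho)\big).
\end{align*}
Here we use that sorting can only decrease the distance. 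The first piece splits into a large-bucket part and a small-bucket part. The large part is bounded by combining items 1 and 3 of \Cref{lem:dtv_lemma_k} (with $k=\br$), which gives $O(\epsilon)$. The small part is $O(\epsilon)$ by the local moment matching bound. The second piece is the alignment error, which is $O(\epsilon)$. Rescaling $\epsilon$ by a constant completes the proof.
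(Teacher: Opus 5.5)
Your overall route is the paper's. You use the same parameters $B=\Theta(\epsilon^2K^2/d)$ and $K=\Theta(\log d/\log\log d)$, and the same three ingredients: bucketing, \Cref{cor:additive-error-moments-small-bucket} on the fresh copies, and \Cref{thm:smallestbucketLMM} with Markov. You also split the final error the same way, into a large-block term, a small-block term and the alignment error.

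The one genuine problem is your treatment of small $\epsilon$. You correctly reduce the second local-moment-matching term to $K^{O(K)}\cdot B$, but you then bound $B\le K^2/d$. This discards the factor $\epsilon^2$ in $B=c_2\epsilon^2K^2/d$, and that is the only reason you need the restriction $\epsilon\ge d^{-1/2}$.

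Your patch for the remaining range, falling back on Keyl--Werner, does not prove the statement at hand. The theorem concerns the output of the specific algorithm in \Cref{fig:spectrum-estimation-algorithm}, not the existence of some algorithm. It also leans on an outside guarantee that the paper does not establish.

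The fix is immediate, and it is what the paper does: keep the $\epsilon^2$. Since $\epsilon^k/(Bd)^{k/2}=(c_2K^2)^{-k/2}$, the term is at most $2^{8K}K^{5K+3}\epsilon^2/d$ up to constants. Choosing $c_1$ small enough that $2^{8K}K^{5K+3}\le d$ makes it $O(\epsilon^2)\le O(\epsilon)$ for every $\epsilon\in(0,1)$. No case split is needed, and with this change your argument coincides with the paper's proof.
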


\begin{proof}
By taking $n = O(d/B\epsilon^2)$ copies, we have the following bucketing algorithm guarantees, by \Cref{lem:bucketing}:
\begin{enumerate}
    \item Low misclassification error: $\norm{ \overline{\bPi} \rho \overline{\bPi} }_\infty \leq 1.1 B$;
    \item Low error in learning the large eigenvalues: there are $\br \leq 1.1/B$ such eigenvalues, and $\Dtv( \widehat{\balpha}_{\mathrm{Large}}, \spec(\rho)_{\leq \br} ) \leq \epsilon$; and
    \item Low alignment error: $\Dtv( \spec( \bPi \rho \bPi + \overline{\bPi} \rho \overline{\bPi}), \spec(\rho) ) \leq \epsilon$.
\end{enumerate}
All three hold simultaneously with high constant probability when the relative error bound holds. Now, from the triangle inequality, we have
\begin{equation*}
    \Dtv \big( \alpha, \widehat{\balpha} \big) \leq \Dtv \big( \alpha, \spec( \bPi \rho \bPi + \overline{\bPi} \rho \overline{\bPi} )\big) + \Dtv \big(\spec( \bPi \rho \bPi + \overline{\bPi} \rho \overline{\bPi}), \widehat{\balpha} \big).
\end{equation*}
The first term is at most $\epsilon$, by low alignment error (as $\alpha = \spec(\rho)$). For the second term, we can upper bound the distance by rearranging the spectrum of $\widehat{\balpha}$, and then a further application of the triangle inequality:
\begin{align*}
    \Dtv\big( \spec(\bPi \rho \bPi + \overline{\bPi} \rho \overline{\bPi}),  \widehat{\balpha} \big) & \leq \Dtv \big( \spec( \bPi \rho \bPi), \widehat{\balpha}_{\mathrm{Large}} \big) + \Dtv \big( \spec( \overline{\bPi} \rho \overline{\bPi}), \widehat{\balpha}_{\mathrm{Small}} \big) \\
    & \leq \Dtv\big( \spec( \bPi \rho \bPi), \spec(\rho)_{\leq \br} \big) + \Dtv \big( \spec(\rho)_{\leq \br}, \widehat{\balpha}_{\mathrm{Large}}\big) + \Dtv \big( \spec( \overline{\bPi} \rho \overline{\bPi}), \widehat{\balpha}_{\mathrm{Small}} \big).
\end{align*}
For the first term: combining \Cref{it:spec_lemma_3} of \Cref{lem:dtv_lemma_k} and our bound on $\br$, which both hold when the relative error bound holds, with our choice of $n$, we have
\begin{equation*}
    \Dtv\big( \spec( \bPi \rho \bPi), \spec(\rho)_{\leq \br} \big) \leq O \Big(\sqrt{2\br d/n} + 2 \br d/n\Big) \leq O \Big( \sqrt{d/Bn} + d/Bn \Big) \leq O(\eps).
\end{equation*}
The second term is at most $\epsilon$ by low error in learning the large eigenvalues. The third term remains, but below we show it is at most $O(\epsilon)$ for $B = \Theta(\epsilon^2 K^2/d)$ and $K = O(\log d/\log \log d)$. Combining all above steps, we have $\Dtv(\alpha, \widehat{\balpha}) \leq O(\epsilon)$, and the result follows by rescaling $\epsilon$, pending an analysis of sample complexity.

We now focus on $\Dtv ( \spec( \overline{\bPi} \rho \overline{\bPi}), \widehat{\balpha}_{\mathrm{Small}})$. By our moment estimation algorithm, since $n = O(d/B\epsilon^2)$, we have 
\begin{equation*}
    \ABs{\widehat{\bp}_k - p_k(\overline{\bPi} \rho \overline{\bPi})} \leq 2^{3K}K^{5K} \cdot \bigg(\frac{B^{k/2} \epsilon^{k}}{d^{k/2}} + B^{k}\epsilon\bigg)
\end{equation*}
for all $k \in [K]$ simultaneously, with high constant probability. Here, we have used the low misclassification error, so that $\overline{\bPi} \rho \overline{\bPi}$ has eigenvalues of size at most $O(B)$. By the local moment matching guarantee of \Cref{thm:smallestbucketLMM},
\begin{equation}\label{eq:lmm-error-J}
\Dtv(\spec(\overline{\bPi}\rho \overline{\bPi}), \widehat{\balpha}_{\mathrm{Small}}) \leq O\bigg( \frac{\sqrt{Bd}}{K} + 2^{9K/2}B\sum_{k=1}^K B^{-k}V_k\bigg),
    \end{equation}
    with high constant probability, by Markov's inequality. Here, $V_k \coloneq 2^{3K}K^{5K} \cdot (B^{k/2} \epsilon^{k}/d^{k/2} + B^{k}\epsilon)$. This is the last term we need to bound. Our choice of $B = \Theta(\epsilon^2K^2/d)$ makes the first term at most $O(\epsilon)$. Substituting the definition of $V_k$ into the remaining term in \Cref{eq:lmm-error-J} gives
    \begin{equation*}
        2^{9K/2} B \sum_{k=1}^K B^{-k} V_k \leq 2^{8K}K^{5K} \cdot B \cdot \bigg(\sum_{k=1}^K B^{-k} \cdot \frac{B^{k/2} \cdot \epsilon^k}{d^{k/2}}\bigg) + 2^{8K}K^{5K} \cdot B \cdot \bigg(\sum_{k=1}^K B^{-k}  \cdot B^k \epsilon\bigg).
    \end{equation*}
    The first term can be bounded as:
    \begin{align*}
        2^{8K}K^{5K} \cdot B \cdot \bigg(\sum_{k=1}^K B^{-k} \cdot \frac{B^{k/2} \cdot \epsilon^k}{d^{k/2}}\bigg) & = 2^{8K}K^{5K} \cdot B \cdot \bigg(\sum_{k=1}^K \frac{\epsilon^k}{B^{k/2} \cdot d^{k/2}}\bigg) \\
        & = O \bigg( \frac{2^{8K} K^{5K+2}}{d} \cdot \epsilon^2 \cdot \sum_{k=1}^{K} \frac{1}{K^{k}} \bigg) \\
        & = O \bigg( \frac{2^{8K} K^{5K+2}}{d} \cdot \epsilon^2 \bigg).
    \end{align*}
    The second term is
    \begin{equation*}
        2^{8K}K^{5K} \cdot B \cdot \bigg(\sum_{k=1}^K B^{-k}  \cdot B^k \epsilon\bigg) = 2^{8K}K^{5K+1} \cdot B\epsilon = O \bigg( \frac{2^{8K} K^{5K+3}}{d} \cdot \epsilon^3 \bigg)
    \end{equation*}
    We can set $K = O( \log d/ \log \log d )$, for a sufficiently small constant, while preserving $2^{8K} K^{5K+3} \leq d$, and thus these terms are $O(\epsilon^2)$ and $O(\epsilon^3)$ respectively. Therefore, the right-hand side of \Cref{eq:lmm-error-J} becomes $O(\epsilon)$.

    Finally, our sample complexity is
    \begin{equation*}
        2n = O \bigg( \frac{d}{B\epsilon^2} \bigg) = O \bigg( \frac{d^2}{K^2 \epsilon^4} \bigg) = O \bigg( \frac{d^2 \cdot (\log \log d)^2}{\epsilon^4 \cdot (\log d)^2} \bigg).
    \end{equation*}
    This completes the proof. \qedhere

\end{proof}

\subsection{Spectrum estimation for rank-$r$ states}

When the input state is promised to be of rank at most $r$, the Keyl--Werner algorithm can estimate the input's spectrum using $\Theta(r^2)$ copies. In this case, the number of copies consumed is fewer than for state tomography, which requires $\Theta(rd)$ copies.

In this section, we extend our full-rank spectrum estimation algorithm to the rank-$r$ case, and show that $O(r^2 \cdot (\log \log (r)/\log(r))^2 )$ copies suffice for this task. The main idea is that for a bipartite pure state $\ket{\psi} \in \C^d \otimes \C^r$, the reduced density matrices $\tr_{\reg{1}}(\ketbra{\psi})$ and $\tr_{\reg{2}}(\ketbra{\psi})$ have the same nonzero spectrum. In particular, for a mixed state $\rho \in \C^{d \times d}$, with purification $\ket{\rho} \in \C^{d} \otimes \C^{r }$, we may learn the spectrum of $\tr_{\reg{1}}(\ketbra{\rho})$, a possibly full-rank state in an $r$-dimensional Hilbert space, rather than the spectrum of $\rho = \tr_{\reg{2}}(\ketbra{\rho})$, the original rank-$r$ state in a $d$-dimensional Hilbert space.\footnote{We are grateful to Xinyu (Norah) Tan for sharing with us this nice idea, which itself is a special case of a more general transformation that subsequently appeared in \cite{LT26}.} Here is the full algorithm. 

{
\floatstyle{boxed}
\restylefloat{figure}
\begin{figure}[H]
Given $n$ copies of a rank-$r$ mixed state $\rho \in \C^{d \times d}$:
\begin{enumerate}
    \item Apply $\purifychan^{(d, r)}$ to prepare $n$ copies of a random purification $\ket{\brho} \in \C^d \otimes \C^r$.
    \item Trace out the original registers to obtain $n$ copies of $\bsigma \coloneq \tr_{\reg{1}}(\ketbra{\brho}) \in \C^{r \times r}$.
    \item Apply the full-rank spectrum learning algorithm in \Cref{fig:spectrum-estimation-algorithm} to learn an estimate $\widehat{\balpha} = (\widehat{\balpha}_1, \dots, \widehat{\balpha}_r)$ of the spectrum of $\bsigma$.
    \item Output $\widehat{\bbeta} \coloneq \widehat{\balpha}\| 0^{d-r} = (\widehat{\balpha}_1, \dots, \widehat{\balpha}_r, 0, \dots, 0)$.
\end{enumerate}
\caption{Our spectrum estimation algorithm, tailored to rank-$r$ inputs.}
\label{fig:spectrum-estimation-algorithm-rank-r}
\end{figure}
}

\begin{proposition}[Spectrum estimation for rank-$r$ states]
    \label{thm:spectrum_learning_rank-r}
    Let $\widehat{\bbeta}$ be the output of the spectrum estimation algorithm from~\Cref{fig:spectrum-estimation-algorithm-rank-r} when run on $n$ copies of the mixed state $\rho \in \C^{d \times d}$. Then, with high constant probability, $\dtv{\widehat{\bbeta}}{\spec(\rho)} \leq \epsilon$ so long as
    \begin{equation*}
        n = O\Big(\frac{r^2 \cdot (\log \log r)^2}{\epsilon^4 \cdot (\log r)^2}\Big).
    \end{equation*}
\end{proposition}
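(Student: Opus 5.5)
The proof reduces to \Cref{thm:main-thm} applied in dimension $r$. The key structural fact is that $\sigma \coloneq \tr_{\reg{1}}(\ketbra{\rho})$ has the same nonzero spectrum as $\rho$, for any purification $\ket{\rho} \in \C^d \otimes \C^r$.

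First I will verify that the second step of \Cref{fig:spectrum-estimation-algorithm-rank-r} really hands the full-rank algorithm $n$ i.i.d.\ copies of a fixed state. By \Cref{lem:random_purification_channel}, the state after step 1 is $\E_{\ket{\brho}}[\ketbra{\brho}^{\otimes n}]$. Tracing out all the first registers yields $\E_{\bU}[(\bU \sigma_0 \bU^\dagger)^{\otimes n}]$, where $\sigma_0 = \tr_{\reg{1}}(\ketbra{\rho})$ for a fixed purification and $\bU$ is Haar-random on $\C^r$. So the input to step 3 is a mixture, over a random $\bsigma = \bU\sigma_0\bU^\dagger$, of $n$ i.i.d.\ copies of $\bsigma$. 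Every $\bsigma$ in this mixture has the same spectrum $\spec(\sigma_0)$. Since the algorithm is a fixed measurement followed by classical post-processing, its output distribution is the corresponding mixture of output distributions. Hence it suffices to prove the success guarantee for each fixed $\bsigma$ and then average: the failure probability is at most $0.01$ for every $\bsigma$, so it is also at most $0.01$ for the mixture.

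Next I will identify the spectra. Write the Schmidt decomposition $\ket{\rho} = \sum_{i=1}^r \sqrt{\alpha_i}\ket{v_i}\ket{e_i}$, where $\alpha_1,\dots,\alpha_r$ are the nonzero eigenvalues of $\rho$ padded with zeros. Then $\tr_{\reg{1}}(\ketbra{\rho}) = \sum_i \alpha_i \ketbra{e_i}$, so $\spec(\bsigma) = (\alpha_1,\dots,\alpha_r)$ and $\spec(\rho) = (\alpha_1,\dots,\alpha_r,0,\dots,0)$. Since $\bsigma \in \C^{r\times r}$ is a mixed state, \Cref{thm:main-thm} with $d \leftarrow r$ applies. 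With $n = O(r^2(\log\log r)^2/(\epsilon^4(\log r)^2))$ copies, and halving constants to account for the $2n$ convention there, it outputs $\widehat{\balpha}$ with $\dtv{\widehat{\balpha}}{\spec(\bsigma)} \le \epsilon$ with high constant probability.

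Finally, padding both vectors with $d-r$ zeros does not change total variation distance, so
\begin{equation*}
\dtv{\widehat{\bbeta}}{\spec(\rho)} = \dtv{\widehat{\balpha}\|0^{d-r}}{\spec(\bsigma)\|0^{d-r}} = \dtv{\widehat{\balpha}}{\spec(\bsigma)} \le \epsilon .
\end{equation*}
The only step that needs care is the mixture argument in the second paragraph, and it is routine. A minor additional point is that $\widehat{\balpha}$ is sorted, so appending zeros keeps $\widehat{\bbeta}$ a sorted estimate; the distance comparison holds in any case.
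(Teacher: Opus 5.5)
Your proposal is correct and follows the same route as the paper. Both arguments use that $\tr_{\reg{1}}(\ketbra{\brho})$ and $\rho$ share their nonzero spectrum, invoke \Cref{thm:main} in dimension $r$, and note that zero-padding preserves total variation distance. Your explicit mixture argument supplies a step the paper leaves implicit when it says ``for any purification'': the input to step 3 is a Haar mixture of i.i.d.\ copies of states that all have the same spectrum, so the per-state failure bound carries over to the mixture.
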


\begin{proof}
    For any purification $\ket{\brho}$, the states $\rho = \tr_{\reg{2}}(\ketbra{\brho})$ and $\bsigma = \tr_{\reg{1}}(\ketbra{\brho})$ have the same nonzero spectrum, which we denote $\alpha = (\alpha_1, \dots, \alpha_r)$. Moreover, $\spec(\rho) = \spec(\bsigma) \| 0^{d-r} = (\alpha_1, \dots, \alpha_r, 0, \dots, 0)$. Since $\rho$ is rank-$r$, we have
    \begin{equation*}
        \dtv{\widehat{\bbeta}}{\spec(\rho)} = \frac{1}{2}\sum_{i=1}^{d} \Abs{\widehat{\bbeta}_i - \alpha_i} = \frac{1}{2}\sum_{i=1}^{r} \Abs{\widehat{\balpha}_i - \alpha_i} = \dtv{\widehat{\balpha}}{\spec(\bsigma)}.
    \end{equation*}
    With $n$ samples, we have $\dtv{\widehat{\balpha}}{\spec(\bsigma)} \leq \epsilon$ with high probability, by \Cref{thm:main}.
\end{proof}

\section{Further applications of the relative error bound}\label{sec:additional_tomography_results}
In this section, we show how our relative error bound can be used to obtain additional results in quantum state tomography.

Our first two results concern principal component analysis (PCA). In PCA, the goal is to output the best rank-$k$ approximation in some distance measure, given $k \in [d]$ as input. In \Cref{subsection_PCA_td}, we consider the case where our distance measure is trace distance. Trace distance PCA was previously considered in \cite{OW16}, in which the authors showed that $n = O(kd/\epsilon^2)$ copies suffice using the representation-theoretic Keyl's algorithm. Here, we reprove this result with a new analysis, leveraging our new bounds from \Cref{sec:moments}. In \Cref{subsection_PCA_fidelity}, we consider the case where our distance measure is the more challenging Bures distance. We show that $n = O(kd/\epsilon^2)$ copies also suffice for Bures distance PCA, which is a new result.

Our final result concerns full state tomography with respect to Bures $\chi^2$-divergence, $\Dchi$. Bures $\chi^2$-divergence is a more challenging distance measure to learn in than the more common Bures and trace distances, as $\Dtr(\rho,\sigma)^2 \leq \DBur(\rho,\sigma)^2 \leq \Dchi(\rho\|\sigma)$. The best previous bound for learning in this distance measure is due to Flammia and O'Donnell~\cite{FO24}, who showed that $n = \widetilde{O}(\sqrt{rd^3}/\eps)$ copies suffice to learn to error $\epsilon$. In \Cref{subsection_Bures_chi-squared}, we remove the tilde and show that $O(\sqrt{rd^3}/\eps)$ copies suffice.

All of these results follow from applications of the machinery developed in \Cref{sec:moments}.

\subsection{Principal component analysis in trace distance} \label{subsection_PCA_td}

In this section, we recover the fact that trace distance PCA can be solved using $O(kd/\eps^2)$ copies, using the techniques developed in this paper. This result was previously obtained in \cite{OW16}, using the representation-theoretic Keyl's algorithm.

We begin by formally defining the relevant distance measure.

\begin{definition}[Trace distance PCA error]
    Let $\rho \in \C^{d \times d}$ be a mixed state, with eigenvalues $\alpha_1, \dots, \alpha_d$.
    A rank-$k$ matrix $\widehat{\rho}$ is said to be a rank-$k$ approximation to $\rho$ with \emph{trace distance PCA error $\eps$} if
    \begin{equation*}
        \Dtr(\rho, \widehat{\rho}) \leq \Dtr( \rho, \widehat{\rho}_{\mathrm{opt}}) + \epsilon = \frac{1}{2} ( \alpha_{k+1} + \dots + \alpha_d) + \epsilon.
    \end{equation*}
    Here $\widehat{\rho}_{\mathrm{opt}}$ is the optimal rank-$k$ approximation to $\rho$, i.e.\ the restriction of $\rho$ to the support of its largest $k$ eigenvectors.
\end{definition}

\begin{proposition}\label{prop:trace_distance_PCA}
    Let $\rho \in \C^{d \times d}$ be a mixed state with eigenvalues $\alpha_1, \dots, \alpha_d$, and let $\widehat{\brho}$ be the output of $\mixed(\gps)$, given $n$ copies of $\rho$. Then, with probability at least $0.99$,
    \begin{equation*}
        \Dtr(\rho,\widehat{\brho}_{\leq k}) \leq \frac{1}{2} (\alpha_{k+1} + \dots + \alpha_d) + O\big(\sqrt{kd/n}\big) + O\big(kd/n\big).
    \end{equation*}
Thus, with $n = O(kd/\epsilon^2)$ copies of a mixed state $\rho \in \C^{d \times d}$, $ \widehat{\brho}_{\leq k}$ is a rank-$k$ approximation to $\rho$ with trace distance PCA error $\eps$, with high constant probability.
\end{proposition}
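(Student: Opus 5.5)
Let $\bPi$ be the projector onto the top $k$ eigenvectors of $\widehat{\brho}$, so that $\widehat{\brho}_{\leq k} = \bPi\widehat{\brho}\bPi$. We condition on the relative error bound (\Cref{thm:main-tool}) holding, together with its consequences \Cref{cor:1-norm-bound-projectors} and \Cref{lem:dtv_lemma_k}; this happens with probability at least $0.99$. The plan is to split the error using the triangle inequality:
\begin{equation*}
    2\Dtr(\rho,\widehat{\brho}_{\leq k}) \leq \norm{\rho - \bPi\rho\bPi}_1 + \norm{\bPi(\rho-\widehat{\brho})\bPi}_1.
\end{equation*}

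For the second term we apply \Cref{cor:1-norm-bound-projectors} with $\Pi\leftarrow\bPi$, which has rank $k$. This gives a bound of $4C\sqrt{(kd/n)(1+kd/n)} = O(\sqrt{kd/n}+kd/n)$.

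For the first term we decompose $\rho - \bPi\rho\bPi = \overline{\bPi}\rho\overline{\bPi} + \bPi\rho\overline{\bPi} + \overline{\bPi}\rho\bPi$. Write $\delta := \tr(\overline{\bPi}\rho) = 1-\tr(\bPi\rho)$. The diagonal block has trace norm exactly $\delta$. For the off-diagonal blocks, write $\bPi\rho\overline{\bPi} = (\bPi\rho^{1/2})(\rho^{1/2}\overline{\bPi})$ and apply Hölder's inequality: $\norm{\bPi\rho\overline{\bPi}}_1 \leq \sqrt{\tr(\bPi\rho)}\sqrt{\tr(\overline{\bPi}\rho)}\leq\sqrt{\delta}$. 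This bound alone gives $\norm{\rho-\bPi\rho\bPi}_1\le \delta+2\sqrt\delta$, which is too weak, because $\sqrt\delta$ could be of order $\sqrt{\alpha_{k+1}+\dots+\alpha_d}$ rather than the tail itself.

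The main obstacle is therefore to handle the cross terms more carefully. Instead of the crude estimate, I would use the identity $\tr(\rho-\bPi\rho\bPi)=\delta$ and a positivity argument. Write $\rho - \bPi\rho\bPi = (\rho - \bPi\rho\bPi)_+ - (\rho-\bPi\rho\bPi)_-$. The negative part lives where $\bPi\rho\bPi$ dominates, and I would bound its trace by $O(\sqrt{kd/n}+kd/n)$. Then $\norm{\rho-\bPi\rho\bPi}_1 = \delta + 2\tr(\cdot)_-$.

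An alternative route I expect to be cleaner is to bound $\norm{\rho-\bPi\rho\bPi}_1 \le \norm{\rho - P\rho P}_1 + \norm{P\rho P-\bPi\rho\bPi}_1$, where $P$ projects onto the top-$k$ eigenspace of $\rho$ and satisfies $\norm{\rho-P\rho P}_1 = \alpha_{k+1}+\dots+\alpha_d$. It then remains to show that $\bPi\rho\bPi$ is close to $P\rho P$ in trace norm. Here I would use the projector $\bPi^+$ onto $\mathrm{span}(P,\bPi)$, which has rank at most $2k$. Working inside $\bPi^+$, the relative error bound (via \Cref{cor:1-norm-bound-projectors}) makes $\bPi^+\widehat{\brho}\bPi^+$ and $\bPi^+\rho\bPi^+$ close to within $O(\sqrt{kd/n}+kd/n)$. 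Since $\bPi$ is optimal for $\widehat{\brho}$, I would then transfer this optimality to $\rho$ and control the tail mass $\delta$: using \Cref{it:spec_lemma_3} of \Cref{lem:dtv_lemma_k}, $\delta \le \alpha_{k+1}+\dots+\alpha_d + O(\sqrt{kd/n}+kd/n)$.

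For the cross terms, the fallback is the following. A rank-$k$ truncation $A_{\le k}$ of a PSD operator whose top-$k$ mass is captured up to $\eta$ satisfies $\norm{A-A_{\le k}}_1$-type estimates. Combining this with $\norm{\rho-\widehat{\brho}}$ controlled on $\bPi^+$, the final step is to apply the triangle inequality through $\bPi^+\widehat{\brho}\bPi^+$, whose top-$k$ truncation is exactly $\widehat{\brho}_{\leq k}$, reducing everything to $\rank\le 2k$ quantities bounded by \Cref{cor:1-norm-bound-projectors}.
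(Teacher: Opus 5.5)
\textbf{There is a genuine gap: the cross terms are never actually bounded.} Your skeleton matches the paper's. You use the triangle inequality through $\bPi\rho\bPi$, then \Cref{cor:1-norm-bound-projectors} for $\norm{\bPi(\rho-\widehat{\brho})\bPi}_1$, then \Cref{lem:dtv_lemma_k} with interlacing to get $\delta=\tr(\overline{\bPi}\rho)\le\alpha_{k+1}+\dots+\alpha_d+O(\sqrt{kd/n}+kd/n)$. You also correctly isolate the cross terms $\bPi\rho\overline{\bPi}+\overline{\bPi}\rho\bPi$ as the crux, and correctly note that H\"older only gives $\sqrt{\delta}$. None of your three routes closes this.

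\emph{The positivity route} only asserts that the negative part is $O(\sqrt{kd/n}+kd/n)$, with no argument.

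\emph{The route through $P\rho P$} rests on a false intermediate claim: $\bPi\rho\bPi$ need not be close to $P\rho P$ in trace norm. Take $k=1$ and $\rho=(\tfrac12+\eta)\ketbra{1}+(\tfrac12-\eta)\ketbra{2}$ with $0<\eta\ll d/n$. Then $\widehat{\brho}=(\tfrac12-2\eta)\ketbra{1}+(\tfrac12+2\eta)\ketbra{2}$ satisfies the relative error bound and gives $\bPi=\ketbra{2}$. In that case $\norm{P\rho P-\bPi\rho\bPi}_1=1$, even though $\norm{\rho-\bPi\rho\bPi}_1=\tfrac12+\eta$.

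\emph{The fallback through $\bPi^+\rho\bPi^+$ and $\bPi^+\widehat{\brho}\bPi^+$} does not reduce to rank-$2k$ quantities. It leaves $\norm{\rho-\bPi^+\rho\bPi^+}_1$, whose cross terms $\bPi^+\rho\overline{\bPi^+}=(\bPi^+-P)\rho\,\overline{\bPi^+}$ can be as large as the tail. They are not estimation error, because $\bPi^+$ is not a spectral projector of $\widehat{\brho}$. For example, take $k=1$, $\rho=(1-t)\ketbra{1}+t\ketbra{\psi}$ with $\ket{\psi}=(\ket{2}+\ket{3})/\sqrt{2}$, and let $\bPi$ be a slight tilt of $\ketbra{1}$ toward $\ket{2}$. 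The chain then gives roughly $(\tfrac{\sqrt{5}}{2}+\tfrac12)t$ where the true value is about $t$. Any constant factor in front of $\alpha_{k+1}+\dots+\alpha_d$ is fatal for a PCA guarantee.

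\textbf{The missing idea.} Since $\bPi$ is a spectral projector of $\widehat{\brho}$, we have $\bPi\widehat{\brho}\overline{\bPi}=0$. Hence $\bPi\rho\overline{\bPi}=\bPi(\rho-\widehat{\brho})\overline{\bPi}$, so the cross terms are pure estimation error. The paper proceeds as follows:
\begin{itemize}
\item Take eigenvectors $\ket{\bu_1},\dots,\ket{\bu_{2k}}$ of the rank-$\le 2k$ matrix $\bPi\rho\overline{\bPi}+\overline{\bPi}\rho\bPi$.
\item Write its trace norm as $\sum_i\lvert\tr(\bO_i(\rho-\widehat{\brho}))\rvert$, with the rank-two observables $\bO_i=\bPi\ketbra{\bu_i}\overline{\bPi}+\overline{\bPi}\ketbra{\bu_i}\bPi$. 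These satisfy $\tr(\bO_i\widehat{\brho})=0$.
\item Apply \Cref{thm:uniform-statement-for-observables} and Cauchy--Schwarz, using $\tr(\bO_i^2)\le 2$ and $\sum_i\tr(\bO_i^2\rho)\le 1$, to get $O(\sqrt{kd/n}+kd/n)$.
\end{itemize}
Alternatively, $\bPi(\rho-\widehat{\brho})\overline{\bPi}$ is an off-diagonal block of the compression of $\rho-\widehat{\brho}$ to $\bPi+Q$, where $Q$ projects onto the range of $\overline{\bPi}(\rho-\widehat{\brho})\bPi$. This is a projector of rank at most $2k$, so \Cref{cor:1-norm-bound-projectors} bounds it directly. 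Either way you get $\norm{\rho-\bPi\rho\bPi}_1\le\delta+O(\sqrt{kd/n}+kd/n)$, and the rest of your argument goes through.
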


\begin{proof}[Proof of \Cref{prop:trace_distance_PCA}]
    We condition on the relative error bound holding. Let $\bPi$ be the projector onto the largest $k$ eigenvectors of $\widehat{\brho}$, settling ties arbitrarily (but consistently with $\widehat{\brho}_{\leq k}$, so that $\widehat{\brho}_{\leq k} = \bPi \widehat{\brho} \bPi$). We start by writing
    \begin{align}
        \Dtr(\rho, \bPi \widehat{\brho} \bPi) & \leq \Dtr(\rho, \bPi \rho \bPi) + \Dtr(\bPi \rho \bPi, \bPi \widehat{\brho} \bPi) \nonumber \\
        & = \frac{1}{2} \big( \norm{ \rho - \bPi \rho \bPi}_1 + \norm{\bPi ( \rho - \widehat{\brho}) \bPi}_1 \big) \nonumber \\
        & \leq \frac{1}{2} \big( \norm{ \overline{\bPi} \rho \overline{\bPi}}_1 + \norm{\bPi \rho \overline{\bPi} + \overline{\bPi} \rho \bPi}_1 +  \norm{\bPi ( \rho - \widehat{\brho}) \bPi}_1 \big). \label{eq:sum_of_three_trace_norms}
    \end{align}
    To get the second inequality, we have used
    \begin{equation*}
        \rho - \bPi \rho \bPi = (\bPi + \overline{\bPi}) \rho (\bPi + \overline{\bPi}) - \bPi \rho \bPi =  \overline{\bPi} \rho \overline{\bPi} + \big(\bPi \rho \overline{\bPi} + \overline{\bPi} \rho \bPi\big)
    \end{equation*}
    and the triangle inequality. We proceed by bounding the three terms of \Cref{eq:sum_of_three_trace_norms} individually.

    We start with the first term. Since $\tr( \bPi \rho \bPi) + \tr( \overline{\bPi} \rho \overline{\bPi}) = 1 = \tr(\rho_{\leq k}) + \tr(\rho_{>k})$, we have
    \begin{equation*}
        \norm{ \overline{\bPi} \rho \overline{\bPi} }_1 = \tr(\overline{\bPi} \rho \overline{\bPi}) = \tr(\rho_{>k}) + \big( \tr(\rho_{\leq k}) - \tr( \bPi \rho \bPi) \big).
    \end{equation*}
    However, $\tr(\rho_{>k}) = \alpha_{k+1} + \dots + \alpha_d$, so it suffices to bound the difference in parentheses. To do so, start by noting that $\lambda_i(\rho) \geq \lambda_i(\bPi \rho \bPi)$ by Cauchy's interlacing theorem, and hence
    \begin{equation*}
        \tr(\rho_{\leq k}) - \tr( \bPi \rho \bPi) = \sum_{i=1}^k \big(\lambda_i (\rho) - \lambda_i(\bPi \rho \bPi)\big) = \sum_{i=1}^{k} \ABs{\lambda_i (\rho) - \lambda_i(\bPi \rho \bPi)} = 2 \cdot \Dtv \big( \spec(\rho)_{\leq k}, \spec(\bPi \rho \bPi)_{\leq k} \big).
    \end{equation*}
    However, by \Cref{it:spec_lemma_3} of \Cref{lem:dtv_lemma_k},
    $\Dtv \big( \spec(\rho)_{\leq k}, \spec(\bPi \rho \bPi)_{\leq k} \big) \leq O\big(\sqrt{k d/n} \big) + O \big( kd/n \big).$
    Putting everything together, we get
    \begin{equation}\label{eq:first_term}
        \norm{ \overline{\bPi} \rho \overline{\bPi} }_1  \leq (\alpha_{k+1} + \dots + \alpha_{d} ) + O\big(\sqrt{k d/n} \big) + O \big( kd/n \big).
    \end{equation}

    For the second term in \Cref{eq:sum_of_three_trace_norms}, the matrix $\bPi \rho \overline{\bPi} + \overline{\bPi} \rho \bPi$ is Hermitian and has rank at most $2k$. Thus, there are eigenvectors $\ket{\bu_1}, \dots, \ket{\bu_{2k}}$ such that
    \begin{equation*}
        \norm{\bPi \rho \overline{\bPi} + \overline{\bPi} \rho \bPi }_1 = \sum_{i=1}^{2k} \ABs{\bra{\bu_i} (\bPi \rho \overline{\bPi} + \overline{\bPi} \rho \bPi) \ket{\bu_i}} = \sum_{i=1}^{2k} \ABs{\tr( \bO_i \cdot \rho )} = \sum_{i=1}^{2k} \ABs{\tr( \bO_i \cdot (\rho - \widehat{\brho}) )}.
    \end{equation*}
    Here, $\bO_i \coloneq \bPi \ketbra{\bu_i} \overline{\bPi} + \overline{\bPi} \ketbra{\bu_i} \bPi$ is an observable of rank at most $2$. The last step follows from $\tr( \bO_i \cdot \widehat{\brho}) = 0$, because $\bPi$ and $\overline{\bPi}$ commute through $\widehat{\brho}$, and $\bPi \cdot \overline{\bPi} = \overline{\bPi} \cdot \bPi = 0$. Applying \Cref{thm:uniform-statement-for-observables} with $O \leftarrow \bO_i$ gives
    \begin{align}
        \norm{\bPi \rho \overline{\bPi} + \overline{\bPi} \rho \bPi }_1 & \leq C \cdot \sum_{i=1}^{2k} \sqrt{ \frac{d}{n} \cdot 2 \cdot \Big( \tr( \bO_i^2 \cdot \rho) + \frac{d}{n} \cdot \tr( \bO_i^2)\Big)} \nonumber \\
        & \leq C \cdot \sum_{i=1}^{2k} \Bigg( \sqrt{\frac{2d}{n} \cdot \tr(\bO_i^2 \cdot \rho)} + \frac{d}{n} \sqrt{2 \tr(\bO_i^2)}\Bigg) \nonumber \\
        & \leq C \cdot \Bigg( \sqrt{ \frac{4kd}{n} \cdot \sum_{i=1}^{2k} \tr(\bO_i^2 \cdot \rho)} + \frac{d}{n} \sum_{i=1}^{2k} \sqrt{2\tr(\bO_i^2)} \Bigg). \label{eq:second_term}
    \end{align}
    To simplify this, we use
    \begin{align*}
        \bO_i^2 & = \bPi \ketbra{\bu_i} \overline{\bPi} \ketbra{\bu_i} \bPi + \overline{\bPi} \ketbra{\bu_i} \bPi \ketbra{\bu_i} \overline{\bPi} \\
        & = \bra{\bu_i} \overline{\bPi} \ket{\bu_i} \cdot \bPi \ketbra{\bu_i} \bPi + \bra{\bu_i} {\bPi} \ket{\bu_i} \cdot \overline{\bPi} \ketbra{\bu_i} \overline{\bPi}.
    \end{align*}
    Thus,
    \begin{equation*}
        \tr(\bO_i^2) = 2 \cdot \bra{\bu_i} \bPi \ket{\bu_i} \cdot \bra{\bu_i} \overline{\bPi} \ket{\bu_i} \leq 2.
    \end{equation*}
    Moreover,
    \begin{align*}
        \sum_{i=1}^{2k} \tr(\bO_i^2 \cdot \rho) & = \sum_{i=1}^{2k} \Big(\bra{\bu_i} \overline{\bPi} \ket{\bu_i} \cdot \bra{\bu_i} \bPi \rho \bPi \ket{\bu_i} + \bra{\bu_i} {\bPi} \ket{\bu_i} \cdot \bra{\bu_i} \overline{\bPi} \rho \overline{\bPi} \ket{\bu_i}\Big) \\
        & \leq \sum_{i=1}^{2k} \Big( \bra{\bu_i} \bPi \rho \bPi \ket{\bu_i} +  \bra{\bu_i} \overline{\bPi} \rho \overline{\bPi} \ket{\bu_i}\Big) \\
        & \leq \tr( \bPi \rho \bPi) +  \tr(\overline{\bPi} \rho \overline{\bPi}) = \tr( (\bPi + \overline{\bPi}) \rho ) = 1.
    \end{align*}
    Using these bounds in \Cref{eq:second_term}, we obtain
    \begin{equation} \label{eq:second_term_final_bound}
       \norm{\bPi \rho \overline{\bPi} + \overline{\bPi} \rho \bPi }_1 \leq C \cdot \Big( \sqrt{ 4kd/n} + 4kd/n \Big) \leq O(\sqrt{kd/n}) + O(kd/n).
    \end{equation}
    Finally, for the third term in \Cref{eq:sum_of_three_trace_norms}, we apply \Cref{cor:1-norm-bound-projectors} with $\Pi \leftarrow \bPi$:
    \begin{align}
        \frac{1}{2} \norm{\bPi ( \rho - \widehat{\brho}) \bPi}_1
        & \leq 2C \cdot \sqrt{ \frac{d}{n} \cdot \mathrm{rank}(\bPi) \cdot \Big( \tr(\bPi \cdot \rho) + \frac{d}{n} \cdot \mathrm{rank}(\bPi) \Big) } \nonumber \\
        & \leq 2C \cdot \sqrt{ \frac{d}{n} \cdot k \cdot \Big( 1 + \frac{d}{n} \cdot k \Big) } \nonumber \\
        & \leq O(\sqrt{kd/n}) + O(kd/n). \label{eq:third_term}
    \end{align}
     Inserting \Cref{eq:first_term,eq:second_term_final_bound,eq:third_term} into \Cref{eq:sum_of_three_trace_norms}, we get
     \begin{equation*}
         \Dtr(\rho, \widehat{\brho}_{\leq k}) \leq \frac{1}{2} ( \alpha_{k+1} + \dots + \alpha_d) + O(\sqrt{kd/n}) + O(kd/n).
     \end{equation*}
     This completes the proof.
\end{proof}

\subsection{Principal component analysis in Bures distance} \label{subsection_PCA_fidelity}

Bures distance is the metric on quantum states given by
\begin{equation*}
    \DBur(\rho, \sigma)^2 \coloneq  2(1 - \mathrm{F}(\rho, \sigma)),
\end{equation*}
where $\mathrm{F}(\rho, \sigma) = \norm{\sqrt{\rho} \sqrt{\sigma} }_1$ is the fidelity.
Full state tomography is more challenging in Bures distance than trace distance, since $\Dtr(\rho,\sigma) \leq \DBur(\rho,\sigma)$, so that algorithms learning in Bures distance automatically learn in trace distance too. It turns out that the same is not necessarily true for PCA.\footnote{In this footnote we give a concrete example of a qubit state $\rho$ and a rank-$1$ estimate $\widehat{\rho}$ such that $\widehat{\rho}$ has Bures distance PCA error (see \Cref{def:bures_PCA}) that is smaller than its trace distance PCA error, for $k=1$. Take $\rho = (1-x) \cdot \ketbra{0} + x \cdot \ketbra{1}$. For $x$ small, the optimal rank-$1$ approximation is $\widehat{\rho}_{\mathrm{opt}} = (1-x) \cdot \ketbra{0}$. Now consider the output $\widehat{\rho} = \ketbra{0}$. We have $\Dtr(\rho, \widehat{\rho}) = x$, so that the trace distance PCA error is $x/2$. However, we have 
\begin{equation*}
    \DBur(\rho, \widehat{\rho}) = \sqrt{2 - 2 \sqrt{1-x}} = \sqrt{x} + x^{3/2}/8 + O(x^{5/2}).
\end{equation*}
so that the Bures distance PCA error is $x^{3/2}/8$. For $x$ small, $x^{3/2}/8 \ll x/2$. 
}
Regardless, algorithms for learning mixed states in Bures distance have always been more difficult to obtain, and there are no prior Bures distance PCA algorithms in the literature provably achieving the expected $O(kd/\epsilon^2)$ sample complexity. In this section, we give the first such algorithm. First, however, we must introduce the generalization of Bures distance to unnormalized states. Then, we will define Bures distance PCA error, and state and prove our result. 

\begin{definition}[Bures distance for unnormalized states] \label{def:Bures_distance}
    Given two PSD matrices $\rho, \sigma \in \C^{d\times d}$, we define the \emph{Bures distance} between $\rho$ and $\sigma$ to be
    \begin{equation*}
        \DBur(\rho, \sigma)^2 \coloneq \tr(\rho) + \tr(\sigma) - 2\Fid(\rho, \sigma).
    \end{equation*}
\end{definition}
This definition generalizes the Bures distance beyond the case where $\tr(\rho) = \tr(\sigma) = 1$. It is also natural in the following sense: the Bures distance (for normalized states) can be equivalently formulated via the equation
\begin{equation*}
    \DBur(\rho, \sigma) = \min_{U \in U(d)} \norm{ \sqrt{\rho} - \sqrt{\sigma} U }_2,
\end{equation*}
and this formula extends as-is for unnormalized states, recovering \Cref{def:Bures_distance}. For a more thorough treatment of this metric, we refer the reader to~\cite{BJL19}.

We next define Bures distance PCA error.

\begin{definition}[Bures distance PCA error] \label{def:bures_PCA}
    Let $\rho \in \C^{d \times d}$ be a mixed state, with eigenvalues $\alpha_1, \dots, \alpha_d$.
    A PSD rank-$k$ matrix $\widehat{\rho}$ is said to be a rank-$k$ approximation to $\rho$ with \emph{Bures distance PCA error $\eps$} if
    \begin{equation*}
        \DBur(\rho, \widehat{\rho}) \leq \DBur( \rho, \widehat{\rho}_{\mathrm{opt}}) + \epsilon = \sqrt{\alpha_{k+1} + \dots + \alpha_d} + \epsilon.
    \end{equation*}
    Here $\widehat{\rho}_{\mathrm{opt}}$ is the optimal rank-$k$ approximation to $\rho$, i.e.\ the restriction of $\rho$ to the support of its largest $k$ eigenvectors.\footnote{Note that this differs from the notion of fidelity PCA introduced in \cite{PTTW25}. There, the authors defined fidelity PCA error as
    \begin{equation*}
        \sum_{i=1}^{k} \alpha_i + \tr( \widehat{\rho} ) - 2\Fid(\rho, \widehat{\rho}) = \DBur(\rho, \widehat{\rho})^2 - \sum_{i = k+1}^{d} \alpha_i \leq \epsilon.
    \end{equation*}
    That is, the authors considered a version of PCA based on the ``Bures distance-squared'', which is proportional to the infidelity, rather than our ``unsquared'' version. The authors conjectured that $\widetilde{O}(kd/\epsilon)$ copies sufficed to learn in fidelity PCA error. However, this is false even for classical two-dimensional states. Consider the special case where one is promised one of the two states $\rho_{\pm} \coloneq \mathrm{diag}(1/2 \pm \eps, 1/2 \mp \eps)$, and take $k=1$. If one can perform fidelity PCA with constant success probability, then one can also distinguish the two states with high constant probability. For any rank-$1$ (not necessarily normalized, but WLOG real-valued) output $\ket{u} = u_1 \ket{1} + u_2 \ket{2}$, the fidelity PCA error for, say, $\rho_+$ is
    \begin{equation*}
        (1/2 + \eps) + ( u_1^2 + u_2^2 ) - 2 u_1 \sqrt{1/2 + \eps} = (1/2 + \epsilon - u_1 )^2 + u_2^2.
    \end{equation*}
    If this is at most $\eps$ with high probability, then $u_1 \geq 1/2$, and $u_2 \leq \epsilon$. Similar reasoning holds for when given $\rho_-$: we have $u_1 \leq \epsilon$, and $u_2 \geq 1/2$. For $\eps$ small, we can then distinguish the two cases with the same probability. However, since $F(\rho_+, \rho_-) = 1 - 4\eps^2$, we require $n = \Omega(1/\eps^2)$ copies to distinguish.
    }
\end{definition}

To solve Bures distance PCA, we must output a PSD operator, so $\widehat{\brho}_{\leq k}$ does not suffice: $\widehat{\brho}$, the output of $\mixed(\gps)$, may have negative eigenvalues. However, it is easy to convert $\widehat{\brho}$ to a PSD operator. For full-rank inputs, the output of $\mixed(\gps)$ (see \Cref{fig:mix_gps}) has the form
\begin{equation*}
    \widehat{\brho} = \frac{d^2 + n}{n} \cdot \tr_{\reg{2}}(\ketbra{\bu}) - \frac{1}{n} \cdot \tr_{\reg{2}}(I_{d} \otimes I_d) = \frac{d^2 + n}{n} \cdot \tr_{\reg{2}}(\ketbra{\bu}) - \frac{d}{n} \cdot I_{d},
\end{equation*}
for some pure state $\ket{\bu}$. Since $\ketbra{\bu}$ is PSD, so too is $\tr_{\reg{2}}(\ketbra{\bu})$. The problematic term is $-(d/n) \cdot I_d$. So, we can force our output to be PSD by adding a multiple of the identity. For technical reasons that will become clear in our proof,
we will also want our estimator's eigenvalues to be at least $d/n$. For this reason, we will slightly modify our original estimator and instead output
\begin{equation} \label{def:tilde_rho}
    \widetilde{\brho} \coloneq \widehat{\brho} + \frac{2d}{n} \cdot I_d.
\end{equation}
The minimum eigenvalue of $\widetilde{\brho}$ is always at least $(2d-d)/n = d/n$.



We will show the following result.

\begin{proposition}\label{prop:Bures_distance_PCA}
    Let $\rho \in \C^{d \times d}$ be a mixed state,  with eigenvalues $\alpha_1, \dots, \alpha_d$, and let $\widehat{\brho}$ be the output of $\mixed(\gps)$, given $n$ copies of $\rho$. Form $\widetilde{\brho}$ as in \Cref{def:tilde_rho}. Then with probability at least $0.99$,
    \begin{equation*}
        \DBur(\rho,\widetilde{\brho}_{\leq k}) \leq \sqrt{\alpha_{k+1} + \dots + \alpha_d} + O\big(\sqrt{kd/n}\big).
    \end{equation*}
Thus, with $n = O(kd/\epsilon^2)$ copies of a mixed state $\rho \in \C^{d \times d}$, $\widetilde{\brho}_{\leq k}$ is a rank-$k$ approximation to $\rho$ with Bures PCA error $\eps$, with high probability.
\end{proposition}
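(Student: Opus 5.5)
The plan is to let $\bPi$ be the projector onto the top $k$ eigenvectors of $\widetilde{\brho}$. These are the same as those of $\widehat{\brho}$, since $\widetilde{\brho}$ is a shift by a multiple of the identity, so $\widetilde{\brho}_{\leq k} = \bPi \widetilde{\brho} \bPi$. We condition on the relative error bound (\Cref{thm:uniform-bound-pointwise}) and use the triangle inequality for the Bures metric on PSD matrices, with intermediate point $\bPi\rho\bPi$:
\begin{equation*}
    \DBur(\rho, \bPi\widetilde{\brho}\bPi) \leq \DBur(\rho, \bPi\rho\bPi) + \DBur(\bPi\rho\bPi, \bPi\widetilde{\brho}\bPi).
\end{equation*}
The goal is to show that the first term is at most $\sqrt{\alpha_{k+1}+\dots+\alpha_d} + O(\sqrt{kd/n})$ and the second is $O(\sqrt{kd/n})$.

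\emph{Second term.} Use $\DBur(A,B) \leq \norm{\sqrt{A}-\sqrt{B}}_2$ together with the Sylvester identity $\sqrt{A}-\sqrt{B} = \calL_{\sqrt{A}+\sqrt{B}}^{-1}(A-B)$, where $\calL$ is the superoperator defined before the proof of \Cref{thm:uniform-bound-pointwise}, applied on the range of $\bPi$. Since every eigenvalue of $\widetilde{\brho}$ is at least $d/n$, \Cref{cor:uniform-bound-pointwise_rho-hat-version} gives, for unit $\ket{w}$ in the range of $\bPi$,
\begin{equation*}
    \ABs{\bra{w}(\widetilde{\brho}-\rho)\ket{w}} \leq \frac{2d}{n} + C'\sqrt{\frac{d}{n}\Big(\bra{w}\widetilde{\brho}\ket{w}+\frac{d}{n}\Big)} \lesssim \sqrt{\frac{d}{n}}\cdot \sqrt{\bra{w}\widetilde{\brho}\ket{w}}.
\end{equation*}
I would rerun the argument of \Cref{eq:first_bound_w_dependent}, with $R = \sqrt{\bPi\widetilde{\brho}\bPi}$ playing the role of the weight matrix, to get $\norm{\calL_{R}^{-1}(\bPi(\widetilde{\brho}-\rho)\bPi)}_\infty = O(\sqrt{d/n})$. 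The intended route is to translate the quadratic-form bound into an operator bound after $\calL_R^{-1}$, via the same PSD Hadamard-product structure as $O_w$. Since $\sqrt{\bPi\rho\bPi} \succeq 0$, adding it to $R$ in the denominator only helps. The matrix has rank $k$, so the Frobenius norm is $O(\sqrt{k d/n})$. The delicate part of this step is that $\bPi$ is data dependent; this is handled because the relative error bound holds for all $\ket{w}$ simultaneously.

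\emph{First term.} Write $\bPi\rho\bPi = AA^\dagger$ with $A = \bPi\sqrt{\rho}$. The polar decomposition gives $A = \sqrt{\bPi\rho\bPi}\,U$ for some unitary $U$, so
\begin{equation*}
    \DBur(\rho,\bPi\rho\bPi)^2 \leq \norm{\sqrt{\rho}-\bPi\sqrt{\rho}}_2^2 = \tr(\overline{\bPi}\rho) = T + G,
\end{equation*}
where $T = \alpha_{k+1}+\dots+\alpha_d$ and $G = \tr(\rho_{\leq k}) - \tr(\bPi\rho\bPi) \geq 0$. It then suffices to prove the sharpened gap bound
\begin{equation*}
    G \leq O\big(\sqrt{T\cdot kd/n} + kd/n\big),
\end{equation*}
because then $\sqrt{T+G} \leq \sqrt{T} + O(\sqrt{kd/n})$.

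\emph{Main obstacle: proving the gap bound.} \Cref{lem:dtv_lemma_k} only gives $G = O(\sqrt{kd/n})$, and this is too weak once it sits under a square root. My first attempt would use the optimality of $\bPi$ for $\widehat{\brho}$, namely $\tr(P\widehat{\brho}) \leq \tr(\bPi\widehat{\brho})$ where $P$ is the top-$k$ projector of $\rho$. This gives
\begin{equation*}
    G \leq \tr\big((P-\bPi)(\rho-\widehat{\brho})\big).
\end{equation*}
I would then expand $P - \bPi$ through the principal angles between the two $k$-dimensional subspaces, so that it becomes a sum of at most $2k$ rank-two pieces supported on directions $\ket{w}$ with components in $\overline{P}$ and $P$. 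For each piece, I would apply the relative error bound (\Cref{thm:uniform-statement-for-observables}). The point is that along $\overline{P}$ directions $\bra{w}\rho\bra{w}$ is small, at most $\alpha_{k+1}$, and summed over pieces it is at most $T$. Each piece is also weighted by the sine of its angle, so the errors should be controlled by $\sqrt{(d/n)\cdot(\text{mass})}$ times the angle. Combining with Cauchy--Schwarz, and with a self-bounding step in which the angle-squared mass on the $P$ side is itself controlled by $G$ (then solved with AM--GM as in \Cref{cor:uniform-bound-pointwise_rho-hat-version}), should yield $G \lesssim \sqrt{T kd/n} + kd/n$. This is the step I expect to require the most care.
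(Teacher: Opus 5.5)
\paragraph{First term.} This part is sound and is essentially the paper's argument. The paper also reduces to $G \le \tr\big((P-\bPi)(\rho-\widehat{\brho})\big)$. It then applies \Cref{thm:uniform-statement-for-observables} directly to $O = P-\bPi$ (rank at most $2k$), using $O^2 = P\overline{\bPi}+\bPi\overline{P}$, which gives $\tr(O^2\rho)\le \tr(\rho\overline{\bPi})+\tr(\rho\overline{P}) = G+2T$, and solves the self-bounding inequality. So no principal-angle decomposition is needed. In your version, the angle-weighted mass on the $P$ side is bounded by $G+T$ rather than by $G$ alone, which still suffices.

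\paragraph{Second term: this is where the gap is.} \Cref{eq:first_bound_w_dependent} runs in the opposite direction to the one you need: an operator-norm bound on $\calL_R^{-1}(M)$ implies a quadratic-form bound on $M$. The paper obtains that operator-norm bound only probabilistically (sub-gamma concentration of $\tr(O_w\widehat{\brho})$ plus a net), and only for a fixed $R$ built from $\rho$. With $R=\sqrt{\bPi\widetilde{\brho}\bPi}$ depending on the data, you cannot rerun that argument. So you need a deterministic converse: if $|\bra{w}M\ket{w}|\le K\norm{\ket{w}}\norm{R\ket{w}}$ for all $\ket{w}$, then $\norm{\calL_R^{-1}(M)}_\infty=O(K)$.

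This converse is false. Take $R=\mathrm{diag}(1,2,4,\dots,2^L)$, $X_{kl}=1/|k-l|$ for $k\neq l$, $X_{kk}=0$, and $M=\calL_R(X)$.
\begin{itemize}
\item The hypothesis holds with $K=O(1)$. We have $\bra{w}M\ket{w}=\mathrm{Re}\,\langle Rw|X|w\rangle$. Split the indices at the scale $2^K\approx \norm{R\ket{w}}/\norm{\ket{w}}$. The block pairing indices above $K$ (from $R\ket{w}$) with indices at most $K$ (from $\ket{w}$) is a Hilbert-matrix form, bounded by $\pi\norm{R\ket{w}}\norm{\ket{w}}$. The other blocks are damped geometrically. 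Hence $|\bra{w}M\ket{w}|=O(\norm{\ket{w}}\norm{R\ket{w}})$ uniformly in $L$.
\item The conclusion fails: $\norm{\calL_R^{-1}(M)}_\infty=\norm{X}_\infty\ge 2\ln L-O(1)$, as one sees by testing on the uniform vector.
\end{itemize}
Dually, $\calL_R^{-1}(\ketbra{u})$ is PSD but need not admit a decomposition $\sum_j\ketbra{w_j}$ with $\sum_j\norm{\ket{w_j}}\norm{R\ket{w_j}}=O(1)$. So the Hadamard-product structure of $O_w$ does not rescue the step. Any argument that uses only the pointwise relative error bound here loses at least a $\log L$ factor, where $L$ is the number of dyadic scales in the spectrum of $R$; here $L$ can be as large as $\Theta(\min\{k,\log(n/d)\})$.

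Separately, the claim that adding $\sqrt{\bPi\rho\bPi}$ to $R$ ``only helps'' is not a valid monotonicity when the two matrices do not commute. It can fail already for $2\times 2$ real symmetric matrices.

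\paragraph{How to repair it.} Avoid operator norms of Sylvester inverses altogether, as \Cref{lem:Bures_PCA_aux_lemma} does.
\begin{itemize}
\item Bound $\DBur(\bPi\rho\bPi,\bPi\widetilde{\brho}\bPi)^2\le\Dchi(\bPi\rho\bPi\,\|\,\bPi\widetilde{\brho}\bPi)$ using \Cref{lem:Bures_dist_to_Bures_chi_squared}. This removes $\sqrt{\bPi\rho\bPi}$ from the weights entirely.
\item Expand this as a weighted Frobenius sum in the eigenbasis $\{\ket{\bv_i}\}$ of $\widehat{\brho}$. The shift $\frac{2d}{n}I$ contributes only $O(kd/n)$.
\item Group the entries by $m=\min(i,j)$. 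Each weight is at most $2/\widetilde{\balpha}_m$.
\item Each group is at most twice the squared norm of the single vector $\bPi_{\ge m}(\widehat{\brho}-\rho)\bPi_{\ge m}\ket{\bv_m}$. On the range of $\bPi_{\ge m}$ we have $\bra{w}\widehat{\brho}\ket{w}\le\widehat{\balpha}_m$, so \Cref{cor:uniform-bound-pointwise_rho-hat-version} bounds the group by $2(C')^2\frac{d}{n}\big(\widehat{\balpha}_m+\frac{d}{n}\big)$; this is \Cref{lem:Bures_bound_technical_lemma}.
\item Summing over $m\le k$ gives $O(kd/n)$ with no logarithmic loss.
\end{itemize}
Here the data dependence of $\bPi$ is harmless, because only the uniform pointwise bound is used, and only to control operator norms on the tail subspaces.
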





\begin{proof}[Proof of \Cref{prop:Bures_distance_PCA}]
Throughout, we condition on the relative error bound. Let $\Pi$ be the projector onto the top $k$ eigenvectors of $\rho$, and $\bPi$ be the projector onto the top $k$ eigenvectors of $\widetilde{\brho}$. Since $\DBur$ is a metric on PSD matrices, by the triangle inequality,
\begin{equation*}
    \DBur(\rho, \widetilde{\brho}_{\leq k}) \leq \DBur(\rho, \bPi \rho \bPi ) + \DBur(\bPi \rho \bPi, \widetilde{\brho}_{\leq k} ).
\end{equation*}
In \Cref{lem:Bures_PCA_aux_lemma} below, we show that $\DBur(\bPi \rho \bPi, \widetilde{\brho}_{\leq k} ) \leq O( \sqrt{kd/n})$. In the remainder of the proof, we show that $\DBur(\rho, \bPi \rho \bPi ) \leq \sqrt{\alpha_{k+1} + \dots + \alpha_d} + O(\sqrt{kd/n})$. The proposition then follows from these two facts.

First, note that
\begin{equation*}
    \Fid(\rho, \bPi \rho \bPi) = \tr\Big( \sqrt{\sqrt{\rho} \cdot \bPi \rho \bPi \cdot \sqrt{\rho}} \Big) = \tr\Big( \sqrt{ \big(\sqrt{\rho} \bPi \sqrt{\rho} \big) \cdot \big(\sqrt{\rho} \bPi \sqrt{\rho}\big)} \Big) = \tr \big( \sqrt{\rho} \bPi \sqrt{\rho}\big) = \tr( \rho \bPi).
\end{equation*}
Thus,
\begin{equation*}
    \DBur(\rho, \bPi \rho \bPi)^2 = \tr(\rho) + \tr(\bPi \rho \bPi) - 2 \Fid(\rho, \bPi \rho \bPi)  = 1 - \tr(\rho \bPi) = \tr(\rho \overline{\bPi}).
\end{equation*}
We now rewrite this as
\begin{equation} \label{eq:Bur_rho_Pi_rho_Pi}
    \DBur(\rho, \bPi \rho \bPi)^2 = \tr(\rho \overline{\Pi}) + \tr\big(\rho \cdot (\overline{\bPi} - \overline{\Pi} ) \big),
\end{equation}
to isolate the term $\tr(\rho \overline{\Pi}) = \alpha_{k+1} + \dots + \alpha_d$. The second term is nonnegative, since $\tr(\rho \cdot (\overline{\bPi} - \overline{\Pi})) = \tr(\rho \cdot (\Pi - {\bPi}))$ and Ky Fan's maximum principle tells us $\tr(\rho \bPi ) \leq \alpha_{1} + \dots + \alpha_k = \tr(\rho \Pi)$.

We now bound $\tr(\rho \cdot (\overline{\bPi} - \overline{\Pi}))$. Observe that $\tr(\widehat{\brho} \cdot (\bPi - \Pi)) \geq 0$ as well, so that
\begin{equation*}
    \tr\big( \rho \cdot (\overline{\bPi} - \overline{\Pi} ) \big) = \tr\big(\rho \cdot (\Pi - \bPi ) \big) \leq \tr\big(\rho \cdot (\Pi - \bPi ) \big) + \tr(\widehat{\brho} \cdot (\bPi - \Pi)) = \tr\big( (\rho - \widehat{\brho}) \cdot (\Pi - \bPi) \big).
\end{equation*}
We now apply \Cref{thm:uniform-statement-for-observables} with $O \leftarrow \bO \coloneq \Pi - \bPi$, obtaining:
\begin{equation*}
    \tr\big(\rho \cdot (\Pi - \bPi ) \big)  \leq 2C \cdot \sqrt{ \frac{d}{n} \cdot \mathrm{rank}(\bO) \cdot \Big( \tr(\bO^2 \cdot \rho) + \frac{d}{n} \cdot \tr( \bO^2) \Big) }.
\end{equation*}
To simplify this, note that $\rank(\bO) \leq 2k$. We also have $\tr(\bO^2) \leq 2k$, since each eigenvalue of $\bO$ is in the range $[-1,1]$. Moreover,
\begin{equation*}
    \bO^2 = (\Pi - \Pi \cdot \bPi) + (\bPi - \bPi \cdot \Pi) = \Pi \cdot ( I- \bPi) + \bPi \cdot ( I- \Pi) = \Pi \cdot \overline{\bPi} + \bPi \cdot \overline{\Pi}.
\end{equation*}
Thus,
\begin{align*}
    \tr(\bO^2 \cdot \rho) & =  \tr( \overline{\bPi} \cdot \rho \cdot \Pi ) + \tr( \overline{\Pi} \cdot \rho \cdot \bPi) \\
    & = \tr\big(  \overline{\bPi} \cdot \sqrt{\rho} \cdot \Pi \cdot \sqrt{\rho} \cdot \overline{\bPi} \big) + \tr\big( \overline{\Pi} \cdot \sqrt{\rho} \cdot \bPi \cdot \sqrt{\rho} \cdot \overline{\Pi}\big) \\
    & \leq \tr\big(  \overline{\bPi} \cdot \sqrt{\rho} \cdot I \cdot \sqrt{\rho} \cdot \overline{\bPi} \big) + \tr\big( \overline{\Pi} \cdot \sqrt{\rho} \cdot I \cdot \sqrt{\rho} \cdot \overline{\Pi}\big) \\
    & = \tr( \rho \overline{\bPi} ) + \tr( \rho \overline{\Pi} ).
\end{align*}
In the second step, we are using the fact that $\rho$ commutes with $\Pi$ and $\overline{\Pi}$; in the third step, we use that $\tr( A^\dagger P A ) \leq \tr( A^\dagger Q A)$, if $P, Q$ are PSD matrices such that $P \preceq Q$ in the PSD order, and $A$ is any matrix. Thus
\begin{align*}
    \tr\big(\rho \cdot (\overline{\bPi} - \overline{\Pi} ) \big)  & \leq 2C \cdot \sqrt{ \frac{2kd}{n} \cdot \Big( \tr( \rho \overline{\bPi} ) + \tr( \rho \overline{\Pi} ) + \frac{2kd}{n} \Big) } \\
    & = 2C \cdot \sqrt{ \frac{2kd}{n} \cdot \Big( \tr\big(\rho \cdot (\overline{\bPi} - \overline{\Pi} ) \big) + 2 \tr(\rho \overline{\Pi}) + \frac{2kd}{n} \Big) } \\
    & \leq 2C \cdot \sqrt{ \frac{2kd}{n} \cdot \tr\big(\rho \cdot (\overline{\bPi} - \overline{\Pi} ) \big)} + 2C \cdot \sqrt{\frac{4kd}{n} \cdot \tr(\rho \overline{\Pi})} + 4C \cdot \frac{kd}{n} \\
    & \leq \Big( 4C^2 \cdot \frac{kd}{n} + \frac{1}{2}  \tr\big(\rho \cdot (\overline{\bPi} - \overline{\Pi} ) \big)\Big) + 2C \cdot \sqrt{\frac{4kd}{n} \cdot \tr(\rho \overline{\Pi})} + 4C \cdot \frac{kd}{n}.
\end{align*}
Rearranging then implies
\begin{equation*}
    \tr\big(\rho \cdot (\overline{\bPi} - \overline{\Pi} ) \big) \leq 8C \cdot \sqrt{\frac{kd}{n}} \cdot \sqrt{\tr(\rho \overline{\Pi} )} + (8C^2 + 8C) \cdot \frac{kd}{n}.
\end{equation*}
Substituting this bound back into \Cref{eq:Bur_rho_Pi_rho_Pi} gives
\begin{align*}
    \DBur(\rho, \bPi \rho \bPi)^2 & \leq \tr(\rho \overline{\Pi}) + 8C \cdot \sqrt{\frac{kd}{n}} \cdot \sqrt{\tr(\rho \overline{\Pi} )} + (8C^2 + 8C) \cdot \frac{kd}{n} \\
    & \leq \tr(\rho \overline{\Pi}) + 8C \cdot \sqrt{\frac{kd}{n}} \cdot \sqrt{\tr(\rho \overline{\Pi} )} + 16C^2 \cdot \frac{kd}{n} \\
    & = \Bigg( \sqrt{ \tr(\rho \overline{\Pi})} + 4C \cdot \sqrt{ \frac{kd}{n} } \Bigg)^2.
\end{align*}
In the second line, we have used $C \geq 1$ (see \Cref{rem:C_lower_bound}). Taking square roots yields
\begin{equation*}
    \DBur(\rho, \bPi \rho \bPi) \leq \sqrt{ \tr(\rho \overline{\Pi})} + 4C \cdot \sqrt{ \frac{kd}{n} } = \sqrt{\alpha_{k+1} + \dots + \alpha_d} + 4C \cdot \sqrt{ \frac{kd}{n}},
\end{equation*}
which is what we wanted to show.
\end{proof}

To complete our proof, we need to state and prove a deferred lemma.

\begin{lemma} \label{lem:Bures_PCA_aux_lemma}
    Conditioned on the relative error bound holding,
        $\DBur(\bPi \rho \bPi, \widetilde{\brho}_{\leq k} ) \leq O( \sqrt{kd/n}).$

\end{lemma}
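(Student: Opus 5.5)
The plan is to compare the two rank-$\leq k$ PSD operators $A \coloneq \bPi \rho \bPi$ and $\widetilde{\brho}_{\leq k} = \bPi \widetilde{\brho} \bPi$, both of which are supported on the $k$-dimensional range of $\bPi$. Since $\widetilde{\brho} = \widehat{\brho} + \frac{2d}{n} I_d$ has the same eigenvectors as $\widehat{\brho}$, $\bPi$ is also the projector onto the top $k$ eigenvectors of $\widehat{\brho}$. On that range, $\widetilde{\brho}_{\leq k}$ is invertible with eigenvalues $\mu_1, \dots, \mu_k \geq d/n$; this lower bound is exactly what the shift by $2d/n$ was introduced for. Write $\bX \coloneq A - \widetilde{\brho}_{\leq k} = \bPi(\rho - \widehat{\brho})\bPi - \frac{2d}{n}\bPi$. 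The first step is a deterministic inequality bounding the Bures distance by a $\chi^2$-type quantity relative to the second argument. Using $A - B = \calL_S(\sqrt{A} - \sqrt{B})$ with $S = \sqrt{A} + \sqrt{B}$, and $\DBur(A,B) \leq \norm{\sqrt{A} - \sqrt{B}}_2$, I would aim to show
\begin{equation*}
    \DBur(A, \widetilde{\brho}_{\leq k})^2 \leq \sum_{i,j=1}^{k} \frac{2\Abs{\bX_{ij}}^2}{\mu_i + \mu_j},
\end{equation*}
where the entries are taken in the eigenbasis of $\widetilde{\brho}_{\leq k}$. This is the unnormalized analogue of $\DBur^2 \leq \Dchi$. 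In the commuting case it reduces to $(a-b)^2/(\sqrt{a}+\sqrt{b})^2 \leq (a-b)^2/b$. In general I would prove it through operator monotonicity of the square root, or by citing the standard Bures--$\chi^2$ comparison.

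The second step bounds the right-hand side. Set $R \coloneq \sqrt{n/d} \cdot \widetilde{\brho}_{\leq k}^{1/2}$ on the range of $\bPi$, so that $R_i = \sqrt{n\mu_i/d} \geq 1$, and write $\bX = \calL_R(\bY)$ with $\bY = \calL_R^{-1}(\bX)$. Then $\bX_{ij} = \frac{R_i + R_j}{2}\bY_{ij}$ and $(R_i + R_j)^2 \leq 2\frac{n}{d}(\mu_i + \mu_j)$, which gives
\begin{equation*}
    \sum_{i,j} \frac{2\Abs{\bX_{ij}}^2}{\mu_i + \mu_j} \leq \frac{n}{d} \norm{\bY}_2^2 \leq \frac{n}{d} \cdot k \cdot \norm{\bY}_\infty^2 .
\end{equation*}
It therefore suffices to show $\norm{\calL_R^{-1}(\bX)}_\infty \leq O(d/n)$, which yields $\DBur^2 \leq O(kd/n)$. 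The $-\frac{2d}{n}\bPi$ part of $\bX$ is harmless: $\calL_R^{-1}$ maps it to a diagonal matrix with entries $\frac{2d}{n R_i} \leq 2d/n$.

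The main obstacle is the remaining term, namely bounding $\norm{\calL_R^{-1}(\bPi(\rho - \widehat{\brho})\bPi)}_\infty$ when $R$ is built from the random $\widetilde{\brho}$ rather than from $\rho$. The proof of \Cref{thm:uniform-bound-pointwise} already gives $\norm{\calL_{R_\rho}^{-1}(\widehat{\brho} - \rho)}_\infty \leq C d/n$ for $R_\rho = \sqrt{n/d}\,(\rho + \frac{d}{n}I)^{1/2}$, but that $R_\rho$ was fixed before the net argument. We cannot rerun the union bound with a data-dependent $R$. My first attempt will be to transfer the $R_\rho$ bound to $R$. By \Cref{cor:uniform-bound-pointwise_rho-hat-version}, and its proof's rearrangement $\bra{w}\rho\bra{w} \leq 2\bra{w}\widehat{\brho}\ket{w} + (C^2+1)\frac{d}{n}$, we get $\bPi(\rho + \frac{d}{n}I)\bPi \preceq O(1)\cdot \widetilde{\brho}_{\leq k}$ as operators on the range of $\bPi$. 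Hence $\bPi R_\rho^2 \bPi \preceq O(1) R^2$. I would then use the factorization $\bX' = \calL_{R_\rho}(\bY')$ with $\norm{\bY'}_\infty \leq Cd/n$ directly inside the Frobenius estimate, instead of passing through $\calL_R^{-1}$. That estimate becomes $\tr(\bX' \calL_{\widetilde{\brho}}^{-1}(\bX'))$, and terms like $\norm{\widetilde{\brho}_{\leq k}^{-1/2}\bPi R_\rho \bY'}_2^2 \leq k \norm{\bY'}_\infty^2 \norm{\widetilde{\brho}_{\leq k}^{-1/2}\bPi R_\rho^2\bPi\widetilde{\brho}_{\leq k}^{-1/2}}_\infty \leq O(k (d/n)^2 \cdot n/d)$ can be handled by the operator inequality above. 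If the Sylvester-inverse cross terms resist this, the fallback is to replace the $\chi^2$ expression by the cruder but symmetric bound $\norm{\widetilde{\brho}_{\leq k}^{-1/2}\bX}_2^2$, for which only the one-sided factorization is needed.
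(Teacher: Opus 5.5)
Your main route works and differs from the paper's. As written, though, it stops at exactly the step you flag, and the fallback you offer would not work.

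\textbf{Closing the main route.} The cross terms are harmless. Work in the eigenbasis of $\widetilde{\brho}_{\leq k}$ on the range of $\bPi$, and set $\norm{X}_\mu^2 \coloneq \sum_{i,j \leq k} \frac{2}{\mu_i+\mu_j}\Abs{X_{ij}}^2$. This is the square of a Hilbert-space norm, and $\Dchi(\bPi\rho\bPi \,\|\, \bPi\widetilde{\brho}\bPi) = \norm{\bX}_\mu^2$. Write $\bPi(\rho - \widehat{\brho})\bPi = \frac{1}{2}(\bM + \bM^\dagger)$ with $\bM \coloneq \bPi R_\rho \bY' \bPi$. Three facts then finish it: the triangle inequality, the symmetry of the kernel (so $\norm{\bM^\dagger}_\mu = \norm{\bM}_\mu$), and $\frac{2}{\mu_i+\mu_j} \leq \frac{2}{\mu_i}$. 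Together they give
\begin{equation*}
\norm{\bX}_\mu \leq \norm{\bM}_\mu + \tfrac{2d}{n}\norm{\bPi}_\mu \leq \sqrt{2}\,\norm{\widetilde{\brho}_{\leq k}^{-1/2}\bM}_2 + 2\sqrt{kd/n}.
\end{equation*}
The first term is precisely the one-sided quantity you already bound by $O(\sqrt{kd/n})$ via $\bPi R_\rho^2 \bPi \preceq O(n/d)\,\widetilde{\brho}_{\leq k}$. The detour through $\norm{\calL_R^{-1}(\bX)}_\infty$ with a data-dependent $R$ is therefore unnecessary.

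\textbf{Why the fallback fails.} The quantity $\norm{\widetilde{\brho}_{\leq k}^{-1/2}\bX}_2^2 = \sum_{i,j}\Abs{\bX_{ij}}^2/\mu_i$ charges each off-diagonal entry to the smaller of its two eigenvalues. Paying for that would require a one-sided factorization $\bPi(\rho-\widehat{\brho})\bPi = \bPi R_\rho \bZ \bPi$ with $\norm{\bZ}_\infty = O(d/n)$, and no such factorization is available.

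Here is a concrete example. Let $\rho$ have an eigenvalue $\Theta(1)$ on $\ket{1}$ and second eigenvalue $O(d/n)$ on $\ket{2}$. Let $\widehat{\brho}$ be $\rho$ conjugated by a rotation of angle $c\sqrt{d/n}$ in the $\ket{1},\ket{2}$ plane, and take $k=2$.
\begin{itemize}
\item For a small constant $c$, the relative error bound holds, and so does the stronger $\norm{\calL_{R_\rho}^{-1}(\widehat{\brho}-\rho)}_\infty = O(d/n)$.
\item Yet $\bX_{21} = \bra{\bv_2}\rho\ket{\bv_1} = \Theta(c\sqrt{d/n})$ while $\mu_2 = \Theta(d/n)$.
\item So the fallback quantity is $\Omega(c^2)$, not $O(kd/n)$. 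The true kernel $2/(\mu_1+\mu_2)$ makes this entry contribute only $O(d/n)$.
\end{itemize}

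\textbf{Comparison with the paper.} The paper also starts from $\DBur^2 \leq \Dchi$. For unnormalized operators this needs \Cref{lem:Bures_dist_to_Bures_chi_squared}, not just the trace-one comparison. It then charges each entry to the \emph{larger} eigenvalue, $\widetilde{\balpha}_{\min(i,j)}$, which is the opposite of your fallback. It controls $\sum_{\min(i,j)=m}\Abs{\bra{\bv_i}(\rho-\widehat{\brho})\ket{\bv_j}}^2$ by the operator norm of the tail compression $\bPi_{\geq m}(\widehat{\brho}-\rho)\bPi_{\geq m}$ (\Cref{lem:Bures_bound_technical_lemma}). That step uses only the quadratic-form bound of \Cref{cor:uniform-bound-pointwise_rho-hat-version}, i.e.\ exactly the stated hypothesis, and the same lemma is reused for $\chi^2$ tomography.

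Your argument instead conditions on the intermediate event $\norm{\calL_{R_\rho}^{-1}(\widehat{\brho}-\rho)}_\infty \leq Cd/n$ from the proof of \Cref{thm:uniform-bound-pointwise}. It is not clear that the quadratic-form bound alone yields a factorization $\rho-\widehat{\brho} = \calL_{R_\rho}(\bY')$ with small $\bY'$. That event also holds with probability at least $0.99$ and implies the relative error bound, so it suffices for \Cref{prop:Bures_distance_PCA}. You should say explicitly that this is the event you condition on. In exchange, your proof replaces the index bookkeeping with one operator inequality and a rank bound.
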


Before proving this lemma, we introduce one useful quantity; when working with Bures distance, it is often convenient to bound the Bures distance with the Bures $\chi^2$-divergence.

\begin{definition}[Bures $\chi^2$-divergence]
    Let $\rho, \sigma \in \C^{d \times d}$ be PSD operators. Suppose $\sigma$ is diagonal in the standard basis, $\sigma = \mathrm{diag}(\sigma_1, \dots, \sigma_d)$, and $\mathrm{supp}(\rho) \subseteq \mathrm{supp}(\sigma)$. Then we define the \emph{Bures $\chi^2$-divergence} of $\rho$ and $\sigma$ to be
    \begin{equation*}
        \Dchi(\rho\|\sigma) \coloneq \sum_{i,j=1}^{d} \frac{2}{\sigma_i + \sigma_j} \cdot \Abs{\rho_{ij}-\sigma_{ij}}^2.
    \end{equation*}
    When $\sigma$ is not diagonal in the standard basis, we extend the above formula via unitary invariance. When $\mathrm{supp}(\rho)$ is not contained in $\mathrm{supp}(\sigma)$, we take $\Dchi(\rho\|\sigma) = \infty$.
\end{definition}

We have the following bound which relates the Bures distance to the Bures $\chi^2$-divergence.

\begin{lemma}[Bures distance and Bures $\chi^2$-divergence] \label{lem:Bures_dist_to_Bures_chi_squared}
    Let $\rho, \sigma \in \C^{d \times d}$ be two PSD operators. Then
    \begin{equation*}
        \DBur(\rho, \sigma)^2 \leq \Dchi(\rho\|\sigma).
    \end{equation*}
\end{lemma}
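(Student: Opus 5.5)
The plan is to prove this in three moves. First, rewrite $\Dchi$ through the inverse Sylvester superoperator. Then build an auxiliary PSD matrix $\btau$ that sits between $\rho$ and $\sigma$ in Bures distance. Finally, combine the two with the triangle inequality for $\DBur$ on PSD matrices, which the paper already uses.

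\emph{Reduction and rewriting.} If $\mathrm{supp}(\rho) \not\subseteq \mathrm{supp}(\sigma)$ there is nothing to prove. Otherwise we restrict to $\mathrm{supp}(\sigma)$, so that $\sigma$ is invertible there, and we take $\sigma$ diagonal by unitary invariance of both sides. Let $X \coloneq \calL_\sigma^{-1}(\rho)$, i.e.\ the Hermitian solution of $\tfrac12(X\sigma + \sigma X) = \rho$. Since $\calL_\sigma(I) = \sigma$, we have $\calL_\sigma^{-1}(\rho - \sigma) = X - I$. From the definition of $\Dchi$ and self-adjointness of $\calL_\sigma$ we get
\begin{equation*}
  \Dchi(\rho\|\sigma) = \tr\big((\rho-\sigma)\cdot \calL_\sigma^{-1}(\rho-\sigma)\big) = \tr\big(\calL_\sigma(X-I)\cdot (X-I)\big) = \tr\big(\sigma (X-I)^2\big).
\end{equation*}
This is the quantum analogue of the classical $\sum_i (p_i-q_i)^2/q_i$.

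\emph{Auxiliary state.} Set $A \coloneq \tfrac12 (X+I)\sqrt{\sigma}$ and $\btau \coloneq AA^\dagger = \tfrac14 (X+I)\sigma(X+I)$. Expanding and using $X\sigma + \sigma X = 2\rho$ gives
\begin{equation*}
  \btau - \rho = \tfrac14 (X-I)\sigma(X-I) \eqcolon E \succeq 0, \qquad \tr(E) = \tfrac14 \Dchi(\rho\|\sigma).
\end{equation*}
There are two Bures bounds to prove.

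First, by the variational formula $\DBur(\btau,\sigma) = \min_U \|\sqrt{\btau}\,U - \sqrt{\sigma}\|_2$ (stated before \Cref{def:Bures_distance}), together with the polar decomposition $A = \sqrt{\btau}\,U$, we get
\begin{equation*}
  \DBur(\btau,\sigma)^2 \le \|A - \sqrt{\sigma}\|_2^2 = \tfrac14 \tr\big((X-I)\sigma(X-I)\big) = \tfrac14\Dchi(\rho\|\sigma).
\end{equation*}

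Second, for $\DBur(\rho, \rho + E)$ we use operator monotonicity of the square root. Since $\sqrt{\rho}(\rho+E)\sqrt{\rho} \succeq \rho^2$, we have $\sqrt{\sqrt{\rho}(\rho+E)\sqrt{\rho}} \succeq \rho$, so $\Fid(\rho,\rho+E) \ge \tr\rho$. Hence
\begin{equation*}
  \DBur(\rho,\btau)^2 \le 2\tr\rho + \tr E - 2\tr\rho = \tr E = \tfrac14\Dchi(\rho\|\sigma).
\end{equation*}

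\emph{Conclusion.} The triangle inequality gives $\DBur(\rho,\sigma) \le \DBur(\rho,\btau) + \DBur(\btau,\sigma) \le \tfrac12\sqrt{\Dchi} + \tfrac12\sqrt{\Dchi} = \sqrt{\Dchi(\rho\|\sigma)}$. Squaring yields the claim.

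The main obstacle is finding the right intermediate object. A direct attempt to lower bound $\Fid(\rho,\sigma)$ by a quantity expressed through $X$ tends to run into variational formulas pointing the wrong way. The choice $A = \tfrac12(X+I)\sqrt{\sigma}$ mimics the classical identity $(\sqrt p - \sqrt q)^2 \le (p-q)^2/q$. The only other delicate points are routine: the Hermitian solution $X$ exists and is unique because $\sigma \succ 0$ on its support, and the triangle inequality and min-over-unitaries formula hold for unnormalized PSD matrices, which the paper already assumes.
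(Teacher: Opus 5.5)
Your proof is correct, and it takes a genuinely different route from the paper. The paper does not prove the inequality from scratch. It cites the trace-one case from \cite[Proposition 2.31]{FO24} and then handles unnormalized $\rho,\sigma$ by a normalization trick. It embeds them into $\C^d \oplus \C$ as the trace-one states $\rho/x + (1-\tr\rho/x)\ketbra{0}$ and $\sigma/x + (1-\tr\sigma/x)\ketbra{0}$, computes both sides block by block, and sends $x \to \infty$ so that the extra one-dimensional block contributes nothing.

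You instead give a self-contained argument that works directly for unnormalized PSD matrices. You write $\Dchi(\rho\|\sigma) = \tr(\sigma(X-I)^2)$ with $X = \calL_\sigma^{-1}(\rho)$ and insert the intermediate matrix $\tau = \tfrac14 (X+I)\sigma(X+I) = \rho + E$. You bound $\DBur(\tau,\sigma)$ through the explicit factor $A = \tfrac12(X+I)\sqrt{\sigma}$ of $\tau$, bound $\DBur(\rho,\rho+E)$ by operator monotonicity of the square root, and finish with the triangle inequality. The key steps check out: $\tau - E = \tfrac12(X\sigma+\sigma X) = \rho$, $\|A-\sqrt{\sigma}\|_2^2 = \tfrac14\Dchi(\rho\|\sigma)$, the polar-decomposition step $A = \sqrt{\tau}\,V$, and $\Fid(\rho,\rho+E) \ge \tr\rho$. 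Restricting to $\mathrm{supp}(\sigma)$ is harmless, since the cross terms of $\Dchi$ vanish when $\mathrm{supp}(\rho)\subseteq\mathrm{supp}(\sigma)$.

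The paper's version is shorter on the page, but it outsources the real content to the external reference and needs the limiting computation. Yours reproves the normalized case along the way and avoids the embedding-and-limit trick entirely. It uses only facts the paper already invokes elsewhere: the variational formula stated after the definition of unnormalized Bures distance, and the metric property of $\DBur$ on PSD matrices used in the Bures PCA proof. The metric property itself follows in one line from the variational formula, since right multiplication by unitaries is a Frobenius isometry. Your route also makes transparent that the lemma is the matrix analogue of $(\sqrt{p}-\sqrt{q})^2 \le (p-q)^2/q$. The two halves of $\sqrt{\Dchi(\rho\|\sigma)}$ come from the two legs $\rho \to \tau \to \sigma$.
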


\begin{proof}
   The inequality holds in the special case where $\tr(\rho) = \tr(\sigma) = 1$ \cite[Proposition 2.31]{FO24}. For non-normalized states, take $x > \max(\tr(\rho), \tr(\sigma))$. We will define states $\rho'$ and $\sigma'$ on a larger Hilbert space $\C^d \oplus \C$, where we denote the new basis element as $\ket{0}$. Let $r = \tr(\rho)$ and $s = \tr(\sigma)$, and set
   \begin{equation*}
       \rho' \coloneq \frac{\rho}{x} + \bigg( 1 - \frac{r}{x} \bigg) \cdot \ketbra{0}, \qquad  \sigma' \coloneq \frac{\sigma}{x} + \bigg( 1 - \frac{s}{x} \bigg) \cdot \ketbra{0}.
   \end{equation*}
   Note that $\tr(\rho') = \tr(\sigma') = 1$. The states $\rho'$ and $\sigma'$ are simultaneously block-diagonal, and since Bures distance squared and chi-squared divergence can be computed block-by-block, we have
   \begin{align*}
       \DBur(\rho', \sigma')^2 & = \frac{1}{x} \cdot \DBur(\rho,\sigma)^2 + \bigg( \Big(1 - \frac{r}{x}\Big) + \Big(1 - \frac{s}{x} \Big) - 2 \Big(1 - \frac{r}{x}\Big)^{1/2} \cdot \Big(1 - \frac{s}{x}\Big)^{1/2} \bigg) \\
       & = \frac{1}{x} \cdot \DBur(\rho,\sigma)^2 + \bigg( \Big(1 - \frac{r}{x}\Big)^{1/2} - \Big(1 - \frac{s}{x}\Big)^{1/2} \bigg)^2,
   \end{align*}
   and
   \begin{equation*}
       \Dchi(\rho' || \sigma') = \frac{1}{x}\cdot \Dchi(\rho || \sigma) + \bigg( 1 - \frac{s}{x} \bigg)^{-1} \cdot \bigg( \frac{r - s}{x} \bigg)^2.
   \end{equation*}
   Applying the trace-one case to $\rho'$ and $\sigma'$, gives us
   \begin{equation*}
       \frac{1}{x} \cdot \DBur(\rho,\sigma)^2 + \bigg( \Big(1 - \frac{r}{x}\Big)^{1/2} - \Big(1 - \frac{s}{x}\Big)^{1/2} \bigg)^2 \leq  \frac{1}{x}\cdot \Dchi(\rho || \sigma) + \bigg( 1 - \frac{s}{x} \bigg)^{-1} \cdot \bigg( \frac{r - s}{x} \bigg)^2,
   \end{equation*}
   which we can rearrange to get
   \begin{equation}
       x \cdot \bigg( \Big(1 - \frac{r}{x}\Big)^{1/2} - \Big(1 - \frac{s}{x}\Big)^{1/2} \bigg)^2 - x \cdot \bigg( 1 - \frac{s}{x} \bigg)^{-1} \cdot \bigg( \frac{r - s}{x} \bigg)^2 \leq  \Dchi(\rho || \sigma) - \DBur(\rho,\sigma)^2. \label{eq:pre-limit}
   \end{equation}
   This equation holds for any $x > \max(r,s)$. What happens as we take $x \to \infty$? On the left, we have two $x$-dependent terms. For the first term, we have
   \begin{equation*}
       \Big(1 - \frac{r}{x}\Big)^{1/2} - \Big(1 - \frac{s}{x}\Big)^{1/2} = \frac{\big( 1 - \frac{r}{x} \big) - \big( 1 - \frac{s}{x} \big)}{\big(1 - \frac{r}{x}\big)^{1/2} + \big(1 - \frac{s}{x}\big)^{1/2}} = \frac{s-r}{x} \cdot \frac{1}{\big(1 - \frac{r}{x}\big)^{1/2} + \big(1 - \frac{s}{x}\big)^{1/2}},
   \end{equation*}
   so that
   \begin{equation*}
       \lim_{x \to \infty} \bigg[x \cdot \bigg( \Big(1 - \frac{r}{x}\Big)^{1/2} - \Big(1 - \frac{s}{x}\Big)^{1/2} \bigg)^2 \bigg] = \lim_{x \to \infty} \bigg[ \frac{(s-r)^2}{x} \bigg] \cdot \lim_{x \to \infty} \bigg[ \Big(1 - \frac{r}{x}\Big)^{1/2} + \Big(1 - \frac{s}{x}\Big)^{1/2}\bigg]^{-2} = 0 \cdot \frac{1}{4} = 0.
   \end{equation*}
   The second term is easier:
   \begin{equation*}
       \lim_{x \to \infty} \bigg[ x \cdot \bigg( 1 - \frac{s}{x} \bigg)^{-1} \cdot \bigg( \frac{r - s}{x} \bigg)^2\bigg]  = \lim_{x \to \infty} \bigg[ 1 - \frac{s}{x}  \bigg]^{-1} \cdot \lim_{x \to \infty} \bigg[ \frac{(r-s)^2}{x}\bigg] = 1 \cdot 0 = 0.
   \end{equation*}
   Thus, taking the $x \to \infty$ limit of \Cref{eq:pre-limit} gives
   \begin{equation*}
       0 \leq \Dchi(\rho || \sigma) - \DBur(\rho,\sigma)^2.
   \end{equation*}
   This completes the proof.
\end{proof}

We are now ready to prove the deferred lemma.

\begin{proof}[Proof of \Cref{lem:Bures_PCA_aux_lemma}]
    Recall that our estimator is $\widetilde{\brho}_{\leq k}$, with
    $\widetilde{\brho} = \widehat{\brho} + (2d/n) \cdot I_d$,
    and where $\widehat{\brho}$ is the output of $\mixed(\gps)$ given $n$ copies of $\rho$. Let $\widehat{\brho}$ have eigenvectors $\{ \ket{\bv_i} \}_{i \in [d]}$, ordered by decreasing eigenvalue, so that $\bPi = \sum_{i=1}^k \ketbra{\bv_i}$. These are also the top $k$ eigenvectors of $\widetilde{\brho}$. Since $\mathrm{supp}(\bPi \rho \bPi) \subseteq \mathrm{supp}(\bPi \widetilde{\brho} \bPi )$, \Cref{lem:Bures_dist_to_Bures_chi_squared} tells us that
    \begin{align}
        \DBur\big(\bPi \rho \bPi, \widetilde{\brho}_{\leq k}\big)^2 & = \DBur\big(\bPi \rho \bPi, \bPi \widetilde{\brho} \bPi\big)^2 \leq \Dchi\big( \bPi \rho \bPi \| \bPi \widetilde{\brho} \bPi \big) \nonumber \\
        & = \sum_{i,j=1}^k \frac{2}{\widetilde{\balpha}_i + \widetilde{\balpha}_j} \cdot \ABs{\bra{\bv_i} \bPi ( \rho - \widetilde{\brho} )\bPi\ket{\bv_j}}^2 = \sum_{i,j=1}^k \frac{2}{\widetilde{\balpha}_i + \widetilde{\balpha}_j} \cdot \ABs{\bra{\bv_i} ( \rho - \widetilde{\brho} ) \ket{\bv_j}}^2. \label{eq:Bures_aux_1}
    \end{align}
    Here, $\widetilde{\balpha}_i$ is the $i$-th largest eigenvalue of $\widetilde{\brho}$. We will similarly write $\widehat{\balpha}_i$ for the $i$-th largest eigenvalue of $\widehat{\brho}$, but of course $\widetilde{\balpha}_i = \widehat{\balpha}_i + 2d/n$ by construction. It will be more convenient to work with $\widehat{\brho}$, since we can leverage the relative error bound. To this end, we note 
    \begin{align*}
        \ABs{\bra{\bv_i} ( \rho - \widetilde{\brho} ) \ket{\bv_j}}^2 & = \ABs{\bra{\bv_i} (\rho - \widehat{\brho} ) \ket{\bv_j} + \bra{\bv_i} (\widehat{\brho} - \widetilde{\brho}) \ket{\bv_j}}^2 \\
        & \leq 2 \cdot \Big( \ABs{\bra{\bv_i} (\rho - \widehat{\brho} ) \ket{\bv_j}}^2 + \ABs{\bra{\bv_i} (\widehat{\brho} - \widetilde{\brho}) \ket{\bv_j}}^2 \Big) \\
        & = 2 \cdot  \ABs{\bra{\bv_i} (\rho - \widehat{\brho} ) \ket{\bv_j}}^2 + \frac{8d^2}{n^2} \cdot \delta_{ij}.
    \end{align*}
    In the second step above, we have used AM-GM in the form $(a+b)^2 \leq 2a^2+2b^2$. So, we can split the sum in \Cref{eq:Bures_aux_1} into two parts:
    \begin{equation}
        \sum_{i,j=1}^k \frac{2}{\widetilde{\balpha}_i + \widetilde{\balpha}_j} \cdot \ABs{\bra{\bv_i} ( \rho - \widetilde{\brho} )\ket{\bv_j}}^2
        \leq \bigg( \sum_{i,j = 1}^k \frac{4}{\widetilde{\balpha}_i + \widetilde{\balpha}_j} \cdot  \ABs{\bra{\bv_i} (\rho - \widehat{\brho} ) \ket{\bv_j}}^2\bigg) + \bigg( \frac{8d^2}{n^2}\cdot  \sum_{i,j=1}^{k} \frac{2\delta_{ij}}{\widetilde{\balpha}_i + \widetilde{\balpha}_j}\bigg). \label{eq:Bures_aux_2}
    \end{equation}
    The second part is easy to bound: by construction, the minimum eigenvalue of $\widetilde{\brho}$ is at least $d/n$, so that
    \begin{equation}
        \frac{8d^2}{n^2} \cdot \sum_{i,j=1}^{k} \frac{2\delta_{ij}}{\widetilde{\balpha}_i + \widetilde{\balpha}_j} = \frac{8d^2}{n^2} \cdot \sum_{i=1}^k \frac{1}{\widetilde{\balpha}_i} \leq \frac{8kd}{n}. \label{eq:second_part}
    \end{equation}
    For the first part, we start by reorganizing the sum according to the minimum index $m = \min(i,j)$ and then bounding $\widetilde{\balpha}_i + \widetilde{\balpha}_j \geq \widetilde{\balpha}_m$, since the spectrum is sorted in decreasing order:
    \begin{align}
        \sum_{i,j = 1}^k \frac{4}{\widetilde{\balpha}_i + \widetilde{\balpha}_j} \cdot  \ABs{\bra{\bv_i} (\rho - \widehat{\brho} ) \ket{\bv_j}}^2 & = \sum_{m = 1}^k \bigg( \sum_{\substack{i,j = 1 \nonumber \\ \min(i,j) = m}}^k \frac{4}{\widetilde{\balpha}_i + \widetilde{\balpha}_j} \cdot  \ABs{\bra{\bv_i} (\rho - \widehat{\brho} ) \ket{\bv_j}}^2 \bigg) \nonumber \\
        & \leq  \sum_{m = 1}^k \bigg(  \frac{4}{\widetilde{\balpha}_m} \cdot \sum_{\substack{i,j = 1 \\ \min(i,j) = m}}^k  \ABs{\bra{\bv_i} (\rho - \widehat{\brho} ) \ket{\bv_j}}^2 \bigg). \label{eq:chi-squared-sum-rewrite}
    \end{align}
    Using \Cref{lem:Bures_bound_technical_lemma} below, we can further bound this sum by relaxing the inner sum's upper limit to $d$:
    \begin{align*}
        \sum_{m=1}^k \bigg(\frac{4}{\widetilde{\balpha}_m} \cdot \sum_{\substack{i,j=1 \\ \min(i,j)=m}}^d \ABs{\bra{\bv_i} (\rho - \widehat{\brho} ) \ket{\bv_j}}^2 \bigg) & \leq 8(C')^2 \cdot \frac{d}{n} \cdot \sum_{m=1}^{k} \Big(\frac{\widehat{\balpha}_m}{\widetilde{\balpha}_m} +  \frac{d/n}{\widetilde{\balpha}_m} \Big)  \\
        & \leq 8(C')^2 \cdot \frac{d}{n} \cdot \sum_{m=1}^{k} ( 1+1 ) \\
        & = 16(C')^2 \cdot \frac{kd}{n}.
    \end{align*}
    Here, we have used $\widehat{\balpha}_m \leq \widetilde{\balpha}_m$, and $d/n \leq \widetilde{\balpha}_m$. Substituting this and \Cref{eq:second_part} back into \Cref{eq:Bures_aux_2} and then \Cref{eq:Bures_aux_1} yields the desired bound:
    \begin{equation*}
        \DBur(\bPi \rho \bPi, \widetilde{\brho}_{\leq k} )^2 \leq \big(16(C')^2 + 8\big) \cdot \frac{kd}{n} \leq O \big( kd/n \big). \qedhere
    \end{equation*}
\end{proof}

We have one last technical lemma to state and prove. We have separated this result out from the proof of \Cref{lem:Bures_PCA_aux_lemma}, since it will also be useful in the next section.

\begin{lemma} \label{lem:Bures_bound_technical_lemma}
    Assume the relative error bound holds. Then
    \begin{equation*}
        \sum_{\substack{i,j = 1 \\ \min(i,j) = m}}^d  \ABs{\bra{\bv_i} (\rho - \widehat{\brho} ) \ket{\bv_j}}^2 \leq 2(C')^2 \cdot \frac{d}{n} \cdot \Big( \widehat{\balpha}_m + \frac{d}{n} \Big).
    \end{equation*}
\end{lemma}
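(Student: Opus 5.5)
The plan is to reduce this sum of squared off-diagonal and diagonal entries to the operator-norm statement that the relative error bound gives along a single direction. Write $E \coloneq \rho - \widehat{\brho}$, which is Hermitian. The index set $\{(i,j) : \min(i,j) = m\}$ consists of the row $i = m,\, j \ge m$, together with the column $j = m,\, i > m$. By Hermiticity, $\abs{\bra{\bv_i}E\ket{\bv_m}} = \abs{\bra{\bv_m}E\ket{\bv_i}}$, so the sum is at most
\[
2\sum_{j \ge m} \abs{\bra{\bv_m} E \ket{\bv_j}}^2 \;\le\; 2\,\norm{\bP_{\ge m} E \ket{\bv_m}}^2 ,
\]
where $\bP_{\ge m} \coloneq \sum_{j\ge m}\ketbra{\bv_j}$. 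This is where the factor $2$ comes from. It therefore suffices to show
\[
\norm{\bP_{\ge m} E \ket{\bv_m}}^2 \le (C')^2\,\frac{d}{n}\Big(\widehat{\balpha}_m + \frac{d}{n}\Big).
\]

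To control this vector norm, I would compress to the subspace $V_{\ge m} = \mathrm{span}(\ket{\bv_m},\dots,\ket{\bv_d})$. Put $E_m \coloneq \bP_{\ge m} E \bP_{\ge m}$, a Hermitian operator on $V_{\ge m}$. Then $\bP_{\ge m}E\ket{\bv_m} = E_m\ket{\bv_m}$, so $\norm{E_m\ket{\bv_m}} \le \norm{E_m}_\infty = \max_{\ket{w}\in V_{\ge m}} \abs{\bra{w}E\ket{w}}$, the last equality holding because $E_m$ is Hermitian. For every unit $\ket{w}\in V_{\ge m}$ we have $\bra{w}\widehat{\brho}\ket{w} \le \widehat{\balpha}_m$, since the $\ket{\bv_j}$ are eigenvectors of $\widehat{\brho}$ with eigenvalues $\widehat{\balpha}_j \le \widehat{\balpha}_m$ for $j \ge m$. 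Applying \Cref{cor:uniform-bound-pointwise_rho-hat-version} (the version scaling with $\bra{w}\widehat{\brho}\ket{w}$, which is why $C'$ appears) gives
\[
\abs{\bra{w}E\ket{w}} \le C'\sqrt{\frac{d}{n}\Big(\widehat{\balpha}_m+\frac{d}{n}\Big)} .
\]
Squaring this and combining it with the factor $2$ above yields the claimed bound.

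The main point to check is that the relative error bound is uniform over all pure states, including those in the data-dependent subspace $V_{\ge m}$. This holds because the theorem holds simultaneously for all $\ket{w}$ on the conditioned event, and this is exactly why the corollary phrased in terms of $\widehat{\brho}$ is the right tool here. A second, smaller point is that $\widehat{\balpha}_m$ may be negative. In that case the bound still makes sense: the corollary's proof gives $\bra{w}\rho\ket{w} \le 2\bra{w}\widehat{\brho}\ket{w} + (C^2+1)d/n$, so the quantity under the square root stays nonnegative. Beyond these checks, I expect no real obstacle; the argument is Hermitian compression plus the row/column splitting.
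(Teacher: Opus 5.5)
Your main argument is correct and is essentially the paper's proof. It uses the same row/column split giving the factor $2$. It makes the same reduction to the operator norm of the Hermitian compression $\bPi_{\geq m}(\widehat{\brho}-\rho)\bPi_{\geq m}$: you pass through the vector $\bPi_{\geq m}E\ket{\bv_m}$, the paper through the rank-one matrix $\bM_m$. And it applies \Cref{cor:uniform-bound-pointwise_rho-hat-version} in the same way, with $\bra{w}\widehat{\brho}\ket{w}\le\widehat{\balpha}_m$ on $V_{\geq m}$.

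Your side remark on negative $\widehat{\balpha}_m$ does not hold up, however.

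\emph{The deduction is wrong.} Combined with $\bra{w}\rho\ket{w}\ge 0$, the inequality you cite only gives $\bra{w}\widehat{\brho}\ket{w}\ge-\frac{C^2+1}{2}\cdot\frac{d}{n}$. Since $C\ge 1$, this does not make $\widehat{\balpha}_m+\frac{d}{n}$ nonnegative. That nonnegativity follows instead from $\widehat{\brho}\succeq-\frac{d}{n}I_d$.

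\emph{Nonnegativity is not the real issue.} The last step of the corollary's proof is $2x+(C^2+2)\frac{d}{n}\le(C^2+2)(x+\frac{d}{n})$ with $x=\bra{w}\widehat{\brho}\ket{w}$, and this is valid only for $x\ge 0$. In fact the lemma as stated fails whenever $\widehat{\balpha}_m$ is close enough to $-\frac{d}{n}$. Its right-hand side then tends to $0$, while its diagonal term $(\bra{\bv_m}\rho\ket{\bv_m}-\widehat{\balpha}_m)^2$ is at least $\widehat{\balpha}_m^2$.

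\emph{The repair.} Use $\max(\widehat{\balpha}_m,0)+\frac{d}{n}$ in place of $\widehat{\balpha}_m+\frac{d}{n}$. The corollary then holds with the same $C'$: for $x<0$, bound $2x+(C^2+2)\frac{d}{n}$ by $(C^2+2)\frac{d}{n}$. Your argument then goes through verbatim. The downstream uses are unaffected, because $\max(\widehat{\balpha}_m,0)+\frac{d}{n}\le\widetilde{\balpha}_m$ there.
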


\begin{proof}
     First,
    \begin{align*}
        \sum_{\substack{i,j = 1 \\ \min(i,j) = m}}^d  \ABs{\bra{\bv_i} (\rho - \widehat{\brho} ) \ket{\bv_j}}^2 & \leq \sum_{i=m}^d \ABs{\bra{\bv_i} (\rho - \widehat{\brho} ) \ket{\bv_m}}^2 + \sum_{j=m}^d \ABs{\bra{\bv_m} (\rho - \widehat{\brho} ) \ket{\bv_j}}^2 \nonumber \\
        & = 2 \sum_{i=m}^d \ABs{\bra{\bv_i} (\rho - \widehat{\brho} ) \ket{\bv_m}}^2.
    \end{align*}
    We claim that this sum can be rewritten as $\norm{ \bM_m }_2^2$, with
    \begin{equation*}
        \bM_m \coloneq \bPi_{\geq m} (\widehat{\brho} - \rho) \bPi_{\geq m} \cdot \ketbra{\bv_m} ,
    \end{equation*}
    and where $\bPi_{\geq m} \coloneq \sum_{i = m}^d \ketbra{\bv_i}$. This can be seen from the computation:
    \begin{align*}
        \norm{ \bPi_{\geq m} (\widehat{\brho} - \rho) \bPi_{\geq m} \cdot \ketbra{\bv_m} }_2^2 & = \sum_{i,j=1}^d \ABs{\bra{\bv_i} \bPi_{\geq m} (\widehat{\brho} - \rho) \bPi_{\geq m}\ket{\bv_m} \cdot \braket{\bv_m}{\bv_j}}^2 \\
        & = \sum_{i,j=1}^d \ABs{\bra{\bv_i}  (\widehat{\brho} - \rho) \ket{\bv_m}}^2 \cdot  1\{m \leq i \leq d\} \cdot \delta_{jm} \\
        & = \sum_{i=m}^d \ABs{\bra{\bv_i} (\rho - \widehat{\brho} ) \ket{\bv_m}}^2.
    \end{align*}
    We would like to bound this norm. For any matrix $M$, with singular values $\{m_i\}$, we have
    \begin{equation*}
        \norm{M}_2^2 = \sum_{i} m_i^2 \leq \rank(M) \cdot \max_i (m_i^2) = \rank(M) \cdot \norm{M}^2_\infty.
    \end{equation*}
    We want to apply this with $M \leftarrow \bM_m$. To this end, we use
    \begin{equation*}
        \mathrm{rank}\big( \bM_m \big) \leq \mathrm{rank}\big(  \ketbra{\bv_m} \big) =1,
    \end{equation*}
    and
    \begin{equation*}
        \norm{ \bM_m }_\infty  \leq \norm{ \bPi_{\geq m} (\widehat{\brho} - \rho) \bPi_{\geq m}}_\infty.
    \end{equation*}
    To bound this operator norm, we use \Cref{cor:uniform-bound-pointwise_rho-hat-version} with $\ket{w} \in \mathrm{supp}(\bPi_{\geq m})$, which gives
    \begin{equation*}
        \ABs{\bra{w} \bPi_{\geq m} (\widehat{\brho} - \rho) \bPi_{\geq m} \ket{w}}  \leq C' \cdot \sqrt{ \frac{d}{n} \cdot \Big( \bra{w} \widehat{\brho} \ket{w} + \frac{d}{n} \Big) }  \leq C' \cdot \sqrt{ \frac{d}{n} \cdot \Big( \widehat{\balpha}_m +\frac{d}{n} \Big) }.
    \end{equation*}
    Since this holds for all such $\ket{w}$, we have the same upper bound on $\norm{\bM_m}_\infty$. We now unwind our steps:
    \begin{equation*}
        \sum_{\substack{i,j = 1 \\ \min(i,j) = m}}^d  \ABs{\bra{\bv_i} (\rho - \widehat{\brho} ) \ket{\bv_j}}^2  \leq 2 \sum_{i=m}^d \ABs{\bra{\bv_i} (\rho - \widehat{\brho} ) \ket{\bv_m}}^2  = 2 \norm{\bM_m}_2^2 \leq 2 \norm{\bM_m}_\infty^2 \leq 2(C')^2 \cdot \frac{d}{n} \cdot \Big( \widehat{\balpha}_m + \frac{d}{n} \Big).
    \end{equation*}
    This completes the proof.
\end{proof}

\subsection{Full state tomography in Bures $\chi^2$-divergence} \label{subsection_Bures_chi-squared}

In this section, we give the first algorithm for learning rank-$r$ states in Bures $\chi^2$-divergence using $O(\sqrt{rd^3}/\eps)$ copies. We already gave a definition of this distance measure in the previous section, but we repeat it here for convenience.

\begin{definition}[Bures $\chi^2$-divergence]
    Let $\rho, \sigma \in \C^{d \times d}$ be PSD operators. Suppose $\sigma$ is diagonal in the standard basis, $\sigma = \mathrm{diag}(\sigma_1, \dots, \sigma_d)$, and $\mathrm{supp}(\rho) \subseteq \mathrm{supp}(\sigma)$. Then we define the \emph{Bures $\chi^2$-divergence} of $\rho$ and $\sigma$ to be
    \begin{equation*}
        \Dchi(\rho\|\sigma) \coloneq \sum_{i,j=1}^{d} \frac{2}{\sigma_i + \sigma_j} \cdot \Abs{\rho_{ij}-\sigma_{ij}}^2.
    \end{equation*}
    When $\sigma$ is not diagonal in the standard basis, we extend the above formula via unitary invariance. When $\mathrm{supp}(\rho)$ is not contained in $\mathrm{supp}(\sigma)$, we take $\Dchi(\rho\|\sigma) = \infty$.
\end{definition}

To produce an estimator with finite Bures $\chi^2$-divergence, we output a full-rank matrix by adding a full-rank PSD operator to $\widehat{\brho}$, the output of $\mixed(\gps)$. We emphasize that even for rank-$r$ states, our tomography algorithm performs full-rank purification. Given $\widehat{\brho}$, we will choose the following output:
\begin{equation}
    \widetilde{\brho}_{\chi^2} \coloneq \bPi \widehat{\brho} \bPi  + \eta_1 \cdot \bPi + \eta_2 \cdot \overline{\bPi}, \label{eq:chi-squared-estimator}
\end{equation}
where $\bPi$ is the projector onto the support of $\widehat{\brho}_{\leq r}$, and the $\eta_1, \eta_2$ are nonnegative noise parameters we will specify later. 
We will show the following result. 

\begin{proposition}
    Let $\rho\in \C^{d \times d}$ be a rank-$r$ mixed state, and let $\widehat{\brho}$ be the output of $\mixed(\gps)$, given $n$ copies of $\rho$. Form $\widetilde{\brho}_{\chi^2} $ as in \Cref{eq:chi-squared-estimator}, with $\eta_1 = 2d/n$ and $\eta_2 = \Theta(\sqrt{rd}/n)$. Then, conditioned on the relative error bound holding, we have
    \begin{equation*}
        \Dchi(\rho \| \widetilde{\brho}_{\chi^2}) \leq O(rd/n) + O( \sqrt{r d^3}/n).
    \end{equation*}
    Thus, with high constant probability, for $n = O(\sqrt{rd^3}/\eps)$, we have $\Dchi(\rho \| \widetilde{\brho}_{\chi^2}) \leq \epsilon$.
\end{proposition}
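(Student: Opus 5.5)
Work in the eigenbasis $\{\ket{\bv_i}\}$ of $\widehat{\brho}$, ordered by decreasing eigenvalue $\widehat{\balpha}_i$, so that $\bPi$ projects onto $\ket{\bv_1},\dots,\ket{\bv_r}$. In this basis $\widetilde{\brho}_{\chi^2}$ is diagonal with entries
\begin{equation*}
\sigma_i = \widehat{\balpha}_i + 2d/n \;\; (i \le r), \qquad \sigma_i = \eta_2 \;\; (i > r),
\end{equation*}
and all $\sigma_i > 0$, so $\Dchi$ is finite. Since $\widehat{\balpha}_i \ge -d/n$, we have $\sigma_i \ge d/n$ for $i \le r$. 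Writing $\Delta_{ij} = \bra{\bv_i}(\rho - \widetilde{\brho}_{\chi^2})\ket{\bv_j}$, the divergence is $\sum_{i,j}\frac{2}{\sigma_i+\sigma_j}\Abs{\Delta_{ij}}^2$. I will split the index pairs into three blocks: both indices $\le r$; exactly one index $\le r$; and both indices $> r$.

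\emph{Block $i,j \le r$.} Here $\Delta_{ij} = \bra{\bv_i}(\rho-\widehat{\brho})\ket{\bv_j} - (2d/n)\delta_{ij}$. This block can be handled exactly as in the proof of \Cref{lem:Bures_PCA_aux_lemma}. Group the pairs by $m=\min(i,j)$ and bound $\sigma_i+\sigma_j \ge \sigma_m$. Then \Cref{lem:Bures_bound_technical_lemma} gives a contribution of at most $\frac{4}{\sigma_m}\cdot 2(C')^2\frac{d}{n}(\widehat{\balpha}_m + d/n) = O(d/n)$ for each $m$, and the diagonal shift adds another $O(d/n)$ per index. The total is $O(rd/n)$.

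\emph{Cross block, $i \le r < j$ (and its mirror image).} Here $\Delta_{ij} = \bra{\bv_i}\rho\ket{\bv_j}$, because $\widetilde{\brho}_{\chi^2}$ is diagonal and $\widehat{\brho}$ has no off-diagonal entries in this basis. Again use $\sigma_i+\sigma_j \ge \sigma_i$ and group by $m = i$. The inner sum $\sum_{j>r}\Abs{\bra{\bv_m}\rho\ket{\bv_j}}^2$ is at most the sum over $j\ge m$ appearing in \Cref{lem:Bures_bound_technical_lemma}, since $\bra{\bv_m}\widehat{\brho}\ket{\bv_j}=0$ for $j\neq m$. So the same bound applies and this block also contributes $O(rd/n)$.

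\emph{Block $i,j > r$.} This block contributes $\frac{1}{\eta_2}\norm{\overline{\bPi}\rho\overline{\bPi}}_2^2 \le \frac{1}{\eta_2}\tr(\overline{\bPi}\rho\overline{\bPi})\,\norm{\overline{\bPi}\rho\overline{\bPi}}_\infty$. Controlling it is the main obstacle. Because $\rho$ has rank $r$, $\lambda_{r+1}(\rho)=0$, and \Cref{cor:eigenvalue_differences}-type reasoning (applied in the reverse direction via the relative error bound) gives $\widehat{\balpha}_{r+1} \le C d/n$. Any $\ket{w}\in\overline{\bPi}$ then has $\bra{w}\widehat{\brho}\ket{w}\le Cd/n$, and the argument in the proof of \Cref{lem:bucketing} yields $\norm{\overline{\bPi}\rho\overline{\bPi}}_\infty = O(d/n)$. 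For the trace, use $\tr(\rho\overline{\bPi}) = \tr(\rho(\Pi_\rho - \bPi))$, where $\Pi_\rho$ is the support projector of $\rho$ (rank $r$). The Ky Fan argument from the proof of \Cref{prop:Bures_distance_PCA} with $k=r$ and $\tr(\rho\overline{\Pi_\rho})=0$ gives $\tr(\rho\overline{\bPi}) = O(rd/n)$; the rearranged inequality there has no square-root term in this case. Together these give a contribution of $O(rd^2/n^2)/\eta_2$.

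The choice $\eta_2 = \Theta(\sqrt{rd}/n)$ makes this $O(\sqrt{rd^3}/n)$. The weight $\eta_2$ does not enter the other blocks, because there I only lower-bound the denominator by the $\sigma_i$ with $i\le r$. Summing the three blocks gives $O(rd/n) + O(\sqrt{rd^3}/n)$. Since $r\le d$, the second term dominates, so $n = O(\sqrt{rd^3}/\epsilon)$ suffices, with the probability coming from \Cref{thm:main-tool}.
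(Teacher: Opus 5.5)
Your treatment of the blocks with at least one index $\le r$ is sound and is essentially the paper's argument; the paper handles all pairs with $\min(i,j)\le r$ together and applies \Cref{lem:Bures_bound_technical_lemma}. The error is in the block $i,j>r$. On $\overline{\bPi}$ the estimator $\widetilde{\brho}_{\chi^2}$ equals $\eta_2\overline{\bPi}$, so on that block $\Delta_{ij} = \bra{\bv_i}\rho\ket{\bv_j} - \eta_2\delta_{ij}$. You kept the analogous shift $-(2d/n)\delta_{ij}$ in the top-left block but dropped this one. The block's actual contribution is
\begin{equation*}
\frac{1}{\eta_2}\norm{\overline{\bPi}\rho\overline{\bPi}-\eta_2\overline{\bPi}}_2^2
= \frac{1}{\eta_2}\norm{\overline{\bPi}\rho\overline{\bPi}}_2^2 - 2\tr(\overline{\bPi}\rho\overline{\bPi}) + \eta_2(d-r)
\le \frac{1}{\eta_2}\norm{\overline{\bPi}\rho\overline{\bPi}}_2^2 + \eta_2(d-r).
\end{equation*}
So your claimed value for this block is false. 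For instance, if $\overline{\bPi}\rho\overline{\bPi}=0$, the block contributes $\eta_2(d-r)$, not $0$.

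With the prescribed $\eta_2=\Theta(\sqrt{rd}/n)$, the missing term is $O(\sqrt{rd^3}/n)$; it corresponds to \Cref{eq:chi_bound_2} in the paper's proof. Your final bound therefore survives once you add it. However, this term is exactly what makes the choice of $\eta_2$ a tradeoff. By your accounting the divergence would decrease monotonically as $\eta_2\to\infty$, and your remark that $\eta_2$ enters only through denominators hides this. The correct lower-block bound, $O(rd^2/(n^2\eta_2)) + \eta_2 d$, is what pins down $\eta_2=\Theta(\sqrt{rd}/n)$.

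Two minor points. First, your trace-times-operator-norm bound via the Ky Fan argument is valid but roundabout. Since $\overline{\bPi}\rho\overline{\bPi}$ has rank at most $r$, you get $\tr(\overline{\bPi}\rho\overline{\bPi})\le r\norm{\overline{\bPi}\rho\overline{\bPi}}_\infty$ immediately; the paper uses the equivalent $\norm{\cdot}_2^2\le r\norm{\cdot}_\infty^2$. Second, \Cref{cor:eigenvalue_differences} gives $\widehat{\balpha}_{r+1}\le Cd/n$ directly as stated, so no reversal is needed.
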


\begin{proof}

    The matrices $\widehat{\brho}$ and $\widetilde{\brho}_{\chi^2}$ share eigenvectors, which we label $\{\ket{\bv_i}\}_{i \in [d]}$ in decreasing order of eigenvalue. Let the corresponding eigenvalues be $\{ \widehat{\balpha}_i\}$ and $\{\widetilde{\balpha}_i\}$, respectively. Then
    \begin{equation*}
        \widetilde{\balpha}_i = \begin{cases}
            \widehat{\balpha}_i + \eta_1 & i \in \{1, \dots, r\} \\
            \eta_2 & i \in \{r+1, \dots, d\}.
        \end{cases}
    \end{equation*}
    Also, since $\widehat{\balpha}_i \geq -d/n$, and since $\eta_1 = 2d/n$, we have $\widetilde{\balpha}_i \geq d/n$ for $i \in [r]$. As one consequence, $\widetilde{\brho}_{\chi^2}$ is a positive operator, and $\Dchi(\rho \| \widetilde{\brho}_{\chi^2} )$ is well-defined.

    It is worth noting here that $\bPi$ is not required to be the projector onto the top $r$ eigenvalues of $\widetilde{\brho}_{\chi^2}$. Indeed, depending on the exact constant factors, $\eta_2$ can be greater than $\eta_1$, which may cause some of the eigenvalues in $\overline{\bPi}$ to be greater than those in $\bPi$.

    We start by writing
    \begin{align}
        \Dchi(\rho \| \widetilde{\brho}_{\chi^2} )  & = \sum_{i,j=1}^d \frac{2}{\widetilde{\balpha}_i + \widetilde{\balpha}_j} \cdot \ABs{\bra{\bv_i} (\rho - \widetilde{\brho}_{\chi^2}) \ket{\bv_j}}^2 \nonumber \\
        & = \sum_{i,j=1}^d \frac{2}{\widetilde{\balpha}_i + \widetilde{\balpha}_j} \cdot \ABs{\bra{\bv_i} (\rho - \bPi \widehat{\brho} \bPi - \eta_1 \cdot \bPi - \eta_2 \cdot \overline{\bPi}) \ket{\bv_j}}^2 \nonumber \\
        & \leq \sum_{i,j=1}^d \frac{6}{\widetilde{\balpha}_i + \widetilde{\balpha}_j} \cdot  \Big( \ABs{\bra{\bv_i} (\rho - \bPi \widehat{\brho} \bPi)\ket{\bv_j}}^2  + \eta_1^2 \cdot \ABs{\bra{\bv_i} \bPi\ket{\bv_j}}^2 + \eta_2^2 \cdot \ABs{\bra{\bv_i} \overline{\bPi} \ket{\bv_j}}^2 \Big). \label{eq:chi-squared_aux_1}
    \end{align}
    In the last step, we have used the inequality $\Abs{a+b+c}^2 \leq ( \Abs{a} + \Abs{b} + \Abs{c} )^2 \leq 3(\Abs{a}^2 + \Abs{b}^2 + \Abs{c}^2 )$. It is easy to bound two parts of this sum. First, since $\eta_1 = 2d/n$, 
    \begin{align}
        \sum_{i,j=1}^d \frac{6}{\widetilde{\balpha}_i + \widetilde{\balpha}_j} \cdot  \eta_1^2 \cdot \ABs{\bra{\bv_i} \bPi\ket{\bv_j}}^2  & = \sum_{i,j=1}^d \frac{6}{\widetilde{\balpha}_i + \widetilde{\balpha}_j} \cdot  \eta_1^2 \cdot \delta_{ij} \cdot 1 \{ i \leq r \} \nonumber \\
        & = 6\eta_1^2 \cdot \sum_{i=1}^r \frac{1}{2\cdot \widetilde{\balpha}_i} \leq 3 \eta_1^2 \cdot \sum_{i=1}^{r} \frac{1}{d/n} = \frac{12rd}{n} \leq O( rd/n). \label{eq:chi_bound_1}
    \end{align}
    Similarly, we have
    \begin{align}
        \sum_{i,j=1}^d \frac{6}{\widetilde{\balpha}_i + \widetilde{\balpha}_j} \cdot  \eta_2^2 \cdot \ABs{\bra{\bv_i} \overline{\bPi}\ket{\bv_j}}^2  & = \sum_{i,j=1}^d \frac{6}{\widetilde{\balpha}_i + \widetilde{\balpha}_j} \cdot  \eta_2^2 \cdot \delta_{ij} \cdot 1 \{ i > r \} \nonumber\\
        & = 6\eta_2^2 \cdot \sum_{i=r+1}^d \frac{1}{2\cdot \widetilde{\balpha}_i} = 3 \eta_2^2 \cdot \sum_{i=r+1}^{d
} \frac{1}{\eta_2} \leq 3 \eta_2 d \leq O(\sqrt{r d^3}/n).  \label{eq:chi_bound_2}
    \end{align}
    Here we have used that for $i \in \{r+1, \dots, d\}$, we have $\widetilde{\balpha}_i = \eta_2$. The remaining piece of \Cref{eq:chi-squared_aux_1} can be further broken up as follows:
    \begin{equation*}
         \sum_{i,j=1}^d \frac{6}{\widetilde{\balpha}_i + \widetilde{\balpha}_j} \cdot \ABs{\bra{\bv_i} (\rho - \bPi \widehat{\brho} \bPi)\ket{\bv_j}}^2 = \sum_{\substack{i,j=1 \\ \min(i,j) \leq r}}^d \frac{6}{\widetilde{\balpha}_i + \widetilde{\balpha}_j} \cdot \ABs{\bra{\bv_i} (\rho - \widehat{\brho}) \ket{\bv_j}}^2 + \sum_{i,j=r+1}^d \frac{6}{\widetilde{\balpha}_i + \widetilde{\balpha}_j} \cdot \ABs{\bra{\bv_i} \rho \ket{\bv_j}}^2.
    \end{equation*}
    We bound each of these pieces separately. First,
    \begin{align}
        \sum_{\substack{i,j=1 \\ \min(i,j) \leq r}}^d \frac{6}{\widetilde{\balpha}_i + \widetilde{\balpha}_j} \cdot \ABs{ \bra{\bv_i} (\rho - \widehat{\brho}) \ket{\bv_j} }^2 & = \sum_{m=1}^r \bigg( \sum_{\substack{i,j = 1 \\ \min(i,j) = m}}^d \frac{6}{\widetilde{\balpha}_i + \widetilde{\balpha}_j} \cdot \ABs{ \bra{\bv_i} (\rho - \widehat{\brho}) \ket{\bv_j} }^2 \bigg) \nonumber\\
        & \leq \sum_{m=1}^r \frac{6}{\widetilde{\balpha}_m} \cdot \bigg( \sum_{\substack{i,j = 1 \\ \min(i,j) = m}}^d \ABs{ \bra{\bv_i} (\rho - \widehat{\brho}) \ket{\bv_j} }^2 \bigg) \nonumber\\
        & \leq \sum_{m=1}^{r} \frac{6}{\widetilde{\balpha}_m} \cdot 2(C')^2 \cdot \frac{d}{n} \cdot \Big( \widehat{\balpha}_m + \frac{d}{n} \Big) \nonumber\\
        & = 12 (C')^2 \cdot \frac{d}{n} \cdot \sum_{m=1}^{r} \Big( \frac{\widehat{\balpha}_m}{\widetilde{\balpha}_m} + \frac{d/n}{\widetilde{\balpha}_m} \Big) \nonumber \\
        & \leq 24 (C')^2 \cdot \frac{rd}{n} \leq O(rd/n). \label{eq:chi_bound_3}
    \end{align}
    In the third step, we have used \Cref{lem:Bures_bound_technical_lemma}, and in the second-last step, we have used $\widetilde{\balpha}_m \geq \max(\widehat{\balpha}_m, d/n)$, since $m \in [r]$. Second,
    \begin{equation*}
        \sum_{i,j=r+1}^d \frac{6}{\widetilde{\balpha}_i + \widetilde{\balpha}_j} \cdot \ABs{ \bra{\bv_i} \rho \ket{\bv_j} }^2  = \frac{3}{\eta_2} \cdot \sum_{i,j=r+1}^d \ABs{ \bra{\bv_i} \rho \ket{\bv_j} }^2  = \frac{3}{\eta_2} \cdot \norm{ \overline{\bPi} \rho \overline{\bPi} }_2^2.
    \end{equation*}
    We now bound this norm. Because $\rho$ is rank $r$, we have
    \begin{equation*}
        \norm{ \overline{\bPi} \rho \overline{\bPi}}_2^2 \leq r \cdot \norm{ \overline{\bPi} \rho \overline{\bPi} }_\infty^2.
    \end{equation*}
    Moreover, for any $\ket{w} \in \mathrm{supp}(\overline{\bPi})$,
    \begin{align*}
        \bra{w} \rho \ket{w} & \leq \bra{w} \widehat{\brho} \ket{w} + C' \cdot \sqrt{ \frac{d}{n} \cdot \Big( \bra{w} \widehat{\brho} \ket{w} + \frac{d}{n} \Big)} \\
        & \leq \widehat{\balpha}_{r+1} + C' \cdot \sqrt{ \frac{d}{n} \cdot \Big( \widehat{\balpha}_{r+1} + \frac{d}{n} \Big)},
    \end{align*}
    using \Cref{cor:uniform-bound-pointwise_rho-hat-version}. However, by \Cref{cor:eigenvalue_differences}, we have
    \begin{equation*}
        \widehat{\balpha}_{r+1} \leq \alpha_{r+1} + C \cdot \sqrt{\frac{d}{n} \cdot \Big( \alpha_{r+1} + \frac{d}{n} \Big)} = C \cdot \frac{d}{n},
    \end{equation*}
    since $\alpha_{r+1} = 0$ as $\rho$ is rank $r$. Thus,
    \begin{equation*}
        \bra{w} \rho \ket{w} \leq C \cdot \frac{d}{n} + C' \cdot \sqrt{C+1} \cdot \frac{d}{n} \leq 3(C')^2 \cdot \frac{d}{n}.
    \end{equation*}
    for any $\ket{w} \in \mathrm{supp}(\overline{\bPi})$, using $1 \leq C < C'$. This yields $\norm{ \overline{\bPi} \rho \overline{\bPi} }_\infty \leq 3(C')^2 \cdot d/n$. Retracing our steps, we get
    \begin{equation} \label{eq:chi_bound_4}
        \sum_{i,j=r+1}^d \frac{6}{\widetilde{\balpha}_i + \widetilde{\balpha}_j} \cdot \ABs{ \bra{\bv_i} \rho \ket{\bv_j} }^2 \leq \frac{3r}{\eta_2} \cdot \norm{ \overline{\bPi} \rho \overline{\bPi} }_\infty^2  \leq \frac{27(C')^4 rd^2}{\eta_2 n^2} \leq  O(\sqrt{rd^3}/n).
    \end{equation}
    Here we have used $\eta_2 = \Theta(\sqrt{rd}/n)$.
    Plugging \Cref{eq:chi_bound_1,eq:chi_bound_2,eq:chi_bound_3,eq:chi_bound_4} all into \Cref{eq:chi-squared_aux_1} gives us
    \begin{equation*}
        \Dchi(\rho \| \widetilde{\brho}_{\chi^2} )   = \sum_{i,j=1}^d \frac{2}{\widetilde{\balpha}_i + \widetilde{\balpha}_j} \cdot \ABs{\bra{\bv_i} (\rho - \widetilde{\brho}_{\chi^2}) \ket{\bv_j} }^2 \leq O(rd/n) + O(\sqrt{rd^3}/n).
    \end{equation*}
    This completes the proof.
\end{proof}

\section*{Acknowledgments}
We would like to thank Ishaq Aden-Ali, Omar Alrabiah, and Xinyu (Norah) Tan for helpful conversations.
A.P.\ is supported by DARPA under Agreement No.\ HR00112020023.
J.S.\ and J.W.\ are supported by the NSF CAREER award CCF-233971.
E.T.\ is supported by the Miller Institute for Basic Research in Science, University of California, Berkeley.

We used AI tools throughout the course of this project to discuss ideas and prove low-level results.
The central ideas of this work are human-generated.

\bibliographystyle{alpha}
\bibliography{wright,other_references}

\end{document}